\newcommand{\defeq}{\overset{\rm def}{=}}
\newcommand{\ceil}[1]{\lceil{#1}\rceil}
\newcommand{\flip}{\overline}
\newcommand{\E}{\mathrm{E}}
\newcommand{\tent}{f}
\newcommand{\bitseq}{\mathbf{b}}
\newcommand{\bitseqc}{\mathbf{c}}
\newcommand{\bitseqd}{\mathbf{d}}
\newcommand{\enc}{\gamma}
\newcommand{\se}{S} 
\newcommand{\type}{Type}
\newcommand{\typei}{I}
\newcommand{\typej}{J}
\newcommand{\typefn}{T}
\newcommand{\typeset}{\mathcal{T}}
\newcommand{\partner}{\theta}
\newcommand{\requiredstgsize}{K}
\newcommand{\poly}{\mathrm{poly}}
\newcommand{\Order}{\mathrm{O}}
\newcommand{\order}{\mathrm{o}}
\newcommand{\lev}{L}
\newcommand{\outn}{N}
\newtheorem{theorem}{Theorem}[section]
\newtheorem{lemma}[theorem]{Lemma}
\newtheorem{corollary}[theorem]{Corollary}
\newtheorem{proposition}[theorem]{Proposition}
\newtheorem{observation}[theorem]{Observation}
\title{The Space Complexity of Generating Tent Codes}
\author{
 Naoaki Okada\footnote{Graduate School of Information Science and Electrical Engineering, Kyushu University} \and 
 Shuji Kijima\footnote{Faculty of Data Science, Shiga University}
}
\begin{document}
\maketitle

\begin{abstract}
 This paper is motivated by a question 
  whether it is  possible to calculate a chaotic sequence efficiently, 
  e.g., is it possible to get the $n$-th bit of a bit sequence generated by a chaotic map, 
   such as $\beta$-expansion, tent map and logistic map in $\order(n)$ time/space?
 This paper gives an affirmative answer to the question about the space complexity of a tent map. 
 We prove that a tent code of $n$-bits 
   with an initial condition uniformly at random 
  is exactly generated  in $\Order(\log^2 n)$ space in expectation. 
\end{abstract}

\section{Introduction}
A {\em tent map} $\tent_{\mu} \colon [0, 1] \to [0, 1]$ (or simply $\tent$) is given by 
\begin{align}\label{eq:tentmap}
  \tent(x) =
  \begin{cases}
    \mu x &: x \leq \frac{1}{2}, \\
    \mu (1 - x) &: x \ge \frac{1}{2}
  \end{cases}
\end{align}
  where this paper is concerned with the case of $1 < \mu < 2$. 
 As Figure~\ref{fig:tentmap} shows, 
   it is a simple piecewise-linear map looking like a tent.  
 Let $x_n = \tent(x_{n-1}) = \tent^n(x) $ recursively for $n=1,2,\ldots$, where $x_0=x$ for convenience. 
 Clearly, $x_0,x_1,x_2,\ldots$ is a deterministic sequence.
 Nevertheless, the deterministic sequence shows a complex behavior, as if ``random,'' when $\mu>1$. 
 It is said {\em chaotic}~\cite{Lorenz93}. 
 For instance, 
  $\tent^n(x)$ becomes quite different from $\tent^n(x')$ for $x \neq x'$ as $n$ increasing, 
    even if $|x-x'|$ is very small, and 
   it is one of the most significant characters of 
    a chaotic sequence   
    known as the  {\em sensitivity to initial conditions} --- 
    a chaotic sequence is ``unpredictable'' despite a deterministic process~\cite{Lorenz63,SMYO83,YMS83,CH94}. 

 This paper,  
   from the viewpoint of theoretical computer science, 
   is concerned with the computational complexity of a simple problem: 
 Given $\mu$, $x$ and $n$, 
  decide whether $\tent^n(x) < 1/2$. 
 Its time complexity might be one of 
  the most interesting questions; 
   e.g., is it possible to ``predict'' whether $\tent^n(x) < 1/2$ in time polynomial in $\log n$? 
 Unfortunately, we in this paper cannot answer the question\footnote{
   We think that the problem might be NP-hard 
    using the arguments on the complexity of algebra and number theory in~\cite{GJ79}, 
    but  we could not find the fact. 
  }. 
 Instead, 
   this paper 
   is concerned with the space complexity of the problem.  

\begin{figure}[tbp]
  \begin{tabular}{cc}
    \begin{minipage}[t]{.5\hsize}
      \centering
      \includegraphics[width=1\linewidth]{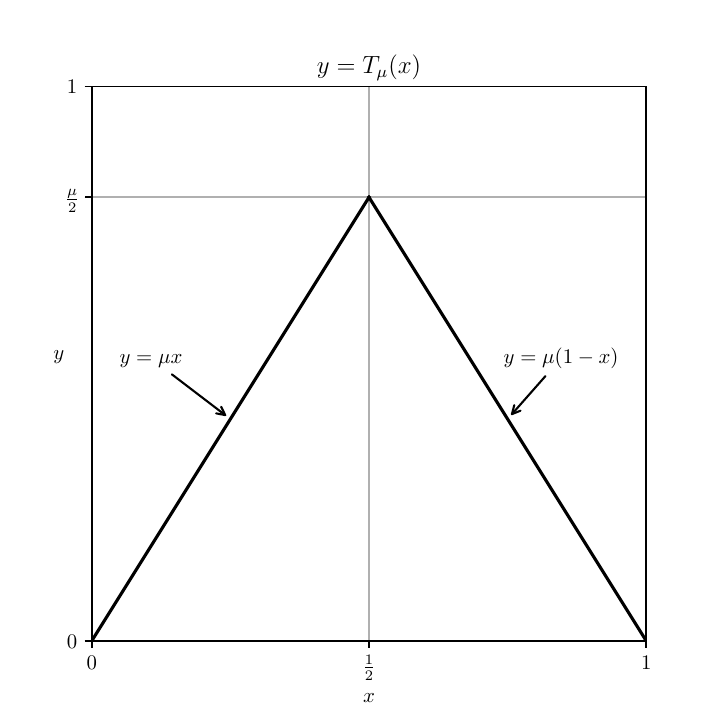}
      \subcaption{Tent map $\tent(x)$}
      \label{fig:tentmap}
    \end{minipage} &
    \begin{minipage}[t]{.5\hsize}
      \centering
      \includegraphics[width=1\linewidth]{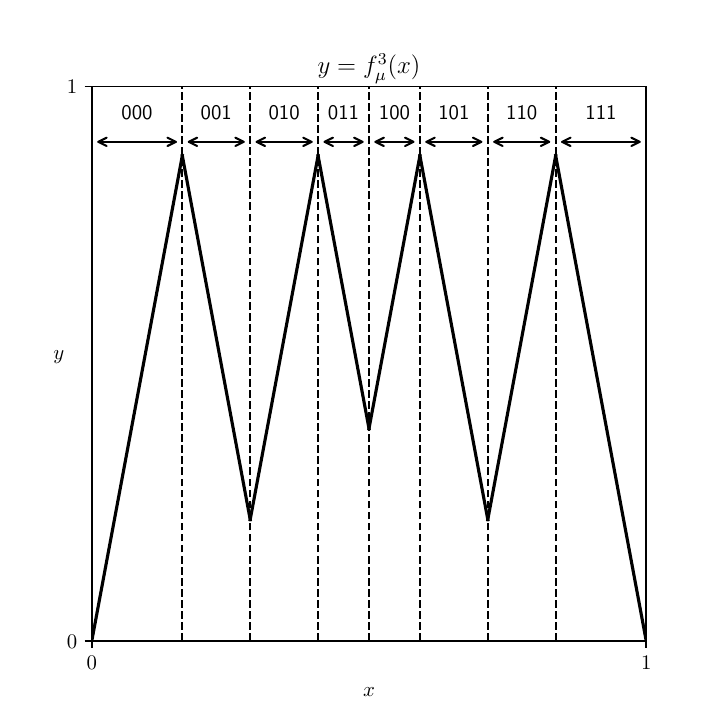}
      \subcaption{The $3$rd iterated tent map $\tent^3(x)$}
      \label{fig:expansion}
    \end{minipage}
  \end{tabular}
  \caption{The tent map $\tent(x)$ and the $3$-rd iterated tent map $\tent^3(x)$.}
  \label{fig:compress0}
\end{figure}

\subsection{Background and Contribution}
\paragraph{Chaos.}
 Chaotic sequences show many interesting figures 
  such as 
    cobweb, 
    strange attractor, 
    bifurcation, etc.~\cite{Lorenz63,LY75,May76,Lorenz93,Kohda08}. 
 The chaos theory has been intensively developed 
   in several context such as 
     electrical engineering, information theory, statistical physics, neuroscience and computer science,  
  with many applications, such as weather forecasting, traffic prediction, stock pricing,  since the 1960s. 
 For instance, 
   a cellular automaton, including the life game, is 
    a classical topic in computer science, and 
   it is closely related to the ``edge of chaos.'' 
 For another instance, 
  the ``sensitivity to initial conditions'' are often regarded as unpredictability, 
   and chaotic sequences are used in 
    pseudo random number generator,  
    cryptography, 
    or heuristics for NP-hard problems including chaotic genetic algorithms. 
 
 From the viewpoint of theoretical computer science, 
  the numerical issues of computing 
   chaotic sequences have been intensively investigated 
  in statistical physics, information theory and probability. 
 In contrast, 
   the computational complexity of computing a chaotic sequence 
    seems not well developed. 
 It may be a simple reason that 
  it looks unlike a decision problem. 

\paragraph{Tent map: 1-D, piecewise-linear and chaotic.}
 Interestingly, 
   very simple maps show chaotic behavior. 
 One of the most simplest maps are piece-wise linear maps, 
  including the tent map and the $\beta$-expansion (a.k.a. Bernoulli shift) which are 1-D maps and 
    the baker's map which is a 2-D map 
  \cite{Lorenz63,Renyi57,Parry60,Parry64,SMYO83,YMS83,CH94,Hopf37,baker00}.

 The tent map, as well as the $\beta$-expansion, is known to be 
  topologically conjugate to the {\em logistic map} 
   which is a quadratic map cerebrated as a chaotic map. 
 Chaotic behavior of the tent map, in terms of power spectra, band structure, critical behavior, are analyzed in e.g.,~\cite{Lorenz63,SMYO83,YMS83,CH94}. 
 The tent map is also used for pseudo random generator or encryption e.g., \cite{addabbo2006,Access20,LLQL17}. 
 It is also used for  meta-heuristics for NP-hard problems~\cite{DNA09,NoC09}.

\paragraph{Our results and related works.}
 This paper is concerned with the problem 
   related to deciding whether $\tent^n(x) < 1/2$ for the $n$-th iterated tent map $\tent^n$. 
 More precisely, 
   we define the {\em tent language} ${\cal L}_n \subseteq \{0,1\}^n$  in Section~\ref{sec:tent-map}, 
   and consider a {\em random generation} of $B \in {\cal L}_n$ according to a distribution ${\cal D}_n$
     corresponding to the ``uniform initial condition'' 
   (see Section~\ref{sec:tent-map}, for the precise definition). 
 We give an algorithm in $\Order(\log^2 n)$ {\em expected} space   (Theorem~\ref{thm:main}), 
  meaning that the {\em average space complexity} is exponentially improved 
    compared with the naive computation according to \eqref{eq:tentmap}. 

 Our strategy is as follows: 
  we 
    give a compact state-transit model for the tent language ${\cal L}_n$ in Section~\ref{sec:recog}, 
    design a random walk on the model to provide a desired distribution ${\cal D}_n$  in Section~\ref{sec:MC}, and then 
    prove that the expected space complexity is $\Order(\log^2 n)$ in Section~\ref{sec:average-complexity}. 
 The idea of the compact model and the random walk is similar to \cite{tomitamasterthesis,tomita2018}  for $\beta$-expansion, 
   while \cite{tomitamasterthesis,tomita2018}  did not give any argument on the space complexity beyond a trivial upper bound. 
 For the compact representation, 
   we use the idea of the {\em location of segments} 
     (which we call {\em segment-type} in this paper) developed by \cite{makino2015}. 

 Our technique is related to the {\em Markov partition}, often appearing in entropy arguments~\cite{sinai68,teramoto10}. 
 For some $\mu$, precisely when $\mu$ is a solution of $\tent^n(1/2)=1/2$ for an integer $n$, 
   our Markov chain is regarded as a version of Markov partition with a finite state. 
 However, the number of states of our Markov chain is unbounded in general as $n \to \infty$, and 
  it is our main target.

 For an earlier version of this manuscript, 
   Masato Tsujii gave us a comment about the connection to the {\em Markov extension}. 
 In 1979, Hofbauer~\cite{Hofbauer79} gave a representation of the kneading invariants for unimodal maps, 
  which is known as the Markov extension and/or Hofbauer tower, and then 
   discussed {\em topological entropy}. 
 Hofbauer and Keller extensively developed the arguments in 1980s, see e.g., \cite{deMelo-vanStrien,Bruin95}. 
 In fact, 
   some arguments of this manuscript, namely Sections ~\ref{sec:recog}, \ref{sec:MC}, and \ref{sec:transition_function}, 
    are very similar to or essentially the same as the arguments of the Markov extension, cf. \cite{deMelo-vanStrien}, 
  whereas 
   note that this manuscript is mainly concerned with the {\em computational complexity}.  
 It is difficult to describe the idea of our main result, 
     namely Algorithm~\ref{alg1} given in section \ref{sec:algo-summary} and Theorem \ref{thm:main}, 
   without describing those arguments, 
  then Sections \ref{sec:recog}, \ref{sec:MC} and \ref{sec:transition_function} are left as it is. 
 We also use some classical techniques of computational complexity, cf. \cite{Sipser,KV}, for the purpose.

\section{Tent Code and Main Theorem}\label{sec:tent-map}
 This paper focusing on the case  $\mu \in (1,2)$. 
 We assume that $\mu$ is rational in the main theorem (Theorem~\ref{thm:main}), 
  to make the arguments clear on the Turing computation, 
  but the assumption is not essential\footnote{
  See Section~\ref{sec:real} for real $\mu$. 
    }.  
Let $\tent^n$ denote the {\em $n$-times iterated tent map}, 
 which is formally given by $\tent^n(x)=\tent(\tent^{n-1}(x))$ recursively with respect to $n$. 
We remark, for the later argument in Section~\ref{sec:recog},  that 
\begin{align} 
 \tent^{n}(x)=\tent^{n-1}(\tent(x))
\label{eq:recursion}
\end{align}
  also holds by the associative law of a function composition. 
 It is easy to observe from the definition hat the iterated tent map is symmetric around $1/2$,  
  meaning that 
\begin{align}
 \tent^n(x) = \tent^n(1-x)
\label{eq:tent-sym}
\end{align} 
holds for any $x \in (0,1)$. 

%

 We define a tent-encoding function $\enc^{n}_{\mu}\colon [0,1) \to \{0,1\}^n$ (or simply $\enc^n$) as follows. 
For convenience, 
 let $x_i=\tent^i(x)$ for $i=1,2,\ldots$ as given $x \in [0, 1)$. 
Then, the {\em tent code} $\enc^{n}(x)=b_1 \cdots b_n$ for $x \in [0, 1)$ is a bit-sequence, 
  where 
\begin{align}
  b_{1} &=
  \begin{cases}
    0 &: x < \frac{1}{2}, \\
    1 &: x \ge \frac{1}{2},
  \end{cases} 
  \label{def:encode0} 
\end{align}
 and $b_i$ ($i=2,3,\ldots,n$)  is recursively given by 
\begin{align}
  b_{i+1} &=
  \begin{cases}
    b_{i} &: x_i < \frac{1}{2}, \\
    \flip{b_{i}} &: x_i > \frac{1}{2}, \\
    1 &: x_i = \frac{1}{2},
  \end{cases}
\label{def:encode1}
\end{align}
where $\flip{b}$ denotes bit inversion of $b$,
i.e., $\overline{0}=1$ and $\overline{1}=0$.
We remark that the definition \eqref{def:encode1} is rephrased by
\begin{align}
  b_{i+1} &=
  \begin{cases}
    0 &: [b_i = 0] \wedge \left[x_i < \frac{1}{2}\right], \\
    1 &: [b_i = 0] \wedge \left[x_i \geq \frac{1}{2}\right], \\
    1 &: [b_i = 1] \wedge \left[x_i \leq \frac{1}{2}\right], \\
    0 &: [b_i = 1] \wedge \left[x_i > \frac{1}{2}\right].
  \end{cases}
\label{def:encode2}
\end{align}

\begin{proposition}\label{prop:tent-expansion}
 Suppose $\enc_{\mu}^{\infty}(x) = b_1b_2\cdots$ for $x \in [0,1)$. 
 Then, $(\mu - 1) \sum_{i=1}^{\infty} b_{i} \mu^{-i} = x$. 
\end{proposition}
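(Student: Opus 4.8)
The plan is to follow the orbit $x_i=\tent^i(x)$ while undoing, step by step, the ``folding'' that the tent map performs, and then to recover $x$ by telescoping a single one-step recursion. As a preliminary I would extend the code by the convention $b_0\defeq 0$; this is consistent, since reading the recursive rule \eqref{def:encode1} (equivalently \eqref{def:encode2}) at $i=0$ with $x_0=x$ reproduces the base case \eqref{def:encode0}. I would also record the elementary fact that $x_i\in[0,1)$ for every $i\ge 0$: this holds for $i=0$ by hypothesis, and for $i\ge1$ because $\tent$ maps $[0,1]$ into $[0,\frac{\mu}{2}]\subseteq[0,1)$ as $\mu<2$.

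Next I would introduce the \emph{folded coordinate} $z_i\defeq x_i$ when $b_i=0$ and $z_i\defeq 1-x_i$ when $b_i=1$, so that $z_i\in[0,1)$. The heart of the argument is the one-step identity
\[
 \mu\, z_i \;=\; (\mu-1)\,b_{i+1} + z_{i+1}\qquad\text{for all }i\ge 0.
\]
I would prove it by a short case analysis according to the value of $b_i$ and to whether $x_i<\frac{1}{2}$ or $x_i\ge\frac{1}{2}$: in each of the four cases appearing in \eqref{def:encode2} one substitutes the matching branch of \eqref{eq:tentmap} for $x_{i+1}=\tent(x_i)$ and the corresponding line of \eqref{def:encode2} for $b_{i+1}$, and the displayed equation falls out once the reflections cancel. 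The boundary value $x_i=\frac{1}{2}$ is harmless, since there both branches of $\tent$ give $x_{i+1}=\frac{\mu}{2}$ and the rule forces $b_{i+1}=1$, so the identity still holds.

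Granting the identity, rewrite it as $z_i=(\mu-1)b_{i+1}\mu^{-1}+\mu^{-1}z_{i+1}$ and unroll it $n$ times starting from $i=0$ to obtain $z_0=(\mu-1)\sum_{i=1}^{n}b_i\mu^{-i}+\mu^{-n}z_n$. Since $0\le z_n<1$ and $\mu>1$, the remainder $\mu^{-n}z_n$ tends to $0$ as $n\to\infty$, hence $z_0=(\mu-1)\sum_{i=1}^{\infty}b_i\mu^{-i}$; and $b_0=0$ gives $z_0=x_0=x$, which is the assertion. I expect the only delicate point to be the bookkeeping in the case analysis for the one-step identity: one must keep careful track of the bit that records whether the current coordinate has already been reflected, and check that the tie-breaking convention $x_i=\frac{1}{2}\Rightarrow b_{i+1}=1$ is compatible with it. Everything else is a routine geometric-series estimate.
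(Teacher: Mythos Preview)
Your proof is correct. The paper's argument is the same idea organized differently: rather than your folded coordinate $z_i$, it introduces $a_i \defeq b_i \oplus b_{i-1}$ (with $b_0=0$), notes that $a_i$ records which branch of $\tent$ is applied so that $x_i=(-1)^{a_i}\mu x_{i-1}+\mu a_i$, unrolls this to write $x_n$ in terms of $x_0$ and the $a_j$'s, then multiplies by $(-1)^{b_n}\mu^{-n}$ and telescopes using $(-1)^{b_j}(b_j\oplus b_{j-1})=-(b_j-b_{j-1})$. Your packaging is arguably cleaner: the one-step identity $\mu z_i=(\mu-1)b_{i+1}+z_{i+1}$ exposes the geometric-series structure immediately, and your remainder $\mu^{-n}z_n$ is manifestly in $[0,1)$, whereas the paper controls only $|(-1)^{b_n}\mu^{-n}x_n|\le\mu^{-n}$. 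The two computations are linked by $z_n=b_n+(-1)^{b_n}x_n$, so they differ only by the harmless term $b_n\mu^{-n}$.
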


 See Section~\ref{apx:tent-map} for a proof of Proposition \ref{prop:tent-expansion}, 
  proofs of Propositions \ref{prop:order}, \ref{prop:-heikai} and Lemma~\ref{lem:encode1} below as well. 
 The proofs are not difficult but lengthy.  
 Thanks to this a little bit artificial definition \eqref{def:encode1},  
   we obtain the following two more facts. 
\begin{proposition}\label{prop:order}
  For any $x, x^{\prime} \in [0, 1)$,
  \begin{align*}
    x \leq x^{\prime} &\Rightarrow \enc_{\mu}^n(x) \preceq \enc_{\mu}^n(x^{\prime})
  \end{align*}
  hold where $\preceq$ denotes the {\em lexicographic order}, 
   that is $b_{i_*}=0$ and $b'_{i_*}=1$ at $i_* = \min\{j \in \{1,2,\ldots \} \mid b_j \neq b'_j\}$
   for $\enc^n(x) =b_1b_2\cdots b_n$ and  $\enc^n(x^{\prime})=b'_1b'_2\cdots b'_n$
   unless  $\enc^n(x) = \enc^n(x')$. 
\end{proposition}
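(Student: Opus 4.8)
The plan is to prove the contrapositive in the form: if $\enc^n_\mu(x) \neq \enc^n_\mu(x')$ and $x \le x'$, then the first index $i_*$ where the codes differ has $b_{i_*}=0$ and $b'_{i_*}=1$. Write $\enc^\infty_\mu(x) = b_1b_2\cdots$ and $\enc^\infty_\mu(x') = b'_1b'_2\cdots$, and let $i_*$ be the first index at which they disagree (this exists and is $\le n$ since the length-$n$ prefixes already disagree). The key tool will be Proposition~\ref{prop:tent-expansion}: since the two infinite codes agree on $b_1,\dots,b_{i_*-1}$, the difference $x' - x = (\mu-1)\sum_{i \ge i_*} (b'_i - b_i)\mu^{-i}$ is controlled entirely by the tails starting at position $i_*$. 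If we had $b_{i_*}=1$ and $b'_{i_*}=0$, I would like to conclude that $x > x'$, contradicting $x \le x'$; the sign of the leading term $\mu^{-i_*}$ must dominate the rest of the series.

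First I would set up the tail estimate. Assuming for contradiction $b_{i_*}=1,\ b'_{i_*}=0$, we get
\begin{align*}
 x - x' = (\mu-1)\mu^{-i_*} + (\mu-1)\sum_{i > i_*} (b_i - b'_i)\mu^{-i}.
\end{align*}
The crude bound $\bigl|\sum_{i>i_*}(b_i-b'_i)\mu^{-i}\bigr| \le \sum_{i>i_*}\mu^{-i} = \mu^{-i_*}/(\mu-1)$ only gives $x - x' \ge 0$, which is not strict enough — equality could in principle occur. So the main obstacle is ruling out the boundary case, i.e. handling points whose orbit hits $1/2$, and more generally showing the inequality is compatible with $x \le x'$ rather than forcing $x < x'$ strictly. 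I would resolve this by a direct induction on $i$ instead, tracking how the encoding rule \eqref{def:encode2} propagates order: define, for a code prefix $b_1\cdots b_i$, the ``parity'' $p_i = b_1 \oplus \cdots \oplus b_i$ (equivalently, whether $\tent^i$ restricted to the relevant subinterval is increasing or decreasing), and show by induction that on the set of $x$ sharing a given prefix $b_1\cdots b_i$, the map $x \mapsto x_i = \tent^i(x)$ is monotone — increasing if $p_i$ is even, decreasing if $p_i$ is odd — using the piecewise-linear form \eqref{eq:tentmap} and the symmetry \eqref{eq:tent-sym}. This is exactly the standard kneading-theory fact that the tent map is order-preserving on each lap composed appropriately.

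With that monotonicity in hand, the proposition is immediate: suppose $x \le x'$ and the codes first differ at $i_*$. They share the prefix $b_1\cdots b_{i_*-1}$, so both $x_{i_*-1}:=\tent^{i_*-1}(x)$ and $x'_{i_*-1}:=\tent^{i_*-1}(x')$ lie in the common subinterval, and by the inductive monotonicity their order is $x_{i_*-1} \le x'_{i_*-1}$ if the parity is even (resp. $\ge$ if odd). Then $b_{i_*}$ versus $b'_{i_*}$ is determined by whether $x_{i_*-1}$ lies left or right of $1/2$ via \eqref{def:encode0}--\eqref{def:encode2}; a short case check on the four lines of \eqref{def:encode2}, split by the parity $p_{i_*-1}$, shows that the one with the smaller preimage receives bit $0$ and the other bit $1$ — precisely $b_{i_*}=0$, $b'_{i_*}=1$. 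The tie-breaking line $x_i = 1/2 \Rightarrow b_{i+1}=1$ is exactly what makes this clean: it assigns the "$1$" side consistently with lexicographic order. I expect the bookkeeping of the parity-dependent case split in \eqref{def:encode2} to be the only fiddly part; everything else is routine. I would likely factor the monotonicity statement out as the separate Lemma~\ref{lem:encode1} alluded to in the text and cite it here.
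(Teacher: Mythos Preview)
Your approach --- factor out an order-tracking monotonicity lemma by induction and then do a case split at $i_*$ on the common bit $b_{i_*-1}=b'_{i_*-1}$ --- is exactly the paper's proof, and you even correctly identify that lemma as Lemma~\ref{lem:encode1}. One bookkeeping slip to fix before you execute it: the orientation of $\tent^i$ on a section is recorded by $b_i$ itself, not by $p_i = b_1 \oplus \cdots \oplus b_i$; the recursive rule \eqref{def:encode1} is designed precisely so that $b_i$ already \emph{is} the parity of the number of folds (check $b_1b_2 = 11$: there $\tent^2(x)=\mu^2(1-x)$ is decreasing, agreeing with $b_2=1$ but not with $p_2=0$). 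With that corrected your case check goes through verbatim and coincides with the paper's.
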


\begin{proposition}\label{prop:-heikai}
The $n$-th iterated tent code is right continuous, i.e., $\enc_{\mu}^n(x) = \enc_{\mu}^n(x+0)$. 
\end{proposition}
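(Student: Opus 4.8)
The plan is to show right-continuity by combining the monotonicity statement of Proposition~\ref{prop:order} with a local analysis of how a small perturbation $x \to x + \varepsilon$ propagates through the first $n$ iterates of the tent map. First I would fix $x \in [0,1)$ and observe that, by Proposition~\ref{prop:order}, the map $t \mapsto \enc_\mu^n(t)$ is non-decreasing in the lexicographic order on a neighbourhood $[x, x+\delta_0)$; since $\{0,1\}^n$ is finite, a non-decreasing function on an interval takes only finitely many values, so there exists $\delta > 0$ with $\enc_\mu^n$ constant on $(x, x+\delta)$, and this constant value is by definition $\enc_\mu^n(x+0)$. It therefore suffices to prove that this right-limit value equals $\enc_\mu^n(x)$ itself, i.e.\ that $\enc_\mu^n$ does not jump immediately to the right of $x$.

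The key step is a perturbation argument on the orbit. Write $x_i = \tent^i(x)$ and, for small $\varepsilon > 0$, $x_i(\varepsilon) = \tent^i(x+\varepsilon)$. Since each branch of $\tent$ is affine with slope $\pm\mu$, for every fixed $i$ we have $x_i(\varepsilon) \to x_i$ as $\varepsilon \downarrow 0$, and in fact $x_i(\varepsilon)$ moves monotonically with a one-sided derivative $\pm\mu^i$ as long as no $x_j$ with $j < i$ sits exactly at the break point $1/2$. The bits $b_1,\dots,b_n$ are determined by the positions of $x_0,\dots,x_{n-1}$ relative to $1/2$ via \eqref{def:encode0}–\eqref{def:encode1}. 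If $x_i \neq 1/2$ for all $i \le n-1$, then for $\varepsilon$ small enough each $x_i(\varepsilon)$ stays on the same side of $1/2$ as $x_i$, so every bit is unchanged and $\enc_\mu^n(x+\varepsilon) = \enc_\mu^n(x)$. The delicate case is when $x_i = 1/2$ for some $i \le n-1$; here the encoding rule \eqref{def:encode1} deliberately assigns the value $b_{i+1} = 1$ (the same choice as the branch $x_i > 1/2$), and for $\varepsilon > 0$ small the perturbed orbit satisfies $x_i(\varepsilon) > 1/2$ (because just before this break point the orbit was approaching $1/2$ from the side dictated by the affine pieces, and a positive perturbation pushes it across — this is exactly where I must track the sign carefully), so again the perturbed bit agrees with the prescribed value $1$. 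Combining across all $i \le n-1$ and noting that once the first break time is passed the argument restarts via \eqref{eq:recursion}, all $n$ bits of $\enc_\mu^n(x+\varepsilon)$ coincide with those of $\enc_\mu^n(x)$.

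I expect the main obstacle to be the bookkeeping in the break-point case: one must verify that the artificial tie-breaking $b_{i+1}=1$ in \eqref{def:encode1} is precisely the value forced by a right-perturbation, and that a break point at step $i$ does not cause a cascade that flips a later bit in the wrong direction. The cleanest way around this is probably an induction on $n$ using the recursion $\enc_\mu^{n}(x)$ in terms of $b_1$ and $\enc_\mu^{n-1}(\tent(x))$: the base case $n=1$ is immediate from \eqref{def:encode0} since $\{t : t \ge 1/2\}$ is right-closed, and for the inductive step one checks that $\tent(x+\varepsilon) = \tent(x) + 0$ in the appropriate one-sided sense on each branch (with the break point $x = 1/2$ handled by the fact that $\tent$ restricted to $[1/2,1]$ is continuous from the right at $1/2$ in the relevant orientation), so right-continuity of $\enc_\mu^{n-1}$ at $\tent(x)$ transfers to right-continuity of $\enc_\mu^n$ at $x$. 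This reduces the whole statement to the single-step claim, which is a finite case check on the four cases of \eqref{def:encode2}.
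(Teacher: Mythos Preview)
Your direct perturbation approach is essentially the paper's: the paper argues by contradiction, assuming $\enc^n(x) \prec \enc^n(x+0)$, lets $i_*$ be the first index of disagreement, and then uses Lemma~\ref{lem:encode1} to pin down the position of $x_{i_*-1}$ relative to $1/2$; for small enough $\varepsilon$ the perturbed orbit $x'_{i_*-1}$ cannot lie on the opposite side, contradicting the discrepancy in the $i_*$-th bit. Your version unwinds this into a direct argument but relies on the same mechanism (monotonicity from Proposition~\ref{prop:order} plus control of the orbit via the sign pattern of Lemma~\ref{lem:encode1}).

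Two points to tighten. First, in the break-point case your claim that ``for $\varepsilon>0$ small the perturbed orbit satisfies $x_i(\varepsilon) > 1/2$'' is not correct as stated: by Lemma~\ref{lem:encode1} the sign of $x_i(\varepsilon)-x_i$ depends on $b_i$, so when $b_i=1$ and $x_i=1/2$ one actually has $x_i(\varepsilon)<1/2$. The conclusion you want still holds, because \eqref{def:encode2} gives $b_{i+1}(\varepsilon)=1$ in \emph{both} cases ($b_i=0$ with $x_i(\varepsilon)\geq 1/2$, and $b_i=1$ with $x_i(\varepsilon)\leq 1/2$), matching the tie-breaking rule; but the reasoning needs this two-case check, not the one-sided inequality you wrote.

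Second, the inductive alternative you sketch does not go through as written. For $x>1/2$ a right perturbation of $x$ becomes a \emph{left} perturbation of $\tent(x)$ (since $\tent$ has slope $-\mu$ there), so appealing to right-continuity of $\enc^{n-1}$ at $\tent(x)$ is unjustified, and $\enc^{n-1}$ is not left-continuous in general. The fix is to use $\tilde\tent$ of Lemma~\ref{lem:compress} together with Lemma~\ref{lem:obs-rephrase}: since $\tilde\tent$ is increasing on each half, a right perturbation of $x$ is a right perturbation of $\tilde\tent(x)$, and $\enc^n(x)=b_1\cdot\enc^{n-1}(\tilde\tent(x))$. With that correction the induction becomes a genuinely different (and arguably cleaner) route than the paper's contradiction argument.
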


 These two facts make the arguments simple.  
 The following technical lemma is useful 
  to prove Propositions~\ref{prop:order} and \ref{prop:-heikai}, 
   as well as the arguments in Sections~\ref{sec:algo} and \ref{sec:average-complexity}. 
\begin{lemma}\label{lem:encode1}
  Let $x,x' \in [0,1)$ satisfy $x < x'$. 
  Let $\enc(x) =b_1b_2\cdots $ and  $\enc(x^{\prime})=b'_1b'_2\cdots $. 
 If $b_i = b'_i$ hold for all $i=1,\ldots,n$ then 
  \begin{align}
  \begin{cases}
    x_n < x^{\prime}_n  & \mbox{if $b_n = b^{\prime}_n = 0$, }\\ 
    x_n > x^{\prime}_n & \mbox{if $b_n = b^{\prime}_n = 1$}
 \end{cases}
  \label{eq:tentexpansion_lexicography_1}
  \end{align}
  holds. 
\end{lemma}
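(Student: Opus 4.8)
The plan is to prove the two displayed inequalities simultaneously by induction on $n$. First I would set up the base case $n=1$: if $b_1 = b'_1 = 0$ then by \eqref{def:encode0} both $x, x' < 1/2$, so $x_1 = \tent(x) = \mu x < \mu x' = \tent(x') = x'_1$ since $\mu > 0$ and $x < x'$; if $b_1 = b'_1 = 1$ then both $x, x' \ge 1/2$, so $x_1 = \mu(1-x) > \mu(1-x') = x'_1$. (One should note the boundary value $x = 1/2$ does not cause trouble: $\mu \cdot \tfrac12 = \mu(1-\tfrac12)$, so the two branches of \eqref{eq:tentmap} agree there, and by \eqref{def:encode0} such an $x$ carries $b_1 = 1$.)

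For the inductive step, suppose the claim holds for $n-1$, and assume $b_i = b'_i$ for $i = 1, \ldots, n$. Applying the induction hypothesis to the prefix of length $n-1$ gives us control on $x_{n-1}$ versus $x'_{n-1}$, namely $x_{n-1} < x'_{n-1}$ when $b_{n-1} = 0$ and $x_{n-1} > x'_{n-1}$ when $b_{n-1} = 1$. Now the key observation is that knowing $b_{n-1}$ together with $b_n = b'_n$ pins down, via the rephrased recursion \eqref{def:encode2}, on which side of $1/2$ both $x_{n-1}$ and $x'_{n-1}$ lie. Concretely, I would split into the four cases of \eqref{def:encode2}: e.g. if $b_{n-1} = 0$ and $b_n = 0$, then \eqref{def:encode2} forces $x_{n-1} < 1/2$ and $x'_{n-1} < 1/2$, so $\tent$ acts as multiplication by $\mu$ on both, and combined with $x_{n-1} < x'_{n-1}$ we get $x_n = \mu x_{n-1} < \mu x'_{n-1} = x'_n$, which is the desired conclusion since $b_n = 0$. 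The case $b_{n-1} = 0, b_n = 1$ forces $x_{n-1}, x'_{n-1} \ge 1/2$ (using $x < x'$ to rule out the degenerate coincidence), so $\tent$ acts as $x \mapsto \mu(1-x)$, which reverses order: from $x_{n-1} < x'_{n-1}$ we get $x_n = \mu(1-x_{n-1}) > \mu(1-x'_{n-1}) = x'_n$, as wanted since $b_n = 1$. The remaining two cases $b_{n-1} = 1$ are symmetric: the induction hypothesis gives $x_{n-1} > x'_{n-1}$, and \eqref{def:encode2} again determines the side of $1/2$, so the monotone (resp. order-reversing) branch of $\tent$ yields the claimed inequality.

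The only genuinely delicate point is the handling of the boundary case $x_i = 1/2$, where \eqref{def:encode1} prescribes $b_{i+1} = 1$ by fiat rather than by the $<$ or $>$ comparison. I would argue that this is precisely why the conclusion is stated for $x < x'$ strictly and why the rephrasing \eqref{def:encode2} is convenient: in \eqref{def:encode2} the value $x_i = 1/2$ is absorbed into the inequality $\le$ (when $b_i = 1$) or excluded (when $b_i = 0$), so that whenever $b_i = b'_i$, the pair $(b_i, b_{i+1})$ unambiguously tells us a closed or open half-line containing $x_i$, and on each such half-line $\tent$ is an affine function with slope $\pm\mu$, hence either strictly order-preserving or strictly order-reversing — and the strict inequality $x_{n-1} \ne x'_{n-1}$ supplied by the induction hypothesis survives. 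I do not expect any serious obstacle beyond bookkeeping these four cases and the boundary convention carefully; the lexicographic-order propositions then follow by contrapositive reasoning on the first index where the codes differ, but that is the content of Proposition~\ref{prop:order}, not of this lemma.
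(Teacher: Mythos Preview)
Your proposal is correct and follows essentially the same approach as the paper's proof: induction on $n$, with the base case handled directly from \eqref{def:encode0} and \eqref{eq:tentmap}, and the inductive step split into the four cases $(b_{n-1},b_n)\in\{0,1\}^2$ using \eqref{def:encode2} to locate $x_{n-1},x'_{n-1}$ relative to $1/2$ and then applying the appropriate affine branch of $\tent$. Your explicit discussion of the boundary convention at $x_i=1/2$ is a bit more careful than the paper's write-up, but the argument is the same.
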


 Let ${\cal L}_{n,\mu}$ (or simply ${\cal L}_n$) denote the set of all $n$-bits tent codes, 
 i.e., 
\begin{align}
{\cal L}_n = \left\{ \enc_{\mu}^{n}(x) \in \{0,1\}^n \ \middle|\ x \in [0,1) \right\}
\label{def:tent-lang}
\end{align} 
 and we call ${\cal L}_n$ {\em tent language} (by $\mu \in (1,2)$). 
 Note that ${\cal L}_n \subsetneq \{0,1\}^n$ for $\mu \in (1,2)$. 

 Let ${\cal D}_{n,\mu}$ (or simply ${\cal D}_n$) denote a probability distribution over ${\cal L}_n$ 
  which follows $\enc^{n}(X) $ for $X$ is uniformly distributed over $[0,1)$, 
 i.e., ${\cal D}_n$ represents the probability of appearing $\bitseq_n \in {\cal L}_n$ 
   as given the initial condition $x$ uniformly at random. 
Our goal is to output ${\bf B} \in {\cal L}_n$ according to ${\cal D}_n$.  
To be precise,  this paper establishes the following theorem. 
\begin{theorem}\label{thm:main}
 Let $\mu \in (1,2)$ be a rational given by an irreducible fraction $\mu=c/d$. 
 Then, it is  possible to generate $B \in {\cal L}_n$ according to ${\cal D}_n$ in $\Order(\lg^2 n \lg^3 d / \lg^4 \mu)$ space in expectation 
 (as well as, with high probability). 
\end{theorem}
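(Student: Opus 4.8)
The overall approach is to realise ${\cal D}_n$ as the output law of an $n$-step random walk on a compact automaton whose states encode the ``location'' of a segment, and then to bound the expected size of the state that the walk actually visits; this follows the three-part plan announced above (the model of Section~\ref{sec:recog}, the Markov chain of Section~\ref{sec:MC}, the space analysis of Section~\ref{sec:average-complexity}). Geometrically, Propositions~\ref{prop:order} and~\ref{prop:-heikai} say that $x\mapsto\enc^n(x)$ is non-decreasing for the lexicographic order and right-continuous, so $[0,1)$ breaks into finitely many half-open intervals (``segments'') on each of which $\enc^n$ is constant and equal to some word of ${\cal L}_n$, distinct segments carrying distinct, lexicographically consecutive codes; ${\cal D}_n$ then gives each code the length of its segment. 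Sampling from ${\cal D}_n$ is therefore a \emph{random descent}: reveal $b_1,\dots,b_n$ bit by bit, maintaining the interval $I_i\subseteq[0,1)$ of initial conditions consistent with $b_1\cdots b_i$ (so $I_0=[0,1)$), and at step $i$ choose $b_{i+1}$ with probability proportional to the lengths of the two pieces into which the continuations split $I_i$ (exactly one piece nonempty when $b_{i+1}$ is forced by~\eqref{def:encode1}). By Lemma~\ref{lem:encode1} this split depends only on how $\tent^i(I_i)$ lies relative to $1/2$, and since $\Pr[b_1\cdots b_n]=\prod_i |I_{i+1}|/|I_i|=|I_n|$ the resulting law on $\{0,1\}^n$ is exactly ${\cal D}_n$.

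The point of the automaton is to run this descent without ever writing down $I_i$, whose endpoints are rationals of denominator up to $d^i$. Instead one stores the image $J_i:=\tent^i(I_i)$, on which $\tent^i$ is affine and monotone, as a \emph{segment-type}: the orientation $\sigma_i\in\{+,-\}$ of that branch together with a description of $J_i$. The structural fact to be proved in Sections~\ref{sec:recog} and~\ref{sec:transition_function} (in the spirit of the location-of-segments idea of~\cite{makino2015} and of the Hofbauer tower, cf.~\cite{Hofbauer79}) is that each endpoint of $J_i$ is always one of $0,\tfrac12,1$ or an iterate $\tent^k(\tfrac12)$ with $k<i$; that the only step producing genuine randomness is a \emph{straddle} ($\tfrac12\in\operatorname{int}J_i$), after which the freshly cut endpoint equals $\tent(\tfrac12)=\mu/2$; and that, absent a straddle, $\tent$ merely rescales $J_i$ by $\mu$, so such an endpoint ``ages'' $\tent^k(\tfrac12)\mapsto\tent^{k+1}(\tfrac12)$ until the next straddle on its side. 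Hence a segment-type is stored by $\sigma_i$ plus two \emph{levels} $k\le i$ (plus a bounded rational part), and the transition rule --- how $(\sigma_i,J_i,b_i)$ and the coin outcome determine $(\sigma_{i+1},J_{i+1},b_{i+1})$ and the branch bias --- is computed by comparing the endpoints of $J_i$ with $\tfrac12$, which only needs the values $\tent^k(\tfrac12)$ for the stored $k$, each a rational of denominator $d^k$ obtained by $k$ applications of~\eqref{eq:tentmap} in $\Order(k\lg d)$ space.

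For the space bound, the work tape holds the current segment-type, the step counter, and scratch to (i) evaluate the needed $\tent^k(\tfrac12)$, (ii) form the two piece-lengths of $J_i$ as exact rationals (the $\mu^i$ scaling cancels in the ratio), and (iii) sample a coin of that rational bias; every one of these is governed by the largest level $L$ that the walk keeps active, so the theorem reduces to the estimate $\mathbf{E}[L]=\Order(\lg n)$, and $L=\Order(\lg n)$ with high probability, with the $\mu$-dependence entering because $|J_i|$ changes by the factor $\mu$ per step so a level can persist only $\Order(1/\lg\mu)$ steps before $|J_i|$ is pushed across $\tfrac12$ again. I would prove this by showing the chain on segment-types is uniformly expanding: once $|J_i|$ exceeds a fixed threshold the next-step straddle probability is bounded below, and after a cut $|J_i|$ regains that threshold within $\Order(1/\lg\mu)$ steps in expectation, so the age of the oldest active level is stochastically dominated by a geometric-type variable with mean $\Order(1/\lg\mu)$; a union bound over the $n$ steps turns this into a high-probability $\Order(\lg n/\lg\mu)$ bound on $L$. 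Feeding $L=\Order(\lg n/\lg\mu)$ into the $\Order(L\lg d)$ cost of the rational arithmetic, together with the $\Order(\lg n)$-bit precision required by the coin flips and the accumulation of $1/\lg\mu$ factors from the lifetimes of levels, yields the claimed $\Order(\lg^2 n\,\lg^3 d/\lg^4\mu)$ expected space, and the same bound with high probability.

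The step I expect to be the main obstacle is precisely this last probabilistic estimate --- that straddles of $1/2$ occur frequently enough under the random descent that the oldest active level at time $n$ is $\Order(\lg n)$ in expectation and with high probability --- since it requires real control of the dynamics of the walk on segment-types (an expansion/mixing statement with explicit $\mu$-dependence), not just the combinatorial structure used elsewhere. A secondary, more technical hurdle is certifying that every micro-step --- exact rational arithmetic on the $\tent^k(\tfrac12)$, exact sampling of a rationally biased coin, and re-deriving these quantities on demand --- is implementable within the stated space on a Turing machine, so that $I_i$ (of denominator $d^i$) is never materialised.
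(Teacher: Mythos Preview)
Your overall architecture---segment-type automaton, Markov chain realising $\mathcal{D}_n$, reduction to controlling the maximum level $K$ reached---matches the paper, and the structural claim that endpoints are iterates $\tent^k(\tfrac12)$ is exactly Lemma~\ref{lemm:elements_of_type_2}. The gap is in the probabilistic estimate you yourself flag, and it is not merely technical: the proposed mechanism does not work. You argue that frequent straddles make ``the age of the oldest active level \ldots\ stochastically dominated by a geometric-type variable with mean $\Order(1/\lg\mu)$.'' But a straddle does \emph{not} reset the level: at a straddle of $\typei_n$ one branch goes to $\typei_{n+1}$ (level still increments) and only the other branch drops; moreover the probability of the dropping branch is proportional to the distance from $\tfrac12$ to one endpoint, which is not bounded away from zero. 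So the level is not geometrically controlled by straddle frequency alone, and your claimed $L=\Order(\lg n/\lg\mu)$ has no support.

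What the paper uses instead are two ingredients absent from your sketch. First, a \emph{halving} property (Lemma~\ref{lemm:transition_function}, proved via Lemmas~\ref{lemm:back_transition} and~\ref{lemm:back_half}): whenever the walk does drop from level $n$, it lands at level at most $\tfrac{n}{2}+1$. Hence reaching level $2l$ for the first time forces the consecutive run $\typei_l,\typei_{l+1},\ldots,\typei_{2l}$ (Observation~\ref{obs:go_straight}), which by telescoping~\eqref{eq:transition-prob2} has probability exactly $|\typei_{2l}|/(\mu^l|\typei_l|)$. Second, the arithmetic lower bound $|\typei_l|\ge 1/(2d^l)$ (Lemma~\ref{lemm:min_sectype_length})---this is where rationality of $\mu=c/d$ is actually used and where the $\log_\mu d$ factor enters. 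Combining these, one finds $l=\Order(\log_\mu n\cdot\log_\mu d)$ with $|\typei_{2l}|/(\mu^l|\typei_l|)\le n^{-3}$ (Lemma~\ref{lemm:existence_of_short_q_s}), and a union bound over the $n$ steps gives $\Pr[K\ge 2l]\le n^{-2}$. With $\E[K^2]=\Order(\log_\mu^2 n\,\log_\mu^2 d)$ and the $\Order(K^2\lg d)$ storage of Lemma~\ref{lemm:space_of_stg}, the stated $\Order(\lg^2 n\,\lg^3 d/\lg^4\mu)$ follows directly; your final accounting, lacking the $\log_\mu d$ factor in $L$ and gesturing at an ``accumulation of $1/\lg\mu$ factors,'' does not reproduce it.
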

To prove it, 
 we will give an algorithm to generate ${\bf B} \sim {\cal D}_n$ in Section~\ref{sec:algo}, 
 where 
  ${\bf B} \sim {\cal D}_n$ denotes that a random variable ${\bf B} \in \{0,1\}^n$ 
    follows the distribution ${\cal D}_n$ (thus, ${\bf B} \in {\cal L}_n$ (at least probability $1$)). 
 Then, we will analyze the average space complexity in Section~\ref{sec:average-complexity}.

\section{Algorithm for the Proof of Theorem of \ref{thm:main}}\label{sec:algo}
 The goal of this section is to design a space efficient algorithm 
    to generate ${\bf B} \sim {\cal D}_n$ (recall Theorem~\ref{thm:main}), 
  and it will be presented as Algorithm~\ref{alg1} in Section~\ref{sec:algo-summary}. 
 To design a space efficient algorithm, 
   we need a compact representation for a recognition of a tent code. 
 For the purpose, 
   we have to develop notions and arguments, far from trivial, 
   that is namely 
   the notion of segment-type (in Section~\ref{ss:sectiontype}), 
   transition diagram of segment-types (in Section~\ref{sec:recog}), and 
   transition probability to realize ${\cal D}_n$ (in Section~\ref{sec:MC}).

\subsection{Compressing lemma---an intuition as an introduction}
\begin{figure}[tbp]
  \begin{tabular}{cc}
    \begin{minipage}[t]{.5\hsize}
      \centering
      \includegraphics[width=1\linewidth]{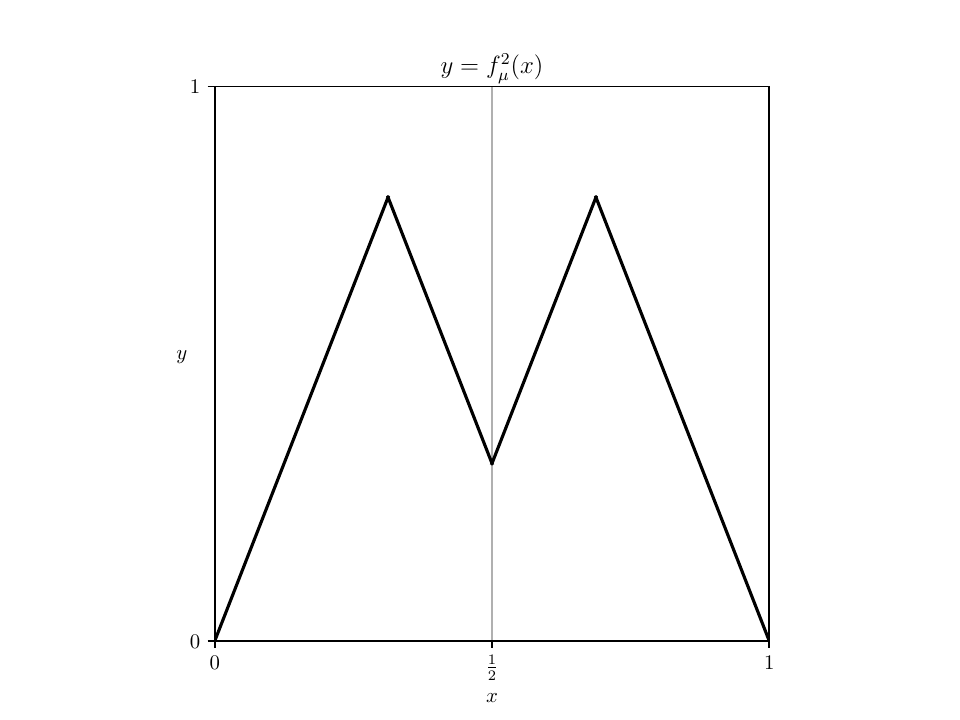}
      \subcaption{The $2$nd iterated tent map $\tent^2(x)$}
      \label{fig:compress_2}
    \end{minipage} &
    \begin{minipage}[t]{.5\hsize}
      \centering
      \includegraphics[width=1\linewidth]{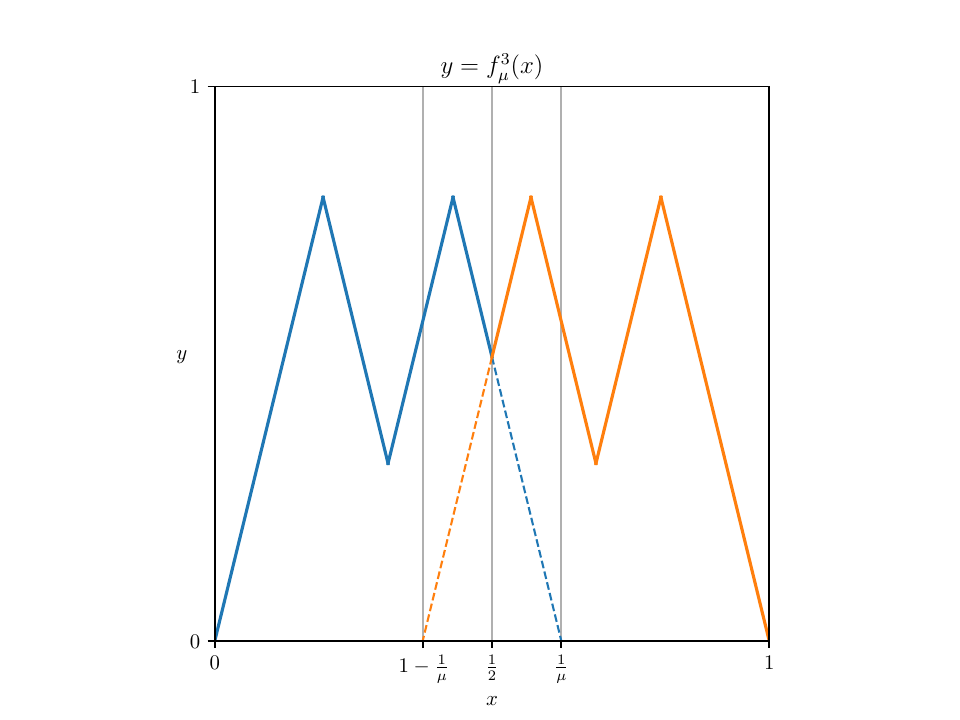}
      \subcaption{The $3$rd iterated tent map $\tent^3(x)$}
      \label{fig:compress_3}
    \end{minipage}
  \end{tabular}
  \caption{Lemma~\ref{lem:compress} implies that 
    $\tent^3$ consists of two $\tent^2$, 
     compressed in $1/\mu$ in $x$-axis direction and cut off at $1/2$.}
  \label{fig:compress}
\end{figure}

 As Figure~\ref{fig:expansion} shows, 
  the $n$-th iterated tent map looks complicated, consisting of exponentially many line segments, 
  and it definitely causes the ``sensitivity to initial conditions''  of a tent map.  
 Roughly speaking, 
  we will prove that 
   the line-segments of the $n$-th iterated map are classified into at most $2n$ classes 
    by the range ($y$ values) of line-segments 
 (see  Theorem~\ref{theo:number_of_type} in Section~\ref{sec:number_of_type}, for detail).  
 Before the precise argument, 
  we briefly remark a key observation for an intuition of the argument.

\begin{lemma}[Compressing lemma]\label{lem:compress}
Let 
\begin{align}
 \tilde{\tent}(x) = \begin{cases}
  \tent(x) &: x \leq \tfrac{1}{2}, \\
  1- \tent(x) &: x \geq \tfrac{1}{2}
 \end{cases}
 \label{def:tilde}
\end{align}
then, 
\begin{align}
 \tent^{n+1}(x) 
  &= \tent^n(\tilde{\tent}(x))
  =\begin{cases}
      \tent^{n}(\mu x) &: x \leq \frac{1}{2}, \\
      \tent^{n}(1- \mu (1 - x)) &: x \ge \frac{1}{2}
  \end{cases} 
  \label{eq:pres-tent}
 \end{align}
 holds for $x\in [0,1)$ and $n=1,2,\ldots$. 
\end{lemma}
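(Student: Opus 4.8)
The plan is to derive everything from two facts already established: the composition identity \eqref{eq:recursion}, namely $\tent^{n+1}(x)=\tent^{n}(\tent(x))$, and the symmetry \eqref{eq:tent-sym}, namely $\tent^{n}(y)=\tent^{n}(1-y)$ for all $y\in(0,1)$. The first rewrites $\tent^{n+1}$ as $\tent^{n}$ applied to the single inner map $\tent$, and the second lets us replace $\tent(x)$ by $1-\tent(x)$ without affecting the result after applying $\tent^{n}$; together these are exactly what is needed to fold the right branch of $\tent$ back onto $[0,\tfrac12]$, which is what $\tilde\tent$ in \eqref{def:tilde} does.

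Concretely, I would start from \eqref{eq:recursion} and split on whether $x\le\tfrac12$ or $x\ge\tfrac12$. For $x\le\tfrac12$ there is nothing to prove: by \eqref{eq:tentmap} we have $\tent(x)=\mu x$, and by definition $\tilde\tent(x)=\tent(x)=\mu x$, so $\tent^{n+1}(x)=\tent^{n}(\tent(x))=\tent^{n}(\mu x)=\tent^{n}(\tilde\tent(x))$, which is the first line of \eqref{eq:pres-tent}. For $x\ge\tfrac12$ I would apply the symmetry \eqref{eq:tent-sym} with $y=\tent(x)$ to get $\tent^{n}(\tent(x))=\tent^{n}(1-\tent(x))$; since here $\tent(x)=\mu(1-x)$ by \eqref{eq:tentmap}, we have $1-\tent(x)=1-\mu(1-x)=\tilde\tent(x)$, and combining with \eqref{eq:recursion} gives $\tent^{n+1}(x)=\tent^{n}(1-\mu(1-x))=\tent^{n}(\tilde\tent(x))$, the second line of \eqref{eq:pres-tent}. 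Reassembling the two cases yields the middle expression $\tent^{n}(\tilde\tent(x))$ and completes the argument; note that no separate induction on $n$ is required, as the recursion is already carried by \eqref{eq:recursion}, and the base case $n=1$ is covered by the same computation.

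I do not expect a genuine obstacle here; the only points deserving a word of care are bookkeeping. First, at the single point $x=\tfrac12$ the two clauses defining $\tilde\tent$ give different values $\mu/2$ and $1-\mu/2$, but these agree after applying $\tent^{n}$ by \eqref{eq:tent-sym}, so $\tent^{n}\circ\tilde\tent$ is still well defined and the case split above is consistent. Second, one should check that the arguments fed into the iterated maps lie in $[0,1)$: for $x\le\tfrac12$ we have $\mu x\le\mu/2<1$ since $\mu<2$, and for $x\le \tfrac12 \le x < 1$ one checks $\tent(x)=\mu(1-x)\in(0,\mu/2]$ so that $1-\mu(1-x)\in[1-\mu/2,1)$, again using $\mu<2$; the borderline behavior where an iterate hits $\tfrac12$ exactly is handled by the $b=1$ convention in \eqref{def:encode1} exactly as elsewhere in the paper and does not affect the identity of the maps. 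With these remarks in place the proof is a two-line case analysis.
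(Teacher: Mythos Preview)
Your proposal is correct and follows essentially the same approach as the paper's own proof: both reduce the case $x\ge\tfrac12$ to the symmetry of the tent map about $\tfrac12$, combined with the associativity \eqref{eq:recursion}. The only cosmetic difference is that the paper first proves the $n=1$ identity $\tent(\tilde\tent(x))=\tent^2(x)$ using the base symmetry $\tent(y)=\tent(1-y)$ and then composes with $\tent^{n-1}$, whereas you invoke the iterated symmetry \eqref{eq:tent-sym} directly at level $n$; the content is the same. (Minor typo: ``for $x\le\tfrac12\le x<1$'' should read ``for $\tfrac12\le x<1$'', and the aside about the $b=1$ convention in \eqref{def:encode1} is irrelevant here since the lemma concerns only the map $\tent^{n+1}$, not the encoding.)
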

\begin{proof}
 It is trivial for $x \leq 1/2$ since $\tilde{\tent}(x) = \tent(x)$. 
 Suppose $x \geq 1/2$. 
 Then, 
\begin{align*}
  \tent(\tilde{\tent}(x)) 
   &= \tent(1-\tent(x)) \\
   &= \tent(\tent(x))  && (\mbox{since $\tent(x) =\tent(1-x)$})\\
   &= \tent^2(x) 
\end{align*}
 holds, and we obtain the claim for $n=1$. 
 Once we obtain $\tent(\tilde{\tent}(x)) = \tent^2(x)$, 
  the claim is trivial for $n \geq 2$. 
\end{proof}
 We remark for \eqref{eq:pres-tent} 
  that let $x=1-t$ ($t \leq 1/2$) then $\tilde{\tent}(x) = 1- \mu (1 - x) = 1 -\mu(1-(1-t)) = 1- \mu t$.
 Lemma~\ref{lem:compress} gives an intuition that 
  $\tent^{n+1}$ consists of two $\tent^n$   
  compressed in $1/\mu$ in $x$-axis direction and cut off at $x=1/2$ (see Figure~\ref{fig:compress}).  
 This is a key observation that the line-segments are effectively classified by their projections on $y$-axis. 
 A precise argument follows.

\subsection{Sections and segment-types}\label{ss:sectiontype}
\subsubsection{Sections}
 To begin with, 
   we introduce a natural equivalent class over  $[0,1)$ with respect to $\enc^n$. 
 Let 
\begin{align}
  \se_n(\bitseq) \defeq \{x \in [0,1) \mid \enc^{n}(x) = \bitseq \} 
  \label{eq:section}
\end{align}
  for a bit sequence $\bitseq = b_{1} \cdots b_{n} \in {\cal L}_n$, and 
  we call it {\em section} (of $\bitseq$). 
  For convenience, 
   we define $\se(\bitseq_{n}) = \emptyset$ if $\bitseq_{n} \not\in {\cal L}_n$. 
 We also abuse $ \se_n(x)$ for $x \in [0,1)$ as $\se_n(\enc^n(x))$, and call it section of $x$.  
Let 
\begin{align}
{\cal S}_n = \{ \se_n(\bitseq) \mid \bitseq \in {\cal L}_n \} 
 = \{ \se_n(x) \mid x \in [0,1) \} 
\end{align}
 denote the whole set of sections provided by the $n$-th iterated tent map. 
 Since $\enc^n(x)$ is uniquely defined for any $x \in [0,1)$, 
   it is not difficult to see that the set of sections is a partition of $[0,1)$, 
 i.e., 
 $\bigcup_{S \in {\cal S}_n} \se =[0,1) $, and 
  $S \cap S'=\emptyset$ for $S \neq S$.

Due to the right continuity of $\enc^n$ (Proposition~\ref{prop:-heikai}), 
 every section is left-closed and right-open interval, i.e., 
\begin{align} 
 S=\left[\inf S,\sup S\right) 
 \label{eq:sec-heikai}
\end{align}
for any $S \in {\cal S}_n$. 
We  observe that every section is subdivided by an iteration of the tent map, 
 meaning that 
\begin{align}
 S_n(\bitseq) = S_{n+1}(\bitseq_n0) \cup S_{n+1}(\bitseq_n1)
 \label{eq:subdiv-sec}
\end{align}
holds for any  $\bitseq = b_1\cdots b_n \in {\cal L}_n$. 
As a consequence, we also observe that 
\begin{align}
 \bigcup_{\bitseq_{n} \in {\cal L}_n \mid b_1=0} \se(\bitseq_{n}) =[0,1/2) \qquad \mbox{and}  \qquad
 \bigcup_{\bitseq_{n} \in {\cal L}_n \mid b_1=1} \se(\bitseq_{n}) =[1/2,1) 
\label{eq:first-bit}
\end{align}
hold. 

\subsubsection{Segment-types}
Clearly, the size of ${\cal S}_n$ is as large as exponential to $n$. 
Interestingly, 
 we will prove that 
 ${\cal S}_n$ is classified by their {\em images} 
  into at most $2n$ types. 
 Let
\begin{align}
  \typefn(S) &\defeq \{ \tent^{n}(x) \mid x \in S \} 
  \label{def:type}
\end{align}
 for $S \in {\cal S}_n$, where 
 we call $\typefn(S)$ the {\em segment-type} of $S$. 
 We can observe every segment-type is a connected interval on $[0,1)$ since $\tent^n$ is a piecewise linear function.
 We abuse 
$ \typefn(\bitseq)$ for $\bitseq \in {\cal L}_n$ as  
  $\typefn(\se_n(\bitseq)) $. 
 For convenience, we also use 
$  \typefn^n(x)$ for $x \in[0,1)$ as $\typefn(\enc^n(x))$ ($=\typefn(\se_n(\enc^n(x)))$). 
Let  
\begin{equation}
  \typeset_n = \{\typefn(\bitseq) \subseteq [0,1)  \mid \bitseq \in {\cal L}_n \}
\end{equation}
denote the set of segment-types (of ${\cal L}_n$). 
For instance, 
\begin{align}
 \typeset_1 
  &= \{ \typefn(0), \typefn(1)\}  \nonumber\\
  &= \{ \tent(\se_1(0)), \tent(\se_1(1))\} \nonumber\\
  &= \{ \tent([0,\tfrac{1}{2})), \tent([\tfrac{1}{2},1))\} \nonumber\\
  &= \{ [0,\tfrac{\mu}{2}), (0,\tfrac{\mu}{2}]\} 
\label{eq:typeset1}
\end{align} 
holds for $n=1$. 
Then, Theorem~\ref{theo:number_of_type}, appearing later, claims $|\typeset_n| \leq 2n$. 
Before stating the theorem, 
 let us briefly illustrate the segment-types.

\subsubsection{Brief remarks on segment-types to get it}\label{sec:fig-type}
Figure~\ref{fig:expansion}, illustrating the iterated tent map $\tent^n$, shows many ``line-segments'' 
  provided by 
\begin{align}
 \left\{ (x,\tent^n(x)) \in \mathbb{R}^2 \mid x \in \se_n(\bitseq_{n}) \right\}
\label{eq:line-segment}
\end{align} 
 for each $\bitseq_{n} \in {\cal L}_n$. 
 It seems that 
   the line-segments corresponds to ${\cal S}_n$ (and hence ${\cal L}_n$) one-to-one, by the figure.  
 At this moment, we see from  Lemma~\ref{lem:encode1} that 
   $\tent^n$ is monotone increasing on $\se_n(\bitseq_n)$ if $b_n=0$, 
  otherwise, i.e., $b_n=1$, the map $\tent^n$ is monotone decreasing. 
 In fact, Theorem~\ref{theo:number_of_type} appearing later implies 
  the one-to-one correspondence between line-segments and ${\cal L}_n$. 
  
 Anyway, a segment-type $\typefn(\se_n(\bitseq_n))$ provides another representation of a section $\se_n(\bitseq_n)$, 
  as well as the corresponding bit sequence $\bitseq_n \in {\cal L}_n$, 
   through the line-segment. 
 Interestingly, we observe the following connection between segment-types and the tent code, 
   from Lemma~\ref{lem:encode1}  and Proposition~\ref{prop:-heikai}. 
\begin{lemma}\label{lem:heikai_n}
 Let $\bitseq_n \in {\cal L}_n$ and let $\typei = \typefn(\bitseq_n)$. 
 Let $v=\inf \typei$ and let $u=\sup \typei$, for convenience.  
 Then, 
 \begin{align}
 \typei = \begin{cases}
 [v,u) & \mbox{if $b_n=0$ (i.e., the line-segment is monotone increasing),} \\
 (v,u] & \mbox{if $b_n=1$ (i.e., the line-segment is monotone decreasing)}
 \end{cases}
 \end{align}
  holds. 
\end{lemma}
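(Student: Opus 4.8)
The plan is to establish Lemma~\ref{lem:heikai_n} by combining the monotonicity of $\tent^n$ on each section with the right-continuity result of Proposition~\ref{prop:-heikai}. First I would recall the basic structural facts already in hand: by \eqref{eq:sec-heikai} every section is a left-closed, right-open interval $S_n(\bitseq_n) = [\inf S, \sup S)$, and by the discussion following Lemma~\ref{lem:encode1}, the restriction of $\tent^n$ to $S_n(\bitseq_n)$ is strictly monotone — increasing if $b_n = 0$, decreasing if $b_n = 1$. Since $\tent^n$ is piecewise linear, it is in particular continuous on the closed interval $\closure{S}=[\inf S, \sup S]$, so its image over the half-open interval $S$ is obtained from its image over $\closure{S}$ by removing exactly the endpoint coming from $\sup S$.

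The core of the argument is then a short case analysis on $b_n$. Suppose $b_n = 0$, so $\tent^n$ is increasing on $S$. By continuity on $\closure{S}$ we have $\tent^n(\closure{S}) = [\tent^n(\inf S), \tent^n(\sup S)]$, and since $\sup S \notin S$ while $\tent^n$ is increasing, the point $\tent^n(\sup S)$ is the supremum $u$ of $\typei$ and is not attained; meanwhile $\inf S \in S$ so $\tent^n(\inf S) = v$ is attained. Hence $\typei = [v,u)$. Symmetrically, if $b_n = 1$ then $\tent^n$ is decreasing on $S$, so $\tent^n(\inf S) = u = \sup \typei$ is attained (as $\inf S \in S$) while $\tent^n(\sup S) = v = \inf \typei$ is not attained (as $\sup S \notin S$), giving $\typei = (v,u]$. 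In both cases the claimed form follows directly.

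One subtlety I would need to handle carefully is the boundary behavior of the tent map itself at $x = 1/2$, where the recursive encoding \eqref{def:encode1} forces $b_{i+1} = 1$. This means that within a section some intermediate iterate $x_i$ could equal $1/2$, and I should check that this does not break strict monotonicity of $\tent^n$ on the \emph{whole} section $S$ — but this is already subsumed by Lemma~\ref{lem:encode1}, which was proved with exactly this definition in mind, so I can invoke it rather than re-derive it. The other point to verify is that $v < u$, i.e., the interval is nondegenerate: if $S$ were a single point the statement would be vacuous or ill-posed, but sections with nonempty interior are what \eqref{eq:sec-heikai} describes, and degenerate sections do not arise for $\bitseq_n \in {\cal L}_n$ by Proposition~\ref{prop:order} together with right-continuity. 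I expect this bookkeeping about which endpoints are attained — rather than any deep estimate — to be the main (and only mild) obstacle; the rest is immediate from monotonicity plus continuity of a piecewise-linear map on a compact interval.
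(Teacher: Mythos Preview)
Your proposal is correct and follows essentially the same approach as the paper's own proof: both invoke \eqref{eq:sec-heikai} (via Proposition~\ref{prop:-heikai}) for the half-open form of the section, use Lemma~\ref{lem:encode1} to get strict monotonicity of $\tent^n$ on $S_n(\bitseq_n)$ according to the value of $b_n$, note continuity of $\tent^n$, and then do the same two-case analysis to determine which endpoint of $\typei$ is attained. Your additional remarks on the $x_i=1/2$ boundary and on nondegeneracy are sound but not needed beyond what the paper already absorbs into Lemma~\ref{lem:encode1}.
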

\begin{proof}
 We recall that every section is left-closed and right-open, i.e., $S=[\inf S, \sup S)$ by Proposition~\ref{prop:-heikai}. 
 We also remark that $\tent^n$ is continuous since $\tent$ is continuous by \eqref{eq:tentmap}. 

 Consider the case $b_n=0$. 
 As we stated above, Lemma~\ref{lem:encode1} implies that 
  $\tent^n$ is monotone increasing on $S=\se_n(\bitseq_n)$ in the case. 
 Thus, $v = \inf \tent^n(S) = \tent^n(\inf S)$ and $u = \sup \tent^n(S) = \tent^n(\sup S)$. 
 Since $S=[\inf S, \sup S)$, we see $v \in \typefn(S)$ and $u \not\in \typefn(S)$. 
 We obtain $\typei = [v,u)$ in the case. 

 The case $b_n=1$ is similar. 
 By Lemma~\ref{lem:encode1}, 
  $\tent^n$ is monotone decreasing on $S=\se_n(\bitseq_n)$ in the case. 
 Thus, $v = \inf \tent^n(S) = \tent^n(\sup S)$ and $u = \sup \tent^n(S) = \tent^n(\inf S)$. 
 Since $S=[\inf S, \sup S)$, we see $v \not\in \typefn(S)$ and $u \in \typefn(S)$. 
 We obtain $\typei = (v,u]$ in the case. 
\end{proof}

\subsection{The number of segment-types}\label{sec:number_of_type}
 Concerning the set of segment-types 
  we will establish the following theorem, which claims  $|\typeset_n| \leq 2n$. 
\begin{theorem}
  \label{theo:number_of_type}
 Let $\mu \in (1, 2)$. 
 Let $\bitseqc_i = \enc^i(\frac{1}{2})$, and let  
 \begin{align}
 \typei_i = \typefn(\bitseqc_i) \hspace{2em}\mbox{and}\hspace{2em}
 \flip{\typei}_i = \typefn(\flip{\bitseqc}_i)
 \label{def:typei}
 \end{align}
  for $i=1,2,\ldots$. 
Then, 
  \begin{align}
    \typeset_n  = \bigcup_{i=1}^{n_*} \left\{ \typei_i, \flip{\typei}_i \right\}
    \label{eq:number_of_type}
  \end{align}
  for $n \geq 1$, 
 where 
  $n_* = \min (\{i \in \{1,2,\ldots,n-1\} \mid \typei_{i+1} \in \typeset_i\} \cup \{n\})$. 
\end{theorem}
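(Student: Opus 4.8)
The plan is to reduce the proof to two recursive facts:
\textbf{(A)} the one-step recursion $\typeset_{n+1}=\typeset_n\cup\{\typei_{n+1},\flip{\typei}_{n+1}\}$, valid for every $n\ge1$; and
\textbf{(B)} a stabilisation statement: if $\typei_{m+1}\in\typeset_m$ for some $m$, then $\typeset_n=\typeset_m$ for all $n\ge m$.
Granting these, the theorem follows at once: iterating (A) from $\typeset_1=\{\typei_1,\flip{\typei}_1\}$ (which is \eqref{eq:typeset1}) gives $\typeset_n=\bigcup_{i=1}^{n}\{\typei_i,\flip{\typei}_i\}$ unconditionally, so if no index $m<n$ satisfies $\typei_{m+1}\in\typeset_m$ then $n_*=n$ and we are done, whereas if $m$ is the least such index then $n_*=m$ (no smaller collision index exists, by minimality of $m$) and (B) gives $\typeset_n=\typeset_m=\bigcup_{i=1}^{m}\{\typei_i,\flip{\typei}_i\}$.

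\emph{Proof of (A).} I would run the compressing lemma (Lemma~\ref{lem:compress}). By \eqref{eq:first-bit} every section of $\tent^n$ lies inside $[0,\tfrac{1}{2})\subseteq[0,\tfrac{\mu}{2})$ or inside $[\tfrac{1}{2},1)\subseteq(1-\tfrac{\mu}{2},1)$, so by Lemma~\ref{lem:compress} each of its line-segments reappears as a line-segment of $\tent^{n+1}$ — over $[0,\tfrac{1}{2})$ after the horizontal $1/\mu$-contraction $x\mapsto\tent^n(\mu x)$, or over $[\tfrac{1}{2},1)$ after contraction-and-reflection $x\mapsto\tent^n(1-\mu(1-x))$ — carrying the \emph{same} $y$-range; hence $\typeset_n\subseteq\typeset_{n+1}$. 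Moreover $\typei_{n+1}=\typefn(\bitseqc_{n+1})\in\typeset_{n+1}$, and $\flip{\typei}_{n+1}=\typefn(\flip{\bitseqc}_{n+1})\in\typeset_{n+1}$ as well, since $\flip{\bitseqc}_{n+1}$ is the $\enc^{n+1}$-code of the points immediately to the left of $1/2$ (combine right continuity, Proposition~\ref{prop:-heikai}, with the symmetry $\enc^{n+1}(1-x)=\flip{\enc^{n+1}(x)}$ following from Proposition~\ref{prop:tent-expansion}); this proves $\supseteq$. Conversely, Lemma~\ref{lem:compress} shows that the line-segments of $\tent^{n+1}$ over $[0,\tfrac{1}{2})$, respectively $[\tfrac{1}{2},1)$, are exactly the images of the line-segments of $\tent^n$ meeting $[0,\tfrac{\mu}{2})$, resp.\ $[1-\tfrac{\mu}{2},1)$, and the only one not copied in its entirety is the single line-segment straddling the abscissa $\tfrac{\mu}{2}$, resp.\ $1-\tfrac{\mu}{2}$, which is merely truncated there. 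Because $\tent^n(\mu/2)=\tent^{n+1}(1/2)$ (by $\tent(1/2)=\mu/2$ and \eqref{eq:recursion}) and Lemma~\ref{lem:heikai_n} pins down all the brackets, the truncated piece over $[0,\tfrac{1}{2})$ is precisely the section of $\tent^{n+1}$ immediately left of $1/2$, with $y$-range $\flip{\typei}_{n+1}$, and the one over $[\tfrac{1}{2},1)$ is the section of $\tent^{n+1}$ containing $1/2$, with $y$-range $\typei_{n+1}$; therefore $\typeset_{n+1}\subseteq\typeset_n\cup\{\typei_{n+1},\flip{\typei}_{n+1}\}$. (In the degenerate case $\tent^j(1/2)=1/2$ for some $j\le n$ the abscissa $\tfrac{\mu}{2}$ is already a breakpoint of $\tent^n$, no truncation occurs, and $\typeset_{n+1}\subseteq\typeset_n$; the identity still holds.)

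\emph{Proof of (B).} The key auxiliary fact is that the children of a segment-type depend only on that type: if $S$ is a section at any level with $\typefn(S)=\typei$, then $\typei=[v,u)$ or $\typei=(v,u]$ by Lemma~\ref{lem:heikai_n}, and \eqref{eq:subdiv-sec} together with the encoding rule \eqref{def:encode2} expresses the two children $\typefn(\se(\bitseq 0))$ and $\typefn(\se(\bitseq 1))$ in terms of $\typei$ alone — namely $\mu\cdot(\typei\cap[0,\tfrac{1}{2}))$ and $\{\mu(1-y):y\in\typei\cap[\tfrac{1}{2},1)\}$ if the last bit is $0$, and the mirror expressions if it is $1$, the $x_i=1/2$ clause of \eqref{def:encode2} fixing the endpoint $\tfrac{\mu}{2}$. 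Writing $\mathrm{ch}(\typei)$ for this set, one has $\typeset_{\ell+1}=\bigcup_{\typei\in\typeset_\ell}\mathrm{ch}(\typei)$ for every $\ell$, since every code in $\mathcal{L}_{\ell+1}$ has its length-$\ell$ prefix in $\mathcal{L}_\ell$ and \eqref{eq:subdiv-sec} holds. Now suppose $\typei_{m+1}\in\typeset_m$; by (A) and the fact that $\typeset_m$ is closed under the involution $\typei_i\mapsto\flip{\typei}_i$ (inherited from $\typeset_1$), also $\flip{\typei}_{m+1}\in\typeset_m$, hence $\typeset_{m+1}=\typeset_m$, i.e.\ $\bigcup_{\typei\in\typeset_m}\mathrm{ch}(\typei)=\typeset_m$: the set $\typeset_m$ is closed under $\mathrm{ch}$. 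Since $\mathrm{ch}$ does not depend on the level, a one-line induction gives $\typeset_{n+1}=\bigcup_{\typei\in\typeset_n}\mathrm{ch}(\typei)=\bigcup_{\typei\in\typeset_m}\mathrm{ch}(\typei)=\typeset_m$ for all $n\ge m$, which is (B).

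\emph{Main obstacle.} The delicate step is the ``$\subseteq$'' half of (A): one must verify — with all the open/closed bookkeeping forced by the $x_i=1/2$ clause of \eqref{def:encode2} and by Lemma~\ref{lem:heikai_n} — that the two truncated line-segments produced by the compression are \emph{exactly} the sections of $\tent^{n+1}$ adjacent to $1/2$, so that they carry precisely the $y$-ranges $\typei_{n+1}$ and $\flip{\typei}_{n+1}$ and no further new type appears. Proving $\enc^{n+1}(1-x)=\flip{\enc^{n+1}(x)}$ away from the finitely many breakpoints, and dispatching the degenerate case $\tent^j(1/2)=1/2$, are the points that need the most care.
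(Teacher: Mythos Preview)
Your proposal is correct and follows essentially the same route as the paper: your step (A) is the content of Lemma~\ref{lem:type_subset} together with Lemma~\ref{lem:c=d}, and your step (B) is Lemma~\ref{lem:number_of_type3}, with your level-independent $\mathrm{ch}(\cdot)$ playing the role of the transition map $\delta$ of Lemma~\ref{lem:transition}. The only cosmetic difference is that the paper first identifies the two exceptional codes as the lexicographic extremes $\bitseqd_n,\bitseqd'_n$ (via Lemma~\ref{lem:section-shift}) and then shows $\bitseqd_n=\bitseqc_n$, $\bitseqd'_n=\flip{\bitseqc}_n$, whereas you argue geometrically that the truncated segments sit adjacent to $1/2$ and hence carry the codes $\bitseqc_{n+1},\flip{\bitseqc}_{n+1}$ directly; the symmetry $\enc^{n+1}(1-x)=\flip{\enc^{n+1}(x)}$ you invoke is the paper's Lemma~\ref{lem:hanten}, and the involution you use in (B) is Lemma~\ref{lem:hanten-type}.
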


\subsubsection{Weak lemma}
 We here are concerned with the following weaker version, which is enough to claim $|\typeset_n| \leq 2n$, 
   to follow the main idea of the arguments in this section. 
\begin{lemma}\label{lem:type_subset}
 Let $\bitseqd_i = \min\{ \bitseq_i \in {\cal L}_i \mid b_1 = 1  \mbox{ where } \bitseq_i  = b_1 \cdots b_i \}$, and 
 let $\bitseqd'_i = \max\{ \bitseq_i \in {\cal L}_i \mid b_1 = 0  \mbox{ where } \bitseq_i  = b_1 \cdots b_i \}$ 
  where $\min$ and $\max$ are according to the lexicographic order.  
 Then, 
  \begin{align}
    \typeset_n  \subseteq \bigcup_{i=1}^n \left\{ \typefn(\bitseqd_i), \typefn(\bitseqd'_i) \right\}
    \label{eq:type_subset}
  \end{align}
 for $n \geq 1$. Furthermore, $\typeset_{n-1} \setminus \typeset_n  = \emptyset$. 
\end{lemma}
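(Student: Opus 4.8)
The plan is to exploit the Compressing Lemma (Lemma~\ref{lem:compress}) together with the recursive subdivision of sections \eqref{eq:subdiv-sec}, and track how segment-types propagate when we extend a bit sequence by one bit. The key point is that a segment-type $\typefn(\bitseq_n)$ only depends on the \emph{image} $\tent^n(\se_n(\bitseq_n))$, and by Lemma~\ref{lem:compress} the map $\tent^{n+1}$ restricted to either half of $[0,1)$ is, up to the affine reparametrization $x \mapsto \mu x$ (resp.\ $x\mapsto 1-\mu(1-x)$), just $\tent^n$. So the images produced at level $n+1$ are images produced at level $n$, \emph{unless} the relevant section straddles the branch point $x=1/2$ and gets cut off there. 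The only sections that can be cut off at $1/2$ are, by \eqref{eq:first-bit} and the monotonicity coming from Lemma~\ref{lem:encode1}, the two extreme ones: the lexicographically smallest section with $b_1=1$ (whose left end is $1/2$, i.e.\ the section of $\bitseqd_i$) and the lexicographically largest section with $b_1=0$ (whose right end is $1/2$, i.e.\ the section of $\bitseqd'_i$). This is why only those two new types can appear per level.

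First I would set up the induction on $n$, the base case $n=1$ being \eqref{eq:typeset1} which exhibits $\typeset_1=\{\typefn(\bitseqd_1),\typefn(\bitseqd'_1)\}$ with $\bitseqd_1=1$, $\bitseqd'_1=0$. For the inductive step, take $\bitseq_{n+1}=\bitseq_n \cdot b_{n+1}\in{\cal L}_{n+1}$ and split on $b_1$. Using \eqref{eq:recursion} in the form $\tent^{n+1}(x)=\tent^n(\tent(x))$ and the description of $\tent$ on each branch, show that $\se_{n+1}(\bitseq_{n+1})$ maps under one application of $\tent$ onto (a subinterval of) the section at level $n$ of the ``shifted'' sequence $b_2\cdots b_{n+1}$, with the subinterval being proper only when $1/2$ is an interior-or-boundary point of the preimage — which pins down the two exceptional sequences. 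Then I would invoke Lemma~\ref{lem:heikai_n} to handle the open/closed endpoints so that ``$\typefn$ equals'' rather than merely ``$\typefn$ is contained in'' can be asserted where needed, and conclude $\typefn(\bitseq_{n+1})\in\typeset_n\cup\{\typefn(\bitseqd_{n+1}),\typefn(\bitseqd'_{n+1})\}$. Summing over all $\bitseq_{n+1}$ gives $\typeset_{n+1}\subseteq\typeset_n\cup\{\typefn(\bitseqd_{n+1}),\typefn(\bitseqd'_{n+1})\}$, and combined with the inductive hypothesis this yields \eqref{eq:type_subset}.

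For the ``furthermore'' clause $\typeset_{n-1}\setminus\typeset_n=\emptyset$, i.e.\ $\typeset_{n-1}\subseteq\typeset_n$: every type at level $n-1$ is $\typefn(\bitseq_{n-1})$ for some $\bitseq_{n-1}\in{\cal L}_{n-1}$; extend it to $\bitseq_n\in{\cal L}_n$ via \eqref{eq:subdiv-sec} (at least one of $\bitseq_{n-1}0,\bitseq_{n-1}1$ is in ${\cal L}_n$), and argue that for a suitable choice the image $\tent^n(\se_n(\bitseq_n))$ recovers all of $\typefn(\bitseq_{n-1})$ — concretely, because $\tent$ maps the full section $\se_{n-1}(\tent(x)\text{-shift})$ \emph{onto} (not just into) the level-$(n-1)$ image, so following one of the two children reproduces the whole interval. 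The endpoint bookkeeping via Lemma~\ref{lem:heikai_n} again matters here.

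The main obstacle I expect is precisely this endpoint/boundary accounting: making sure that ``cut off at $1/2$'' produces \emph{exactly} the sequences $\bitseqd_i,\bitseqd'_i$ (and not some other boundary sequences), and that the half-open vs.\ closed nature of sections and types lines up so the set equality in the inductive step is tight rather than merely an inclusion up to endpoints. The geometric picture from Figure~\ref{fig:compress} is compelling, but converting ``$\tent^{n+1}$ is two copies of $\tent^n$ cut at $1/2$'' into clean statements about which bit sequences index the truncated pieces — and verifying that the truncation at $1/2$ only ever chops the extreme intervals — is the step that needs care, and is where Lemma~\ref{lem:encode1} (monotonicity of $\tent^n$ on each section, with direction determined by $b_n$) does the real work.
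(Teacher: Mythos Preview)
Your plan for the main inclusion \eqref{eq:type_subset} is essentially the paper's approach. The paper packages your observation ``$\se_{n}(\bitseq_{n})$ maps under one application of $\tent$ onto the shifted level-$(n-1)$ section'' as a separate lemma (Lemma~\ref{lem:section-shift}): $\tilde{\tent}(\se_n(\bitseq_n)) \subseteq \se_{n-1}(b_2\cdots b_n)$, with equality unless $\bitseq_n \in \{\bitseqd_n, \bitseqd'_n\}$; the identity $\typefn(\bitseq_n)=\typefn(b_2\cdots b_n)$ for non-extremal $\bitseq_n$ and the inductive step then fall out immediately via Lemma~\ref{lem:compress}, just as you describe.

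Your plan for the ``furthermore'' clause $\typeset_{n-1}\subseteq\typeset_n$, however, has a gap. You propose to \emph{append} a bit to $\bitseq_{n-1}$ via \eqref{eq:subdiv-sec} and argue that $\tent^n(\se_n(\bitseq_{n-1}b))$ recovers $\typefn(\bitseq_{n-1})$ for a suitable $b$. But $\typefn(\bitseq_{n-1}b)$ is the image of a subsection of $\se_{n-1}(\bitseq_{n-1})$ under one \emph{additional} application of $\tent$; by Lemma~\ref{lem:transition} this is $\tent$ applied to a piece of $\typefn(\bitseq_{n-1})$, which is generically different from $\typefn(\bitseq_{n-1})$ itself (equality would force $\typefn(\bitseq_{n-1})$ to be a fixed point of the transition $\delta(\cdot,b)$). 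The correct direction is to \emph{prepend}: the same identity $\typefn(b_1\bitseq_{n-1})=\typefn(\bitseq_{n-1})$ that drives your forward inclusion, read backwards, exhibits each $\typefn(\bitseq_{n-1})$ as some $\typefn(b_1\bitseq_{n-1})\in\typeset_n$, once one checks (using the surjectivity argument in the proof of Lemma~\ref{lem:section-shift}) that a suitable prepended word $b_1\bitseq_{n-1}\in{\cal L}_n$ exists for which the equality case of that lemma applies.
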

 To get Theorem~\ref{theo:number_of_type} from Lemma~\ref{lem:type_subset}, 
   we further need to prove $\bitseqd_i = \bitseqc_i$ and $\bitseqd'_i  = \flip{\bitseqc_i}$ (Lemma~\ref{lem:c=d}), and 
   the existence of $n_*$ (Lemma~\ref{lem:number_of_type3}), 
    which are not difficult but require somehow bothering arguments to prove. 

 Lemma~\ref{lem:type_subset} may intuitively and essentially follow from Lemma~\ref{lem:compress}, 
   considering the representation of a line-segment of $\tent^n$ by a segment-type  (see Section~\ref{sec:fig-type}). 
 For a proof of Lemma~\ref{lem:type_subset}, 
  we start by giving the following strange recursive relationship between sections (cf. Lemma~\ref{lem:compress}, for an intuition). 

\begin{lemma}\label{lem:section-shift}
  $\tilde{\tent}(S_n(\bitseq_n)) \subseteq S_{n-1}(b_2 \cdots b_n)$ holds for any $\bitseq_n = b_1 b_2 \cdots b_n \in {\cal L}_n$. 
Furthermore, $\tilde{\tent}(S_n(\bitseq_n)) \neq S_{n-1}(b_2 \cdots b_n)$ 
 only when $\bitseq_n = \bitseqd_n$ or $\bitseqd'_n$. 
 \end{lemma}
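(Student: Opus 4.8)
The plan is to first establish the inclusion by a bit-level argument driven by Lemma~\ref{lem:compress}, and then to pin down the two exceptional sequences by treating $\tilde{\tent}$ as a pair of affine bijections on the two halves of $[0,1)$.

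For the inclusion, fix $x\in S_n(\bitseq_n)$, so $\enc^n(x)=b_1 b_2\cdots b_n$, and write $\enc^{n-1}(\tilde{\tent}(x))=c_1 c_2\cdots c_{n-1}$; the goal is $c_i=b_{i+1}$ for every $i$. By Lemma~\ref{lem:compress}, $\tent^{i}(\tilde{\tent}(x))=\tent^{i+1}(x)=x_{i+1}$ for all $i\ge 1$, so the orbit values that drive the recursion \eqref{def:encode1} for $c_1 c_2\cdots$ coincide, shifted by one index, with those that drive it for $b_2 b_3\cdots$. It therefore suffices to check the base case $c_1=b_2$, which is a direct case split using \eqref{def:encode2}: if $x<\tfrac12$ then $b_1=0$ and $\tilde{\tent}(x)=\tent(x)=x_1$, so $c_1=0\Leftrightarrow x_1<\tfrac12\Leftrightarrow b_2=0$; if $x\ge\tfrac12$ then $b_1=1$ and $\tilde{\tent}(x)=1-x_1$, so $c_1=0\Leftrightarrow 1-x_1<\tfrac12\Leftrightarrow x_1>\tfrac12\Leftrightarrow b_2=0$ (the boundary cases $x_1=\tfrac12$ are checked the same way). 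An induction on \eqref{def:encode1}, using that $c_{i+1}$ is obtained from $(c_i,x_{i+1})$ by exactly the rule that produces $b_{i+2}$ from $(b_{i+1},x_{i+1})$, then gives $c_i=b_{i+1}$ throughout; hence $\enc^{n-1}(\tilde{\tent}(x))=b_2\cdots b_n$, i.e.\ $\tilde{\tent}(x)\in S_{n-1}(b_2\cdots b_n)$. Note this also shows $\enc^n(x)=b_1\cdot\enc^{n-1}(\tilde{\tent}(x))$, which is used next.

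For the ``furthermore'' part, observe that $\tilde{\tent}$ restricts to an increasing affine bijection $x\mapsto\mu x$ from $[0,\tfrac12)$ onto $[0,\mu/2)$ and to an increasing affine bijection $x\mapsto 1-\mu(1-x)$ from $[\tfrac12,1)$ onto $[1-\mu/2,1)$, and since $\mu>1$ these two images together cover $[0,1)$. Running the previous paragraph in reverse, $\{x\in[0,\tfrac12):\tilde{\tent}(x)\in S_{n-1}(b_2\cdots b_n)\}=S_n(0\,b_2\cdots b_n)$ and $\{x\in[\tfrac12,1):\tilde{\tent}(x)\in S_{n-1}(b_2\cdots b_n)\}=S_n(1\,b_2\cdots b_n)$, so composing with the two bijections yields
\begin{align*}
 \tilde{\tent}(S_n(0\,b_2\cdots b_n)) &= S_{n-1}(b_2\cdots b_n)\cap[0,\mu/2), \\
 \tilde{\tent}(S_n(1\,b_2\cdots b_n)) &= S_{n-1}(b_2\cdots b_n)\cap[1-\mu/2,1).
\end{align*}
By Proposition~\ref{prop:-heikai} the set $S_{n-1}(b_2\cdots b_n)$ is a left-closed, right-open interval. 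Suppose $0\,b_2\cdots b_n\in{\cal L}_n$ but $\tilde{\tent}(S_n(0\,b_2\cdots b_n))\neq S_{n-1}(b_2\cdots b_n)$; then the first displayed identity forces $\inf S_{n-1}(b_2\cdots b_n)<\mu/2<\sup S_{n-1}(b_2\cdots b_n)$, so $S_n(0\,b_2\cdots b_n)$ is the pull-back under $x\mapsto\mu x$ of $[\inf S_{n-1}(b_2\cdots b_n),\mu/2)$ and hence has right endpoint $\tfrac12$. By \eqref{eq:first-bit} this makes $S_n(0\,b_2\cdots b_n)$ the rightmost section whose first bit is $0$, which by Proposition~\ref{prop:order} (lexicographic order on ${\cal L}_n$ agrees with the order on $[0,1)$) equals $S_n(\bitseqd'_n)$; thus $\bitseq_n=\bitseqd'_n$. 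The case of first bit $1$ is symmetric, with $\mu/2$ replaced by $1-\mu/2$: the obstruction is a section straddling $1-\mu/2$, it forces $S_n(1\,b_2\cdots b_n)$ to have left endpoint $\tfrac12$, and hence $\bitseq_n=\bitseqd_n$. This exhausts the cases.

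The step I expect to be fussiest is the endpoint bookkeeping in the last paragraph---keeping track of how the half-openness of sections interacts with truncating the domain of $\tilde{\tent}$ at $\tfrac12$ and its range at $\mu/2$ (resp.\ $1-\mu/2$), and confirming that the \emph{only} obstruction to $\tilde{\tent}(S_n(\bitseq_n))=S_{n-1}(b_2\cdots b_n)$ is a section straddling $\mu/2$ or $1-\mu/2$, of which there is at most one each. One should also note the harmless degenerate case: for the special $\mu$ with $\tent^{k}(\tfrac12)=\tfrac12$, the point $\mu/2$ can itself be a section boundary, in which case there may be no exceptional $\bitseq_n$ at all, which is still consistent with the ``only when'' phrasing.
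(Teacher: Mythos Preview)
Your proof is correct and follows essentially the same approach as the paper. The inclusion part is exactly the paper's Lemma~\ref{lem:obs-rephrase} proved inline (same case split on $b_1$ to verify $c_1=b_2$, same inductive propagation via \eqref{def:encode1} and Lemma~\ref{lem:compress}); for the ``furthermore'' part the paper also inverts $\tilde{\tent}$ on each half, checks that the obstruction is precisely $\sup S_{n-1}>\mu/2$ (resp.\ $\inf S_{n-1}<1-\mu/2$), and concludes via Proposition~\ref{prop:order} and \eqref{eq:first-bit} that only the extremal sections $\bitseqd'_n,\bitseqd_n$ can fail---your bijection-and-intersection packaging is a slightly cleaner way of writing the same computation.
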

The former part of  Lemma~\ref{lem:section-shift} comes from the following fact. 
\begin{lemma}\label{lem:obs-rephrase}
Let $x \in [0,1)$ and let $\tilde{x} = \tilde{\tent}(x)$.  
Suppose
 $\enc^n(x) = b_1\cdots b_n$, and  
 $\enc^{n-1}(\tilde{x}) = d_1\cdots d_{n-1}$. 
Then, $b_{i+1} = d_i$ for $i = 1,2,\ldots,n-1$. 
\end{lemma}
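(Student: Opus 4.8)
The plan is to prove Lemma~\ref{lem:obs-rephrase} by a direct case analysis that tracks how the tent code of $x$ and the tent code of $\tilde{x} = \tilde{\tent}(x)$ relate, leaning on the rephrasing \eqref{def:encode2} of the recursive bit rule and on Lemma~\ref{lem:compress}. The crucial starting observation is that by \eqref{eq:pres-tent} we have $\tent^{i}(\tilde{x}) = \tent^{i+1}(x) = x_{i+1}$ for all $i \ge 0$, i.e.\ the orbit of $\tilde{x}$ is exactly the orbit of $x$ shifted forward by one step. So if we write $\tilde{x}_i = \tent^i(\tilde{x})$, then $\tilde{x}_i = x_{i+1}$, and in particular the comparisons against $1/2$ that drive the bit recursion \eqref{def:encode1}/\eqref{def:encode2} for $\enc^{n-1}(\tilde{x})$ are the same comparisons (against $x_{i+1}$) that appear one step later in the recursion for $\enc^n(x)$.

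First I would handle the base case $d_1 = b_2$. Here I split on whether $x < 1/2$ or $x \ge 1/2$. If $x < 1/2$ then $b_1 = 0$, and $\tilde{x} = \tent(x) = \mu x = x_1$; by \eqref{def:encode0}, $d_1 = 0$ if $\tilde{x} < 1/2$ and $d_1 = 1$ if $\tilde{x} \ge 1/2$, i.e.\ $d_1 = 0$ if $x_1 < 1/2$, $d_1 = 1$ if $x_1 \ge 1/2$; comparing with the $b_1 = 0$ branch of \eqref{def:encode2} this is exactly $b_2$. If $x \ge 1/2$ then $b_1 = 1$, and $\tilde{x} = 1 - \tent(x)$; since $\tent(x) = x_1$, we get $\tilde{x} = 1 - x_1$, so $\tilde{x} < 1/2 \iff x_1 > 1/2$ and $\tilde{x} \ge 1/2 \iff x_1 \le 1/2$. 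Hence $d_1 = 1$ when $x_1 \le 1/2$ and $d_1 = 0$ when $x_1 > 1/2$, which matches the $b_1 = 1$ branch of \eqref{def:encode2} for $b_2$. (One should double-check the boundary $x_1 = 1/2$: there $d_1 = 1$ by \eqref{def:encode0} since $1 - x_1 = 1/2 \ge 1/2$, and $b_2 = 1$ by the third line of \eqref{def:encode1}; consistent.)

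Then I would do the inductive step $d_{i+1} = b_{i+2}$ for $1 \le i \le n-2$, assuming $d_i = b_{i+1}$. The recursion \eqref{def:encode2} expresses $d_{i+1}$ in terms of $d_i$ and the comparison of $\tilde{x}_i$ with $1/2$, and it expresses $b_{i+2}$ in terms of $b_{i+1}$ and the comparison of $x_{i+1}$ with $1/2$. Since $\tilde{x}_i = x_{i+1}$ by the shift identity above, these two comparisons coincide, and by the induction hypothesis $d_i = b_{i+1}$; plugging into the four-way table \eqref{def:encode2} gives $d_{i+1} = b_{i+2}$ immediately, line by line. I would present this as: "apply \eqref{def:encode2} with the substitution $d_i \mapsto b_{i+1}$, $\tilde{x}_i \mapsto x_{i+1}$."

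I do not expect a serious obstacle here; the lemma is essentially the statement that the one-step shift on orbits induces a one-step shift on codes, and the only real care needed is with the three boundary conventions in \eqref{def:encode1} (the $x_i = 1/2$ case) and with the fact that $\tilde{\tent}$ is defined so as to ``undo'' the reflection that the tent map performs on $[1/2,1)$ — which is precisely what makes the codes line up rather than the bits getting flipped. The mild subtlety is the half-open behavior at $x = 1/2$ (where $\tent(x) = \mu/2 > 1/2$, so $\tilde x = 1 - \mu/2 < 1/2$) and at points mapping exactly to $1/2$; I would check each such case explicitly against the literal definitions \eqref{def:encode0}–\eqref{def:encode1}, since the whole point of the artificial tie-breaking in \eqref{def:encode1} is to make statements like this (and Propositions~\ref{prop:order}, \ref{prop:-heikai}) hold without exception.
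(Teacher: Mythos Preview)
Your proof is correct and follows essentially the same route as the paper's: establish the base case $b_2 = d_1$ by splitting on $x < 1/2$ versus $x \ge 1/2$, then propagate by induction using the orbit-shift identity $\tilde{x}_i = x_{i+1}$ from Lemma~\ref{lem:compress} together with the recursion \eqref{def:encode2}. One small slip: your opening claim that $\tent^i(\tilde{x}) = \tent^{i+1}(x)$ holds for all $i \ge 0$ is too strong---\eqref{eq:pres-tent} only guarantees it for $i \ge 1$, and at $i = 0$ it fails when $x > 1/2$ (where $\tilde{x} = 1 - x_1 \ne x_1$)---but your actual argument never relies on the $i = 0$ case, since you handle the base step by the explicit case split and only invoke the shift for $i \ge 1$ in the induction.
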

\begin{proof}
Let $x_{i+1}=\tent^{i+1}(x)$ and $\tilde{x}_i=\tent^i(\tilde{x}) =\tent^i(\tilde{f}(x))$,  
then $x_{i+1} = \tilde{x}_i$ holds for $i=1,2,\ldots$ by Lemma~\ref{lem:compress}. 
Recall \eqref{def:encode2} that $b_{i+1}$ depends only on $x_i$ and $b_i$, as well as $d_i$ depends on $\tilde{x}_{i-1}$ and $d_{i-1}$, for $i=2,3,\ldots$. 
Thus, it is enough to prove $b_2 = d_1$, 
 then we inductively obtain the claim since $x_{i+1} = \tilde{x}_i$. 

Firstly,  we consider the case of $x<1/2$. 
Notice that $x_1 = \tilde{x} = \tent(x)$, in the case. 
We consider two cases whether $\tent(x) < 1/2$ or not. 
 Suppose $\tent(x) < 1/2$. 
 Then, $b_2 =0$ by \eqref{def:encode2}, 
  as well as $d_1 = 0$ by \eqref{def:encode0}, and hence $b_2 =d_1$. 
 Suppose $\tent(x) \geq 1/2$. 
  Then, $b_2 =1$ by \eqref{def:encode2}, 
  as well as $d_1 = 1$ by \eqref{def:encode0}, and hence $b_2 =d_1$. 
  
Next, we consider the case $x \geq 1/2$. 
In the case, 
  $x_1 = \tent(x)$ and $\tilde{x} = 1-\tent(x)$. Notice that $x_1 = 1-\tilde{x}$ holds.  
We consider two cases whether $x_1 > 1/2$ or not. 
 Suppose $x_1 > 1/2$. 
 Then, $b_2 =0$ by \eqref{def:encode2}  
 while $d_1 = 0$ by \eqref{def:encode0}  since $\tilde{x} = 1-x_1 <1/2$. We obtain $b_2 =d_1$. 
 Suppose $x_1 \leq 1/2$.  
  Then, $b_2 =1$ by \eqref{def:encode2}, 
  while $d_1 = 1$ by \eqref{def:encode0}, accordingly $b_2 =d_1$. 
 We obtain the claim. 
\end{proof}

\begin{proof}[Proof of Lemma~\ref{lem:section-shift}]
 The former part is almost trivial from Lemma~\ref{lem:obs-rephrase}:  
 In fact, if $x \in \se_n(\bitseq_n)$ then $\enc^n(x) = \bitseq_n$. 
 Lemma~\ref{lem:obs-rephrase} implies $\enc^{n-1}(\tilde{\tent}(x)) = b_2 \cdots b_n$
  meaning that $\tilde{\tent}(x) \in \se_{n-1}(b_2 \cdots b_n)$. 

 Now, we prove, for the latter part, that  
\begin{align}
  \tilde{\tent}(\se_n(\bitseq_n)) \supseteq \se_{n-1}(b_2 \cdots b_n) 
\end{align}
  unless $\bitseq_n = \bitseqd_n$ or $\bitseqd'_n$. 
 Firstly, we consider the case $b_1 =0$. 
 Suppose $\se_{n-1}(b_2 \cdots b_n) = [u,v)$ where $u<v$. 
 Let $y \in \se_{n-1}(b_2 \cdots b_n)$, then 
  we claim that $y \in \tilde{\tent}(\se_n(\bitseq_n))$ 
 if $v \leq \frac{\mu}{2}$. 
 For the purpose, 
   we calculate $\enc^n(\frac{y}{\mu})$.  
 Clearly, $b_1=0$ since $\frac{y}{\mu} < \frac{v}{\mu} \leq \frac{1}{2}$. 
 By lemma~\ref{lem:obs-rephrase}, 
 $b_2\cdots b_n = \enc^{n-1}(\tilde{\tent}(\frac{y}{\mu})) = \enc^{n-1}(y) = b_2 \cdots b_n$ since $y \in \se_{n-1}(b_2 \cdots b_n)$. 
 Thus, we obtain $\enc^n(\frac{y}{\mu}) = \bitseq_n$, meaning that $\frac{y}{\mu} \in \se_n(\bitseq_n)$. 
 Now it is easy to see $y \in \tilde{\tent}(\se_n(\bitseq_n))$ since $\tilde{\tent}(\frac{y}{\mu}) = y$.  
 Recall that the set of sections ${\cal S}_n$ is a partition of $[0,1)$ for each $n$, 
  that the sections are allocated in [0,1) in lexicographic order by Proposition~\ref{prop:order}, and 
  that $\tilde{\tent}(x) = \mu x$ for $x<\frac{1}{2}$ is clearly order preserving.  
 Thus, only $\se_{n-1}(d'_2 \cdots d'_n) = [u,v)$ may violate $v<\frac{\mu}{2}$, 
   where $\bitseqd'_n = \max\{ d'_1 \cdots d'_n \in {\cal L}_n \mid d'_1 = 0 \}$. 
  
 The case of $b_1=1$ is similar.  
 Suppose $\se_{n-1}(b_2 \cdots b_n) = [u,v)$ where $u<v$. 
 Let $y \in \se_{n-1}(b_2 \cdots b_n)$, then 
  we claim that $y \in \tilde{\tent}(\se_n(\bitseq_n))$ 
 if $u \geq 1- \frac{\mu}{2}$. 
 For the purpose, 
   we calculate $\enc^n(1-\frac{1-y}{\mu})$, 
   where we remark that $1-\frac{1-y}{\mu} \geq 1-\frac{1-u}{\mu} \geq \frac{1}{2}$ and 
   $\tilde{\tent}(1-\frac{1-y}{\mu}) = 1-\tent(1-\frac{1-y}{\mu}) = 1-\mu(1-(1-\frac{1-y}{\mu})) = y$.  
 Clearly, $b_1=1$ since $1-\frac{1-y}{\mu} \geq \frac{1}{2}$. 
 By lemma~\ref{lem:obs-rephrase}, 
 $b_2\cdots b_n = \enc^{n-1}(\tilde{\tent}(1-\frac{1-y}{\mu})) = \enc^{n-1}(y) = b_2 \cdots b_n$ since $y \in \se_{n-1}(b_2 \cdots b_n)$. 
 Thus, we obtain $\enc^n(1-\frac{1-y}{\mu}) = \bitseq_n$, meaning that $1-\frac{1-y}{\mu} \in \se_n(\bitseq_n)$. 
 Now it is easy to see $y \in \tilde{\tent}(\se_n(\bitseq_n))$ since $\tilde{\tent}(1-\frac{1-y}{\mu}) = y$.  
 Recall that the set of sections ${\cal S}_n$ is a partition of $[0,1)$ for each $n$, 
  that the sections are allocated in [0,1) in lexicographic order by Proposition~\ref{prop:order}, and 
  that $\tilde{\tent}(x) = 1-\tent(x) = 1-\mu(1-x) = \mu x + (1-\mu)$ for $x \geq 1/2$ is clearly order preserving.  
 Thus, only $\se_{n-1}(d_2 \cdots d_n) = [u,v)$ may violate $u \geq 1-\frac{\mu}{2}$, 
   where $\bitseqd_n = \min\{ d_1 \cdots d_n \in {\cal L}_n \mid d_1 = 1 \}$. 
\end{proof}

Now, we are ready to prove Lemma~\ref{lem:type_subset}. 
\begin{proof}[Proof of Lemma~\ref{lem:type_subset}] 
 The claim is trivial for $n=1$ (recall \eqref{eq:typeset1}). 
 We prove 
  $\typeset_n \subseteq \typeset_{n-1} \cup \{\typefn(\bitseqd_n),\typefn(\bitseqd'_n)\}$ for $n \geq 2$. 
For the purpose, 
 we claim $\typefn(\bitseq_n) = \typefn(b_2 \cdots b_n)$ unless $\bitseq_n = \bitseqd_n$ or $\bitseqd'_n$.  
In fact, 
\begin{align*}
 \typefn(\bitseq_n) 
  &= \{\tent^n(x) \mid x \in \se_n(\bitseq_n)\} 
    && (\mbox{by definition~\eqref{def:type}}) \\
  &= \{\tent^{n-1}(\tilde{\tent}(x)) \mid x \in \se_n(\bitseq_n)\} 
    && (\mbox{by Lemma~\ref{lem:compress}}) \\
  &= \{\tent^{n-1}(y) \mid y \in \tilde{\tent}(\se_n(\bitseq_n))\} 
    && (\mbox{set $y = \tilde{\tent}(x)$}) \\
  &= \{\tent^{n-1}(y) \mid y \in \se_n(b_2\cdots b_n)\} 
    && (\mbox{by Lemma~\ref{lem:section-shift}, w/  $\bitseq_n \neq \bitseqd_n,\bitseqd'_n$}) \\
  &= \typefn(b_2 \cdots b_n) 
    && (\mbox{by definition~\eqref{def:type}}) 
\end{align*}
holds. 
 Clearly, $\typefn(b_2 \cdots b_n) \in \typeset_{n-1}$ by definition, and 
 we inductively obtain the claim. 
\end{proof}

\subsubsection{Lemmas for $\bitseqc_n = \bitseqd_n$ and  $\flip{\bitseqc_n} = \bitseqd'_n$}
 As a remaining part of the proof of Theorem~\ref{theo:number_of_type}, 
  here we just refer  to the following facts. See Section~\ref{apx:c=d} for a proof.
\begin{lemma}\label{lem:c=d}
 Let $\bitseqc_n = \enc^n(\frac{1}{2})$. 
 Then, $\bitseqc_n = \bitseqd_n$ and  $\flip{\bitseqc_n} = \bitseqd'_n$ hold, where 
  $\bitseqd_n = \min\{ \bitseq_n \in {\cal L}_n \mid b_1 = 1  \mbox{ where } \bitseq_n  = b_1 \cdots b_n \}$ and 
  $\bitseqd'_n = \max\{ \bitseq_n \in {\cal L}_n \mid b_1 = 0  \mbox{ where } \bitseq_n  = b_1 \cdots b_n \}$.  
\end{lemma}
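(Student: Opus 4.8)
The plan is to identify the two extremal codes $\bitseqd_n,\bitseqd'_n$ geometrically, and then to exploit the reflection $x\mapsto 1-x$. For the easy direction $\bitseqc_n=\bitseqd_n$: the point $\tfrac12$ lies in the section $\se_n(\bitseqc_n)$, and since every section has the form $[\inf \se,\sup \se)$ by Proposition~\ref{prop:-heikai} while points below $\tfrac12$ have first bit $0$ by \eqref{def:encode0}, we get $\inf \se_n(\bitseqc_n)=\tfrac12$ and the first bit of $\bitseqc_n$ equals $1$. Now for any $\bitseq_n\in{\cal L}_n$ with $b_1=1$, pick $x\in\se_n(\bitseq_n)$; by \eqref{eq:first-bit} we have $x\ge\tfrac12$, so Proposition~\ref{prop:order} gives $\bitseqc_n=\enc^n(\tfrac12)\preceq\enc^n(x)=\bitseq_n$. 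Hence $\bitseqc_n$ is the lexicographic minimum among first-bit-$1$ codes, i.e. $\bitseqc_n=\bitseqd_n$.

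For $\flip{\bitseqc_n}=\bitseqd'_n$ I would first establish a symmetry claim: if $x\in(0,\tfrac12)$ satisfies $\tent^i(x)\ne\tfrac12$ for all $i=1,\dots,n-1$, then $\enc^n(1-x)=\flip{\enc^n(x)}$ (bitwise complement). This is an induction on the bit index. By \eqref{eq:tent-sym} the iterates agree, $\tent^i(x)=\tent^i(1-x)$ for $i\ge1$; the first bits are complementary by \eqref{def:encode0} ($0$ for $x<\tfrac12$, $1$ for $1-x>\tfrac12$); and, since the relevant iterate is never $\tfrac12$, the recursion \eqref{def:encode1} applies the identical ``copy-or-flip'' rule to both sequences at each step, so $b_{i+1}'=\flip{b_{i+1}}$ whenever $b_i'=\flip{b_i}$, and the relation propagates to all indices.

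To conclude, write $\se_n(\bitseqc_n)=[\tfrac12,b)$ with $b>\tfrac12$ (a nonempty half-open interval with left endpoint $\tfrac12$); since the first-bit-$0$ sections tile $[0,\tfrac12)$ in lexicographic order by Proposition~\ref{prop:order} and \eqref{eq:first-bit}, the lexicographically largest one is $\se_n(\bitseqd'_n)=[a,\tfrac12)$ with $a<\tfrac12$. The open interval $(\max(a,1-b),\tfrac12)$ is nonempty, and the set $\{x\mid \tent^i(x)=\tfrac12 \text{ for some } i\le n-1\}$ is finite because each $\tent^i$ is piecewise linear with nonzero slopes; so I may pick $x$ in this interval avoiding that finite set. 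Then $x\in\se_n(\bitseqd'_n)$ and $1-x\in\se_n(\bitseqc_n)$, hence $\enc^n(x)=\bitseqd'_n$ and $\enc^n(1-x)=\bitseqc_n$, and the symmetry claim gives $\bitseqc_n=\flip{\bitseqd'_n}$, i.e. $\bitseqd'_n=\flip{\bitseqc_n}$.

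The main obstacle is the boundary behaviour at $\tfrac12$: the clean identity $\enc^n(1-x)=\flip{\enc^n(x)}$ breaks exactly when some iterate of $x$ hits $\tfrac12$, and in particular at $x=\tfrac12$ itself (where instead $\enc^n(1-x)=\enc^n(x)=\bitseqc_n$). So one cannot simply plug in $x=\tfrac12$; one must instead locate $\se_n(\bitseqd'_n)$ and $\se_n(\bitseqc_n)$ as honest half-open intervals abutting $\tfrac12$ and pass to a \emph{generic} nearby point, checking both that the finite exceptional set can be avoided and that a single choice of $x$ simultaneously lands in $\se_n(\bitseqd'_n)$ while its reflection lands in $\se_n(\bitseqc_n)$.
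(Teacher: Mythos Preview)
Your proof is correct and follows essentially the same route as the paper: the first claim via Proposition~\ref{prop:order} and \eqref{eq:first-bit}, the second via the reflection identity $\enc^n(1-x)=\flip{\enc^n(x)}$ away from an exceptional set, applied at a point just below $\tfrac12$. The only cosmetic difference is that the paper packages the symmetry as Lemma~\ref{lem:hanten} with the exception phrased as $x=\min S_n(x)$ (and proves a small auxiliary lemma linking this to $\tent^{n-1}(x)=\tfrac12$), then invokes right continuity, whereas you phrase the exception directly as ``no iterate hits $\tfrac12$'' and pick a generic nearby point; your formulation avoids the auxiliary lemma at the cost of the explicit genericity argument.
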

The former claim is trivial by Proposition~\ref{prop:order} and \eqref{eq:first-bit}. 
For a proof of the latter claim, we use the following fact, which intuitively seems obvious from  Lemma~\ref{lem:compress}. 
\begin{lemma}\label{lem:hanten}
$\flip{\enc^n(x)} = \enc^n(1-x)$ 
 unless $x = \min S_n(x)$. 
\end{lemma}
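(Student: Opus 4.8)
The plan is to exploit the symmetry $\tent^i(x)=\tent^i(1-x)$ for $i\ge 1$, which follows from \eqref{eq:tent-sym} together with $\tent(x)=\tent(1-x)$. Writing $\enc^n(x)=b_1\cdots b_n$ and $\enc^n(1-x)=c_1\cdots c_n$, both sequences are then driven by the \emph{same} iterates $x_i=\tent^i(x)$, $i\ge 1$, through the recursion \eqref{def:encode1}; they can disagree only because the base cases \eqref{def:encode0} see $x$ versus $1-x$, and because the tie rule ``$x_i=\tfrac12\mapsto 1$'' in \eqref{def:encode1} is not symmetric under bit inversion. So the argument splits into a ``propagation'' part (no ties occur) and an ``endpoint'' part (a tie does occur).

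First I would prove a propagation claim: if $x\in(0,1)\setminus\{\tfrac12\}$ and $\tent^i(x)\neq\tfrac12$ for every $i\in\{0,1,\ldots,n-1\}$ (here $\tent^0(x)=x$), then $\flip{\enc^n(x)}=\enc^n(1-x)$. Indeed $1-x\in(0,1)\setminus\{\tfrac12\}$, so $\enc^n(1-x)$ is well defined; the base case $c_1=\flip{b_1}$ is immediate from \eqref{def:encode0} since exactly one of $x,1-x$ lies below $\tfrac12$; and an induction on $k$, using $x_k\neq\tfrac12$ and the two surviving branches of \eqref{def:encode1} ($x_k<\tfrac12$ gives $b_{k+1}=b_k$, $c_{k+1}=c_k$; $x_k>\tfrac12$ gives $b_{k+1}=\flip{b_k}$, $c_{k+1}=\flip{c_k}$), shows $c_{k+1}=\flip{b_{k+1}}$ whenever $c_k=\flip{b_k}$, because $\flip{\flip{b}}=b$.

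Next I would prove an endpoint claim: if $\tent^i(x)=\tfrac12$ for some $i\in\{0,1,\ldots,n-1\}$, then $x=\min S_n(x)$. Fix such an $i$. The $(i{+}1)$-st bit of $\enc^n(x)$ equals $1$ (by the tie rule of \eqref{def:encode1}, or, when $i=0$, because $x=\tfrac12\ge\tfrac12$ in \eqref{def:encode0}). Now take any $x'\in[0,1)$ with $x'<x$. If the first $i$ bits of $\enc^n(x')$ already differ from those of $\enc^n(x)$ we are done; otherwise $\enc^i(x')=\enc^i(x)$, and Lemma~\ref{lem:encode1} applied with its ``$n$'' set to $i$ (and with $x',x$ in the roles of its $x,x'$) gives $\tent^i(x')<\tent^i(x)=\tfrac12$ when $b_i=0$ and $\tent^i(x')>\tfrac12$ when $b_i=1$, so \eqref{def:encode2} forces the $(i{+}1)$-st bit of $\enc^n(x')$ to be $0\neq 1$ (for $i=0$ this is just $x'<\tfrac12$ in \eqref{def:encode0}). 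Either way $\enc^n(x')\neq\enc^n(x)$, so no point of $S_n(x)$ lies strictly to the left of $x$; since $x\in S_n(x)$, this gives $x=\inf S_n(x)=\min S_n(x)$. The lemma now follows: assuming $x\neq\min S_n(x)$, the contrapositive of the endpoint claim gives $\tent^i(x)\neq\tfrac12$ for all $i\le n-1$; in particular $x\neq\tfrac12$, and $x\neq 0$ as well (since $0\in S_n(0)$ forces $\min S_n(0)=0$), so the propagation claim applies and yields $\flip{\enc^n(x)}=\enc^n(1-x)$.

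I expect the main obstacle to be the endpoint claim — turning the analytic condition ``$\tent^i(x)=\tfrac12$'' into the combinatorial statement ``$x$ is the left endpoint of its $n$-th section.'' The delicate points are that the comparison must be carried out at level $i$, where $\tent^i$ is monotone on all of $S_i(x)\supseteq S_n(x)$ by Lemma~\ref{lem:encode1}, rather than at level $n$; that the direction of that monotonicity (hence which inequality one gets for $\tent^i(x')$) depends on the sign bit $b_i$, which is why the two sub-cases $b_i=0$ and $b_i=1$ are both needed; and that it is precisely the asymmetric tie rule in \eqref{def:encode1} that makes the $(i{+}1)$-st bit jump at $x$. Everything else is a routine induction.
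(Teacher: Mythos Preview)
Your proof is correct and follows essentially the same route as the paper: both use the symmetry $\tent^{i}(x)=\tent^{i}(1-x)$ to propagate bit-inversion through the recursion \eqref{def:encode1}, and both invoke Lemma~\ref{lem:encode1} to show that a tie $\tent^{i}(x)=\tfrac12$ forces $x=\min S_n(x)$ (the paper isolates this as a separate Lemma~\ref{lem:endmin}). The only organizational difference is that you split upfront into ``no ties'' versus ``some tie,'' whereas the paper weaves the case distinction into an induction on $n$ and carries the exclusion forward via $\min S_{n-1}(x)=x\Rightarrow\min S_n(x)=x$.
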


 As a consequence of Lemma~\ref{lem:hanten} with Lemma~\ref{lem:heikai_n}, we remark the following convenient  observation. 
\begin{lemma}\label{lem:hanten-type}
If $\typefn(\bitseq_n) = [v,u)$ then $\typefn(\flip{\bitseq_n}) = (v,u]$, and vice versa. 
\end{lemma}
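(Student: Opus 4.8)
The plan is to combine three ingredients already available: the symmetry $\tent^n(x) = \tent^n(1-x)$ of \eqref{eq:tent-sym}, the shape dichotomy of segment-types in Lemma~\ref{lem:heikai_n}, and the bit-flip correspondence of sections in Lemma~\ref{lem:hanten}. Assume $\bitseq_n \in {\cal L}_n$ with $\typefn(\bitseq_n) = [v,u)$. By Lemma~\ref{lem:heikai_n} this forces the last bit $b_n = 0$, so $\tent^n$ is strictly monotone increasing on $\se_n(\bitseq_n)$ by Lemma~\ref{lem:encode1}, and in particular $v<u$. The last bit of $\flip{\bitseq_n}$ is $\flip{b_n}=1$, so once I know $\flip{\bitseq_n} \in {\cal L}_n$, Lemma~\ref{lem:heikai_n} already gives $\typefn(\flip{\bitseq_n}) = (v',u']$ with $v' = \inf \typefn(\flip{\bitseq_n})$ and $u' = \sup \typefn(\flip{\bitseq_n})$; it then only remains to prove $v'=v$ and $u'=u$. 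The ``vice versa'' direction follows by applying the same reasoning to $\flip{\bitseq_n}$, since $\flip{\flip{\bitseq_n}} = \bitseq_n$ and the roles of $[v,u)$ and $(v,u]$ are dual under Lemma~\ref{lem:heikai_n}.

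First I would establish the closure identity $1 - \closure{\se_n(\bitseq_n)} = \closure{\se_n(\flip{\bitseq_n})}$. By Lemma~\ref{lem:hanten}, every $x \in \se_n(\bitseq_n)$ with $x \neq \min \se_n(\bitseq_n)$ satisfies $\enc^n(1-x) = \flip{\bitseq_n}$, i.e.\ $1-x \in \se_n(\flip{\bitseq_n})$. Since $\se_n(\bitseq_n) = [\inf \se_n(\bitseq_n), \sup \se_n(\bitseq_n))$ is a non-degenerate interval by \eqref{eq:sec-heikai}, such $x$ exist, so $\flip{\bitseq_n} \in {\cal L}_n$ and $(1 - \se_n(\bitseq_n)) \setminus \{1 - \min \se_n(\bitseq_n)\} \subseteq \se_n(\flip{\bitseq_n})$. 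By symmetry (swapping $\bitseq_n$ and $\flip{\bitseq_n}$), also $\se_n(\flip{\bitseq_n}) \setminus \{\min \se_n(\flip{\bitseq_n})\} \subseteq 1 - \se_n(\bitseq_n)$. Thus the two non-degenerate intervals $1 - \se_n(\bitseq_n)$ and $\se_n(\flip{\bitseq_n})$ each contain the other up to a single point; since a non-degenerate interval has no isolated points, deleting one point does not change its closure, and therefore the two closures coincide.

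Next I would push this through $\tent^n$. Because $\tent^n$ is continuous and every section has compact closure inside $[0,1]$, $\tent^n(\closure{\se}) = \closure{\tent^n(\se)}$ for every section $\se$; hence $\closure{\typefn(\bitseq_n)} = \tent^n(\closure{\se_n(\bitseq_n)})$ and $\closure{\typefn(\flip{\bitseq_n})} = \tent^n(\closure{\se_n(\flip{\bitseq_n})}) = \tent^n\!\bigl(1 - \closure{\se_n(\bitseq_n)}\bigr)$. Applying \eqref{eq:tent-sym} pointwise gives $\tent^n\!\bigl(1 - \closure{\se_n(\bitseq_n)}\bigr) = \tent^n(\closure{\se_n(\bitseq_n)})$, so $\closure{\typefn(\flip{\bitseq_n})} = \closure{\typefn(\bitseq_n)} = [v,u]$. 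Consequently $v' = v$ and $u' = u$, and combining with Lemma~\ref{lem:heikai_n} (applied to the last bit $\flip{b_n}=1$) yields $\typefn(\flip{\bitseq_n}) = (v,u]$, as claimed.

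The main obstacle I anticipate is purely the bookkeeping around the single excluded point ``$x = \min \se_n(x)$'' in Lemma~\ref{lem:hanten}, together with confirming that all the sets in play are genuinely non-degenerate intervals: that $\se_n(\bitseq_n)$ has more than one point (immediate from \eqref{eq:sec-heikai} once $\bitseq_n \in {\cal L}_n$), that $\flip{\bitseq_n} \in {\cal L}_n$ (obtained above), and that $\tent^n$ does not collapse a section (true since $\tent^n$ is strictly monotone on each section by Lemma~\ref{lem:encode1}, so $v<u$). Once one passes to closures, these endpoint subtleties vanish and the rest is a one-line use of continuity and the symmetry \eqref{eq:tent-sym}; the only real care needed is in justifying the closure identity.
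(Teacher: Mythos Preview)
Your proposal is correct and follows essentially the same route the paper indicates: the paper does not spell out a proof but simply states the lemma ``as a consequence of Lemma~\ref{lem:hanten} with Lemma~\ref{lem:heikai_n},'' and your argument is precisely a careful elaboration of that remark, using Lemma~\ref{lem:hanten} to relate $\se_n(\bitseq_n)$ and $\se_n(\flip{\bitseq_n})$ via $x\mapsto 1-x$, the symmetry \eqref{eq:tent-sym} to match the images under $\tent^n$, and Lemma~\ref{lem:heikai_n} to pin down which endpoint is open. Your use of closures to absorb the single excluded point from Lemma~\ref{lem:hanten} is a clean way to handle the bookkeeping the paper leaves implicit.
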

See Section~\ref{apx:c=d} for proofs of Lemmas~\ref{lem:c=d} and \ref{lem:hanten}, which are not difficult but a bit lengthy.

\subsection{Transitions over $\typeset_n$ and Recognition of ${\cal L}_n$}\label{sec:recog}
\subsubsection{Transitions between segment-types}
 Recall \eqref{def:encode2} that 
   $b_{i+1}$ is determined by $b_i$ and $\tent^i(x)$ 
   to compute $\enc^n(x) = b_1 \cdots b_n$ for $x \in [0,1)$.   
 More precisely, Lemma~\ref{lem:encode1} lets us know 
  if $b_i = 0$ then $\typefn(\bitseq_i) = [v,u)$, otherwise i.e., $b_i = 1$ then $\typefn(\bitseq_i) = (v,u]$. 
  Here, we establish the following lemma, 
  about the transitions over segment-types. 

\begin{lemma}[Transitions of segment-types]\label{lem:transition}
Let $x \in [0,1)$. 
 (1) Suppose $\typefn^n(x) = [v,u)$ ($v<u$). 
 We consider three cases concerning the position of $\tfrac{1}{2}$ relative to $[v,u)$. 
\vspace{-1ex}
\begin{itemize}\setlength{\itemindent}{4em}\setlength{\parskip}{0.5ex}\setlength{\itemsep}{0cm} 
\item[Case 1-1:]  $v < \frac{1}{2} < u$. 
\vspace{-1ex}
\begin{itemize}\setlength{\itemindent}{4em}\setlength{\parskip}{0.5ex}\setlength{\itemsep}{0cm} 
\item[Case 1-1-1.] If $\tent^n(x) < 1/2$ then $\typefn^{n+1}(x) = [\tent(v),\tent(\tfrac{1}{2}))$, and $b_{n+1}=0$. 
\item[Case 1-1-2.] If $\tent^n(x) \geq 1/2$ then $\typefn^{n+1}(x) =  (\tent(u),\tent(\tfrac{1}{2})]$, and $b_{n+1}=1$. 
\end{itemize}
\item[Case 1-2:] $u \leq \frac{1}{2}$. Then $\typefn^{n+1}(x) = [\tent(v),\tent(u))$, and $b_{n+1}=0$.  
\item[Case 1-3:] $v \geq \frac{1}{2}$. Then $\typefn^{n+1}(x) = (\tent(u),\tent(v)]$, and $b_{n+1}=1$. 
\end{itemize}
\vspace{-1ex}

\noindent (2) Similarly, suppose $\typefn^n(x) = (v,u]$ ($v<u$). 
\vspace{-1ex}
\begin{itemize}\setlength{\itemindent}{4em}\setlength{\parskip}{0.5ex}\setlength{\itemsep}{0cm} 
\item[Case 2-1:]  $v < \frac{1}{2} < u$. 
\vspace{-1ex}
\begin{itemize}\setlength{\itemindent}{4em}\setlength{\parskip}{0.5ex}\setlength{\itemsep}{0cm} 
\item[Case 2-1-1.] If $\tent^n(x) \leq 1/2$ then $\typefn^{n+1}(x) = (\tent(v),\tent(\tfrac{1}{2})]$, and $b_{n+1}=1$. 
\item[Case 2-1-2.] If $\tent^n(x) > 1/2$ then $\typefn^{n+1}(x) =  [\tent(u),\tent(\tfrac{1}{2}))$, and $b_{n+1}=0$. 
\end{itemize}
\item[Case 2-2:] $u \leq \frac{1}{2}$. Then $\typefn^{n+1}(x) = (\tent(v),\tent(u)]$, and $b_{n+1}=1$. 
\item[Case 2-3:] $v \geq \frac{1}{2}$. Then $\typefn^{n+1}(x) = [\tent(u),\tent(l))$, and $b_{n+1}=0$. 
\end{itemize}
\end{lemma}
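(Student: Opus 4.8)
The plan is to reduce everything to the recursion $\tent^{n+1}(x) = \tent(\tent^n(x))$ together with the monotonicity of $\tent^n$ on a section (Lemma~\ref{lem:encode1}) and the shape rule for segment-types (Lemma~\ref{lem:heikai_n}). Fix $x \in [0,1)$, write $\bitseq_n = \enc^n(x) = b_1\cdots b_n$, let $S = \se_n(\bitseq_n)$, and let $\typei = \typefn^n(x) = \tent^n(S)$. I would first treat part~(1), where $\typei = [v,u)$, which by Lemma~\ref{lem:heikai_n} means $b_n = 0$ and $\tent^n$ is monotone increasing on $S$; part~(2) is the mirror image and is handled identically using $b_n=1$ and monotone decreasing. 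Since $\typefn^{n+1}(x) = \tent^{n+1}(S) = \tent(\tent^n(S)) = \tent(\typei)$ by Lemma~\ref{lem:compress} applied at level $n$ — more directly, by $\tent^{n+1} = \tent \circ \tent^n$ — the whole problem becomes: understand the image of the interval $\typei = [v,u)$ under the single tent map $\tent$, and determine the new last bit $b_{n+1}$ via \eqref{def:encode2}.

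For the image computation I would split on the position of $1/2$ relative to $[v,u)$, exactly the three cases in the statement. In Case~1-2 ($u \le 1/2$) the map $\tent$ is linear and increasing on $[v,u)$, so $\tent([v,u)) = [\tent(v),\tent(u))$; moreover $\tent^n(x) = x_n < 1/2$ forces $b_{n+1} = b_n = 0$ by \eqref{def:encode2}, consistent with the claimed left-closed/right-open shape via Lemma~\ref{lem:heikai_n}. In Case~1-3 ($v \ge 1/2$) the map $\tent$ is linear and decreasing, so it reverses orientation: $\tent([v,u)) = (\tent(u),\tent(v)]$, and $x_n \ge 1/2$ (indeed $> 1/2$ generically; the boundary value $x_n=1/2$ is covered by the $=1/2$ clause of \eqref{def:encode1} giving $b_{n+1}=1$ as well) gives $b_{n+1} = \flip{b_n} = 1$, matching the right-closed shape. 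In Case~1-1 ($v < 1/2 < u$) the interval straddles the apex; here I would further split on whether the particular point $x$ has $x_n = \tent^n(x) < 1/2$ or $\ge 1/2$, i.e.\ on which half of $\typei$ the point lies. If $x_n < 1/2$, then $x$ lies in the sub-section mapped onto $[v,1/2)$, on which $\tent$ is increasing, yielding $\typefn^{n+1}(x) = [\tent(v),\tent(1/2))$ and $b_{n+1}=0$; if $x_n \ge 1/2$, then $x$ lies in the sub-section whose image is $[1/2,u)$, on which $\tent$ is decreasing, yielding $\typefn^{n+1}(x) = (\tent(u),\tent(1/2)]$ and $b_{n+1}=1$. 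The key subtlety is that when $\typei$ straddles $1/2$ the section $S$ genuinely subdivides (this is precisely \eqref{eq:subdiv-sec}), so $\typefn^{n+1}(x)$ depends on $x$ and not only on $\typei$; one must check that the two pieces are exactly $S_{n+1}(\bitseq_n 0)$ and $S_{n+1}(\bitseq_n 1)$ and that their images are the claimed intervals, using that $\tent(1/2)$ is the global maximum $\mu/2$ and that $\tent(v) < \tent(1/2)$, $\tent(u) < \tent(1/2)$ strictly since $v,u \ne 1/2$.

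Part~(2) is obtained by the same case analysis with all orientations reversed, or alternatively, and more economically, by invoking the symmetry $\tent^n(x) = \tent^n(1-x)$ \eqref{eq:tent-sym} together with Lemma~\ref{lem:hanten-type}: if $\typefn^n(x) = (v,u]$ then $\typefn^n(1-x) = [v,u)$, so applying part~(1) to $1-x$ and translating back through bit inversion yields the six sub-cases of part~(2); I would note that the boundary conventions ($\le 1/2$ versus $< 1/2$ in Cases 2-1-1 and 2-1-2) are exactly what \eqref{def:encode2} prescribes when $b_n = 1$. The main obstacle is bookkeeping rather than mathematical depth: one must be scrupulous about which endpoint is open and which is closed after each application of $\tent$ — increasing branches preserve the $[\,\cdot\,,\cdot)$ shape while decreasing branches swap it to $(\,\cdot\,,\cdot]$ — and must correctly attribute the new endpoint $\tent(1/2)$ as a closed or open end according to Lemma~\ref{lem:heikai_n} applied at level $n+1$, i.e.\ according to the sign of $b_{n+1}$ that \eqref{def:encode2} dictates. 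There is also the degenerate possibility $v = 1/2$ or $u = 1/2$ at the endpoints of Case~1-1 versus Cases~1-2/1-3; I would remark that these are absorbed into Cases~1-2 and 1-3 by the non-strict inequalities $u \le 1/2$ and $v \ge 1/2$ in the statement, and that the $x_n = 1/2$ case is consistently handled by the third clause of \eqref{def:encode1}.
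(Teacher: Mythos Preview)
Your proposal is correct and follows essentially the same approach as the paper: reduce to computing $\tent$ applied to the interval $\typei$ by case analysis on the position of $\tfrac{1}{2}$, using piecewise linearity of $\tent$, the section subdivision \eqref{eq:subdiv-sec}, and the endpoint bookkeeping from Lemma~\ref{lem:heikai_n} and \eqref{def:encode2}. The only (minor) differences are that the paper works through all six sub-cases of part~(2) explicitly rather than invoking the symmetry shortcut you suggest via Lemma~\ref{lem:hanten-type}, and that your opening identity ``$\typefn^{n+1}(x)=\tent(\typei)$'' is literally correct only in Cases~1-2, 1-3, 2-2, 2-3 (since $\typefn^{n+1}(x)=\tent^{n+1}(\se_{n+1}(x))$, not $\tent^{n+1}(\se_n(x))$), but you handle the straddling Cases~1-1 and 2-1 correctly once you bring in the subdivision.
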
 
We remark that Lemma~\ref{lem:transition} is rephrased by 
 \begin{align} 
 \typefn^{n+1}(x) &= \begin{cases}\begin{cases}
 [\tent(v),\tent(\tfrac{1}{2}))  &: \tent^n(x) < 1/2 \\
 (\tent(u),\tent(\tfrac{1}{2})] &: \tent^n(x) \geq 1/2 
\end{cases} 
  &: v < \tfrac{1}{2} < u \\ 
 [\tent(v),\tent(u))  &:  u \leq \tfrac{1}{2}  \\
 (\tent(u),\tent(v)]  &:  v \geq \tfrac{1}{2} 
 \end{cases}
&&\mbox{if $\typefn^n(x) = [v,u)$, } 
\label{eq:transition1}\\
 \typefn^{n+1}(x) &= \begin{cases}\begin{cases}
 (\tent(v), \tent(\tfrac{1}{2})]  &: \tent^n(x) \leq 1/2 \\
 [\tent(u),\tent(\tfrac{1}{2})) &: \tent^n(x) > 1/2 
\end{cases} 
  &: v < \tfrac{1}{2} < u \\ 
 (\tent(v),\tent(u)]  &: u \leq \tfrac{1}{2}  \\
 [\tent(u),\tent(v))  &: l \geq \tfrac{1}{2}  
 \end{cases}
&&\mbox{if $\typefn^n(x) = (v,u]$.} 
\label{eq:transition2}
\end{align} 
\begin{proof}[Sketch of proof]
 The proof idea is similar to Lemma~\ref{lem:heikai_n}. 
 We here prove Case 1-1. Other cases are similar (see Section~\ref{apx:transition} for the complete proof).  

 To begin with, we recall three facts. 
 i) The segment-type is defined by $\typefn^n(x) = \{\tent^n(x) \mid x \in \se_n(\bitseq_n) \}$ by \eqref{def:type}. 
 ii) $\se_n(\bitseq_n) = \se_{n+1}(\bitseq_n0) \cup \se_{n+1}(\bitseq_n1)$ by \eqref{eq:subdiv-sec}. 
 Particularly,  $\inf\se_n(\bitseq_n) = \inf \se_{n+1}(\bitseq_n0)$ and $\sup \se_n(\bitseq_n) = \sup\se_{n+1}(\bitseq_n1)$ 
  by Proposition~\ref{prop:order}. 
 iii) $b_{n+1}$ depends on $b_n$ and $x_n$ by \eqref{def:encode2}. 

 Suppose $\typefn^n(x) = [v,u)$ ($v<u$), and $l < \frac{1}{2} < u$ (Case 1-1).  
 Notice that $\tent^n$ is monotone increasing in $\se_n(x)$ by Lemma~\ref{lem:heikai_n}. 
 Let $x^* \in \se_n(x)$ satisfy $\tent^n(x^*) (= x^*_n) = \frac{1}{2}$. 
 Clearly, $\se_n(\bitseq_n)$ is divided into $\se_{n+1}(\bitseq_n0)$ and $\se_{n+1}(\bitseq_n1)$ at $x^*$,  
 i.e., $\se_n(\bitseq_n)=[\inf \se_{n+1}(\bitseq_n0),x^*) \cup [x^*,\se_{n+1}(\bitseq_n1))$. 

  If $\tent^n(x) < 1/2$ then $x_n \in [l,\frac{1}{2})$. 
  Thus, $\tent^{n+1}(x) =\tent(x_n) = \mu x_n$. 
  Accordingly, 
  $\tent^{n+1}(\inf \se_{n+1}(x)) = \tent(\tent^n(\inf \se_n(x))) =\tent(l) = \mu l 
     \leq \tent(x_n)  
     < \mu \frac{1}{2} = \tent(\frac{1}{2}) =\tent(x^*_n)=\tent^{n+1}(x^*)$ since $l \leq x_n < \frac{1}{2}$ 
     (cf. the proof of  Lemma~\ref{lem:heikai_n}). 
   We obtain $\typefn^{n+1}(x) = [\tent(l),\tent(\tfrac{1}{2}))$.  $b_{n+1}=0$ is clear by Lemma~\ref{lem:heikai_n}. 

  If $\tent^n(x) \geq 1/2$ then $x_n \in [\frac{1}{2},u)$. 
  Thus, $\tent^{n+1}(x) =\tent(x_n) = \mu (1-x_n)$. 
  Accordingly, $\tent(\frac{1}{2}) = \mu (1-\frac{1}{2}) \geq \tent(x_n)  > \mu (1-u) = \tent(u)$ since $\frac{1}{2} \leq x_n < u$. 
  We obtain $\typefn^{n+1}(x) = (\tent(u),\tent(\tfrac{1}{2})]$. 
\end{proof}

 Eqs.~\eqref{eq:transition1} and \eqref{eq:transition2}, 
   which just rephrase Lemma~\ref{lem:transition}, 
   show a transition rule over segment-types, 
   where the next single bit of a tent code makes a transition, and vice versa; 
 Let $\typefn^n(x) = \typej$ and $\typefn^{n+1}(x) = J'$.  
 If $\typej$ satisfies the cases of 1-2, 1-3, 2-2 or 2-3, 
  then $b_{n+1}$ and $\typej'$ is uniquely determined, 
 while $\typej'$ and $b_{n+1}$ depends on each other in the cases of 1-1 or 2-1. 
For convenience, 
 let $\delta(\typej,b) = \typej'$ denote 
  that 
   if $\typefn^n(x) = \typej$ and $b_{n+1}=b$ then $\typefn^{n+1}(x) = \typej'$, 
  according to \eqref{eq:transition1} and \eqref{eq:transition2}. 
 Notice that 
  Lemma~\ref{lem:type_subset} implies that 
   $\typefn^i(x) = \typefn^j(x') = J$ may hold even for $i \neq j$;  
   for a typical instance,  
   $\typefn(\bitseq_n) = \typefn(b_2 \cdots b_n)$ holds 
    as used in the proof of Lemma~\ref{lem:type_subset}. 
  Then, we observe the following fact. 
\begin{lemma}\label{lem:number_of_type3}
 Suppose $\typefn(\bitseqc_{n+1}) \in \typeset_n$,  
   where $\bitseqc_{n+1} = \enc^{n+1}(\tfrac{1}{2})$. 
 Then, $\typefn(\flip{\bitseqc_{n+1}}) \in \typeset_n$, too. 
 Furthermore, 
$\typeset_{n'} = \typeset_n$ for any $n'$ satisfying $n' \geq n$. 
\end{lemma}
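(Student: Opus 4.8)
The plan is to prove the two assertions separately, using the structural result of Theorem~\ref{theo:number_of_type} (equivalently Lemma~\ref{lem:type_subset}) together with the transition rules of Lemma~\ref{lem:transition} and the reflection symmetry recorded in Lemma~\ref{lem:hanten-type}. First I would dispose of the claim that $\typefn(\flip{\bitseqc}_{n+1}) \in \typeset_n$. By Lemma~\ref{lem:hanten-type} (and Lemma~\ref{lem:heikai_n}), if $\typefn(\bitseqc_{n+1}) = [v,u)$ then $\typefn(\flip{\bitseqc}_{n+1}) = (v,u]$, and vice versa; so the two segment-types share the same endpoints $v,u$ and differ only in which endpoint is included. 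Now $\typefn(\bitseqc_{n+1}) \in \typeset_n$ by hypothesis, and by Theorem~\ref{theo:number_of_type} the set $\typeset_n$ is closed under the ``bar'' operation $\typei_i \mapsto \flip{\typei}_i$ (each element of $\typeset_n$ is one of a $\{\typei_i,\flip{\typei}_i\}$ pair). Hence the partner $(v,u]$ — which is exactly $\typefn(\flip{\bitseqc}_{n+1})$ — also lies in $\typeset_n$. This gives the first statement.

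Next, for the ``furthermore'' part it suffices to prove $\typeset_{n+1} = \typeset_n$; the general equality $\typeset_{n'} = \typeset_n$ for all $n' \ge n$ then follows by induction, since the hypothesis ``$\typefn(\bitseqc_{n+1}) \in \typeset_n$'' propagates: once $\typeset_{n+1}=\typeset_n$, we have $\bitseqc_{n+2} = \enc^{n+2}(1/2)$ built from $\bitseqc_{n+1}$ by one more bit, and $\typefn(\bitseqc_{n+2}) = \delta(\typefn(\bitseqc_{n+1}), c_{n+2})$ by Lemma~\ref{lem:transition}; since $\typefn(\bitseqc_{n+1}) \in \typeset_n = \typeset_{n+1}$ and $\delta$ maps $\typeset_m$ into $\typeset_{m+1}$, we get $\typefn(\bitseqc_{n+2}) \in \typeset_{n+1} = \typeset_n$, so the inductive hypothesis regenerates at level $n+1$. (Here I am using Lemma~\ref{lem:type_subset}'s monotonicity $\typeset_m \subseteq \typeset_{m+1}$ throughout.)

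For $\typeset_{n+1} = \typeset_n$ itself: the inclusion $\typeset_n \subseteq \typeset_{n+1}$ is the ``$\typeset_{n-1}\setminus\typeset_n = \emptyset$'' part of Lemma~\ref{lem:type_subset}, applied at level $n{+}1$. For the reverse inclusion, by Lemma~\ref{lem:type_subset} we have $\typeset_{n+1} \subseteq \typeset_n \cup \{\typefn(\bitseqd_{n+1}), \typefn(\bitseqd'_{n+1})\}$, and by Lemma~\ref{lem:c=d} the two extra candidates are exactly $\typefn(\bitseqc_{n+1})$ and $\typefn(\flip{\bitseqc}_{n+1})$. Both of these lie in $\typeset_n$ — the first by hypothesis, the second by the first part of this lemma proved above. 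Hence $\typeset_{n+1} \subseteq \typeset_n$, completing the equality.

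The main obstacle is bookkeeping rather than any deep new idea: one must be careful that the hypothesis of the lemma is precisely what is needed to regenerate at the next level, so that the induction for ``$\typeset_{n'} = \typeset_n$ for all $n' \ge n$'' actually closes. The one substantive fact being invoked is that $\delta$ sends $\typeset_m$ into $\typeset_{m+1}$, which is immediate from the transition rules in Lemma~\ref{lem:transition} together with the definition of $\typeset_{m+1}$ as the set of all segment-types of $(m{+}1)$-bit codes; and that $\typeset_m \subseteq \typeset_{m+1}$, which is Lemma~\ref{lem:type_subset}. Everything else is symmetry ($\flip{\cdot}$ closure) and the identification $\bitseqd = \bitseqc$, $\bitseqd' = \flip{\bitseqc}$.
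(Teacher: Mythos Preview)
Your argument is essentially the paper's, and the logic is sound. There is one point to fix: you invoke Theorem~\ref{theo:number_of_type} to assert that $\typeset_n$ is closed under the bar operation, but in the paper's logical order Theorem~\ref{theo:number_of_type} is deduced \emph{from} Lemma~\ref{lem:number_of_type3} (together with Lemmas~\ref{lem:type_subset} and~\ref{lem:c=d}), so citing it here is circular. The fact you actually need---that if $J \in \typeset_n$ then its half-open partner also lies in $\typeset_n$---follows directly from Lemma~\ref{lem:type_subset}, Lemma~\ref{lem:c=d}, and Lemma~\ref{lem:hanten-type}: each $J \in \typeset_n$ equals some $\typei_i$ or $\flip{\typei}_i$ with $i \le n$, and both members of that pair belong to $\typeset_i \subseteq \typeset_n$. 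All three of those lemmas are established before Lemma~\ref{lem:number_of_type3}, so with that substitution your proof goes through. The induction scheme you set up for the ``furthermore'' part (regenerating the hypothesis $\typefn(\bitseqc_{m+1})\in\typeset_m$ at each level $m\ge n$) is a valid repackaging of the paper's slightly terser formulation, which instead observes once that $\delta$ maps $\typeset_n$ into itself as soon as $\typeset_{n+1}=\typeset_n$ and then iterates.
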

 Lemma~\ref{lem:number_of_type3} 
   might be intuitively obvious from Lemmas~\ref{lem:compress} and \ref{lem:type_subset}. 
 We prove it using Lemma~\ref{lem:transition}. 
\begin{proof}
 The former claim is easy from Lemma~\ref{lem:hanten}. 
 Suppose $\typefn(\bitseqc_{n+1}) = \typefn(\bitseq_k)$ for a $\bitseq_k \in {\cal L}_k$ ($k \leq n$). 
 Then, 
   Lemma~\ref{lem:hanten} implies that $\typefn(\flip{\bitseq_k}) \in {\cal L}_k$, and then
    $\typefn(\flip{\bitseqc_{n+1}}) = \typefn(\flip{\bitseq_k})$ holds by Lemma~\ref{lem:compress}. 
 
 We prove the latter claim using Lemma~\ref{lem:transition}. 
 By Lemmas~\ref{lem:type_subset} and \ref{lem:c=d}, 
   we know any $\typefn(\bitseq_{n+1}) \in \typeset_n$ unless  $\bitseq_{n+1} = \bitseqc_{n+1}$ or $\flip{\bitseqc_{n+1}}$. 
 By the hypothesis and the former claim, $\typefn(\bitseqc_{n+1})$ and $\typefn(\flip{\bitseqc_{n+1}}) \in \typeset_n$.  
 Thus, we obtain $\typefn(\bitseq_{n+1}) \in \typeset_n$ for any $\bitseq_{n+1} \in {\cal L}_{n+1}$. 
 In other words, 
   $\delta(J,b)  \in \typeset_n$ holds for any $J \in \typeset_n$ by Lemma~\ref{lem:transition}. 
 Inductively, we obtain $\typeset_{n'} \subseteq \typeset_n$ for any $n'$ satisfying $n' \geq n$. 
 The proof of Lemmas~\ref{lem:type_subset} implies that $\typeset_{n'} \supseteq \typeset_n$. 
\end{proof}
 Now, Theorem~\ref{theo:number_of_type} is clear by Lemmas~\ref{lem:type_subset}, \ref{lem:c=d} and \ref{lem:number_of_type3}. 

\begin{figure}[tb]
  \centering
  \begin{tikzpicture}
    [node distance=.5cm, every loop/.style={looseness=2}]
    \node [state, initial left] (v0) [] {$q_{0}$};
    \node [state] (a1) [above right = of v0] {$\typei_{1}$};
    \node [state] (a2) [right = of a1] {$\typei_{2}$};
    \node [state] (a3) [right = of a2] {$\typei_{3}$};
    \node [state, draw=none] (a4) [right = of a3] {$\cdots$};
    \node [state] (a5) [right = of a4] {$\typei_{n}$};
    \node [state] (b1) [below right = of v0] {$\flip{\typei}_{1}$};
    \node [state] (b2) [right = of b1] {$\flip{\typei}_{2}$};
    \node [state] (b3) [right = of b2] {$\flip{\typei}_{3}$};
    \node [state, draw=none] (b4) [right = of b3] {$\cdots$};
    \node [state] (b5) [right = of b4] {$\flip{\typei}_{n}$};
    \path [->] (v0) edge [] node [above left] {1} (a1);
    \path [->] (a1) edge [] node [above] {0} (a2);
    \path [->] (a2) edge [] node [above] {0} (a3);
    \path [->] (a3) edge [] node [above] {} (a4);
    \path [->] (a4) edge [] node [above] {} (a5);
    \path [->] (a1) edge [loop above] node [above] {1} ();
    \path [->] (a2) edge [bend left] node [pos=.2, right] {1} (b2);
    \path [->] (v0) edge [] node [below left] {0} (b1);
    \path [->] (b1) edge [] node [below] {1} (b2);
    \path [->] (b2) edge [] node [below] {1} (b3);
    \path [->] (b3) edge [] node [below] {} (b4);
    \path [->] (b4) edge [] node [below] {} (b5);
    \path [->] (b1) edge [loop below] node [below] {0} ();
    \path [->] (b2) edge [bend left] node [pos=.2, left] {0} (a2);
  \end{tikzpicture}
  \caption{Transition diagram for ${\cal L}_n$. }
  \label{fig:stg1}
\end{figure}
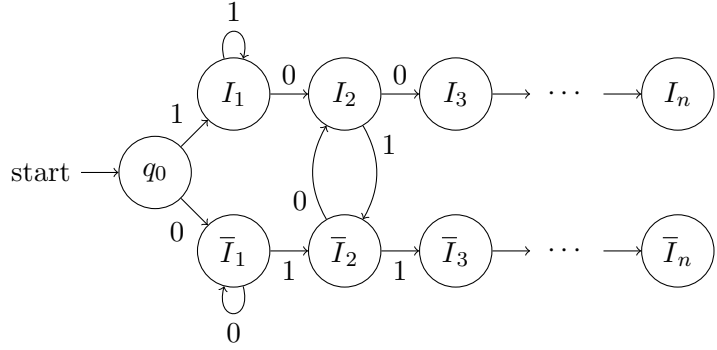

\subsubsection{Recognition of a tent language}
 Lemma~\ref{lem:transition} provides a natural finite state machine\footnote{
   Precisely, we need a ``counter'' for the length $n$ of the string, 
    while notice that our main goal is not to design an automaton for ${\cal L}_n$. 
   Our main target Theorem~\ref{thm:main} assumes a probabilistic Turing machine, where 
   obviously  we can count the length $n$ of a sequence in $\Order(\log n)$ space.  
   } to recognize/generate ${\cal L}_n$. 
 We define the set of states by 
  $Q_n = \{q_0\} \cup \{\emptyset\} \cup \typeset_n$, 
     where $q_0$ is the initial state, and $\emptyset$ denotes the unique reject state. 
   Recall  $\typeset_n = \bigcup_{i=1}^{n_*}\{ \typei_i, \flip{\typei}_i\}$ 
   where $\typei_i = \enc^i(\tfrac{1}{2})$ and $\flip{\typei}_i = \flip{\enc^i(\tfrac{1}{2})}$ by Theorem~\ref{theo:number_of_type}. 
 We let $\delta\colon Q_{n-1} \times \{0,1\} \to Q_n$ denote the state transition function, 
   which is defined  by 
\begin{align}
 \delta(\typej,b) = \typej'
\label{def:delta}
\end{align}
 according to \eqref{eq:transition1} and \eqref{eq:transition2} as far as $J$ and $b$ are consistent. 
 Let $\delta(q_0,1) = \typei_1$ and $\delta(q_0,0) = \flip{\typei}_1$. 
 For convenience, we define $\delta(\typej,b)=\emptyset$ 
  if the pair $\typej$ and $b$ contradicts to \eqref{eq:transition2}, 
  precisely 
\begin{align}\begin{cases}
 \mbox{$\typej = (v,u]$ and $v \geq \frac{1}{2}$} &\mbox{(cf. Case 1-3)}   \\
 \mbox{$\typej = [v,u)$ and $u \leq \frac{1}{2}$} &\mbox{(cf. Case 2-2)}   \\
 \mbox{$\typej = [v,u)$ and $u \leq \frac{1}{2}$} &\mbox{(cf. Case 1-2)}   \\
 \mbox{$\typej = (v,u]$ and $v \geq \frac{1}{2}$} &\mbox{(cf. Case 2-3)}   
\end{cases}
\end{align} 
are the cases, where $v=\inf \typej$ and  $u=\sup \typej$. 

 Now it is not difficult to see from Lemma~\ref{lem:transition} that 
   we can trace a path 
   starting from $q_0$ 
   according to $\delta$ 
   provided by $\bitseq_n = b_1 \cdots b_n \in \{0,1\}^n$ 
   if, and only if, $\bitseq \in {\cal L}_n$.
 For the latter argument, 
   we let $(Q_n,\delta)$ denote the state transition diagram (directed graph with labeled arcs; see Figure~\ref{fig:stg1}), where 
    $Q_n$ denote the set of states (vertices) and 
    $\delta\colon Q_{n-1}\to Q_n$ denote the set of transitions (arcs) labeled by $\{0,1\}$. 
 For convenience, 
  let $\outn(\typej) = \{\delta(\typej,0), \delta(\typej,1)\}\setminus\{\emptyset\}$ denotes the outneighbers of $\typej$ on the diagram $(Q_n,\delta)$.  
 Then, we note that 
\begin{align}
 |\outn(\typej)| = \begin{cases}
   2 &: \mbox{in a case of 1-1 or 1-2 of Lemma~\ref{lem:transition}} \\ 
   1 &:\mbox{otherwise}
  \end{cases}
\end{align}
  holds for any $\typej \in \typeset_{n-1}$, clearly $|\outn(q_0)| = 2$, and 
   $|\outn(\typei_n)|=|\outn(\flip{\typei}_n)|=0$ on $(Q_n,\delta)$ by definition.

  The following lemma is a strait-forward consequence of Lemma~\ref{lem:transition}. 
\begin{lemma}\label{lem:execution-path}
  ${\cal L}_n$ is bijective to the set of paths\footnote{
    A path may use an arc twice or more (i.e., a path may not be {\em simple}). 
    } starting from $q_0$ on $(Q_n \setminus \{\emptyset\},\delta)$. 
\end{lemma}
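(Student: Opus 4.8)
The plan is to build the bijection explicitly from the transition function $\delta$ and then verify injectivity and surjectivity separately, leaning entirely on Lemma~\ref{lem:transition} (equivalently \eqref{eq:transition1}--\eqref{eq:transition2}) together with \eqref{def:delta} and the definition $\delta(q_0,1)=\typei_1$, $\delta(q_0,0)=\flip{\typei}_1$. For a string $\bitseq_n = b_1\cdots b_n \in \{0,1\}^n$, trace the sequence of states $\state_0 = q_0$, $\state_{i} = \delta(\state_{i-1}, b_i)$ for $i=1,\ldots,n$; this is a well-defined walk in $(Q_n,\delta)$ (possibly hitting $\emptyset$). Map $\bitseq_n$ to this walk. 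I would first show that $\bitseq_n \in {\cal L}_n$ if and only if the walk never enters $\emptyset$, i.e. it is a genuine path in $(Q_n\setminus\{\emptyset\},\delta)$.

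First I would prove the ``only if'' direction: if $\bitseq_n \in {\cal L}_n$, pick $x$ with $\enc^n(x) = \bitseq_n$ and let $\state_i = \typefn^i(x) = \typefn(b_1\cdots b_i)$ for $i \ge 1$. By Lemma~\ref{lem:transition}, for each $i$ the pair $(\typefn^i(x), b_{i+1})$ is consistent with \eqref{eq:transition1}--\eqref{eq:transition2} — because $b_{i+1}$ is exactly the bit dictated by the position of $\tent^i(x)$ relative to $1/2$ — and $\typefn^{i+1}(x) = \delta(\typefn^i(x), b_{i+1})$. Also $\state_1 = \typefn^1(x)$ is $\typei_1$ or $\flip{\typei}_1$ according to $b_1$ by \eqref{eq:typeset1}, matching $\delta(q_0,b_1)$. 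So the walk determined by $\bitseq_n$ coincides with $q_0,\typefn^1(x),\ldots,\typefn^n(x)$ and never reaches $\emptyset$. Conversely, if the walk stays in $Q_n\setminus\{\emptyset\}$, I would induct on $i$ to produce a nested sequence of nonempty sections $\se_i(b_1\cdots b_i)$: the base case uses \eqref{eq:first-bit}, and the inductive step uses \eqref{eq:subdiv-sec} together with Lemma~\ref{lem:transition}, observing that a transition avoiding $\emptyset$ is precisely one for which the corresponding sub-section $\se_{i+1}(b_1\cdots b_i b_{i+1})$ is nonempty (the $\emptyset$-cases of $\delta$ listed after \eqref{def:delta} are exactly the bit choices inconsistent with the sign of $\tent^i$ on the current type). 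Hence $\se_n(\bitseq_n)\neq\emptyset$, so $\bitseq_n\in{\cal L}_n$.

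For the bijection itself: injectivity is immediate since distinct strings give distinct edge-label sequences along the path from $q_0$, hence distinct paths. Surjectivity: any path from $q_0$ in $(Q_n\setminus\{\emptyset\},\delta)$ of length $n$ reads off a label sequence $b_1\cdots b_n$, and by the previous paragraph this string lies in ${\cal L}_n$ and its associated walk is exactly the given path (a path of length $<n$ is not counted, or is completed/excluded by the convention that accepting paths have length $n$; I would state this normalization explicitly). I expect the main obstacle to be the bookkeeping in the ``if'' direction — namely checking that ``the transition does not go to $\emptyset$'' is genuinely equivalent to ``$\se_{i+1}(b_1\cdots b_{i+1})\neq\emptyset$'' in every one of the cases 1-1-1 through 2-3 of Lemma~\ref{lem:transition}, and correctly handling the boundary value $\tent^i(x)=1/2$ via the tie-breaking rule \eqref{def:encode1}. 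This is routine but requires care; everything else is formal once the state-walk correspondence in the first paragraph is pinned down.
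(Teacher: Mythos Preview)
Your proposal is correct and follows exactly the approach the paper intends: the paper does not give an explicit proof but simply states the lemma as ``a strai[gh]t-forward consequence of Lemma~\ref{lem:transition},'' and your argument is precisely the natural unpacking of that claim via \eqref{eq:transition1}--\eqref{eq:transition2} and the definition of $\delta$. The only point worth tightening is the length convention you already flagged: make explicit from the outset that ``path'' means a length-$n$ walk from $q_0$, so that the bijection is with $\{0,1\}^n$-labelled walks rather than walks of arbitrary length.
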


\if0
\begin{figure}[htb]
  \centering
  \begin{tikzpicture}[
      node distance=.5cm,
      every state/.style={shape=rectangle, minimum height=0.5cm},
      every loop/.style={looseness=2}]
    \node [state, initial above] (v0) [] {$([0, 1], +1)$};
    \node [state] (a1) [below left = of v0] {$([0, 0.8], +1)$};
    \node [state] (a2) [below = of a1] {$([0.32, 0.8], -1)$};
    \node [state] (a3) [below = of a2] {$([0.512,0.8], -1)$};
    \node [state, draw=none] (a4) [below = of a3] {$\cdots$};
    \node [state] (b1) [below right = of v0] {$([0, 0.8], -1)$};
    \node [state] (b2) [below = of b1] {$([0.32, 0.8], +1)$};
    \node [state] (b3) [below = of b2] {$([0.512,0.8], +1)$};
    \node [state, draw=none] (b4) [below = of b3] {$\cdots$};
    \path [->] (v0) edge [] node [above left] {1} (a1);
    \path [->] (a1) edge [] node [left] {0} (a2);
    \path [->] (a2) edge [] node [left] {0} (a3);
    \path [->] (a3) edge [] node [left] {} (a4);
    \path [->] (a1) edge [loop left] node [left] {1} ();
    \path [->] (a2) edge [bend left=5] node [pos=.5, above] {1} (b2);
    \path [->] (v0) edge [] node [above right] {0} (b1);
    \path [->] (b1) edge [] node [right] {1} (b2);
    \path [->] (b2) edge [] node [right] {1} (b3);
    \path [->] (b3) edge [] node [right] {} (b4);
    \path [->] (b1) edge [loop right] node [right] {0} ();
    \path [->] (b2) edge [bend left=5] node [pos=.5, below] {0} (a2);
  \end{tikzpicture}
  \caption{$n$-STG}
\end{figure}
\fi

\subsection{Draw from ${\cal D}_n$ by a Markov chain on $Q_n$}\label{sec:MC}
 Let $X$ be a real-valued random variable drawn from $[0,1)$ uniformly at random. 
 Let ${\bf B}_{n+1} = B_1 \cdots B_{n+1} = \enc^{n+1}(X)$. 
 We here are concerned with the conditional probability 
\begin{align}
 \Pr[ B_{n+1}= b \mid \enc^n(X) = \bitseq_n]
\label{eq:cond_prob}
\end{align}
  for $\bitseq_n \in {\cal L}_n$ and $b=0,1$. 
 It is easy to see that 
\begin{align}
  \eqref{eq:cond_prob} 
   = \frac{|\se_{n+1}(\bitseq_n b)|}{|\se_n(\bitseq_n)|}
   = \frac{|\se_{n+1}(\bitseq_n b)|}{|\se_{n+1}(\bitseq_n 0)|+|\se_{n+1}(\bitseq_n 1)|}
  \label{eq:cond_prob2} 
\end{align}
   holds for $b=0,1$ 
 by \eqref{eq:subdiv-sec}. 
 Since a tent map $\tent$ is a piecewise linear function, 
  the following lemma seems intuitively and essentially trivial by Lemma~\ref{lem:transition}. 
 See Section~\ref{apx:cond-prob} for a proof. 
\begin{lemma}\label{lem:cond-prob}
Let $\bitseq_n \in {\cal L}_n$. Then, 
\begin{align*}
 \Pr[ B_{n+1}= b \mid \enc^n(X) = \bitseq_n]
    = \frac{|\typefn(\bitseq_n b)|}{|\typefn(\bitseq_n 0)|+|\typefn(\bitseq_n 1)|}
\end{align*}
 holds for $b \in \{0,1\}$, where let  $|\typefn(\bitseq_n b)| = 0$ if $\bitseq_n b \not\in {\cal L}_{n+1}$. 
\end{lemma}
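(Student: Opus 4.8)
The plan is to prove Lemma~\ref{lem:cond-prob} by relating the lengths of the sections $\se_{n+1}(\bitseq_n b)$ to the lengths of the corresponding segment-types $\typefn(\bitseq_n b)$ via the local linearity of the iterated tent map. Starting from \eqref{eq:cond_prob2}, it suffices to show that
\begin{align}
 \frac{|\se_{n+1}(\bitseq_n b)|}{|\se_{n+1}(\bitseq_n 0)|+|\se_{n+1}(\bitseq_n 1)|}
  = \frac{|\typefn(\bitseq_n b)|}{|\typefn(\bitseq_n 0)|+|\typefn(\bitseq_n 1)|},
\label{eq:plan-goal}
\end{align}
and the key observation is that on each section $\se_{n+1}(\bitseq_{n+1})$ the map $\tent^{n+1}$ is affine (indeed, by Lemma~\ref{lem:encode1} and Lemma~\ref{lem:heikai_n} it is monotone with a fixed slope), so that $|\typefn(\bitseq_{n+1})| = |\tent^{n+1}(\se_{n+1}(\bitseq_{n+1}))|$ equals the absolute slope times $|\se_{n+1}(\bitseq_{n+1})|$.

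First I would record the slope bookkeeping precisely. For $\bitseq_n \in {\cal L}_n$, the map $\tent^n$ restricted to $\se_n(\bitseq_n)$ is affine with some slope $\pm s$ where $s = \mu^k$ for $k$ equal to the number of indices $i \le n$ at which $x_i \ge 1/2$ (each iteration multiplies the local slope by $\pm\mu$); the sign is $+$ if $b_n=0$ and $-$ if $b_n=1$, consistent with Lemma~\ref{lem:encode1}. Then on $\se_{n+1}(\bitseq_n b) \subseteq \se_n(\bitseq_n)$, the map $\tent^{n+1} = \tent \circ \tent^n$ has slope $\pm \mu s$ — crucially the \emph{same} magnitude $\mu s$ for both choices $b=0$ and $b=1$, since the extra factor $\mu$ from the outer $\tent$ is $|\tent'| = \mu$ regardless of which linear piece of $\tent$ is used. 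Hence $|\typefn(\bitseq_n b)| = \mu s \, |\se_{n+1}(\bitseq_n b)|$ for $b=0,1$ (interpreting both sides as $0$ when $\bitseq_n b \notin {\cal L}_{n+1}$, which is consistent since an empty section has empty image). Substituting this common factor $\mu s$ into \eqref{eq:plan-goal} cancels it from numerator and denominator, yielding the claim.

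I would present the slope argument inductively: define $s_n(\bitseq_n)$ as the absolute slope of $\tent^n$ on $\se_n(\bitseq_n)$, show $s_{n+1}(\bitseq_n b) = \mu \, s_n(\bitseq_n)$ using the case analysis of Lemma~\ref{lem:transition} (each case applies either $x \mapsto \mu x$ or $x \mapsto \mu(1-x)$, both of absolute slope $\mu$, to $\tent^n$), and then combine. One should also handle the degenerate subcase where one of $\se_{n+1}(\bitseq_n 0), \se_{n+1}(\bitseq_n 1)$ is empty: then only one branch of $\outn$ exists (the $|\outn|=1$ cases of Lemma~\ref{lem:transition}), both sides of the asserted identity equal $1$ for the surviving $b$ and $0$ for the other, so the statement holds trivially; the interesting case is $v < 1/2 < u$ where both branches are nonempty.

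The main obstacle is not conceptual but bookkeeping: one must verify that the slope magnitude genuinely does not depend on $b$, i.e.\ that the split of $\se_n(\bitseq_n)$ at the preimage of $1/2$ sends the two subintervals onto the two pieces $[\tent(v),\tent(\tfrac12))$ and $(\tent(u),\tent(\tfrac12)]$ with the \emph{same} stretch factor $\mu$ — this is exactly what Case~1-1 (and symmetrically Case~2-1) of Lemma~\ref{lem:transition} encodes, and the point $\tent^n(x^*) = 1/2$ is the shared endpoint whose image $\tent(1/2) = \mu/2$ is the common boundary of the two resulting segment-types. Once that is pinned down, the cancellation in \eqref{eq:plan-goal} is immediate and the lemma follows.
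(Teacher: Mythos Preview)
Your approach is correct and essentially the same as the paper's: the paper proves the intermediate equality \eqref{eq:plan-goal} as Lemma~\ref{lem:cond-prob-type}, deriving it from the fact that $\tent^n(X)$ is uniform on $\typefn^n(X)$ (Lemma~\ref{lem:uniform}), which is just the probabilistic phrasing of your constant-slope observation. One minor slip: the absolute slope of $\tent^n$ on any section is $\mu^n$, not $\mu^k$ with $k$ counting the reflecting indices (each application of $\tent$ contributes a factor $\pm\mu$ regardless of which piece is used); this does not affect your argument, since all you need is that $s_{n+1}(\bitseq_n b) = \mu\, s_n(\bitseq_n)$ is independent of $b$.
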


We define a transition probability $p\colon Q_n \times Q_n \to \mathbb{R}_{\geq 0}$ as follows. 
Let
\begin{align}
  p(q_0, \typei_1) &= p(q_0, \flip{\typei}_1) = \frac{1}{2}. 
\label{eq:transition-prob1}
\end{align}
 For $\typej \in \typeset_{n-1}$, let
\begin{align}
  p(\typej, \delta(\typej, 0)) &= \frac{|\delta(\typej, 0)|}{|\delta(\typej, 0)| + |\delta(\typej, 1)|},& 
  p(\typej, \delta(\typej, 1)) &= \frac{|\delta(\typej, 1)|}{|\delta(\typej, 0)| + |\delta(\typej, 1)|}
\label{eq:transition-prob}
\end{align}
and $p(\typej, \typej') = 0$ for any $\typej'$ which is neither $\delta(\typej, 0)$ nor $\delta(\typej, 1)$.

 In fact, 
  $|\delta(\typej, 0)| + |\delta(\typej, 1)| =\mu|\typej|$ 
holds (see Lemma~\ref{lem:tasutomu}), and 
 the transition probability \eqref{eq:transition-prob} is rephrased by 
\begin{align}
  p(\typej, \typej') = \frac{|\typej'|}{\mu |\typej|}
\label{eq:transition-prob2}
\end{align}
 for any $\typej \in \typeset_n$ and $\typej'=\delta(\typej,b)$ ($b \in \{0,1\}$).

 Let $Z_0,Z_{1}, \ldots, Z_{n}$ be a Markov chain on $Q_n$ 
  according to \eqref{eq:transition-prob1} and \eqref{eq:transition-prob} where $Z_{0}=q_0$. 
 Let 
  $B'_i \in \{0,1\}$ for $i=1,2,\ldots,n$ be given by 
\begin{equation}
  B'_{i} =
  \begin{cases}
    0 &: Z_{i} = \delta(Z_{i-1}, 0) \\
    1 &: Z_{i} = \delta(Z_{i-1}, 1).
  \end{cases}
\label{def:Markov-bit}
\end{equation}

\begin{theorem}\label{thm:stat-distr}
The random bit sequence ${\bf B}'_n = B'_1\cdots B'_n$ given by \eqref{def:Markov-bit} follows ${\cal D}_n$. 
\end{theorem}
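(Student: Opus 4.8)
The plan is to prove, by induction on $i$, the identity
\begin{equation}
 \Pr[\mathbf{B}'_i = \bitseq_i] \;=\; |\se_i(\bitseq_i)| \qquad\text{for every } \bitseq_i = b_1\cdots b_i \in \{0,1\}^i ,
\label{eq:stat-plan}
\end{equation}
with the convention $\se_i(\bitseq_i)=\emptyset$ (hence $|\se_i(\bitseq_i)|=0$) when $\bitseq_i\notin{\cal L}_i$. Since the sections $\{\se_i(\bitseq_i)\}_{\bitseq_i\in{\cal L}_i}$ partition $[0,1)$ and $X$ is uniform on $[0,1)$, the right-hand side of \eqref{eq:stat-plan} equals $\Pr[\enc^i(X)=\bitseq_i]$, i.e.\ the mass that ${\cal D}_i$ puts on $\bitseq_i$; so \eqref{eq:stat-plan} at $i=n$ is exactly the assertion $\mathbf{B}'_n\sim{\cal D}_n$, and summing it over ${\cal L}_n$ also gives $\mathbf{B}'_n\in{\cal L}_n$ almost surely.

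First I would record the bridge between walks in $(Q_n,\delta)$ and segment-types: for each $\bitseq_j\in{\cal L}_j$ the (necessarily unique) walk from $q_0$ whose arc labels spell $\bitseq_j$ ends at the state $\typefn(\bitseq_j)$. This follows by induction on $j$ from $\delta(q_0,b_1)=\typefn(b_1)$ (compare \eqref{eq:typeset1} with $\delta(q_0,1)=\typei_1$, $\delta(q_0,0)=\flip{\typei}_1$) together with $\delta(\typefn(b_1\cdots b_{j-1}),b_j)=\typefn(b_1\cdots b_j)$, which is Lemma~\ref{lem:transition} read through the definition \eqref{def:delta}. Because $\delta$ is a deterministic partial map $Q_n\times\{0,1\}\to Q_n$, a fixed string labels at most one such walk, so by \eqref{def:Markov-bit} the event $\{\mathbf{B}'_i=\bitseq_i\}$ is carried by a single trajectory $Z_0,\dots,Z_i$ of the chain, and $\Pr[\mathbf{B}'_i=\bitseq_i]$ is the product of the transition probabilities along it (and equals $0$, consistently with \eqref{eq:stat-plan}, when no such walk exists, since the missing transition contributes a factor $p(\cdot,\emptyset)=0$).

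The induction is then immediate. The base case $i=1$ holds since $p(q_0,\typei_1)=p(q_0,\flip{\typei}_1)=\tfrac12$, matching $|\se_1(1)|=|[\tfrac12,1)|=\tfrac12$ and $|\se_1(0)|=|[0,\tfrac12)|=\tfrac12$. For the step, write a length-$(n{+}1)$ string as $\bitseq_n b$ with $\bitseq_n=b_1\cdots b_n$, $b\in\{0,1\}$; if $\bitseq_n\notin{\cal L}_n$ then \eqref{eq:stat-plan} at $i=n{+}1$ is trivial (both sides vanish, since $\se_{n+1}(\bitseq_n b)\subseteq\se_n(\bitseq_n)=\emptyset$ and $\Pr[\mathbf{B}'_{n+1}=\bitseq_n b]\le\Pr[\mathbf{B}'_n=\bitseq_n]=0$ by the inductive hypothesis), so assume $\bitseq_n\in{\cal L}_n$ and set $\typej=\typefn(\bitseq_n)$. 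By the previous paragraph,
\begin{equation}
 \Pr[\mathbf{B}'_{n+1}=\bitseq_n b] \;=\; \Pr[\mathbf{B}'_n=\bitseq_n]\cdot p\bigl(\typej,\delta(\typej,b)\bigr).
\label{eq:stat-plan-step}
\end{equation}
Insert the inductive hypothesis into the first factor and \eqref{eq:transition-prob} (with $\delta(\typej,b)=\typefn(\bitseq_n b)$) into the second; Lemma~\ref{lem:cond-prob} identifies the resulting ratio $|\typefn(\bitseq_n b)|/(|\typefn(\bitseq_n 0)|+|\typefn(\bitseq_n 1)|)$ with $\Pr[B_{n+1}=b\mid\enc^n(X)=\bitseq_n]$, which by \eqref{eq:cond_prob2} equals $|\se_{n+1}(\bitseq_n b)|/|\se_n(\bitseq_n)|$. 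Since $\se_n(\bitseq_n)=[\inf\se_n(\bitseq_n),\sup\se_n(\bitseq_n))$ is a nondegenerate interval for $\bitseq_n\in{\cal L}_n$, the factor $|\se_n(\bitseq_n)|$ cancels, yielding $\Pr[\mathbf{B}'_{n+1}=\bitseq_n b]=|\se_{n+1}(\bitseq_n b)|$, i.e.\ \eqref{eq:stat-plan} at $i=n{+}1$.

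All of the substance lives in Lemmas~\ref{lem:transition}, \ref{lem:execution-path} and \ref{lem:cond-prob}, so what is left is bookkeeping and I do not expect a genuine obstacle. The one point that needs care is well-definedness of the telescoping product in \eqref{eq:stat-plan-step}: every denominator $|\typefn(\bitseq_j 0)|+|\typefn(\bitseq_j 1)|$ met along a walk through states of ${\cal L}_n$ must be strictly positive. This holds because each reachable segment-type $\typefn(\bitseq_j)$ ($\bitseq_j\in{\cal L}_j$) is a nondegenerate interval — the image of a nonempty section under the strictly monotone $\tent^j$, by Lemma~\ref{lem:heikai_n} — and Lemma~\ref{lem:transition} shows at least one of its two images under $\delta$ is again a nondegenerate interval. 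The most error-prone part is keeping the two sides of \eqref{eq:stat-plan} vanishing in lockstep on strings outside ${\cal L}_n$, rather than anything deep.
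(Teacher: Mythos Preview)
Your proposal is correct and takes essentially the same approach as the paper: both factor $\Pr[\mathbf{B}'_n=\bitseq_n]$ via the chain rule, identify each conditional factor with the segment-type ratio from \eqref{eq:transition-prob}, and invoke Lemma~\ref{lem:cond-prob} to match it against $\Pr[B_{i+1}=b_{i+1}\mid \enc^i(X)=\bitseq_i]$. Your inductive phrasing, together with the explicit handling of strings outside ${\cal L}_n$ and the check that denominators are positive, is somewhat more careful than the paper's compact chain-rule display, but the substance is identical.
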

\begin{proof}
 Let ${\bf B}_n = B_1 \cdots B_n = \enc^n(X)$
 where $X$ is drawn from $[0,1)$ uniformly at random, 
  i.e.,  ${\bf B}_n$ follows ${\cal D}_n$. 
 Then, 
\begin{align*}
 \Pr[ {\bf B}'_n= \bitseq_n]
 &= \Pr[ B'_1= b_1 ] \prod_{i=1}^{n-1} \Pr[ B'_{i+1}= b_{i+1} \mid {\bf B}'_i = \bitseq_i] \\
 &= \frac{1}{2} \prod_{i=1}^{n-1} \frac{|\typefn(\bitseq_ib_{i+1})|}{|\typefn(\bitseq_i 0)|+|\typefn(\bitseq_i 1)|}
 && (\mbox{by \eqref{eq:transition-prob1} and \eqref{eq:transition-prob}})\\
 &= \Pr[ B_1= b_1 ] \prod_{i=1}^{n-1} \Pr[ B_{i+1}= b_{i+1} \mid {\bf B}_i = \bitseq_i] 
 && (\mbox{by Lemma~\ref{lem:cond-prob}})\\
 &=\Pr[ {\bf B}_n= \bitseq_n]
\end{align*}
 holds for any $\bitseq_n \in {\cal L}_n$. We obtain the claim. 
\end{proof}

\begin{algorithm}[t]
    \caption{Random generation of a tent code for a {\em rational}  $\mu = c/d \in (1,2)$}
    \label{alg1}
    \begin{algorithmic}[1]
    \REQUIRE a positive integer $n$ 
    \ENSURE a bit sequence $B_1\cdots B_n \sim {\cal D}_n$ 
    \STATE {\rm rational} $v[-1] \leftarrow 0$, {\rm rational} $u[-1] \leftarrow 0$ \COMMENT{$=\emptyset$}
    \STATE $v[0] \leftarrow 0$, $u[0] \leftarrow 1$, {\rm bit} $c[0] \leftarrow 0$, {\rm int} $\delta[0,0]\leftarrow 1$, $\delta[0,1]\leftarrow 1$ \COMMENT{$=q_0$}
    \STATE $v[1] \leftarrow 0$, $u[1] \leftarrow \tent(\frac{1}{2})$, $c[1] \leftarrow 1$, $\delta[1,0]\leftarrow 2$, $\delta[1,1]\leftarrow 1$   \COMMENT{$=\typei_1$}
    \STATE {\rm int} $k \leftarrow 1$, {\rm int} $l \leftarrow 0$, {\rm bit} $b \leftarrow 1$
    \FOR{$i=1$ to $n$}
    \IF[$Z_i =\typei_l$]{$b=c[l]$}
    \STATE bit $B \leftarrow 0$ w.p. $\frac{u[l'] - v[l']}{\mu(u[l]-v[l])}$ where $l'=\delta[l,0]$, {\bf otherwise} $B \leftarrow 1$
    \STATE $l \leftarrow \delta[l,B]$
    \ELSE[$Z_i = \flip{\typei}_l$]
    \STATE bit $B \leftarrow 1$ w.p. $\frac{u[l'] - v[l']}{\mu(u[l]-v[l])}$ where $l'=\delta[l,0]$, {\bf otherwise} $B \leftarrow 0$
    \STATE $l \leftarrow \delta[l,\flip{B}]$
    \ENDIF
    \RETURN $B$ \COMMENT{as $B_i$}
    \STATE $b \leftarrow B$
    \IF[``deferred update'' (ll.\ 15--27)]{$l = k$}
    \IF{$v[k] < \frac{1}{2} < u[k]$}
    \STATE $\delta[k,0] \leftarrow k+1$, $v[k+1] \leftarrow \tent(v[k])$, $u[k+1] \leftarrow \tent(\tfrac{1}{2})$, $c[k+1] \leftarrow 0$
\COMMENT{\footnotemark}
    \STATE $\delta[k,1] \leftarrow k'$ such that $v[k'] = \tent(u[k])$ and $u[k'] =\tent(\tfrac{1}{2})$
    \ELSIF{$u[k] \leq \frac{1}{2}$}
    \STATE $\delta[k,c[k]] \leftarrow k+1$, $v[k+1] \leftarrow \tent(v[k])$, $u[k+1] \leftarrow \tent(u[k])$, $c[k+1] \leftarrow c[k]$
    \STATE $\delta[k,\flip{c[k]}] \leftarrow -1$
    \ELSE[i.e., $v \lbrack k \rbrack \geq \frac{1}{2}$]
    \STATE $\delta[k,\flip{c[k]}] \leftarrow k+1$, $v[k+1] \leftarrow \tent(u[k])$, $u[k+1] \leftarrow \tent(v[k])$, $c[k+1] \leftarrow \flip{c[k]}$
    \STATE $\delta[k,c[k]] \leftarrow -1$
    \ENDIF
    \STATE $k \leftarrow k+1$
    \ENDIF
    \ENDFOR
    \end{algorithmic}
\end{algorithm}
\footnotetext{
  In fact, 
    we need ``$\delta[k,0] \leftarrow k'$ such that $v[k'] = \tent(u[k])$ and $u[k'] =\tent(\tfrac{1}{2})$'' after line 17, 
    similar to line 16 (see Theorem~\ref{theo:number_of_type}), 
     while it {\em seems} useless for {\em rational} $\mu$ (c.f., \cite{tomitamasterthesis,tomita2018}). 
  To avoid the bothering description, we here omit the operation. Lines 20 and 23 as well.  
  }

\subsection{Summary---The algorithm}\label{sec:algo-summary}
\subsubsection{Algorithm}
 Now, we give Algorithm~\ref{alg1} for Theorem~\ref{thm:main}, 
   by summarizing the arguments of Section~\ref{sec:algo}. 
 Basically,   
 the algorithm traces the Markov chain $Z_0,\ldots,Z_n$ and outputs $B_1,\ldots,B_n$  
    by $l$ and $b$ respectively,  
    according to the transition probability given in Section~\ref{sec:MC}. 

 In the algorithm, 
  $v[k]$ and $u[k]$ respectively denote  $\inf \typei_k$ and $\sup \typei_k$ for $k=1,2,\ldots$. 
 For descriptive purposes, 
  $v[0]$ and $u[0]$ corresponds to $q_0$, and  $v[-1]$ and $u[-1]$ corresponds to the reject state $\emptyset$ in $Q_n$. 
 The single bit $c[k]$ denotes $c_k$ for $\bitseqc_n =c_1 \cdots c_n = \enc^n(\frac{1}{2})$ (recall Theorem~\ref{theo:number_of_type}). 
 The pair $l$ and $b$ represent $Z_i = \typei_l$ if $b=c[l]$ (see line 6), 
  otherwise, i.e., $\flip{b} =c[l]$, $Z_i = \flip{\typei}_l$ (see line 9),
  at the $i$-th iteration (for $i=1,\ldots n$). 
To avoid the bothering notation, 
 we define the {\em level} of  $\typej \in \typeset_n$ 
  by 
\begin{align}
  L(\typej) = k
\label{def:level}
\end{align}
   if $\typej = \typei_k$ or $\flip{\typei_k}$,  
 where recall $\typei_k = \typefn(\bitseqc_k)$ and $\flip{\typei_k}= \typefn(\flip{\bitseqc_k})$
 for $\bitseqc_k = \enc^k(1/2)$ (see Theorem~\ref{theo:number_of_type}). 
 Then, $\delta(l,b) = l'$ represents the transition $\delta(\typej,b) = \typej'$ 
   given by \eqref{def:delta} in Section~\ref{sec:recog}
  (see also  \eqref{eq:transition1} and \eqref{eq:transition2}) 
 where $L(\typej) = l$ and $L(\typej') = l'$. 
  
 Lines 6--14 correspond to a transition from $Z_i$ to $Z_{i+1}$. 
 Algorithm~\ref{alg1} outputs every bit $B_i$ every iteration at line 7, 
  to avoid storing all $B_1 \cdots B_n$ that consumes $n$-bits of space.
 To attain the $\Order(\poly \log n)$ space for Theorem~\ref{thm:main}, 
   we use the deferred update strategy; 
  we calculate $v[k]$ and $u[k]$ representing $\typei_k$ on demand, 
   in lines 15--27 according to \eqref{eq:transition1} and \eqref{eq:transition2}. 
 By a standard argument of the space complexity of basic arithmetic operations, 
   see e.g., \cite{KV}, 
 rationals $v[k]$ and  $u[k]$ requires $\Order(k \log d)$ bits for each $k=1,2,\ldots$, 
   where the rational $\mu \in (1,2)$ is given by an irreducible fraction $c/d$. 

 Then, we look at the space complexity of the algorithm. 
 Rationals $v[k']$ and $u[k']$ ($k'=-1,0,1,2,\ldots,k$) consume at most $\Order(k^2 \log d)$ bits in total, 
    where $k$ denotes its value in the end of iterations of Algorithm~\ref{alg1}.  
 Integers $\delta[k',0]$ and $\delta[k',1]$ ($k'=0,1,2,\ldots,k$) consume at most $\Order(k \log k)$ bits in total. 
 Bits $c[k']$ ($k'=0,1,2,\ldots,k$) consume at most $\Order(k)$ bits in total.  
 Integers $k$, $l$ use $\Order (\log k)$ bits, and $b$ uses a single bit. 
 The value of $k$ becomes $n$ in the worst case, 
  while we will prove in Section~\ref{sec:average-complexity} that $k$ is $\Order(\log n \log d)$ in expectation, as well as with high probability.

\section{Average Space Complexity}\label{sec:average-complexity}
 Then, this section analyzes the average space complexity of Algorithm~\ref{alg1}, and proves Theorem~\ref{thm:main}. 
 As we stated in Section~\ref{sec:algo-summary}, 
   our goal of the section is essentially to prove that 
  the maximum value of $k$ of Algorithm~\ref{alg1} in the $n$ iterations  
  is $\Order(\log n)$ with high probability, and in expectation. 
 More precisely, 
  let $Z_0,\ldots,Z_n$ be a Markov chain 
   on {\em segment-types}
   with $Z_0=q_0$ given in Section~\ref{sec:MC}.
 Let  
\begin{align}
  \requiredstgsize=\min\{ k \in \mathbb{Z}_{>0} \mid L(Z_i) = k \}
\label{def:K}
\end{align}  
  be a random variable,  
   where $L(Z_i) = k$ denotes $Z_i = \typei_k$ or $\flip{\typei}_k$ 
   (recall \eqref{def:level} as well as Theorem~\ref{theo:number_of_type}). 
 Note that $K \leq n$, that is a trivial upper bound. 
 We want $\E[K] = \Order(\log n)$. 
 The following fact is the key for the purpose. 
\begin{lemma}\label{lemm:transition_function}
 Suppose for $\mu \in (1, 2)$ that $\tent_{\mu}^{i}(\frac{1}{2}) \neq \frac{1}{2}$ holds for any $i = 1, \dots, n-1$. 
 Then, 
 \begin{align}
   \delta(\typei_{n}, b) &\in \left\{\typei_{n+1}\right\} \cup \left\{\flip{\typei}_{k+1} \mid 1 \le k \le \tfrac{n}{2} \right\} \cup \{\emptyset\}
   \label{eq:transition_function_1} \\
   \delta(\flip{\typei}_{n}, b) &\in \left\{\flip{\typei}_{n+1}\right\} \cup  \left\{\typei_{k+1} \mid 1 \le k \le \tfrac{n}{2} \right\} \cup \{\emptyset\}
   \label{eq:transition_function_2}
 \end{align}
   hold for $b=0,1$. 
\end{lemma}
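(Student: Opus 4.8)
## Proof Strategy for Lemma~\ref{lemm:transition_function}

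The plan is to trace through the transition rules of Lemma~\ref{lem:transition} applied to the segment-types $\typei_n = \typefn(\bitseqc_n)$ and $\flip{\typei}_n = \typefn(\flip{\bitseqc}_n)$, where $\bitseqc_n = \enc^n(1/2)$, and to control the ``drop'' in level that occurs in the branching cases. First I would record what $\typei_n$ looks like: by Lemma~\ref{lem:heikai_n} and the fact that $\bitseqc_n$ has last bit $1$ (since $\tent(1/2) = \mu/2 \geq 1/2$ forces $c_2 = 1$, and more to the point $\frac12 \in \se_n(\bitseqc_n)$ is a right endpoint of the section below it, so by Lemma~\ref{lem:encode1}-type reasoning the line segment through $\bitseqc_n$ is decreasing), we have $\typei_n = (v_n, u_n]$ for suitable $v_n < u_n$, and by Lemma~\ref{lem:hanten-type}, $\flip{\typei}_n = [v_n, u_n)$. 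Moreover $u_n = \tent^n(1/2)$ is exactly the value that, under one more application of $\tent$, produces the ``new'' endpoint $\tent(u_n) = \tent^{n+1}(1/2) = u_{n+1}$.

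Next I would apply Lemma~\ref{lem:transition}, part (2), to $\typei_n = (v_n,u_n]$, and part (1) to $\flip{\typei}_n = [v_n,u_n)$. In each of the non-branching cases (Cases 2-2, 2-3 for $\typei_n$; Cases 1-2, 1-3 for $\flip{\typei}_n$) the image is $(\tent(v_n),\tent(u_n)]$ or $[\tent(u_n),\tent(v_n))$, and one checks this equals $\typei_{n+1}$ (resp.\ $\flip{\typei}_{n+1}$) using $\typei_{n+1} = \typefn(\bitseqc_{n+1})$ together with the transition rule applied along the path generating $\bitseqc_{n+1}$ from $\bitseqc_n$; the non-degeneracy hypothesis $\tent^i(1/2) \neq 1/2$ for $i \le n-1$ guarantees no earlier collapse and keeps $\bitseqc_{n+1}$ well-defined as the extension of $\bitseqc_n$. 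In the branching cases (2-1-2 for $\typei_n$, giving $[\tent(u_n),\tent(1/2))$; and 1-1-1 for $\flip{\typei}_n$, giving $[\tent(v_n),\tent(1/2))$), the resulting interval has $\tent(1/2) = \tent^1(1/2) = u_1$ as one endpoint — so it is a segment-type whose ``sup-side'' endpoint is that of $\typei_1$. The key claim to isolate is then: \emph{any segment-type in $\typeset_{n+1}$ one of whose endpoints is an endpoint of $\typei_1$ must be $\typei_{k+1}$ or $\flip{\typei}_{k+1}$ for some $k$ with $k \le n/2$.}

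That claim is where I expect the real work to be, and I would prove it via the compressing structure (Lemma~\ref{lem:compress} / Lemma~\ref{lem:section-shift}): a segment-type $\typefn(\bitseq_{n+1})$ with an endpoint at $\tent(1/2)$ arises, by the recursion $\typefn(\bitseq_{n+1}) = \typefn(b_2\cdots b_{n+1})$ (valid off the extremal words $\bitseqd,\bitseqd'$, i.e.\ off $\bitseqc,\flip\bitseqc$ by Lemma~\ref{lem:c=d}), only when the shortening process reaches a word of the form $\bitseqc_j$ or $\flip{\bitseqc}_j$ at some stage; and for the endpoint to survive as $\tent(1/2)$ rather than get further compressed, the word must already have been ``reset'' — which forces $\typefn(\bitseqc_{n+1}) \in \typeset_{n+1 - (j)}$ type relations, ultimately $2(k+1) \le n+1$, i.e.\ $k \le n/2$. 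Concretely: the new segment-type $\typei_{n+1}$ is genuinely new (not in $\typeset_n$) precisely because $\tent^{n+1}(1/2)$ is a fresh value; but when $\delta(\typei_n,b)$ lands in a \emph{reject state} or \emph{jumps back}, it lands on an \emph{already-existing} type $\typei_{k+1}$, and for that type to be reachable as $\delta(\cdot,b)$ of something at level $n$ it must have been created at a level $k+1$ with $2(k+1) \le n+2$ or so — the factor of $2$ coming exactly from the ``$\tent^{n+1}$ consists of two copies of $\tent^{n}$ compressed by $1/\mu$'' picture of Lemma~\ref{lem:compress}, iterated. I would nail down the constant by carefully following which iterate of $1/2$ an endpoint of $\delta(\typei_n,b)$ equals, and matching it against the level at which that type first appears in $\typeset_{\bullet}$. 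The $\{\emptyset\}$ in the statement absorbs the reject-state cases (Cases 1-2/2-3 degenerating when $v_n \ge 1/2$ with the wrong orientation, per the $\delta(\typej,b) = \emptyset$ definition), so those require no bound.

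Finally I would assemble: for $\delta(\typei_n,b)$, the $b$ that keeps the orientation decreasing yields $\typei_{n+1}$ (the non-branching cases) or an $\emptyset$; the other $b$ (only available in Case 2-1) yields a type with an $\typei_1$-endpoint, hence of the form $\flip{\typei}_{k+1}$ with $k \le n/2$ — note the orientation flips to increasing, consistent with the $\flip{\typei}$ on the right-hand side of \eqref{eq:transition_function_1}. The argument for $\flip{\typei}_n$ is the mirror image using \eqref{eq:tent-sym} and Lemma~\ref{lem:hanten-type}, swapping the roles of $\typei$ and $\flip{\typei}$, which gives \eqref{eq:transition_function_2}. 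The main obstacle, to reiterate, is pinning the $k \le n/2$ bound rigorously rather than just ``$k < n$''; everything else is bookkeeping over the eight cases of Lemma~\ref{lem:transition}.
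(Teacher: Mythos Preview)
Your outline is on the right track, but there is a factual slip and a genuine gap. The slip: you claim $c_2 = 1$ and hence $\typei_n = (v_n,u_n]$, but from $c_1 = 1$ and $x_1 = \mu/2 > 1/2$ equation \eqref{def:encode1} gives $c_2 = \flip{c_1} = 0$, so $\bitseqc_2 = 10$ and $\typei_2 = [\tent^2(\tfrac12),\tent(\tfrac12))$. In general $c_n$ may be $0$ or $1$ and both orientations for $\typei_n$ occur; this is repairable by casework and is not fatal.

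The real gap is the bound $k \le n/2$. Your ``key claim'' --- that every type in $\typeset_{n+1}$ with an endpoint $\tent(\tfrac12)$ has level at most $\tfrac{n}{2}+1$ --- is false as stated: such types are exactly the $\typei_m,\flip{\typei}_m$ that immediately follow a branching level, and those $m$ are not bounded by $n/2$ in general. The compression heuristic you invoke does not locate \emph{which} earlier level the back-transition lands on, and iterating ``two copies compressed by $1/\mu$'' gives no combinatorial control on that index. The paper's mechanism is different: it introduces $\partner(n) = \min\{i \ge 1 : |\outn(\typei_{n-i})|=2\}$, the distance back to the most recent branching level, proves that the endpoints of $\typei_n$ are exactly $\tent^n(\tfrac12)$ and $\tent^{\partner(n)}(\tfrac12)$ (Lemma~\ref{lemm:elements_of_type_2}), and from this computes that in the branching case $\delta(\typei_n,1)=\flip{\typei}_{\partner(n)+1}$ \emph{exactly}, together with $\partner(\partner(n)+1)=1$ (Lemma~\ref{lemm:back_transition}). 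The bound is then one line (Lemma~\ref{lemm:back_half}): $\partner(\partner(n)+1)=1$ means $|\outn(\typei_{\partner(n)})|=2$, yet by minimality of $\partner(n)$ there is no branching strictly between levels $n-\partner(n)$ and $n$, so $\partner(n) \le n-\partner(n)$, i.e.\ $\partner(n) \le n/2$. This combinatorial observation about \emph{consecutive branching levels} is the missing idea in your proposal.
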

 We will prove Lemma~\ref{lemm:transition_function} in Section~\ref{sec:transition_function}. 
 Roughly speaking, Lemma~\ref{lemm:transition_function} implies that 
  the level $L(Z_i)$ increases by one, or decreases into (almost) a half 
  by a step of the Markov chain. 
 Thus, the issue is the ratio of the transition. 
 A precise argument follows. 

\subsection{Proof  of Theorem~\ref{thm:main}}\label{sec:rational}
 In fact, 
   we will prove $\E[K^2] = \Order(\log^2 n \log^2 d)$ in Lemma~\ref{lemm:average_space_complexity}, 
   since what we really want for Theorem~\ref{thm:main} is the expected space complexity, instead of $\E[K]$ itself. 
 Formally, the following lemma claims that 
  the description of the Markov chain on $Q_k$ requires $\Order(k^2 \log d)$ space (see also Section~\ref{sec:algo-summary}). 

\begin{lemma}
  \label{lemm:space_of_stg}
  Let $\mu \in (1, 2)$ be rational given by an irreducible fraction $\mu = c/d$.
  For any $k \in \mathbb{Z}_{>0}$,
  the Markov chain on $Q_k$ is represented by $\Order(k^{2} \lg{d})$ bits. 
\end{lemma}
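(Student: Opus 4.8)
The plan is to bound, term by term, the space needed to store the data structure describing the Markov chain on $Q_k$. That structure consists, for each level $j=1,\dots,k$, of: the rational endpoints $v[j]=\inf\typei_j$ and $u[j]=\sup\typei_j$ (by Lemma~\ref{lem:hanten-type} the state $\flip{\typei}_j$ shares the very same endpoints), the bit $c[j]$ distinguishing $\typei_j$ from $\flip{\typei}_j$, and the two transition targets $\delta[j,0],\delta[j,1]$; recall $|\typeset_k|\le 2k$ by Theorem~\ref{theo:number_of_type}, so there are $\Order(k)$ such records, plus the special states $q_0$ and $\emptyset$. The transition probabilities need not be stored: by \eqref{eq:transition-prob2} each is recomputed on demand from the stored endpoints, so it contributes nothing to the state of the chain.

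The heart of the argument is a bound on the bit-size of the rationals $v[j],u[j]$. Write $\mu=c/d$ in lowest terms. First I would observe that $\tent$ sends a rational $x=p/q\in[0,1)$ to either $\mu x=cp/(dq)$ or $\mu(1-x)=c(q-p)/(dq)$, in either case a rational in $[0,1)$ whose denominator divides $dq$; moreover $\tent(\tfrac12)=c/(2d)$. By Lemma~\ref{lem:transition} (equivalently \eqref{eq:transition1}--\eqref{eq:transition2}), the endpoints of $\typefn^{n+1}(x)$ are drawn from $\{\tent(v),\tent(u),\tent(\tfrac12)\}$, where $\typefn^n(x)$ is $[v,u)$ or $(v,u]$. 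Starting from $\typei_1=[0,\tfrac{\mu}{2})$ and $\flip{\typei}_1$, an easy induction on $j$ then shows that every endpoint of $\typei_j$ is a rational in $[0,1)$ whose denominator divides $2d^{\,j}$; since the value lies in $[0,1)$ the numerator is bounded by the denominator, so $v[j]$ and $u[j]$, even written as reduced fractions, each occupy $\Order(j\lg d)$ bits.

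Summing over the levels gives $\sum_{j=1}^{k}\Order(j\lg d)=\Order(k^2\lg d)$ bits for all the endpoint data. The transition targets $\delta[j,b]$ are integers in $\{-1,0,1,\dots,k+1\}$, hence $\Order(\lg k)$ bits apiece and $\Order(k\lg k)$ bits in total, which is absorbed into $\Order(k^2\lg d)$; the bits $c[j]$ contribute $\Order(k)$ bits, and the auxiliary counters ($k$, $l$, $b$) contribute $\Order(\lg k)$ bits. Adding these up yields the claimed $\Order(k^2\lg d)$ bound.

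The main obstacle is the inductive denominator bound: one must run through all the cases of Lemma~\ref{lem:transition} to confirm that each transition multiplies the denominator by at most $d$, with the single exception of the freshly created endpoint $\tent(\tfrac12)=c/(2d)$, which contributes an additional harmless factor $2$, and to confirm that the numerators never exceed the denominators because every value produced stays in $[0,1)$. Passing to reduced fractions only decreases these sizes, so the bound on the unreduced denominator $2d^{\,j}$ is exactly what is needed.
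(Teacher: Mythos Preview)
Your proof is correct and follows essentially the same approach as the paper: bound the bit-size of the rational endpoints of each segment-type by $\Order(j\lg d)$ and sum over the $\Order(k)$ levels, then observe that the remaining data (transition pointers, bits $c[j]$) is lower order. The only difference is that the paper invokes Lemma~\ref{lemm:elements_of_type_2} (proved later) to identify the endpoints of $\typei_j$ as iterates $\tent^{i}(\tfrac12)$ with $i\le j$, from which the denominator bound $2d^{\,j}$ is immediate, whereas you obtain the same denominator bound directly by induction on the transition rules of Lemma~\ref{lem:transition}; your route is slightly more self-contained but otherwise equivalent.
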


\begin{proof}
 As we stated in Section~\ref{sec:algo-summary}, 
  we can describe a segment-type 
   by two numbers representing the interval ($v[k]$ and $u[k]$ in Algorithm~\ref{alg1}), and 
   a single bit to identify $\typei_k$ or $\flip{\typei}_k$ ($c[k]$, there), by Lemma~\ref{lem:hanten-type}. 
 We will prove Lemma \ref{lemm:elements_of_type_2} 
   which claims that $\typei_k$ ($\flip{\typei}_k$ as well) is 
   either $[\tent^{i}(\frac{1}{2}), \tent^{j}(\frac{1}{2}))$ or $(\tent^{i}(\frac{1}{2}), \tent^{j}(\frac{1}{2})]$ with $i \le k$ and $j \le k$.
  For any $i$, there exists $c_{i} \in \mathbb{Z}$ satisfies $\tent^{i}(\frac{1}{2}) = \frac{c_{i}}{2 d^{i}}$ and $0 \le c_{i} \le 2 d^{i}$ since $0 \le \tent^{i}(\frac{1}{2}) \le 1$.
  Thus, every segment-type $\typej \in \typeset_k$ requires at most $4(k \log_{2} d + 1) + 1$ bits. 
  Since $Q_k$ contains $2k$ segment-types with at most $4k$ arcs,  
   the Markov chain on $Q_k$ is represented by $\Order(k^{2} \log{d})$ bits.
\end{proof}

Notice that $\E[\requiredstgsize^{2} \log d] = \log d\,\E[\requiredstgsize^{2}]$ holds. 
 We prove $\E[\requiredstgsize^{2}] = \Order(\log^{2}{n} \log^{2}{d})$ for the Markov chain, 
 which generates a random bit sequence $B_1\cdots B_n \sim {\cal D}_n$. 
\begin{lemma}
  \label{lemm:average_space_complexity}
  Let $\mu \in (1, 2)$ be rational given by an irreducible fraction $\mu = c/d$.
  Suppose\footnote{
    This assumption may be redundant by the assumption of $\mu$ rational. 
    It needs for Lemma~\ref{lemm:transition_function}. 
  } for $\mu \in (1, 2)$ that $\tent_{\mu}^{i}(\frac{1}{2}) \neq \frac{1}{2}$ holds for any $i = 1, \dots, n-1$. 
  Then,
    $\E[\requiredstgsize^{2}] = \Order(\log_{\mu}^{2}{n} \log_{\mu}^{2}{d}) = \Order(\lg^2 n \lg ^2 d / \lg^4 \mu)$. 
\end{lemma}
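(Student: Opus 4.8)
The plan is to pass from $\E[\requiredstgsize^{2}]$ to the tail of $\requiredstgsize$ via the identity $\E[\requiredstgsize^{2}] = \sum_{m \ge 1}(2m-1)\Pr[\requiredstgsize \ge m]$, so that it is enough to establish a geometric tail estimate of the form $\Pr[\requiredstgsize \ge m] \le \poly(m,n)\cdot\rho^{m}$ for a suitable $\rho = \rho(\mu, d) < 1$. Indeed, below the threshold $m^{*} = \Theta(\log n / \log(1/\rho))$ one bounds $\Pr[\requiredstgsize \ge m]$ by $1$ and gets $\Order((m^{*})^{2})$ from $\sum_{m \le m^{*}}(2m-1)$, while for $m > m^{*}$ the geometric decay dominates the weight $(2m-1)$ and contributes only $\Order((m^{*})^{2})$ more; hence $\E[\requiredstgsize^{2}] = \Order((\log n/\log(1/\rho))^{2})$. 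Matching the statement then amounts to proving $\log(1/\rho) = \Omega(\lg^{2}\mu / \lg d)$, i.e.\ $\rho \le \mu^{-\Omega(1/\log_{\mu} d)}$; note that the hypothesis $\tent^{i}_{\mu}(\tfrac12)\neq\tfrac12$ for $i=1,\dots,n-1$ is exactly what licenses the use of Lemma~\ref{lemm:transition_function}.

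For the tail estimate, let $T_{m}$ be the first step $i$ at which $L(Z_{i}) = m$; since an up-step raises the level by exactly one (Lemma~\ref{lemm:transition_function}), this is also the first step with $L(Z_{i}) \ge m$, so $\{\requiredstgsize \ge m\} = \{T_{m} \le n\}$. By Lemma~\ref{lemm:transition_function} every step is either an \emph{up-step} $\typei_{\ell}\mapsto\typei_{\ell+1}$ (respectively $\flip{\typei}_{\ell}\mapsto\flip{\typei}_{\ell+1}$) or a \emph{down-step} landing at some level $\le\tfrac{\ell}{2}+1$; hence before time $T_{m}$ no down-step reaches a level exceeding $\tfrac{m+1}{2}$, so the trajectory reaching level $m$ ends with an up-run of length $\ge\tfrac{m-1}{2}$ and in general decomposes into up-runs punctuated by single down-steps. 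The decisive algebraic observation is that the transition law $p(\typej,\typej') = |\typej'|/(\mu|\typej|)$ of \eqref{eq:transition-prob2} — which matches the step out of $q_{0}$ once one reads $|q_{0}| = 1$, since $p(q_{0},\typei_{1}) = \tfrac12 = |\typei_{1}|/\mu$ — together with $|\flip{\typei}_{k}| = |\typei_{k}|$ (Lemma~\ref{lem:hanten-type}), makes the probability of \emph{any} length-$T$ trajectory $q_{0} = Z_{0}, Z_{1}, \dots, Z_{T}$ with $Z_{T}$ at level $m$ telescope to
\begin{align*}
 \prod_{i=0}^{T-1}\frac{|Z_{i+1}|}{\mu|Z_{i}|} \;=\; \frac{1}{\mu^{T}}\cdot\frac{|Z_{T}|}{|Z_{0}|} \;=\; \frac{|\typei_{m}|}{\mu^{T}},
\end{align*}
independently of the trajectory's shape. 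Consequently, writing $N_{m}(T)$ for the number of length-$T$ paths in $(Q_{n}\setminus\{\emptyset\},\delta)$ from $q_{0}$ that stay at levels $\le m-1$ through step $T-1$ and reach level $m$ at step $T$ (these events being disjoint and exhausting $\{T_{m}\le n\}$ by the Markov property applied to length-$T$ prefixes),
\begin{align*}
 \Pr[\requiredstgsize \ge m] \;=\; \sum_{T=m}^{n} N_{m}(T)\,\frac{|\typei_{m}|}{\mu^{T}}.
\end{align*}

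The remaining task — and the step I expect to be the crux — is to bound this sum by $\poly(m,n)\cdot\mu^{-\Omega(m/\log_{\mu} d)}$. The $T=m$ term is unproblematic ($N_{m}(m)=\Order(1)$ and $|\typei_{m}|\le\tfrac{\mu}{2}$ give $\le\mu^{-\Omega(m)}$), so the difficulty is entirely with $T\gg m$: the trivial bound $N_{m}(T)\le|{\cal L}_{T}|$ (Lemma~\ref{lem:execution-path}) only yields $\Pr[\requiredstgsize\ge m]\le(n-m+1)\,|\typei_{m}|\cdot\sup_{T}(|{\cal L}_{T}|/\mu^{T})$, which is useless. The structure to exploit is that these paths are confined to the $\Order(m)$ states of levels $\le m-1$; one natural route is to show that the number of length-$T$ walks confined to that sub-diagram is at most $\poly(m)\cdot\lambda_{m}^{T}$ for its spectral radius $\lambda_{m} < \mu$, and to quantify the gap $\mu-\lambda_{m}$ using the arithmetic granularity of segment-types — every endpoint of a segment-type equals $\tent^{i}(\tfrac12)=c_{i}/(2d^{i})$ for an integer $c_{i}$ with $i\le m$ (cf.\ Lemma~\ref{lemm:elements_of_type_2} and the proof of Lemma~\ref{lemm:space_of_stg}), so in particular $|\typei_{m}|$ is a multiple of $1/(2d^{m})$, limiting how much the low-level part of the diagram can branch before being forced to escape upward. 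Granting such a bound one gets $\Pr[\requiredstgsize \ge m] \le \poly(m,n)\cdot\mu^{-\Omega(m/\log_{\mu} d)}$, hence $\log(1/\rho)=\Omega(\lg^{2}\mu/\lg d)$, and substituting into $\E[\requiredstgsize^{2}] = \sum_{m\ge1}(2m-1)\Pr[\requiredstgsize \ge m]$ yields $\E[\requiredstgsize^{2}] = \Order(\lg^{2} n\,\lg^{2} d/\lg^{4}\mu)$, as claimed.
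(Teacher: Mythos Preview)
Your telescoping observation --- that every length-$T$ trajectory from $q_{0}$ ending at level $m$ has probability exactly $|\typei_{m}|/\mu^{T}$ --- is correct and is essentially Lemma~\ref{lem:go_back} restated along full paths rather than just the final climb. But the proposal has a genuine gap at exactly the place you flag as ``the crux'': you do not prove the spectral-type bound $N_{m}(T)\le\poly(m)\,\lambda_{m}^{T}$ with $\log(\mu/\lambda_{m})=\Omega(\lg\mu/\log_{\mu}d)$, and your sketch of how arithmetic granularity would force such a gap is not carried out (and it is not obvious that it can be: the granularity tells you $|\typei_{m}|\ge 1/(2d^{m})$, which constrains interval \emph{lengths}, not directly the branching rate of the adjacency matrix restricted to levels $\le m-1$). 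Without this step you have no tail bound on $\Pr[\requiredstgsize\ge m]$ at all, and ``granting such a bound'' is the entire content of the lemma.

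The paper sidesteps this analysis completely. Rather than prove a uniform geometric tail for every $m$, it proves the existence of a \emph{single} threshold $l\le 8\lceil\log_{\mu}d\rceil\lceil\log_{\mu}n\rceil$ for which $|\typei_{2l}|/(\mu^{l}|\typei_{l}|)\le n^{-3}$ (Lemma~\ref{lemm:existence_of_short_q_s}). The argument is a short doubling/pigeonhole: along the sequence $l_{i}=2^{i}\lceil\log_{\mu}n\rceil$, if all ratios $|\typei_{l_{i+1}}|/(\mu^{l_{i}}|\typei_{l_{i}}|)$ exceeded $n^{-3}$ then chaining them and invoking the arithmetic lower bound $|\typei_{l_{1}}|\ge 1/(2d^{l_{1}})$ (Lemma~\ref{lemm:min_sectype_length}) forces $|\typei_{l_{k+1}}|>1$ after only $k=\Order(\log\log_{\mu}d)$ doublings, a contradiction. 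From this one $l$, Observation~\ref{obs:go_straight} (your ``final up-run of length $\ge(m-1)/2$'') together with a union bound over the $\le n$ possible starting times of that climb gives $\Pr[\requiredstgsize\ge 2l]\le n\cdot n^{-3}=n^{-2}$ (Lemma~\ref{lem:l*}), and splitting $\E[\requiredstgsize^{2}]$ at $2l$ with the trivial bound $\requiredstgsize\le n$ on the tail finishes in two lines. The missing idea, then, is: do not aim for a spectral gap at every level; instead use the telescoping \emph{multiplicatively along a doubling sequence} together with the granularity bound to locate one bottleneck level, and truncate the expectation there.
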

  It is not essential that 
   Lemma~\ref{lemm:average_space_complexity} assumes $\mu$ rational, 
  which just follows that of Theorem~\ref{thm:main} for an argument about Turing comparability. 
 We will establish a similar (but a bit weaker) Proposition~\ref{prop:real-prob} 
  for any real $\mu \in (0,1)$ in Section~\ref{sec:real}. 
 Before the proof of Lemma~\ref{lemm:average_space_complexity}, 
  we prove Theorem~\ref{thm:main} by Lemmas~\ref{lemm:space_of_stg} and \ref{lemm:average_space_complexity}. 
\begin{proof}[Proof  of Theorem~\ref{thm:main}]
  Algorithm~\ref{alg1} provides 
    a random bit sequence $B_1\cdots B_n \sim {\cal D}_n$, by Theorem~\ref{thm:stat-distr}. 
  Lemma \ref{lemm:space_of_stg} implies 
   that its space complexity is $\Order(\requiredstgsize^{2} \lg{d})$. 
 Lemma \ref{lemm:average_space_complexity} implies 
  \begin{align*}
    \E[\Order(\requiredstgsize^{2} \lg{d})]
    &= \Order(\E[\requiredstgsize^{2}] \lg{d}) \\
    &= \Order(\lg^{2}{n} \lg^{3}{d} / \lg^4 \mu)  
  \end{align*}
  and we obtain the claim. 
\end{proof}
 We remark that the space complexity is clearly $\Order(1)$ to $n$ 
    if there exists a constant $n_*$ for $\mu \in (1,2)$ such that $|\typeset_{n,\mu}| \leq 2n_*$ holds for any $n \geq n_*$, 
   meaning that $|Q_n| = |\typeset_n|+2$ is constant to $n$ 
   (recall Theorem~\ref{theo:number_of_type}). 
 In the case\footnote{
   It seems not the case for any {\em rational} $\mu$, 
   though we do not prove it here (cf.\ \cite{tomitamasterthesis,tomita2018}).}, 
 we can prove the following fact (cf.\ the hypothesis of Lemma~\ref{lemm:transition_function} 
  caused by  Lemma~\ref{lemm:average_space_complexity}). 
 See Section~\ref{sec:transition_function} for a proof of Proposition~\ref{prop:const}. 
\begin{proposition}\label{prop:const}
 Let $\mu \in (1,2)$. 
 If there exists $n$ ($n=1,2,\ldots$) such that $\tent_{\mu}^n(\frac{1}{2}) = \frac{1}{2}$ holds 
   then $|\typeset_{n'}| \leq 2n$ holds for any $n' \geq n$. 
\end{proposition}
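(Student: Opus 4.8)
The plan is to reduce the statement to the single assertion $\typei_{n+1}\in\typeset_n$ and then invoke Lemma~\ref{lem:number_of_type3}.

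First I would assume, without loss of generality, that $n$ is the \emph{smallest} index with $\tent_\mu^n(\tfrac12)=\tfrac12$: if the hypothesis holds for some non-minimal $n$, apply the result to the minimal such $n_0\le n$, and since every $n'\ge n$ also satisfies $n'\ge n_0$ this gives $|\typeset_{n'}|\le 2n_0\le 2n$. (Note $n\ge2$, since $\tent_\mu(\tfrac12)=\tfrac\mu2=\tfrac12$ would force $\mu=1$.) With $n$ minimal, the critical orbit $y_j:=\tent^j(\tfrac12)$ is purely $n$-periodic, and an induction using \eqref{def:encode1} shows the code $\enc^\infty(\tfrac12)=c_1c_2\cdots$ satisfies $c_{i+n}=c_i$ for all $i$; in particular $c_{n+1}=c_1=1$. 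Moreover, \emph{granting} $\typei_{n+1}=\typefn(\bitseqc_{n+1})\in\typeset_n$, Lemma~\ref{lem:number_of_type3} gives $\typeset_{n'}=\typeset_n$ for all $n'\ge n$, while Theorem~\ref{theo:number_of_type} gives $|\typeset_n|\le 2n_*\le 2n$; so the proposition follows. Thus the task is exactly to prove $\typei_{n+1}\in\typeset_n$.

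Next I would determine the shape of $\typei_n=\typefn^n(\tfrac12)$. By Lemma~\ref{lem:c=d}, $\bitseqc_n=\bitseqd_n=\min\{\bitseq_n\in{\cal L}_n\mid b_1=1\}$, and since the sections with $b_1=1$ tile $[\tfrac12,1)$ in lexicographic order (by \eqref{eq:first-bit} and Proposition~\ref{prop:order}), $\se_n(\bitseqc_n)$ is the leftmost such section, i.e.\ $\se_n(\bitseqc_n)=[\tfrac12,s)$ with $s>\tfrac12$. By monotonicity of $\tent^n$ on a section (Lemma~\ref{lem:heikai_n}) together with $\tent^n(\tfrac12)=\tfrac12$, the closed endpoint of $\typei_n$ is $\tfrac12$ and its open endpoint is $w:=\tent^n(s)$: namely $\typei_n=[\tfrac12,w)$ with $w>\tfrac12$ if $c_n=0$, and $\typei_n=(w,\tfrac12]$ with $w<\tfrac12$ if $c_n=1$. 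Now apply Lemma~\ref{lem:transition} with the bit $c_{n+1}=1$: since $\tfrac12$ is a closed endpoint of $\typei_n$, the relevant case is Case~1-3 (if $c_n=0$) or Case~2-2 (if $c_n=1$), and in either case
\[
  \typei_{n+1}=\delta(\typei_n,1)=\big(\tent(w),\tent(\tfrac12)\big]=\big(\tent(w),\tfrac\mu2\big].
\]
Since $s$ is an interior boundary of the partition ${\cal S}_n$ (for $n\ge2$ one checks $\se_n(\bitseqc_n)\ne[\tfrac12,1)$, as $\enc^n$ is not constant on $[\tfrac12,1)$), and since every section boundary lies in $\bigcup_{0\le j\le n-1}\tent^{-j}(\tfrac12)$ by \eqref{def:encode2}, there is a minimal $j\in\{1,\dots,n-1\}$ with $\tent^j(s)=\tfrac12$. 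Setting $m:=n-j\in\{1,\dots,n-1\}$ we get $w=\tent^n(s)=\tent^m(\tfrac12)=y_m$, hence $\typei_{n+1}=(y_{m+1},\tfrac\mu2]$ with $2\le m+1\le n$.

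Finally I would recognise $(y_{m+1},\tfrac\mu2]$ as an element of $\typeset_n$. Here $\tfrac\mu2=\tent(\tfrac12)=y_1$ is the apex value of the tent map, and an induction via Lemma~\ref{lem:transition} along the ``spine'' $\typei_1,\typei_2,\dots$ shows that each $\typei_k$ has closed endpoint $\tent^k(\tfrac12)=y_k$ and open endpoint $\tent^k(s_k)$ with $s_k:=\sup\se_k(\bitseqc_k)$, and that at level $m+1$ this open endpoint equals $y_1=\tfrac\mu2$; by Lemma~\ref{lem:hanten-type} this makes $(y_{m+1},\tfrac\mu2]$ equal to $\typei_{m+1}$ or $\flip{\typei}_{m+1}$, both of which lie in $\typeset_{m+1}\subseteq\typeset_n$ (Lemma~\ref{lem:type_subset}) as $m+1\le n$. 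This yields $\typei_{n+1}\in\typeset_n$ and completes the proof. The main obstacle is precisely this last identification --- concretely, showing that the first section-boundary $s_{m+1}>\tfrac12$ at level $m+1$ has ``full depth'', i.e.\ $\tent^m(s_{m+1})=\tfrac12$ with $m$ minimal (equivalently that the open endpoint of $\typei_{m+1}$ is $\tfrac\mu2$); this requires the periodicity of the critical orbit together with careful bookkeeping of which preimages of $\tfrac12$ appear as section boundaries at successive levels.
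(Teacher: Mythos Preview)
Your reduction to proving $\typei_{n+1}\in\typeset_n$ and then invoking Lemma~\ref{lem:number_of_type3} is exactly the paper's strategy, and your endpoint computations down to $\typei_{n+1}=(y_{m+1},\tfrac\mu2]$ are correct. The gap is the one you flag yourself: you never establish that the open endpoint of $\typei_{m+1}$ (or $\flip{\typei}_{m+1}$) is $\tfrac\mu2$, i.e.\ that $\partner(m+1)=1$. Without this, the interval $(y_{m+1},\tfrac\mu2]$ has not been matched to any element of $\typeset_n$, and the vague appeal to ``periodicity of the critical orbit together with careful bookkeeping'' is not a proof. In fact, verifying $\partner(m+1)=1$ from your data about $s$ and $j$ seems to require essentially reproving the relevant part of Lemma~\ref{lemm:back_transition} in this periodic setting.

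The paper sidesteps this entirely by working at the level of \emph{sections} rather than endpoints. It shows directly that $\tilde{\tent}\big(\se_{n+1}(\bitseqc_{n+1})\big)=\se_n(c_2\cdots c_{n+1})$: the inclusion $\subseteq$ is Lemma~\ref{lem:section-shift}, and the reverse inclusion is a short contradiction argument using that $\tent^n(\tilde{\tent}(\tfrac12))=\tent(\tent^n(\tfrac12))=\tent(\tfrac12)$ is the global maximum of $\tent$, which forces (via Lemma~\ref{lem:encode1}) the image $\tilde{\tent}(\se_{n+1}(\bitseqc_{n+1}))$ to exhaust the target section. From this one reads off
\[
\typefn(\bitseqc_{n+1})=\tent^{n+1}\big(\se_{n+1}(\bitseqc_{n+1})\big)=\tent^n\big(\se_n(c_2\cdots c_{n+1})\big)=\typefn(c_2\cdots c_{n+1})\in\typeset_n
\]
without ever needing to name which $\typei_k$ or $\flip{\typei}_k$ this is, and without the minimality reduction. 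This is both shorter and avoids the obstacle you identified; I would recommend replacing your final paragraph with this argument.
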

 The rest of Section~\ref{sec:average-complexity} proves 
  Lemmas~\ref{lemm:average_space_complexity} and \ref{lemm:transition_function} 
  in Sections~\ref{sec:proof-lemm5.2} and \ref{sec:transition_function}, respectively.

\subsection{Proof of Lemma ~\ref{lemm:average_space_complexity}}\label{sec:proof-lemm5.2}
 The proof strategy of  Lemma~\ref{lemm:average_space_complexity} is as follows. 
Our key fact 
 Lemma~\ref{lemm:transition_function}
   implies that a chain must follow the path $\typei_l, \typei_{l+1}, \ldots, \typei_{2l}$ 
     (or $\flip{\typei}_l, \flip{\typei}_{l+1}, \ldots, \flip{\typei}_{2l}$) to reach level $2l$ 
    and the probability is $\frac{|\typei_{2l}|}{\mu^{l}|\typei_{l}|}$ (Lemma~\ref{lem:go_back}). 
 We then prove that 
    there exists $l=\Order(\log n \log d)$ such that $\frac{|\typei_{2l}|}{\mu^{l}|\typei_{l}|} \leq n^{-3}$
      (Lemma~\ref{lemm:existence_of_short_q_s}), 
  which provides $\Pr[K \geq 2l] \leq n^{-2}$ (Lemma~\ref{lem:l*}).  
 Lemma~\ref{lemm:average_space_complexity} is easy from  Lemma~\ref{lem:l*}.

 We observe the following fact from Lemma~\ref{lemm:transition_function}. 
\begin{observation}\label{obs:go_straight}
  If $Z_t$ visits $\typei_{2j}$ (resp.\ $\flip{\typei}_{2j}$) {\em for the first time} then 
    $Z_{t-i} = \typei_{2j-i}$ (resp.\ $Z_{t-i} = \flip{\typei}_{2j-i}$) for $i=1,2,\ldots,j$. 
\end{observation}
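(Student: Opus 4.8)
The plan is a short backward induction along the time axis, driven entirely by Lemma~\ref{lemm:transition_function}. By the symmetry between the $\typei$-family and the $\flip{\typei}$-family it suffices to prove the statement for $\typei_{2j}$. Suppose the chain visits $\typei_{2j}$ and let $t$ be the first step at which the level reaches $2j$; since one step of the chain changes the level by at most $+1$ (Lemma~\ref{lemm:transition_function}), the level at $t$ is exactly $2j$, so $Z_t\in\{\typei_{2j},\flip{\typei}_{2j}\}$, and we treat the case $Z_t=\typei_{2j}$, which (under the reading of ``first visit'' discussed below) is the situation in the statement.

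The first step is to read \eqref{eq:transition_function_1}--\eqref{eq:transition_function_2} backwards to list the predecessors of an unbarred state of level $\ell\ge 2$: the only states $\state$ with $\delta(\state,b)=\typei_\ell$ for some $b$ are $\typei_{\ell-1}$ (via the ``straight'' transition $\delta(\typei_{\ell-1},b)=\typei_{(\ell-1)+1}$) and the barred states $\flip{\typei}_m$ with $m\ge 2(\ell-1)$ (via $\delta(\flip{\typei}_m,b)=\typei_{k+1}$ with $k=\ell-1$, where the constraint $k\le m/2$ forces $m\ge 2(\ell-1)$); the initial state $q_0$ is a predecessor only of $\typei_1$. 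Here I use that the two families are disjoint, i.e.\ $\typei_a\ne\flip{\typei}_b$ for all $a,b$, which is Lemma~\ref{lem:hanten-type}; note also that $\typei_\ell$ has no self-loop when $\ell\ge 2$, again by Lemma~\ref{lemm:transition_function}.

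Now induct on $i=0,1,\dots,j-1$ with hypothesis $Z_{t-i}=\typei_{2j-i}$, the base case $i=0$ being the assumption. Write $\ell=2j-i$, so $j+1\le\ell\le 2j$ and in particular $\ell\ge 2$. By the previous paragraph the predecessor $Z_{t-i-1}$ is either $\typei_{\ell-1}=\typei_{2j-i-1}$, which is what we want, or some $\flip{\typei}_m$ with $m\ge 2(\ell-1)\ge 2j$ (using $\ell\ge j+1$). In the second case $L(Z_{t-i-1})=m\ge 2j$ although $t-i-1<t$, contradicting the minimality of $t$; and $Z_{t-i-1}=q_0$ is impossible since $\ell\ge 2$. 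Hence $Z_{t-i-1}=\typei_{2j-i-1}$, the induction goes through, and we obtain $Z_{t-i}=\typei_{2j-i}$ for all $i=0,1,\dots,j$, which is the claim.

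The point that needs care — the main obstacle — is the reading of ``$Z_t$ visits $\typei_{2j}$ for the first time'': the argument uses that $t$ is the first step of level $\ge 2j$ on \emph{either} side, and this agrees with the first visit to $\typei_{2j}$ unless the chain reaches $\flip{\typei}_{2j}$ strictly earlier. I would either adopt ``first step of level $2j$'' as the intended meaning — which is exactly what the subsequent estimate needs when it argues that reaching level $2l$ forces the straight path $\typei_l,\dots,\typei_{2l}$ (cf.\ Lemma~\ref{lem:go_back}) — or note that an earlier visit to $\flip{\typei}_{2j}$ is itself the subject of the symmetric half of the statement. Apart from this, everything is bookkeeping with the transition rules, and the only inequality that is used, $2(\ell-1)\ge 2j$, holds on the entire induction range $i\le j-1$.
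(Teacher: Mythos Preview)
Your argument is correct and follows the same route as the paper's proof: read Lemma~\ref{lemm:transition_function} backwards to list the in-neighbours of $\typei_\ell$ for $j+1\le\ell\le 2j$, observe that any non-straight predecessor has level at least $2(\ell-1)\ge 2j$, and rule it out by the first-visit hypothesis. Your final paragraph is also on point: both your proof and the paper's really use ``$t$ is the first step at which the \emph{level} reaches $2j$'' rather than ``first visit to the specific state $\typei_{2j}$,'' and this is exactly the reading employed downstream in Lemma~\ref{lem:l*}, so the observation should be (and is tacitly) understood that way.
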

\begin{proof}
 By Lemma~\ref{lemm:transition_function}, 
  all in-edges to $\typei_k$ (resp.\ $\flip{\typei}_k$) for any $k = j+1,\dots,2j$ 
  come from $\typei_{k-1}$ (resp.\ $\flip{\typei}_{k-1}$),  or a node of level $2j$ or greater. 
 Since $Z_t$ has not visited any level greater than $2j$ by the hypothesis and the above argument again, 
 we obtain the claim. 
\end{proof}

 By Observation~\ref{obs:go_straight}, 
   if a Markov chain $Z_1,Z_2,\ldots$ visits level $2l$ {\em for the first time} at time $t$ 
     then $\lev(Z_{t-l})$ must be $l$. 
 The next lemma gives an upper bound of the probability from level $l$ to $2l$. 
\begin{lemma}\label{lem:go_back}
$\Pr[\lev(Z_{t}) = 2l \mid \lev(Z_{t-l}) = l] 
 =  \frac{|\typei_{2l}|}{\mu^{l} |\typei_{l}|}$. 
\end{lemma}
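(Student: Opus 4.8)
The plan is to compute the transition probability by tracking what happens to segment-type lengths along the forced path. By Observation~\ref{obs:go_straight}, the event $\lev(Z_t) = 2l$ given $\lev(Z_{t-l}) = l$ forces the chain to pass through $\typei_l, \typei_{l+1}, \ldots, \typei_{2l}$ (or the ``barred'' mirror path; by Lemma~\ref{lem:hanten-type} the interval lengths are the same, so it suffices to treat one case). Hence
\begin{align*}
 \Pr[\lev(Z_t) = 2l \mid \lev(Z_{t-l}) = l]
  = \prod_{k=l}^{2l-1} p(\typei_k, \typei_{k+1}).
\end{align*}

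First I would invoke the rephrased transition probability \eqref{eq:transition-prob2}, namely $p(\typej, \typej') = |\typej'| / (\mu |\typej|)$ for $\typej' = \delta(\typej, b)$, which relies on Lemma~\ref{lem:tasutomu} (stated earlier as the fact that $|\delta(\typej,0)| + |\delta(\typej,1)| = \mu |\typej|$). Since the forced step at level $k$ is precisely $\typei_{k+1} = \delta(\typei_k, b)$ for the appropriate bit $b$, we get $p(\typei_k, \typei_{k+1}) = |\typei_{k+1}| / (\mu |\typei_k|)$ for each $k$ with $l \le k \le 2l-1$.

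Then the product telescopes:
\begin{align*}
 \prod_{k=l}^{2l-1} \frac{|\typei_{k+1}|}{\mu |\typei_k|}
  = \frac{1}{\mu^{l}} \cdot \frac{\prod_{k=l}^{2l-1} |\typei_{k+1}|}{\prod_{k=l}^{2l-1} |\typei_k|}
  = \frac{|\typei_{2l}|}{\mu^{l} |\typei_l|},
\end{align*}
which is exactly the claimed value. The only remaining wrinkle is the mirror path case $\flip{\typei}_l \to \cdots \to \flip{\typei}_{2l}$; here Lemma~\ref{lem:hanten-type} gives $|\flip{\typei}_k| = |\typei_k|$ for every $k$, so the same computation yields the same answer, and one also needs that conditioning on $\lev(Z_{t-l}) = l$ (which is the union of $\{Z_{t-l} = \typei_l\}$ and $\{Z_{t-l} = \flip{\typei}_l\}$) still gives a well-defined transition probability — but since both branches yield identical products, the conditional probability is this common value regardless of which branch occurred.

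The main obstacle, such as it is, is making the conditioning rigorous: one must argue that from level $l$ the chain has a unique length-$l$ trajectory reaching level $2l$ (so that the conditional probability is genuinely the product above and not a weighted average over multiple trajectories), which is exactly the content of Observation~\ref{obs:go_straight} applied in the ``forward'' direction. I would spell this out by noting that Lemma~\ref{lemm:transition_function} forbids any level-increasing edge except $\typei_k \to \typei_{k+1}$ (and its mirror), so once the chain is at level $l$ and must climb to $2l$ without ever having the option to jump — every intermediate level $k \in \{l+1, \ldots, 2l\}$ is only reachable from level $k-1$ among levels $\le 2l-1$ — the path is forced. Everything else is the short telescoping computation above.
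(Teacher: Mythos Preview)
Your proof is correct and essentially identical to the paper's: both invoke Observation~\ref{obs:go_straight} to force the unique path $\typei_l \to \typei_{l+1} \to \cdots \to \typei_{2l}$ (and its mirror), apply \eqref{eq:transition-prob2} to each factor, telescope, and then note $|\flip{\typei}_k| = |\typei_k|$ to handle the barred branch. Your extra paragraph on the conditioning being well-defined over the union $\{Z_{t-l}=\typei_l\}\cup\{Z_{t-l}=\flip{\typei}_l\}$ is a welcome clarification that the paper leaves implicit.
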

\begin{proof}
 By Observation~\ref{obs:go_straight}, 
  the path from $\typei_l$ to  $\typei_{2l}$ is unique and 
  \begin{align}
    \label{eq:asc_1_7}
 \Pr[Z_{t}  = \typei_{2l} \mid Z_{t-l} = \typei_l] 
    &= \prod_{i=l}^{2l-1} p(\typei_{i}, \typei_{i+1}) \nonumber \\
    &= \prod_{i = l}^{2l-1} \frac{|\typei_{i+1}|}{\mu |\typei_{i}|} && (\mbox{by \eqref{eq:transition-prob2}}) \nonumber \\
    &= \frac{|\typei_{2l}|}{\mu^{l} |\typei_{l}|}
  \end{align}
holds. 
 We remark that $|\typei_i| = |\flip{\typei}_i|$ holds for any $i$, meaning that 
  $p(\typei_{i}, \typei_{i+1}) = p(\flip{\typei}_{i}, \flip{\typei}_{i+1})$, 
   and hence 
  $\Pr[Z_{t}  =  \flip{\typei}_{2l} \mid Z_{t-l} = \flip{\typei}_l] = \frac{|\typei_{2l}|}{\mu^{l} |\typei_{l}|}$. 
\end{proof}

 The following lemma is the first mission of the proof of Lemma~\ref{lemm:average_space_complexity}. 
\begin{lemma}
  \label{lemm:existence_of_short_q_s}
  Let $\mu \in (1, 2)$ be rational given by an irreducible fraction $\mu = c/d$.
 Suppose for $\mu \in (1, 2)$ that $\tent_{\mu}^{i}(\frac{1}{2}) \neq \frac{1}{2}$ holds for any $i = 1, \dots, n-1$. 
 Then, there exists $l$ such that  $l \leq 8 \ceil{\log_{\mu} d} \ceil{ \log_{\mu} n} $ and 
  \begin{equation}
    \label{eq:shortqs_0}
    \frac{|\typei_{2l}|}{\mu^l |\typei_l|} \le n^{-3}
  \end{equation}
  holds. 
\end{lemma}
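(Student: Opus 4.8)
The plan is to bound the product $\frac{|\typei_{2l}|}{\mu^l|\typei_l|}$ by tracking how the lengths $|\typei_k|$ of the segment-types evolve. First I would recall from Lemma~\ref{lem:tasutomu} (equivalently \eqref{eq:transition-prob2}) that each forward step multiplies total length by $\mu$, so that along the unique ascending path $\typei_l,\typei_{l+1},\dots,\typei_{2l}$ we have $\prod_{i=l}^{2l-1} p(\typei_i,\typei_{i+1}) = \frac{|\typei_{2l}|}{\mu^l|\typei_l|}$, which is at most $\mu^{-l}$ since $|\typei_{2l}| \le |\typei_l| \le 1$. That crude bound gives $\mu^{-l} \le n^{-3}$ as soon as $l \ge 3\log_\mu n$, which already almost suffices; the factor of $8\ceil{\log_\mu d}$ slack in the statement strongly suggests the intended argument is exactly this easy estimate, with the extra $\log_\mu d$ factor absorbed to make a clean bound that is also convenient for the rational-$\mu$ bookkeeping later. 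So the core of the proof is: pick $l = \ceil{3\log_\mu n}$ (or any value $\le 8\ceil{\log_\mu d}\ceil{\log_\mu n}$ that dominates $3\log_\mu n$), and observe $\frac{|\typei_{2l}|}{\mu^l|\typei_l|} \le \mu^{-l} \le n^{-3}$.

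The one subtlety I would be careful about is that $|\typei_l|$ could in principle be so small that $\mu^{-l}$ is not actually an upper bound for the ratio — but this cannot happen because $|\typei_{2l}| \le |\typei_l|$: the ascending path is obtained by repeatedly applying the tent map restricted to a sub-interval (Case~1-1-1/Case~2 transitions of Lemma~\ref{lem:transition} truncate at $1/2$), so segment-types only shrink or stay the same in "length divided by $\mu^{\text{step}}$" terms; more directly, $|\typei_{k+1}| \le \mu|\typei_k|$ is automatic and combined with the exact identity $|\delta(\typej,0)|+|\delta(\typej,1)| = \mu|\typej|$ we get $|\typei_{k+1}|/(\mu|\typei_k|) \le 1$, so the telescoping product is at most $1\cdot\mu^{-?}$... actually the cleanest route is simply $\frac{|\typei_{2l}|}{\mu^l|\typei_l|} = \prod_{i=l}^{2l-1}\frac{|\typei_{i+1}|}{\mu|\typei_i|} \le \prod_{i=l}^{2l-1}\frac{1}{\mu} = \mu^{-l}$ using $|\typei_{i+1}| \le |\typei_i|$ and $\mu > 1$. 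Wait — $|\typei_{i+1}| \le |\typei_i|$ need not hold in general; rather $|\typei_{i+1}| \le \mu|\typei_i|$ does not immediately give each factor $\le 1/\mu$. The safe statement is $\prod_{i=l}^{2l-1}\frac{|\typei_{i+1}|}{\mu|\typei_i|} = \frac{|\typei_{2l}|}{\mu^l|\typei_l|} \le \frac{1}{\mu^l|\typei_l|}$, and then I need a lower bound on $|\typei_l|$.

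Hence the actual hard part, and where I expect to spend the effort, is lower-bounding $|\typei_l|$ — or, better, avoiding it. The rational structure helps here: by Lemma~\ref{lemm:elements_of_type_2} each endpoint of $\typei_k$ is of the form $\tent^i(1/2) = c_i/(2d^i)$ with $i \le k$ and $c_i$ a nonnegative integer, and by the hypothesis $\tent^i_\mu(1/2) \ne 1/2$ for $i \le n-1$ the interval $\typei_l$ is nonempty with distinct rational endpoints, so $|\typei_l| \ge \frac{1}{2d^l}$ (a common-denominator estimate: two distinct fractions with denominator dividing $2d^l$ differ by at least $1/(2d^l)$). Plugging in, $\frac{1}{\mu^l|\typei_l|} \le \frac{2d^l}{\mu^l}$, and I want this $\le n^{-3}$, i.e. $(\mu/d)^l \ge 2n^3$ — but $\mu < d$ typically so $(\mu/d)^l \to 0$, which is the wrong direction. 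So this denominator bound is useless and I must instead genuinely use the shrinking behaviour. The resolution: the correct bound is $|\typei_{2l}| \le |\typei_l|$ after all, because the unique path from $\typei_l$ to $\typei_{2l}$ passes through $\flip{\typei}$-type truncations that only decrease length — more precisely, I would prove by tracking Lemma~\ref{lem:transition} that along an ascending $\typei_l \to \typei_{l+1} \to \cdots$ run each step is a Case~1-2/1-3/2-2/2-3 ("$|\outn|=1$, no truncation, length scales by $\mu$... no) step or a truncating step, and the product $|\typei_{2l}|/|\typei_l|$ telescopes to something $\le$ a power of $\mu$ times the cumulative truncation, which is $\le \mu^{l}$ trivially and in fact the ratio $|\typei_{2l}|/(\mu^l|\typei_l|)$ is a product of $l$ factors each in $(0,1]$ summing the truncations. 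The clean claim to establish first is therefore: \emph{the map $k \mapsto |\typei_k|/\mu^{-k}$, i.e. $\mu^k|\typei_k|$, is non-increasing along an ascending path}, equivalently $|\typei_{k+1}| \le \mu^{-1}\cdot\mu|\typei_k| $... I would nail down from Lemma~\ref{lem:transition} that $|\typei_{k+1}| \le \mu \min(|\typei_k|, \text{something}) $ giving $\frac{|\typei_{k+1}|}{\mu|\typei_k|}\le 1$, hence the product is $\le 1$, hence $\frac{|\typei_{2l}|}{\mu^l|\typei_l|} \le \mu^{-l}\cdot(\text{number} \le 1)$ — no. I will resolve this by using instead the genuinely contracting estimate from the structure of $\tilde\tent$: the key inequality I expect to need is $|\typei_{k+1}| \le |\typei_k|$ whenever the step is a truncating one (Case 1-1), combined with the observation that in a run of length $l$ reaching level $2l \le n$, at least a constant fraction of the steps must be truncating (otherwise lengths would exceed $1$), delivering $\frac{|\typei_{2l}|}{\mu^l|\typei_l|} \le \mu^{-\Omega(l)}$ and then $l = \Order(\log_\mu n \log_\mu d)$ suffices with room to spare; establishing that "constant fraction of steps truncate" via the $\mu^k|\typei_k| \le 1$ invariant is the main obstacle.
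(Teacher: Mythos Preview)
Your proposal has a genuine gap. You correctly identify the telescoping product
\[
\frac{|\typei_{2l}|}{\mu^l|\typei_l|}=\prod_{i=l}^{2l-1}\frac{|\typei_{i+1}|}{\mu|\typei_i|},
\]
and you correctly derive the rational lower bound $|\typei_l|\ge \tfrac{1}{2d^{l}}$ (this is exactly the paper's Lemma~\ref{lemm:min_sectype_length}). But your final strategy---show a constant fraction of the $l$ steps are truncating, hence the product is $\mu^{-\Omega(l)}$---does not work. A truncating step (Case~1-1 or 2-1 of Lemma~\ref{lem:transition}) has factor $\frac{|\typei_{i+1}|}{\mu|\typei_i|}<1$, but that factor can be arbitrarily close to $1$; the fact that $|\typei_k|\le 1$ forces only that the \emph{product} of all factors from level $1$ is at most $1/|\typei_1|$, not that individual runs of length $l$ contribute $\mu^{-\Omega(l)}$. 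So counting truncations gives no exponential decay. You yourself flagged ``establishing that a constant fraction of steps truncate'' as the main obstacle, and indeed it is not the right invariant.

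The paper's argument sidesteps this entirely with a doubling-and-telescoping contradiction. Set $l_i=2^{i}\ceil{\log_\mu n}$ and suppose \eqref{eq:shortqs_0} fails for each of $l_1,\dots,l_k$; then $|\typei_{l_{i+1}}|>n^{-3}\mu^{l_i}|\typei_{l_i}|$ for every $i$, and chaining these gives
\[
|\typei_{l_{k+1}}|>n^{-3k}\,\mu^{l_1+\cdots+l_k}\,|\typei_{l_1}|.
\]
Now two ingredients you already have are used together: $\mu^{l_i}\ge n^{2^i}$, and the rational bound $|\typei_{l_1}|\ge \tfrac12 d^{-l_1}=\tfrac12 n^{-4\log_\mu d}$ (applied only once, at the bottom level, so the $d^{l}$ growth you worried about does not iterate). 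For $k=\max\{4,\ceil{\log_2\log_\mu d}+2\}$ the exponent becomes positive and $|\typei_{l_{k+1}}|>1$, contradicting $\typei_{l_{k+1}}\subseteq[0,1]$. Hence some $l_i\le l_k\le 8\ceil{\log_\mu d}\ceil{\log_\mu n}$ works. The key idea you are missing is this doubling: the statement does \emph{not} ask you to show the bound for a specific $l$, only for some $l$ below the threshold, and the freedom to try geometrically spaced $l$'s is what makes the single application of $|\typei_{l_1}|\ge \tfrac{1}{2d^{l_1}}$ and the single application of $|\typei_{l_{k+1}}|\le 1$ suffice.
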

To prove Lemma \ref{lemm:existence_of_short_q_s}, 
 we remark the following fact. 
\begin{lemma}
  \label{lemm:min_sectype_length}
  Let $\mu \in (1, 2)$ be rational given by an irreducible fraction $\mu = c/d$.
  Then, $|\typei_k| \ge \frac{1}{2 d^k}$ for any $k \ge 2$. 
\end{lemma}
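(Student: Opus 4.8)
The plan is to reduce the bound to an elementary fact about fractions, using the forthcoming Lemma~\ref{lemm:elements_of_type_2}, which describes the endpoints of $\typei_k$ as forward iterates of $\tfrac12$.

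First I would show, by induction on $i$, that $\tent^i(\tfrac12) = \tfrac{c_i}{2d^i}$ for some $c_i \in \mathbb{Z}$. The base case $i=0$ holds with $c_0 = 1$. For the step, assume $\tent^i(\tfrac12) = \tfrac{c_i}{2d^i}$; then $\tent^{i+1}(\tfrac12)$ equals $\mu \cdot \tfrac{c_i}{2d^i} = \tfrac{c\,c_i}{2d^{i+1}}$ if $\tent^i(\tfrac12) \le \tfrac12$, and $\mu\left(1 - \tfrac{c_i}{2d^i}\right) = \tfrac{c(2d^i - c_i)}{2d^{i+1}}$ otherwise, and both have integer numerator. (Since $\tent$ maps $[0,1]$ into $[0,\tfrac{\mu}{2}] \subseteq [0,1]$ one also gets $0 \le c_i \le 2d^i$, but only integrality is needed below.)

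Next I would invoke Lemma~\ref{lemm:elements_of_type_2}: for $k \ge 2$, the interval $\typei_k$ has endpoints $\tent^i(\tfrac12)$ and $\tent^j(\tfrac12)$ for some $i,j \le k$; without loss of generality $i \le j$. I claim these two endpoints are distinct. Indeed $\typei_k = \typefn(\se_k(\bitseqc_k))$ with $\bitseqc_k = \enc^k(\tfrac12) \in {\cal L}_k$, so $\se_k(\bitseqc_k)$ is nonempty, and by Proposition~\ref{prop:-heikai} it is a left-closed right-open interval, hence nondegenerate; by Lemma~\ref{lem:encode1}, $\tent^k$ is strictly monotone on it, so its image is a nondegenerate interval. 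Hence $\tent^i(\tfrac12) \ne \tent^j(\tfrac12)$.

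Putting the endpoints over a common denominator then finishes it:
\[
 |\typei_k| = \left| \frac{c_j}{2d^j} - \frac{c_i}{2d^i} \right| = \frac{\left| c_j - c_i d^{\,j-i} \right|}{2d^{j}} \ge \frac{1}{2d^{j}} \ge \frac{1}{2d^{k}},
\]
where the middle inequality holds because $c_j - c_i d^{\,j-i}$ is a nonzero integer (nonzero by the preceding paragraph), and the last because $j \le k$. The only substantive ingredient is Lemma~\ref{lemm:elements_of_type_2}; granting that, there is no real obstacle, as the rest is routine bookkeeping.
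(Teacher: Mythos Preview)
Your proof is correct and follows essentially the same route as the paper: invoke Lemma~\ref{lemm:elements_of_type_2} to identify the endpoints of $\typei_k$ as iterates $\tent^i(\tfrac12)$ and $\tent^j(\tfrac12)$ with $i,j\le k$, write these over the denominator $2d^i$ (resp.\ $2d^j$), and use that the numerator of the difference is a nonzero integer. The paper puts both terms over $2d^k$ directly rather than $2d^j$, and simply asserts $|\typei_k|\neq 0$ where you supply an argument, but these are cosmetic differences.
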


\begin{proof}
 We will prove Lemma \ref{lemm:elements_of_type_2} in Section~\ref{sec:transition_function}, 
  which claims that $\typei_k$ is either 
   $\left[\tent^{i}(\frac{1}{2}), \tent^{j}(\frac{1}{2})\right)$ or $\left(\tent^{i}(\frac{1}{2}), \tent^{j}(\frac{1}{2})\right]$ where $i \le k$ and $j \le k$.
  We can denote $\tent^{i}(\frac{1}{2})$ as $\frac{c_{i}}{2d^{i}}$ ($c_{i} \in \mathbb{Z}_{>0}$) for any $i$.
  Therefore,
  \begin{equation}
    |\typei_k|
    = \left|\tent^{i}(\tfrac{1}{2}) - \tent^{j}(\tfrac{1}{2})\right|
    = \left|\frac{c_{i}}{2 d^{i}} - \frac{c_{j}}{2 d^{j}}\right|
    = \left|\frac{c_{i} d^{k-i} - c_{j} d^{k-j}}{2 d^k}\right|
  \end{equation}
  holds.
  Clearly, $c_{i} d^{k-i} - c_{j} d^{k-j}$ is an integer, and it is not $0$ since $|\typei_k| \neq 0$.
  Thus, we obtain $|\typei_k| \ge \frac{1}{2 d^k}$.
\end{proof}

 Then, we prove Lemma~\ref{lemm:existence_of_short_q_s}. 
\begin{proof}[Proof of Lemma~\ref{lemm:existence_of_short_q_s}]
  For convenience, let 
   $l_i =  2^i \ceil{\log_{\mu}n} $ for $i = 1, 2, \ldots$. 
  Assume for a contradiction that 
     \eqref{eq:shortqs_0} never hold for any $l_1,l_2,\ldots,l_k$, where 
  $k = \max\{4, \ceil{\log_{2} \log_{\mu} d} + 2\}$ for convenience.  
 In other words, 
  \begin{equation}
    |\typei_{l_{i+1}}| > n^{-3} \mu^{l_{i}} |\typei_{l_{i}}|
  \end{equation}
  holds every $i=1,2,\ldots,k$. 
 Thus, we inductively obtain that 
  \begin{align}
    |\typei_{l_{k+1}}|
    &> n^{-3} \mu^{l_{k}} |\typei_{l_{k}}| \nonumber \\
    &> n^{-6} \mu^{l_{k}} \mu^{l_{k-1}} |\typei_{l_{k-1}}| \nonumber \\
    &> \dots \nonumber \\
    &> n^{-3k} \mu^{l_{k}} \mu^{l_{k-1}} \dots \mu^{l_1} |\typei_{l_1}|
    \label{eq:shortqs_2}
  \end{align}
   holds. 
 By the definition of $l_i$, 
  \begin{equation}
    \label{eq:shortqs_5}
    \mu^{l_{i}} = \mu^{2^{i} \ceil{\log_{\mu}n}} \ge \mu^{2^{i} \log_{\mu}n} = n^{2^{i}}
  \end{equation}
  holds. 
 Lemma \ref{lemm:min_sectype_length} implies that 
  \begin{align}
    \label{eq:shortqs_7}
    |\typei_{l_{1}}|
    \ge \frac{1}{2 d^{l_{1}}}  
    = \frac{1}{2} d^{-2 \ceil{\log_{\mu} n}} 
    \geq \frac{1}{2} d^{-4 \log_{\mu} n} 
    = \frac{1}{2} n^{-4 \log_{\mu} d}
  \end{align}
 holds.  
 Then,  \eqref{eq:shortqs_2},  \eqref{eq:shortqs_5} and \eqref{eq:shortqs_7} imply 
  \begin{align}
    |\typei_{l_{k+1}}|
    &>  n^{-3k}  \cdotp  n^{2^{k} + 2^{k-1} + \dots + 2^{1}} \cdotp \frac{1}{2} n^{-4 \log_{\mu} d} 
    \label{eq:shortqs_6}
  \end{align}
 holds. 
  By taking the $\log_{n}$ of the both sides of \eqref{eq:shortqs_6}, 
    we see that 
  \begin{align}
    \log_{n}{|\typei_{l_{k+1}}|}
    &> -3k + 2^{k+1} - 2 - \log_{n}2 - 4 \log_{\mu}d  \nonumber \\
    &= (2^{k} - 4 \log_{\mu}d) + (2^{k} -3k - 2 - \log_{n}2)
    \label{eq:shortqs_8}
  \end{align}
   holds. 
 Since  $k \geq  \ceil{\log_{2} \log_{\mu} d} + 2$ by definition,  
  it is not difficult to see  that 
  \begin{align}
    2^{k} - 4 \log_{\mu}d 
    &\geq 2^{2 + \log_{2}(\log_{\mu}d)} - 4 \log_{\mu}d \nonumber \\
    &= 4 \log_{\mu}d - 4 \log_{\mu}d \nonumber \\
    &= 0
    \label{eq:shortqs_1}
  \end{align}
  holds.
 Since  $k \geq  4$ by definition,  
 it is also not difficult to observe that 
  \begin{align}
    2^{k} -3k - 2 - \log_{n} 2
   \ \ge\ 2^{4} - 3 \cdot 4 - 2 - \log_{n} 2 
   \ =\ 2- \log_{n} 2
   \ >\ 0
    \label{eq:shortqs_9}
  \end{align}
  holds.  
  Equations \eqref{eq:shortqs_8}, \eqref{eq:shortqs_1} and \eqref{eq:shortqs_9} imply that $\log_{n}{|\typei_{l_{k+1}}|} > 0$, 
     meaning that $|\typei_{l_{k+1}}| > 1$. 
  At the same time, 
    notice that any segment-type $\type$ satisfies $\type \subseteq [0,1]$,  
     meaning that 
    $|\typei_{l_{k+1}}| \le 1$. 
  Contradiction. 
 Thus, we obtain \eqref{eq:shortqs_0} for at least one of $l_1,l_2,\ldots,l_k$.  

 Finally, we check the size of $l_k$:  
\begin{align*}
  l_k &= 2^k \ceil{\log_{\mu} n} \leq 2^{\max\{4,\ceil{\log_2 \log_{\mu} d} + 2\}} \ceil{\log_{\mu} n} 
  \leq 2^{\max\{4, \log_2 \log_{\mu} d + 3\}} \ceil{\log_{\mu} n} \\& = \max\{ 16, 8 \log_{\mu} d \} \ceil{\log_{\mu} n}  = 8 \max\{ 2, \log_{\mu} d \} \ceil{\log_{\mu} n} \leq 8 \ceil{\log_{\mu} d} \ceil{\log_{\mu} n} 
\end{align*}
 where the last equality follows $\log_{\mu} d > 1$ since $\mu < 2$ and $d \geq 2$. 
 We obtain a desired $l$. 
\end{proof}

By Lemmas~\ref{lem:go_back} and \ref{lemm:existence_of_short_q_s}, we obtain the following fact. 
\begin{lemma}\label{lem:l*}
Let $l_* = 8 \ceil{\log_{\mu} d} \ceil{\log_{\mu} n} $ for convenience. Then 
\begin{align*}
 \Pr[\requiredstgsize \ge 2 l_*]  \leq n^{-2} 
\end{align*} 
 holds. 
\end{lemma}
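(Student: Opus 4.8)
**Proof plan for Lemma~\ref{lem:l*}.**

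The plan is to combine the structural fact (Observation~\ref{obs:go_straight}) that a Markov chain can reach level $2l_*$ for the first time only by traversing the straight path $\typei_{l_*}, \typei_{l_*+1}, \ldots, \typei_{2l_*}$ (or its barred counterpart) with the probability bound for that traversal given by Lemma~\ref{lem:go_back}, and then with the existence result Lemma~\ref{lemm:existence_of_short_q_s}. First I would invoke Lemma~\ref{lemm:existence_of_short_q_s} to fix a value $l$ with $l \le l_* = 8\ceil{\log_\mu d}\ceil{\log_\mu n}$ such that $\frac{|\typei_{2l}|}{\mu^l|\typei_l|} \le n^{-3}$. The key observation is that if the chain ever reaches level $2l$, it must at some earlier time $t-l$ have been at level exactly $l$ (by Observation~\ref{obs:go_straight}, since any first visit to level $2l$ is preceded by the unique ascending path of length $l$ from level $l$). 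Conditioned on being at $\typei_l$ (resp.\ $\flip{\typei}_l$) at time $t-l$, Lemma~\ref{lem:go_back} gives that the probability of being at $\typei_{2l}$ (resp.\ $\flip{\typei}_{2l}$) at time $t$ is exactly $\frac{|\typei_{2l}|}{\mu^l|\typei_l|} \le n^{-3}$.

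Next I would convert this into a bound on $\Pr[K \ge 2l]$ via a union bound over the at most $n$ possible first-passage times $t \in \{1,\ldots,n\}$: the event $\{K \ge 2l\}$ is the disjoint union over $t$ of the events ``the chain visits level $2l$ for the first time at step $t$,'' and each such event is contained in ``$\lev(Z_{t-l}) \in \{l\}$ and $\lev(Z_t) = 2l$,'' whose probability is at most $n^{-3}$ by the previous paragraph (summing the two sign-cases, or noting the first-visit event determines the sign). Hence $\Pr[K \ge 2l] \le n \cdot n^{-3} = n^{-2}$. Finally, since $K$ is monotone in the threshold and $2l \le 2l_*$, we have $\{K \ge 2l_*\} \subseteq \{K \ge 2l\}$, so $\Pr[K \ge 2l_*] \le \Pr[K \ge 2l] \le n^{-2}$, which is the claim.

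The main obstacle I anticipate is making the union-bound step fully rigorous: one must be careful that the conditioning in Lemma~\ref{lem:go_back} is applied correctly when summing over first-passage times, i.e.\ that ``first visit to level $2l$ at time $t$'' implies not merely that $\lev(Z_{t-l})=l$ but that the whole segment $Z_{t-l},\ldots,Z_t$ is the deterministic ascending path, so that the probability of that segment is bounded by the product in \eqref{eq:asc_1_7} regardless of the history before time $t-l$ (this is where the Markov property is used, together with the fact that $\frac{|\typei_{2l}|}{\mu^l|\typei_l|}$ is an upper bound on the per-history conditional probability, not just the stationary one). A secondary subtlety is that Lemma~\ref{lemm:existence_of_short_q_s} only guarantees $l \le l_*$, not $l = l_*$, so the final inclusion $\{K\ge 2l_*\}\subseteq\{K\ge 2l\}$ must be stated explicitly; both points are routine once spelled out.
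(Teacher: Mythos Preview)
Your proposal is correct and follows essentially the same approach as the paper: fix $l\le l_*$ via Lemma~\ref{lemm:existence_of_short_q_s}, use Observation~\ref{obs:go_straight} to see that a first visit to level $2l$ at time $t$ forces $\lev(Z_{t-l})=l$, bound each such event by $\frac{|\typei_{2l}|}{\mu^l|\typei_l|}\le n^{-3}$ via Lemma~\ref{lem:go_back}, take a union bound over the $n$ possible times $t$, and finish with the inclusion $\{K\ge 2l_*\}\subseteq\{K\ge 2l\}$. The two subtleties you flag (the Markov-property justification inside the union bound and the $l$ versus $l_*$ distinction) are exactly the ones the paper addresses.
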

\begin{proof}
 For $\mu = c/d$ and $n$, Lemma~\ref{lemm:existence_of_short_q_s} implies that there exists $l$ such that 
  $l \leq l_* $ and 
\begin{align}
 \frac{|\typei_{2l}|}{\mu^{l} |\typei_{l}|} \leq n^{-3}
\label{eq:l*1}
\end{align}
holds.  
 Let $A_t$ ($t=1,\ldots,n$) denote the event that $Z_t$ reaches the level $2 l$ {\em for the first time}. 
 It is easy to see that 
\begin{align}
 \Pr[\requiredstgsize \ge 2l] 
  = \Pr\left[\bigvee_{t=0}^{n} A_t \right] 
\label{eq:l*2}
\end{align}
 holds\footnote{Precisely, $\bigvee_{t=0}^{n} A_t = \bigvee_{t=2l_*}^{n} A_t$ holds, but we do not use the fact here.  } 
   by the definition of $A_t$. 
 We also remark that the event $A_t$ implies not only $Z_t = 2l$ but also $\lev(Z_{t-l}) = l$ by Observation~\ref{obs:go_straight}.  
 It means that 
\begin{align}
 \Pr[A_t] 
  &\leq \Pr[ [\lev(Z_t) = 2l] \wedge [\lev(Z_{t-l}) = l] ] 
  \nonumber\\
  &= \Pr[ \lev(Z_t) = 2l  \mid \lev(Z_{t-l}) = l ] \Pr[ \lev(Z_{t-l}) = l ] \nonumber\\
  &\leq \Pr[ \lev(Z_t) = 2l  \mid \lev(Z_{t-l}) = l ] 
\label{eq:l*3}
\end{align}  
 holds. Then, 
\begin{align*}
 \Pr[\requiredstgsize \ge 2l] 
 &= \Pr\left[\bigvee_{t=0}^{n} A_t \right] 
   && (\mbox{by \eqref{eq:l*2}}) \nonumber \\
 &\le \sum_{t=0}^n \Pr\left[A_t\right] 
   && (\mbox{union bound})  \nonumber \\
  &\leq n \Pr[ \lev(Z_t) = 2l  \mid \lev(Z_{t-l}) = l ] 
   && (\mbox{by \eqref{eq:l*3}}) \nonumber \\
  &\leq n \frac{|\typei_{2l}|}{\mu^{l} |\typei_{l}|}
   && (\mbox{by Lemma~\ref{lem:go_back}}) \nonumber \\
  & \leq n^{-2}
   && (\mbox{by \eqref{eq:l*1}}) 
\end{align*}
  holds. 
We remark that 
$\Pr[\requiredstgsize \ge 2l_*] 
 \leq 
 \Pr[\requiredstgsize \ge 2l] $
is trivial since $l < l_*$. 
\end{proof}

We are ready to prove Lemma \ref{lemm:average_space_complexity}. 
\begin{proof}[Proof of Lemma \ref{lemm:average_space_complexity}]
Let $l_* = 8 \ceil{\log_{\mu} d} \ceil{\log_{\mu} n} $ for convenience. Then 
\begin{align*}
  \E[\requiredstgsize^{2}]
   &= \sum_{k=1}^n k^2\Pr[\requiredstgsize =k]  \nonumber \\
   &= \sum_{k=1}^{2l_*-1} k^2\Pr[\requiredstgsize =k]  + \sum_{k=2l_*}^n k^2\Pr[\requiredstgsize =k]  \nonumber \\
   &\leq (2l_* - 1)^2 \Pr[\requiredstgsize \leq 2l_*-1] + n^{2} \Pr[\requiredstgsize \ge 2l_*] \nonumber \\
   &\leq (2l_* - 1)^2 + n^{2} \Pr[\requiredstgsize \ge 2l_*] \nonumber \\
   &\leq (2l_* - 1)^2 + 1
    && (\mbox{by Lemma~\ref{lem:l*}}) \\
   &= (16 \ceil{\log_{\mu} d} \ceil{\log_{\mu} n} -1)^2 + 1
  \end{align*}
holds. Now the claim is easy. 
\end{proof}

\subsection{Proofs of Lemmas~\ref{lemm:transition_function} and \ref{prop:const}}\label{sec:transition_function}
 This section proves Lemmas~\ref{lemm:transition_function} and \ref{prop:const}. 
 \begin{lemma}[Lemma~\ref{lemm:transition_function}]
 Suppose for $\mu \in (1, 2)$ that $\tent^{i}(\frac{1}{2}) \neq \frac{1}{2}$ holds for any $i = 1, \dots, n-1$. 
 Then, 
 \begin{align}
   \delta(\typei_{n}, b) &\in \left\{\typei_{n+1}\right\} \cup \left\{\flip{\typei}_{k+1} \mid 1 \le k \le \tfrac{n}{2} \right\} \cup \{\emptyset\}
   \tag{\ref{eq:transition_function_1}} \\
   \delta(\flip{\typei}_{n}, b) &\in \left\{\flip{\typei}_{n+1}\right\} \cup  \left\{\typei_{k+1} \mid 1 \le k \le \tfrac{n}{2} \right\} \cup \{\emptyset\}
   \tag{\ref{eq:transition_function_2}}
 \end{align}
   hold for $b=0,1$. 
\end{lemma}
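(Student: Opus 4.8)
The plan is to prove Lemma~\ref{lemm:transition_function} together with a sharpened description of $\typei_n$ (which also gives Lemma~\ref{lemm:elements_of_type_2}): pin down \emph{which} two points of the critical orbit bound $\typei_n$. Write $a_i=\tent^i(\tfrac12)$ for $i\ge 0$. First I would establish, by strong induction on $n$ using only Lemma~\ref{lem:transition} (its case split on the position of $\tfrac12$), Lemma~\ref{lem:heikai_n} and Lemma~\ref{lem:hanten-type}, the following for all $n\ge 2$: $\typei_n=[a_n,a_{j_n})$ when $c_n=0$ and $\typei_n=(a_{j_n},a_n]$ when $c_n=1$, where the index $j_n\in\{1,\dots,n-1\}$ obeys $j_{n+1}=1$ if $\tfrac12\in\operatorname{int}\typei_n$ (``a cut at level $n$'') and $j_{n+1}=j_n+1$ otherwise; and moreover $\typei_n\subseteq\typei_{j_n}$. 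The base cases $\typei_1=(0,\tfrac\mu2]$ and $\typei_2=[a_2,a_1)$ are checked by hand, and the hypothesis $\tent^i(\tfrac12)\neq\tfrac12$ for $i\le n-1$ gives $0<a_i\le\tfrac\mu2$ for $1\le i\le n$ (strict for $i\ge 2$) and non-degeneracy of all intervals in play, so ``$\operatorname{int}$'' is unambiguous. Unrolling the recursion yields $j_n=n-S(n)$, where $S(n)$ is the largest cut level strictly below $n$ (level $1$ is always a cut, so $S(n)$ is defined for $n\ge 2$).

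The transitions are then read off from Lemma~\ref{lem:transition}. The forward bit --- the bit equal to $c_{n+1}$, i.e. the one dictated by the critical orbit $\tfrac12\mapsto a_1\mapsto\cdots$ --- sends $\typei_n$ to $\typefn^{n+1}(\tfrac12)=\typei_{n+1}$ by the meaning of $\delta$. For the other bit: if $\tfrac12\notin\operatorname{int}\typei_n$ we are in Case~1-2, 1-3, 2-2 or 2-3, the other bit is inconsistent, and $\delta(\typei_n,\cdot)=\emptyset$; if $\tfrac12\in\operatorname{int}\typei_n$ we are in Case~1-1 or 2-1, and since the endpoints of $\typei_n$ are $a_n$ and $a_{j_n}$, the ``other'' sub-interval has endpoints $\tent(a_{j_n})=a_{j_n+1}$ and $\tent(\tfrac12)=\tfrac\mu2=a_1$, with orientation opposite to that of $\typei_{n+1}$. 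Comparing with Lemma~\ref{lem:heikai_n} and Lemma~\ref{lem:hanten-type}, this interval is $\flip{\typei}_{j_n+1}$, \emph{provided} $\typei_{j_n+1}$ is the interval with endpoints $a_{j_n+1}$ and $a_1$ --- which holds exactly when there is a cut at level $j_n$.

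This ``provided'' is the crux, and it is settled by the nesting $\typei_n\subseteq\typei_{j_n}$ carried along the induction. As $\typei_n$ and $\typei_{j_n}$ share the endpoint $a_{j_n}$, nesting amounts to $a_n\in\overline{\typei_{j_n}}$; one propagates it through the same case split --- if there is no cut at level $n-1$ then $\typei_n=\tent(\typei_{n-1})$ and the monotonicity of $\tent$ on the half of $\typei_{j_{n-1}}$ containing $\typei_{n-1}$ (by the inductive nesting) pushes the inclusion forward, and if there is a cut at level $n-1$ then $\typei_n$ has endpoints $a_n\in(0,\tfrac\mu2)$ and $\tfrac\mu2$ and so lies in $\typei_1=\typei_{j_n}$. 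Granting nesting, a cut at level $n$ forces $\tfrac12\in\typei_{j_n}$; since $a_{j_n}=\tent^{j_n}(\tfrac12)\neq\tfrac12$ and, for $j_n\ge 2$, the other endpoint $a_{j_{j_n}}\neq\tfrac12$ --- both by the hypothesis, as $1\le j_{j_n}<j_n\le n-1$ --- and trivially also when $j_n=1$, the point $\tfrac12$ is interior to $\typei_{j_n}$, i.e. there is a cut at level $j_n$. Hence $j_n$ is a cut level below $n$, so $j_n\le S(n)=n-j_n$, i.e. $j_n\le n/2$. This places the other branch in $\{\flip{\typei}_{k+1}\mid 1\le k\le n/2\}$, and together with the forward branch and the $\emptyset$ case it gives \eqref{eq:transition_function_1}; then \eqref{eq:transition_function_2} follows by the reflection $x\mapsto 1-x$ (via \eqref{eq:tent-sym}, Lemma~\ref{lem:hanten} and Lemma~\ref{lem:hanten-type}), which exchanges $\typei_k\leftrightarrow\flip{\typei}_k$ and flips every transition label. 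The main obstacle is the nesting step --- keeping the open/closed endpoint bookkeeping of Lemma~\ref{lem:transition} airtight across the case analysis; a minor point is the degenerate level-$1$ self-loop $\delta(\typei_1,1)=\typei_1$, which is handled separately, the statement being of interest for $n\ge 2$.
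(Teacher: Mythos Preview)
Your proposal is correct and follows the same overall architecture as the paper: both introduce the ``partner'' index (your $j_n$ is exactly the paper's $\partner(n)$, with the same recursion), both prove that $\typei_n$ has endpoints $a_n=\tent^n(\tfrac12)$ and $a_{j_n}$ (this is the paper's Lemma~\ref{lemm:elements_of_type_2}), both identify the back-transition as $\flip{\typei}_{j_n+1}$, and both finish by showing that a cut at level $n$ forces a cut at level $j_n$, whence $j_n\le S(n)=n-j_n$.

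The genuine difference is in how that last implication is obtained. The paper works symbolically: it proves bit relations $c_n=\flip{c_{\partner(n)}}$ and $c_{\partner(n)+1}=0$ (Lemmas~\ref{lemm:bits_from_partner}, \ref{lemm:outdeg_from_bits}), computes the other child of $\typei_n$ explicitly, and then uses $\typeset_n$-membership (Theorem~\ref{theo:number_of_type}) together with the hypothesis $\tent^i(\tfrac12)\neq\tfrac12$ to match endpoints and conclude $\partner(\partner(n)+1)=1$ (Lemma~\ref{lemm:back_transition}). You instead carry the geometric nesting $\typei_n\subseteq\typei_{j_n}$ through the induction and read off ``cut at $n$ $\Rightarrow$ $\tfrac12\in\typei_{j_n}$ $\Rightarrow$ cut at $j_n$'' directly. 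Your route is shorter and avoids the endpoint-matching detour through Theorem~\ref{theo:number_of_type}; the paper's route has the minor advantage that the orientation check (that the other child is $\flip{\typei}_{j_n+1}$ rather than $\typei_{j_n+1}$) is made fully explicit via $c_{\partner(n)+1}=0$, whereas in your write-up this step is compressed into ``comparing with Lemma~\ref{lem:heikai_n} and Lemma~\ref{lem:hanten-type}'' --- correct, but worth one more line noting that a cut at $j_n$ forces $c_{j_n+1}=0$, so $\typei_{j_n+1}=[a_{j_n+1},a_1)$ and $\flip{\typei}_{j_n+1}=(a_{j_n+1},a_1]$ has the right half-open type.
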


 To begin with, we give two remarks. 
 One is that 
  we know $\typei_{n+1} = \delta(\typei_n,0)$ if $|\outn(\typei_n)|=2$, 
  otherwise $\delta(\typei_n,c_{n+1}) = \typei_{n+1}$ and $\delta(\typei_n,\flip{c_{n+1}}) = \emptyset$, by Lemma~\ref{lem:transition}.  
 The other is that $L(\delta(\typei_n,b)) = L(\delta(\flip{\typei}_n,\flip{b}))$ holds 
  since $\typefn(\bitseq_n) = \typefn(\flip{\bitseq_n})$ holds for any $\bitseq_n \in {\cal L}_n$ by Lemma~\ref{lem:hanten-type}. 
 Thus we only need to prove the following lemma as a proof of Lemma~\ref{lemm:transition_function}
\begin{lemma}\label{lemm:transition_function2}
 Suppose for $\mu \in (1, 2)$ that $\tent^{i}(\frac{1}{2}) \neq \frac{1}{2}$ holds for any $i = 1, \dots, n-1$. 
 If $|\outn(\typei_n)|=2$ holds for $n$ ($n \geq 2$) then 
 \begin{align}
   \delta(\typei_{n}, 1) &\in  \left\{\flip{\typei}_{k+1} \mid 1 \le k \le \frac{n}{2} \right\}
 \end{align}
   holds. 
\end{lemma}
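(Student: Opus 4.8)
The plan is to unwind the definition of $\delta(\typei_n,1)$ via Lemma~\ref{lem:transition} and then identify the resulting interval with some $\flip{\typei}_{k+1}$ by tracking how the endpoints $\tent^i(\tfrac12)$ evolve. First I would recall that $\typei_n = \typefn(\bitseqc_n)$ where $\bitseqc_n = \enc^n(\tfrac12)$, and by Lemma~\ref{lem:heikai_n} it has the form $[v,u)$ or $(v,u]$ according to the last bit $c_n$, with $\{v,u\} = \{\tent^{j}(\tfrac12),\tent^{j'}(\tfrac12)\}$ for some $j,j' \le n$ (this is the content of the forthcoming Lemma~\ref{lemm:elements_of_type_2}, which I may assume). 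The hypothesis $|\outn(\typei_n)| = 2$ forces, via the case analysis in Lemma~\ref{lem:transition}, that $\tfrac12$ lies strictly interior to $\typei_n$, i.e.\ we are in Case 1-1 or Case 2-1. In that situation one of the two out-neighbours is $\typei_{n+1} = \typefn(\bitseqc_{n+1})$ (the branch where the extra bit continues the code of $\tfrac12$), and the other is $\delta(\typei_n,1)$, which by \eqref{eq:transition1}/\eqref{eq:transition2} is an interval of the form $(\tent(u),\tent(\tfrac12)]$ or $[\tent(u),\tent(\tfrac12))$, where $u = \sup \typei_n$.

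The key step is then to show $u = \sup\typei_n$ equals $\tent^{k}(\tfrac12)$ for some $k$ with $1 \le k \le \tfrac{n}{2}$, and that $\tent(u) = \tent^{k+1}(\tfrac12)$ together with $\tent(\tfrac12) = \tent^{1}(\tfrac12)$ makes the resulting interval exactly $\typefn(\flip{\bitseqc}_{k+1}) = \flip{\typei}_{k+1}$ (using Lemma~\ref{lem:hanten-type} to match the open/closed pattern). To pin down $k$, I would argue recursively: $\typei_n$ is obtained from $\typei_{n-1}$ by one application of the transition rules, and each time $\tfrac12$ is interior (the $|\outn| = 2$ case) the "new" endpoint introduced is $\tent(\tfrac12)$, while the "surviving" endpoint is the image under $\tent$ of an endpoint already present. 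So the endpoint $u$ of $\typei_n$ that is \emph{not} $\tent^{n}$-deep is the one carried over from the most recent time $\tfrac12$ was cut, and if that last cut happened at level $k$ then $u = \tent^{\,n-k}(\tfrac12)$... — more carefully, I would set up an induction on $n$: the statement to prove is that whenever $|\outn(\typei_n)|=2$, the endpoint $\sup\typei_n$ (if $c_n=0$) or $\inf\typei_n$ (if $c_n=1$) equals $\tent^{n-k}(\tfrac12)$ where $k$ is the largest index $< n$ with $\typei_k$ splitting (i.e.\ $\typei_{k+1}=\typefn(\bitseqc_{k+1})$ arising from a "1-1 / 2-1" transition), and then verify $n - k \le n/2$, equivalently $k \ge n/2$, because between two consecutive splits the level can only increase by one per step so two splits cannot occur in the last half... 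I would actually phrase it as: the subword $c_{k+1}\cdots c_n$ of $\bitseqc_n$ read from the $(k+1)$-st position reproduces, by Lemma~\ref{lem:obs-rephrase} / Lemma~\ref{lem:section-shift}, the code $\enc^{\,n-k}(\tfrac12)$ again, forcing $n-k \le k$ by the very definition of $k$ as the \emph{last} split before $n$ (there is no split strictly between $k$ and $n$, and the periodicity-type structure means a split must occur by the halfway point). The hypothesis $\tent^i(\tfrac12) \neq \tfrac12$ for $i \le n-1$ guarantees all these transitions are genuine "interior" cases and no degenerate collapse occurs.

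I expect the main obstacle to be precisely this combinatorial bound $k \ge n/2$ on the location of the last split: translating "the endpoint of $\typei_n$ that is not $\tent^n(\tfrac12)$ is $\tent^{n-k}(\tfrac12)$" into the inequality requires a clean invariant relating the level of $\typei_n$, the last time $\tfrac12$ fell strictly inside a segment-type, and the length of the resulting "return word." I would handle it by induction on $n$, carrying the stronger hypothesis that for every $m \le n$ with $\typei_m$ interior, writing $k(m)$ for the previous interior level, one has $k(m) \ge m - k(m)$, i.e.\ $k(m) \ge m/2$; the inductive step uses that between $k(m)$ and $m$ the level strictly increases each step (no interior event), so $\typei_m$ "is" $\typei_{k(m)}$ advanced $m-k(m)$ times, and re-entering an interior configuration at step $m$ after $k(m)$ replays the first $m-k(m)$ symbols of $\bitseqc$, which by minimality of the gap is at most $k(m)$. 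The remaining parts — verifying the open/closed endpoint patterns and that $\tent(\sup\typei_n)=\tent^{k+1}(\tfrac12)$, $\tent(\tfrac12)=\tent^{1}(\tfrac12)$ glue to give $\flip{\typei}_{k+1}$ — are routine applications of Lemma~\ref{lem:transition} and Lemma~\ref{lem:hanten-type}.
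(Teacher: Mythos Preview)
Your high-level plan coincides with the paper's: compute $\delta(\typei_n,1)$ via Lemma~\ref{lem:transition}, identify the resulting interval as some $\flip{\typei}_{k+1}$ using the endpoint structure of Lemma~\ref{lemm:elements_of_type_2}, and bound $k\le n/2$. The paper organises this through the auxiliary quantity $\partner(n)=\min\{i:|\outn(\typei_{n-i})|=2\}$, so that the endpoints of $\typei_n$ are $\tent^{n}(\tfrac12)$ and $\tent^{\partner(n)}(\tfrac12)$, and the target is $\delta(\typei_n,1)=\flip{\typei}_{\partner(n)+1}$ together with $\partner(n)\le n/2$. Your ``$k$'' wobbles between $\partner(n)$ and $n-\partner(n)$ across the proposal, which is cosmetic, but the substantive gap is in how you justify the identification and the inequality.

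For the identification $\delta(\typei_n,1)=\flip{\typei}_{\partner(n)+1}$ you need $\partner(\partner(n)+1)=1$ (equivalently $|\outn(\typei_{\partner(n)})|=2$), since by Lemma~\ref{lemm:elements_of_type_2} the second endpoint of $\flip{\typei}_{\partner(n)+1}$ is $\tent^{\partner(\partner(n)+1)}(\tfrac12)$, and you must match it to $\tent(\tfrac12)$. You call this ``routine'', but it is not: the paper proves it (Lemma~\ref{lemm:back_transition}(i)) by invoking Theorem~\ref{theo:number_of_type} to say the interval already lies in $\typeset_n$, then matching endpoints and using the hypothesis $\tent^i(\tfrac12)\neq\tfrac12$ to force uniqueness. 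Once that fact is in hand, the bound is immediate and much simpler than your induction: $\partner(\partner(n)+1)=1$ says level $\partner(n)$ has out-degree $2$, while by definition of $\partner(n)$ the levels $n-1,\ldots,n-\partner(n)+1$ all have out-degree $1$, so $\partner(n)\le n-\partner(n)$. Your proposed mechanism for the bound---that $c_{k+1}\cdots c_n$ ``reproduces $\enc^{\,n-k}(\tfrac12)$'' via Lemma~\ref{lem:obs-rephrase}---does not work as stated: that lemma strips the \emph{first} bit and replaces $x$ by $\tilde f(x)$, not by $\tfrac12$, so iterating it from position $k+1$ gives the code of $\tilde f^{\,k}(\tfrac12)$, not of $\tfrac12$. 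The ``replay'' intuition therefore does not go through, and without $\partner(\partner(n)+1)=1$ you have neither the identification nor the inequality.
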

We will prove Lemma~\ref{lemm:transition_function2}.
For the purpose, we define
\begin{equation}
  \label{eq:partner_1}
  \partner_{\mu}(n) \defeq
  \min \left\{ i \in \{1, 2, \dots, n\} \mid |\outn(\typei_{n-i}) | = 2 \right\}
\end{equation}
for $n = 2,3,\ldots$, 
 where we also use $\partner(n)$ as $\partner_{\mu}(n)$ without a confusion. 
Notice that $\partner(2) = 1$ for any $\mu \in (1,2)$ since $|\outn(\typei_1) | = |\{[0,\frac{\mu}{2}),(0,\frac{\mu}{2}]\}|= 2$. 
We also remark a recursion 
\begin{equation}
  \label{eq:partner_2}
  \partner(n+1) =
  \begin{cases}
    1 &: |\outn(\typei_n) | = 2,\\
    \partner(n) + 1 &: |\outn(\typei_n) | = 1 
  \end{cases}
\end{equation}
holds by the definition. 
Then, we give a refinement of Lemma~\ref{lem:transition} (see also Theorem~\ref{theo:number_of_type}). 
\begin{lemma}\label{lemm:elements_of_type_2}
  For any $n \ge 2$,
  \begin{equation}
    \label{eq:elements_of_type_b}
    \typei_n=
    \begin{cases}
      [\tent^{n}(\frac{1}{2}), \tent^{\partner(n)}(\frac{1}{2}))
      &\mbox{if $c_{n} = 0$,}\\
      (\tent^{\partner(n)}(\frac{1}{2}), \tent^{n}(\frac{1}{2})]
      &\mbox{otherwise}
    \end{cases}
  \end{equation}
  holds where recall $\typei_n=\typefn(\bitseqc_{n})$. 
\end{lemma}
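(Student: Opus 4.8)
The plan is to prove the statement by induction on $n$, carrying out the inductive step with the transition rule of Lemma~\ref{lem:transition} and the recursion \eqref{eq:partner_2} for $\partner$.

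A preliminary orientation observation will be convenient. Since $\bitseqc_n = \bitseqd_n = \min\{\bitseq_n \in {\cal L}_n \mid b_1=1\}$ by Lemma~\ref{lem:c=d}, Proposition~\ref{prop:order} together with \eqref{eq:first-bit} forces $\inf \se_n(\bitseqc_n) = \tfrac{1}{2}$, so $\se_n(\bitseqc_n) = [\tfrac{1}{2},\,\sup\se_n(\bitseqc_n))$. Feeding this into Lemma~\ref{lem:heikai_n} and the endpoint computation in its proof shows that $\inf\typei_n = \tent^n(\tfrac{1}{2})$ when $c_n=0$ and $\sup\typei_n = \tent^n(\tfrac{1}{2})$ when $c_n=1$ --- that is, the endpoint the claimed formula attaches to $\tent^n(\tfrac{1}{2})$ is automatically correct, and the whole content of the lemma is the identification of the other endpoint with $\tent^{\partner(n)}(\tfrac{1}{2})$.

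The base case $n=2$ is a direct computation: $\bitseqc_1=1$ and $\typei_1=(0,\tfrac{\mu}{2}]$, so $\partner(2)=1$ (as already noted after \eqref{eq:partner_1}), and tracing the orbit of $\tfrac{1}{2}$ --- where $\tent(\tfrac{1}{2})=\tfrac{\mu}{2}>\tfrac{1}{2}$ --- through Lemma~\ref{lem:transition} gives $c_2=0$ and $\typei_2=[\tent(\tfrac{\mu}{2}),\tent(\tfrac{1}{2})) = [\tent^{2}(\tfrac{1}{2}),\tent^{\partner(2)}(\tfrac{1}{2}))$. For the inductive step ($n\ge2$), assume the formula for $\typei_n$ and compute $\typei_{n+1}=\delta(\typei_n,c_{n+1})$ by running through the cases of Lemma~\ref{lem:transition}; which case applies is determined by the position of $\tfrac{1}{2}$ relative to $\typei_n$, hence by the inductive hypothesis. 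In each case two things happen in step: the endpoint $\tent^n(\tfrac{1}{2})$ is carried to $\tent(\tent^n(\tfrac{1}{2}))=\tent^{n+1}(\tfrac{1}{2})$, landed at the correct endpoint of $\typei_{n+1}$ by the orientation observation; and either $|\outn(\typei_n)|=2$, in which case the second endpoint of $\typei_{n+1}$ is the newly introduced value $\tent(\tfrac{1}{2})=\tent^{\partner(n+1)}(\tfrac{1}{2})$ with $\partner(n+1)=1$ by \eqref{eq:partner_2}, or $|\outn(\typei_n)|=1$, in which case the surviving endpoint $\tent^{\partner(n)}(\tfrac{1}{2})$ is carried to $\tent^{\partner(n)+1}(\tfrac{1}{2})=\tent^{\partner(n+1)}(\tfrac{1}{2})$, again by \eqref{eq:partner_2}. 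One also checks in each case that the bit $c_{n+1}$ dictated by Lemma~\ref{lem:transition} matches the left-closed/right-open form (for $c_{n+1}=0$) or the left-open/right-closed form (for $c_{n+1}=1$) that the formula prescribes.

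The main obstacle is the bookkeeping across the (essentially six) cases of Lemma~\ref{lem:transition}: in each one must simultaneously get the orientation of $\typei_{n+1}$, the value of $c_{n+1}$, and the identification of the non-$\tent^n$ endpoint right. The situation deserving particular care is $\inf\typei_n=\tfrac{1}{2}$ or $\sup\typei_n=\tfrac{1}{2}$ (equivalently $\tent^{i}(\tfrac{1}{2})=\tfrac{1}{2}$ for some $i\le n$); this needs no extra hypothesis, because the case split of Lemma~\ref{lem:transition} uses non-strict inequalities at $\tfrac{1}{2}$, routing such a $\typei_n$ into a single-out-neighbor case, where the argument above applies verbatim.
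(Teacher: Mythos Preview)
Your proof is correct and follows essentially the same route as the paper's: induction on $n$, with the inductive step driven by the case analysis of Lemma~\ref{lem:transition} and the recursion \eqref{eq:partner_2} for $\partner$. The one substantive presentational difference is how you pick the correct sub-case when $|\outn(\typei_n)|=2$: the paper invokes the lexicographic minimality $\bitseqc_{n+1}=\bitseqd_{n+1}$ (Lemma~\ref{lem:c=d}) to force $c_{n+1}=0$, whereas your preliminary ``orientation observation'' (that the closed endpoint of $\typei_{n+1}$ is always $\tent^{n+1}(\tfrac{1}{2})$, derived from $\inf\se_n(\bitseqc_n)=\tfrac{1}{2}$ and Lemma~\ref{lem:heikai_n}) singles out the same branch; both work, and yours has the mild advantage of giving the $\tent^{n+1}(\tfrac{1}{2})$ endpoint for free in every case.
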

 We remark that 
  \begin{equation}
    \label{eq:elements_of_type_b_2}
    \flip{\typei}_n=
    \begin{cases}
      (\tent^{n}(\frac{1}{2}), \tent^{\partner(n)}(\frac{1}{2})]
      &\mbox{if $c_{n} = 0$,}\\
      [\tent^{\partner(n)}(\frac{1}{2}), \tent^{n}(\frac{1}{2}))
      &\mbox{otherwise}
    \end{cases}
  \end{equation}
  holds by Lemmas~\ref{lemm:elements_of_type_2} and \ref{lem:hanten}, 
  where recall $\flip{\typei}_n=\typefn(\flip{\bitseqc_{n}})$. 
\begin{proof}[Proof of Lemma~\ref{lemm:elements_of_type_2}]
 The proof is an induction on $n$. 
 For $n=2$,  
  notice that $\bitseqc_{2} = \enc^2(\frac{1}{2}) = 10$ for any $\mu \in (1,2)$. 
 Then, 
  \begin{align*}
    \typei_2=\typefn(10) &= [\tent^{2}(\tfrac{1}{2}), \tent^{}(\tfrac{1}{2})), 
  \end{align*}
  which  implies 
  \eqref{eq:elements_of_type_b} for $n=2$, 
  where we remark that  $\partner(2)=1$ (see \eqref{eq:partner_1}). 

 Inductively assuming \eqref{eq:elements_of_type_b} holds for $n$, we prove it for $n+1$. 
 We consider the cases $c_{n} = 0$ or~$1$. 
 Firstly, we are concerned with the case of $c_n=0$. 
 In the case, 
  $\typefn(\bitseqc_{n}) = [\tent^{n}(\frac{1}{2}), \tent^{\partner(n)}(\frac{1}{2}))$ 
  by the inductive assumption.
 We consider the following three cases. 
   \vspace{-1ex}
\begin{itemize}\setlength{\itemindent}{2em}\setlength{\parskip}{0.5ex}\setlength{\itemsep}{0cm} 
 \item[Case 1-1.]
 Suppose that $\tent^{n}(\frac{1}{2}) < \frac{1}{2} < \tent^{\partner(n)}(\frac{1}{2})$. 
 Then, $|\outn(\typei_n)| =2$. 
  Recall that 
    $\bitseqc_{n+1} = \min\{ \bitseq_{n+1} \in {\cal L}_n \mid b_1 = 1  \mbox{ where } \bitseq_n  = b_1 \cdots b_{n+1} \}$ by Lemma~\ref{lem:c=d}, 
   thus $c_{n+1}$ must be $0$. 
  Recall Case 1-1-1 in Lemma~\ref{lem:transition}, then 
    \begin{equation*}
      \typefn(\bitseqc_{n+1})
      = \typefn(\bitseqc_{n} 0)
      = [\tent^{n+1}(\tfrac{1}{2}), \tent^{1}(\tfrac{1}{2}))
      = [\tent^{n+1}(\tfrac{1}{2}), \tent^{\partner(n+1)}(\tfrac{1}{2}))
    \end{equation*}
    holds where we used $\partner(n+1)=1$ by \eqref{eq:partner_2}.
  We obtain the claim in the case. 
 \item[Case 1-2.]
    Suppose that $\tent^{\partner(n)}(\frac{1}{2}) \le \frac{1}{2}$. 
    Then, $|\outn(\typei_n)| =1$. 
    Recall Case 1-2 of Lemma~\ref{lem:transition}, then 
    \begin{equation*}
      \typefn(\bitseqc_{n+1})
      = [\tent^{n+1}(\tfrac{1}{2}), \tent^{\partner(n)+1}(\tfrac{1}{2}))
      = [\tent^{n+1}(\tfrac{1}{2}), \tent^{\partner(n+1)}(\tfrac{1}{2}))
    \end{equation*}
  holds where we used $\partner(n+1)=\partner(n)+1$ by \eqref{eq:partner_2}. 
  We obtain the claim in the case. 
    \item[Case 1-3.]
    Suppose that $\tent^{n}(\frac{1}{2}) \ge \frac{1}{2}$.
    Then, $|\outn(\typei_n)| =1$. 
    Recall Case 1-3 of Lemma~\ref{lem:transition}, then 
    \begin{equation*}
      \typefn(\bitseqc_{n+1})
      = (\tent^{\partner(n)+1}(\tfrac{1}{2}), \tent^{n+1}(\tfrac{1}{2})]
      = (\tent^{\partner(n+1)}(\tfrac{1}{2}), \tent^{n+1}(\tfrac{1}{2})]
    \end{equation*}
   holds where we used $\partner(n+1)=\partner(n)+1$ by \eqref{eq:partner_2}.
  We obtain the claim in the case. 
  \end{itemize}
Next, suppose $c_{n} = 1$. 
  Then $\typefn(\bitseqc_{n}) = (\tent^{\partner(n)}(\frac{1}{2}), \tent^{n}(\frac{1}{2})]$.
  \vspace{-1ex}
\begin{itemize}\setlength{\itemindent}{2em}\setlength{\parskip}{0.5ex}\setlength{\itemsep}{0cm} 
 \item[Case 2-1.]
    Suppose that $\tent^{\partner(n)}(\frac{1}{2}) < \frac{1}{2} < \tent^{n}(\frac{1}{2})$. 
    Then, $|\outn(\typei_n)| =2$. 
  Recall that 
    $\bitseqc_{n+1} = \min\{ \bitseq_{n+1} \in {\cal L}_n \mid b_1 = 1  \mbox{ where } \bitseq_n  = b_1 \cdots b_{n+1} \}$ by Lemma~\ref{lem:c=d}, 
   thus $c_{n+1}$ must be $0$. 
    Recall Case 2-1-2 of Lemma~\ref{lem:transition}, then 
    \begin{equation*}
      \typefn(\bitseqc_{n+1})
      = \typefn(\bitseqc_{n} 0)
      = [\tent^{n+1}(\tfrac{1}{2}), \tent^{1}(\tfrac{1}{2}))
      = [\tent^{n+1}(\tfrac{1}{2}), \tent^{\partner(n+1)}(\tfrac{1}{2}))
    \end{equation*}
    holds where we used $\partner(n+1)=1$ by \eqref{eq:partner_2}.
  We obtain the claim in the case. 
 \item[Case 2-2.]
    Suppose that $\tent^{n}(\frac{1}{2}) \le \frac{1}{2}$.
    Then, $|\outn(\typei_n)| =1$. 
    Recall Case 2-2 of Lemma~\ref{lem:transition}, then 
    \begin{equation*}
      \typefn(\bitseq_{n+1})
      = \typefn(\bitseq_{n} b_{n})
      = (\tent^{\partner(n)+1}(\tfrac{1}{2}), \tent^{n+1}(\tfrac{1}{2})]
      = (\tent^{\partner(n+1)}(\tfrac{1}{2}), \tent^{n+1}(\tfrac{1}{2})]
    \end{equation*}
   holds where we used $\partner(n+1)=\partner(n)+1$ by \eqref{eq:partner_2}.
  We obtain the claim in the case. 
 \item[Case 2-3.]
    Suppose that $\tent^{\partner(n)}(\frac{1}{2}) \ge \frac{1}{2}$.
    Then, $|\outn(\typei_n)| =1$. 
    Recall Case 2-3 of Lemma~\ref{lem:transition}, then 
    \begin{equation*}
      \typefn(\bitseq_{n+1})
      = \typefn(\bitseq_{n} \flip{b_{n}})
      = [\tent^{n+1}(\tfrac{1}{2}), \tent^{\partner(n)+1}(\tfrac{1}{2}))
      = [\tent^{n+1}(\tfrac{1}{2}), \tent^{\partner(n+1)}(\tfrac{1}{2}))
    \end{equation*}
   holds where we used $\partner(n+1)=\partner(n)+1$ by \eqref{eq:partner_2}.
  We obtain the claim in the case. 
  \end{itemize}
 Then, we obtain \eqref{eq:elements_of_type_b}. 
\end{proof}

Then, we sequentially prove three technical lemmas, Lemmas~\ref{lemm:bits_from_partner}--\ref{lemm:back_transition}. 
\begin{lemma}\label{lemm:bits_from_partner}
  Suppose for  $\mu \in (1, 2)$ that $\tent^{i}(\frac{1}{2}) \neq \frac{1}{2}$ holds for any $i = 1, \dots, n-1$. 
    Then,
    $c_{n} = \flip{c_{\partner(n)}}$
  holds. 
\end{lemma}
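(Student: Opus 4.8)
The plan is to argue by induction on $n$, using the recursion~\eqref{eq:partner_2} for $\partner$ together with the explicit description of $\typei_n$ in Lemma~\ref{lemm:elements_of_type_2}. The base case $n=2$ is immediate: $\bitseqc_2=10$ gives $c_2=0$, $c_1=1$ and $\partner(2)=1$, so $c_2=\flip{c_1}=\flip{c_{\partner(2)}}$. For the inductive step I assume the claim for all indices up to $n-1$ (the hypothesis $\tent^i(\tfrac12)\neq\tfrac12$ for $i\le n-1$ guarantees that every interval endpoint appearing below is $\neq\tfrac12$), and I split on $|\outn(\typei_{n-1})|$, which by Lemma~\ref{lem:transition} equals $2$ exactly when $\typei_{n-1}$ straddles $\tfrac12$ and equals $1$ otherwise.

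If $|\outn(\typei_{n-1})|=2$, then $\partner(n)=1$ by~\eqref{eq:partner_2}; moreover $c_n=0$ by the same lexicographic-minimality argument already used in the proof of Lemma~\ref{lemm:elements_of_type_2} (recall $\bitseqc_n=\bitseqd_n=\min\{\bitseq_n\in{\cal L}_n\mid b_1=1\}$ by Lemma~\ref{lem:c=d}), and since $c_1=1$ we are done. So assume $|\outn(\typei_{n-1})|=1$ and set $m=\partner(n-1)$, so that $\partner(n)=m+1$ and the goal reduces to $c_n=\flip{c_{m+1}}$.

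The key point is that, by Lemma~\ref{lemm:elements_of_type_2}, the two endpoints of $\typei_{n-1}$ are $\tent^{n-1}(\tfrac12)$ and $\tent^{m}(\tfrac12)$, while the two endpoints of $\typei_m$ are $\tent^{m}(\tfrac12)$ and $\tent^{\partner(m)}(\tfrac12)$ (for the degenerate value $m=1$ one uses directly $\typei_1=(0,\tfrac{\mu}{2}]$, whose supremum is $\tent^1(\tfrac12)$). Hence $\typei_{n-1}$ and $\typei_m$ share the endpoint $\tent^m(\tfrac12)$, which is $\neq\tfrac12$; and since $|\outn(\typei_{n-1})|=1$ the interval $\typei_{n-1}$ lies entirely on one side of $\tfrac12$, necessarily the side containing its endpoint $\tent^m(\tfrac12)$. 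I would then combine this with the elementary rule read off from Lemma~\ref{lem:transition}: along the encoding path of $\tfrac12$, a non-straddling segment-type of orientation $c$ (left-closed iff $c=0$, by Lemma~\ref{lem:heikai_n}) passes to orientation $c$ if it lies below $\tfrac12$ and to orientation $\flip c$ if it lies above, whereas a straddling type passes to orientation $0$. Feeding in the inductive hypothesis $c_{n-1}=\flip{c_m}$, a short case analysis on whether $\tent^m(\tfrac12)<\tfrac12$ or $>\tfrac12$ then yields $c_n=\flip{c_{m+1}}$; for instance, if $\tent^m(\tfrac12)<\tfrac12$ then $\typei_{n-1}$ lies below, so $c_n=c_{n-1}=\flip{c_m}$, and $\typei_m$ is then either below (so $c_{m+1}=c_m$) or straddling (which forces $c_m=0$ and hence $c_{m+1}=0$), and in both subcases $\flip{c_{m+1}}=\flip{c_m}=c_n$; the case $\tent^m(\tfrac12)>\tfrac12$ is symmetric.

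I expect the main obstacle to be precisely this bookkeeping of the left/right-closedness of the intervals — equivalently, of the orientation bits $c_k$ — through the one-step transitions, and in particular the subcase in which the partner type $\typei_m$ straddles $\tfrac12$: there its successor orientation is not obtained by the simple preserve/flip rule but is pinned to $0$, and one has to verify directly that this stays consistent with $c_n=\flip{c_{m+1}}$. What makes the argument tractable is that $\typei_{n-1}$ and $\typei_m$ literally share the endpoint $\tent^m(\tfrac12)$, whose position relative to $\tfrac12$ is unambiguous thanks to the standing hypothesis $\tent^i(\tfrac12)\neq\tfrac12$.
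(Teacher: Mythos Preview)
Your proof is correct and follows the same inductive skeleton as the paper's: base case $n=2$, split on whether $\typei_{n-1}$ straddles $\tfrac12$, and in the straddling case conclude directly from $\partner(n)=1$ and $c_n=0$. The only difference is in the non-straddling case, where you take an unnecessary detour through the segment-type $\typei_m$. Since $c_{m+1}$ is by definition the $(m{+}1)$-th bit of $\enc(\tfrac12)$, the encoding rule~\eqref{def:encode1} gives $c_{m+1}=c_m$ or $c_{m+1}=\flip{c_m}$ depending only on whether $\tent^m(\tfrac12)<\tfrac12$ or $>\tfrac12$; the position of the \emph{other} endpoint of $\typei_m$ (hence whether $\typei_m$ itself straddles) is irrelevant. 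The paper exploits exactly this: when $\typei_{n-1}$ does not straddle, both of its endpoints $\tent^{n-1}(\tfrac12)$ and $\tent^m(\tfrac12)$ lie on the same side of $\tfrac12$, so by~\eqref{def:encode1} the updates $c_{n-1}\mapsto c_n$ and $c_m\mapsto c_{m+1}$ are simultaneously preserved or simultaneously flipped, and the inductive hypothesis $c_{n-1}=\flip{c_m}$ transfers to $c_n=\flip{c_{m+1}}$ in one line. This dissolves what you flagged as the ``main obstacle'' --- the straddling subcase for $\typei_m$ never arises.
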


\begin{proof}
  The proof is an  induction on $n$. 
  Notice that $\enc^{2}(\frac{1}{2}) = 10$, 
    meaning that $c_1=1$ and $c_2=0$. 
  Recall $\partner(2)=1$. 
  Thus, $c_2=0$ and $c_{\partner(2)} = c_1 = 1$, and 
   we obtain the claim $c_{n} = \flip{c_{\partner(n)}}$ for $n=2$.

  Inductively assuming $c_{n} = \flip{c_{\partner(n)}}$ holds for $n$, we prove  $c_{n+1} = \flip{c_{\partner(n+1)}}$. 
  Recall 
  \begin{equation*}
    \typei_n=
    \begin{cases}
      [\tent^{n}(\frac{1}{2}), \tent^{\partner(n)}(\frac{1}{2}))
      &\mbox{if $c_{n} = 0$,}\\
      (\tent^{\partner(n)}(\frac{1}{2}), \tent^{n}(\frac{1}{2})]
      &\mbox{otherwise}
    \end{cases}
    \tag{\ref{eq:elements_of_type_b}}
  \end{equation*}
  by Lemma~\ref{lemm:elements_of_type_2}. We consider three cases. 
\vspace{-1ex}
\begin{itemize}\setlength{\itemindent}{1em}\setlength{\parskip}{0.5ex}\setlength{\itemsep}{0cm} 
  \item[Case 1.]
    Suppose $\tent^{n}(\frac{1}{2}) < \frac{1}{2} < \tent^{\partner(n)}(\frac{1}{2})$
    or $\tent^{\partner(n)}(\frac{1}{2}) < \frac{1}{2} < \tent^{n}(\frac{1}{2})$.
    Note that $|\outn(\typei_n)|=2$ in the case, and hence 
    $c_{n+1} = 0$ (cf. proof of Lemma \ref{lemm:elements_of_type_2}).
    On the other hand,
     $c_{\partner(n+1)} = c_1 = 1$ since $\partner(n+1) = 1$ when $|\outn(\typei_n)|=2$  by definition \eqref{eq:partner_2}. 
    We obtain the claim $c_{n+1}=\flip{c_{\partner(n+1)}}$ in the case.
 \item[Case 2.]
    Suppose $\tent^{n}(\frac{1}{2}) < \frac{1}{2}$ and $\tent^{\partner(n)}(\frac{1}{2}) < \frac{1}{2}$. 
    They respectively imply 
     $c_{n}=c_{n+1}$ and $c_{\partner(n)}=c_{\partner(n)+1}$ by definition \eqref{def:encode1}.
   Note that $|\outn(\typei_n)|=1$ in the case, by \eqref{eq:elements_of_type_b}. 
   Thus $\partner(n+1) = \partner(n) + 1$ by definition \eqref{eq:partner_2}, 
     meaning that  $c_{\partner(n)+1}$ is exactly $c_{\partner(n+1)}$ itself.  
  The inductive assumption $c_{n}=\flip{c_{\partner(n)}}$ implies the claim $c_{n+1}=\flip{c_{\partner(n+1)}}$ in the case. 
 \item[Case 3.]
   Suppose $\tent^{n}(\frac{1}{2}) \geq \frac{1}{2}$ and $\tent^{\partner(n)}(\frac{1}{2}) \geq \frac{1}{2}$. 
   They respectively imply 
    $c_{n}=\flip{c_{n+1}}$ and 
    $c_{\partner(n)}=\flip{c_{\partner(n)+1}}$ by definition \eqref{def:encode1}.
   Note that $|\outn(\typei_n)|=1$ in the case, and hence
    $\partner(n+1) = \partner(n) + 1$ holds. 
  The inductive assumption $c_{n}=\flip{c_{\partner(n)}}$ implies the claim $c_{n+1}=\flip{c_{\partner(n+1)}}$ in the case. 
\end{itemize}
\end{proof}

\begin{lemma}\label{lemm:outdeg_from_bits}
  Suppose for  $\mu \in (1, 2)$ that $\tent^{i}(\frac{1}{2}) \neq \frac{1}{2}$ holds for any $i = 1, \dots, n-1$. 
 If $|\outn(\typei_n)|=2$ for $n$ ($n \geq 2$) 
  then $c_{\partner(n)+1} = 0$.
\end{lemma}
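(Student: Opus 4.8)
The plan is to obtain the statement by chaining facts already in hand: the endpoint description of $\typei_n$ (Lemma~\ref{lemm:elements_of_type_2}), the bit relation $c_{\partner(n)}=\flip{c_n}$ (Lemma~\ref{lemm:bits_from_partner}), and two applications of the encoding rule \eqref{def:encode1}. No new induction is needed.

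Step~1: show $|\outn(\typei_n)|=2$ forces $c_{n+1}=0$. By \eqref{eq:subdiv-sec}, $|\outn(\typei_n)|=2$ gives $\se_{n+1}(\bitseqc_n 0)\neq\emptyset$, so $\bitseqc_n 0\in{\cal L}_{n+1}$; since $\bitseqc_{n+1}=\enc^{n+1}(\tfrac12)=\bitseqd_{n+1}$ (Lemma~\ref{lem:c=d}) is lexicographically least among length-$(n+1)$ codes beginning with $1$, with common prefix $\bitseqc_n=\bitseqd_n$ and $\bitseqc_n 0\prec\bitseqc_n 1$, we get $\bitseqc_{n+1}=\bitseqc_n 0$, i.e. $c_{n+1}=0$ --- this is precisely what Cases~1-1 and 2-1 in the proof of Lemma~\ref{lemm:elements_of_type_2} record. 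Because $c_{n+1}=0\neq 1$, rule \eqref{def:encode1} excludes $\tent^n(\tfrac12)=\tfrac12$, so either $\tent^n(\tfrac12)<\tfrac12$ and $c_n=0$, or $\tent^n(\tfrac12)>\tfrac12$ and $c_n=1$.

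Step~2: by Lemma~\ref{lemm:elements_of_type_2} the two endpoints of $\typei_n$ are $\tent^n(\tfrac12)$ and $\tent^{\partner(n)}(\tfrac12)$, and $|\outn(\typei_n)|=2$ means $\tfrac12$ is strictly between them, hence $\tent^{\partner(n)}(\tfrac12)$ lies on the side of $\tfrac12$ opposite to $\tent^n(\tfrac12)$: so $\tent^{\partner(n)}(\tfrac12)>\tfrac12$ when $c_n=0$ and $\tent^{\partner(n)}(\tfrac12)<\tfrac12$ when $c_n=1$. Apply \eqref{def:encode1} at index $\partner(n)$ --- valid since $\partner(n)\le n-1$ (because $|\outn(\typei_1)|=2$), so $\tent^{\partner(n)}(\tfrac12)\neq\tfrac12$ by hypothesis --- to get $c_{\partner(n)+1}=\flip{c_{\partner(n)}}$ if $\tent^{\partner(n)}(\tfrac12)>\tfrac12$ and $c_{\partner(n)+1}=c_{\partner(n)}$ if $\tent^{\partner(n)}(\tfrac12)<\tfrac12$. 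Step~3: Lemma~\ref{lemm:bits_from_partner} (with exactly our hypothesis) gives $c_{\partner(n)}=\flip{c_n}$; thus if $c_n=0$ then $c_{\partner(n)}=1$ and $\tent^{\partner(n)}(\tfrac12)>\tfrac12$, so $c_{\partner(n)+1}=\flip{1}=0$, and if $c_n=1$ then $c_{\partner(n)}=0$ and $\tent^{\partner(n)}(\tfrac12)<\tfrac12$, so $c_{\partner(n)+1}=0$ as well.

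I do not expect a serious obstacle; the only thing requiring attention is the bookkeeping of the non-periodicity hypothesis, which is used solely to invoke Lemma~\ref{lemm:bits_from_partner} and to know $\tent^{\partner(n)}(\tfrac12)\neq\tfrac12$. The other degenerate event $\tent^n(\tfrac12)=\tfrac12$ does not need to be assumed away: it cannot occur when $|\outn(\typei_n)|=2$, since then $\tfrac12$ would be an endpoint (not an interior point) of $\typei_n$ by Lemma~\ref{lemm:elements_of_type_2}, landing $\typei_n$ in Case~1-3 or 2-2 of Lemma~\ref{lem:transition} and forcing $|\outn(\typei_n)|=1$. In short, the combinatorial substance already lives in Lemmas~\ref{lemm:elements_of_type_2} and \ref{lemm:bits_from_partner}, and the present lemma is their corollary via \eqref{def:encode1}.
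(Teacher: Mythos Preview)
Your proof is correct and follows essentially the same approach as the paper: split on $c_n\in\{0,1\}$, use Lemma~\ref{lemm:elements_of_type_2} to locate the endpoints of $\typei_n$, use $|\outn(\typei_n)|=2$ to place $\tfrac12$ strictly between them, apply \eqref{def:encode1} at index $\partner(n)$, and close with Lemma~\ref{lemm:bits_from_partner}. The paper's version is slightly more economical---it reads off $c_{n+1}=0$ directly from \eqref{def:encode1} at index $n$ rather than via the lexicographic-minimality argument you give in Step~1---but your route is valid, and your explicit remarks about why $\tent^{\partner(n)}(\tfrac12)\neq\tfrac12$ and why $\tent^n(\tfrac12)=\tfrac12$ is automatically excluded are sound bookkeeping the paper leaves implicit.
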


\begin{proof}
 We consider two cases $c_n=0$ or $1$. 
 Firstly, suppose $c_{n} = 0$. 
  Then, 
   $ \typei_n =  [\tent^{n}(\frac{1}{2}), \tent^{\partner(n)}(\frac{1}{2}))$
 by \eqref{eq:elements_of_type_b}.
 The hypothesis $|\outn(\typei_n)|=2$ implies 
  $\tent^{n}(\frac{1}{2}) < \frac{1}{2} < \tent^{\partner(n)}(\frac{1}{2})$ (recall Case 1-1 in Lemma~\ref{lem:transition}). 
 The former inequality implies $c_{n+1} = c_{n}$ while the latter inequality implies 
  $c_{\partner(n)+1} = \flip{c_{\partner(n)}}$ by definition \eqref{def:encode1}. 
  Since $c_{n} = \flip{c_{\partner(n)}}$ by Lemma \ref{lemm:bits_from_partner}, 
   we obtain $c_{n+1} = c_{\partner(n)+1} = 0$ in the case of $c_n=0$. 

 Secondly,  suppose $c_{n} = 1$. 
 Then, $ \typei_n =  (\tent^{\partner(n)}(\frac{1}{2}), \tent^{n}(\frac{1}{2})]$
  by \eqref{eq:elements_of_type_b}.
 The hypothesis $|\outn(\typei_n)|=2$ implies 
  $\tent^{\partner(n)}(\frac{1}{2}) < \frac{1}{2} < \tent^{n}(\frac{1}{2})$ (recall Case 2-1 in Lemma~\ref{lem:transition}). 
 The former inequality implies $c_{\partner(n)+1} = c_{\partner(n)}$ while the latter inequality implies 
  $c_{n+1} = \flip{c_{n}}$ by definition \eqref{def:encode1}. 
  Since $\flip{c_{n}} = c_{\partner(n)}$ by Lemma \ref{lemm:bits_from_partner}, 
 we obtain $c_{n+1} = c_{\partner(n)+1} = 0$ in the case of $c_n=1$. 
\end{proof}

\begin{lemma}\label{lemm:back_transition}
  Suppose for  $\mu \in (1, 2)$ that $\tent^{i}(\frac{1}{2}) \neq \frac{1}{2}$ holds for any $i = 1, \dots, n-1$. 
  Suppose $n$ ($n \geq 2$) satisfies $|N(\typei_{n})| = 2$.  
  Then, the following (i) and (ii) hold: 
\begin{itemize}
\item[(i)]
 $\partner(\partner(n)+1) = 1$.
\item[(ii)] $\delta(\typei_{n},1) = \flip{\typei}_{\partner(n)+1}$. 
\end{itemize}  
\end{lemma}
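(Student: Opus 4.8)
The plan is to establish (i) first; (ii) will then follow from (i), Lemma~\ref{lemm:outdeg_from_bits}, and the rephrasing \eqref{eq:elements_of_type_b_2} by a one-line computation. Write $g=\partner(n)$ and $\xi_k=\tent^k(\tfrac12)$. By Lemma~\ref{lemm:elements_of_type_2} the two endpoints of $\typei_m$ (and of $\flip{\typei}_m$) are $\xi_m$ and $\xi_{\partner(m)}$, and, calling a level $t$ a \emph{branch level} when $|\outn(\typei_t)|=2$, a level $t\ge 2$ is a branch level exactly when $\tfrac12$ lies strictly between $\xi_t$ and $\xi_{\partner(t)}$. Via \eqref{eq:partner_2}, claim (i) $\partner(g+1)=1$ is equivalent to ``$g$ is a branch level'', so that is what I would prove.

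The case $g=1$ is immediate since $\partner(2)=1$ for every $\mu\in(1,2)$. Assume $g\ge 2$. Since $g=\partner(n)$, the largest $t<n$ that is a branch level is $t=n-g$; hence none of $n-g+1,\dots,n-1$ is a branch level, and \eqref{eq:partner_2} gives $\partner(n-g+i)=i$ for $i=1,\dots,g$. The core of the argument is the shadowing invariant
\begin{equation*}
 \typei_{n-g+i}\subseteq\typei_i \qquad (i=1,\dots,g),
\end{equation*}
which I would prove by induction on $i$. For $i=1$: $\xi_1=\mu/2$ is the global maximum of $\tent$ and $\tent^k(\tfrac12)\neq 0$ for all $k\ge 1$, so $\typei_{n-g+1}$, whose endpoints are $\xi_{n-g+1}$ and $\xi_1$, is contained in $\typei_1=(0,\mu/2]$ (recall \eqref{eq:typeset1}). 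For the step $i\to i+1$ ($i\le g-1$): $\typei_{n-g+i}$ and $\typei_i$ share the endpoint $\xi_i$ because $\partner(n-g+i)=i$; moreover $n-g+i$ is not a branch level, and by the hypothesis $\tent^j(\tfrac12)\neq\tfrac12$ ($j\le n-1$) neither endpoint of $\typei_{n-g+i}$ equals $\tfrac12$, so $\tfrac12\notin\typei_{n-g+i}$. Together with $\typei_{n-g+i}\subseteq\typei_i$ this forces $\typei_{n-g+i}$ to lie in the component of $\typei_i\setminus\{\tfrac12\}$ containing $\xi_i$. By Lemma~\ref{lem:transition} (Cases 1-1-1/2-1-2 when $i$ is a branch level, using $c_{i+1}=0$ there as in the proof of Lemma~\ref{lemm:elements_of_type_2}; Cases 1-2/1-3/2-2/2-3 otherwise) the map $\tent$ sends exactly that component onto $\typei_{i+1}$, and it sends $\typei_{n-g+i}$ onto $\typei_{n-g+i+1}$ because $n-g+i$ is not a branch level; hence $\typei_{n-g+i+1}\subseteq\typei_{i+1}$. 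Taking $i=g$ yields $\typei_n\subseteq\typei_g$; since $\tfrac12\in\mathrm{int}(\typei_n)$ (as $|\outn(\typei_n)|=2$), also $\tfrac12\in\mathrm{int}(\typei_g)$, i.e.\ $g$ is a branch level, which is (i).

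For (ii): by Lemma~\ref{lem:transition} (Case 1-1-2 if $c_n=0$, Case 2-1-1 if $c_n=1$) one gets $\delta(\typei_n,1)=(\xi_{g+1},\xi_1]$, and by Lemma~\ref{lemm:outdeg_from_bits} we have $c_{g+1}=0$, so \eqref{eq:elements_of_type_b_2} gives $\flip{\typei}_{g+1}=(\xi_{g+1},\xi_{\partner(g+1)}]$; since $\partner(g+1)=1$ by (i), its right endpoint is $\xi_1$ and the two intervals coincide, so $\delta(\typei_n,1)=\flip{\typei}_{g+1}=\flip{\typei}_{\partner(n)+1}$. The main obstacle I anticipate is the shadowing invariant --- specifically, confirming that $\typei_{n-g+i}$ always sits on the $\xi_i$-side of $\typei_i$ even across branch levels (this is precisely where the shared endpoint $\xi_i$ is used); the remaining steps are bookkeeping with Lemmas~\ref{lem:transition} and \ref{lemm:elements_of_type_2}, with the degeneracy checks (no $\xi_j$ equal to $\tfrac12$, no interval collapsing) accounting for the use of the hypothesis on $\tent^i(\tfrac12)$.
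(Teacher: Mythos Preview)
Your proof is correct, and part (ii) is handled exactly as in the paper. Part (i), however, you prove by a genuinely different route.

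The paper first computes $\delta(\typei_n,1)=(\xi_{g+1},\xi_1]$ (your preliminary step for (ii)), then invokes Theorem~\ref{theo:number_of_type} to assert that this interval must equal some $\typei_k$ or $\flip{\typei}_k$ with $k\le n$; it then identifies $k=g+1$ and $\partner(k)=1$ by matching endpoints, using the non-periodicity hypothesis to rule out coincidences $\xi_j=\xi_{j'}$ for $j\ne j'$. Your argument bypasses the classification theorem entirely: you show directly that $g=\partner(n)$ is itself a branch level by the nesting invariant $\typei_{n-g+i}\subseteq\typei_i$, exploiting that the two intervals share the endpoint $\xi_i$ (via $\partner(n-g+i)=i$ and Lemma~\ref{lemm:bits_from_partner}, which pins down the open/closed structure). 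The paper's proof is shorter because it cashes in machinery already built; yours is more self-contained and yields the extra structural fact $\typei_n\subseteq\typei_{\partner(n)}$, which is of independent interest. The non-periodicity hypothesis plays a different role in each: in the paper it guarantees injectivity of $j\mapsto\xi_j$ so the endpoint-matching is unambiguous; in yours it rules out the degeneracy $\tfrac12\in\partial\typei_{n-g+i}$ needed to push the nesting across branch levels.
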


\begin{proof}
 As a preliminary step, we claim that the hypothesis $|N(\typei_{n})| = 2$ implies 
  \begin{equation}
   \typefn(\bitseqc_{n} 1) =  (\tent^{\partner(n)+1}(\tfrac{1}{2}), \tent(\tfrac{1}{2})] 
    \label{eq:back_transition_4}
  \end{equation}
  holds.   
  Note that $\delta(\typei_n,0) = \typei_{n+1}$, 
   which  we have proved in Cases 1-1 and 2-1 in Lemma~\ref{lemm:elements_of_type_2}. 
 The proof of \eqref{eq:back_transition_4} is similar to them. 
 We consider the cases $c_n=0$ or $1$.
  In case of $c_{n} = 0$, 
   $\typefn(\bitseqc_{n}) = [\tent^{n}(\tfrac{1}{2}), \tent^{\partner(n)}(\tfrac{1}{2}))$ by Lemma~\ref{lemm:elements_of_type_2}. 
  Recall Case 1-1-2 in Lemma~\ref{lem:transition}, then
    \begin{equation*}
     \typefn(\bitseqc_{n} 1)
      = (\tent^{\partner(n)+1}(\tfrac{1}{2}),\tent^{1}(\tfrac{1}{2})]
    \end{equation*}
   holds. 
  In case of $c_{n} = 1$, 
   $\typefn(\bitseqc_{n}) = (\tent^{\partner(n)}(\tfrac{1}{2}),\tent^{n}(\tfrac{1}{2})]$ by Lemma~\ref{lemm:elements_of_type_2}. 
  Recall Case 2-1-1 in Lemma~\ref{lem:transition}, then
    \begin{equation*}
     \typefn(\bitseqc_{n} 1)
      = (\tent^{\partner(n)+1}(\tfrac{1}{2}),\tent^{1}(\tfrac{1}{2})]
    \end{equation*}
   holds. 
   Thus we obtain \eqref{eq:back_transition_4}.

 Now we prove claim (i). 
 Notice that 
   $\typefn(\bitseqc_{n} 1) \in \typeset(n)$ by Theorem~\ref{theo:number_of_type}.
It implies that 
  there exists $k \in \{1,2\ldots,n\}$ such that 
   $\typefn(\bitseqc_{n} 1) \in \{\typei_k,\flip{\typei}_k\}$ holds. 
 The proof consists of three steps. 
 Firstly, we claim that $k \neq 1$. 
In fact, 
 $\flip{\typei}_{1} = [0, \tent(\frac{1}{2}))$ and $\typei_{1} = (0, \tent(\frac{1}{2})]$, accordingly 
     $\typefn(\bitseqc_{n} 1) = (\tent^{\partner(n)+1}(1/2), \tent(1/2)]$ cannot be one of them.  
 Secondly, we claim $\partner(k) = 1$. 
 Notice that one end of the $\typefn(\bitseqc_{n} 1) $ is $\tent(\tfrac{1}{2})$ by \eqref{eq:back_transition_4}. 
 By Lemma~\ref{lemm:elements_of_type_2}, 
   both ends of $\typei_k$ are $\tent^k(\tfrac{1}{2})$ and $\tent^{\partner(k)}(\tfrac{1}{2})$,  $\flip{\typei}_k$ as well. 
 The hypothesis requires $\tent^{k}(\frac{1}{2}) \neq \tent(\frac{1}{2})$ since $k \neq 1$ as we proved above. 
 Thus, $\tent^{\partner(k)}(\frac{1}{2}) = \tent(\frac{1}{2})$ must hold, 
   where the hypothesis again implies $\partner(k) = 1$. 
 Thirdly, we claim $k = \partner(n)+1$. 
 Now we know one end of  $\typefn(\bitseqc_{n} 1) $ is $\tent(\tfrac{1}{2})=\tent^{\partner(k)}(\frac{1}{2})$. 
 Thus the other end must satisfy 
    $\tent^{\partner(n)+1}(\frac{1}{2}) = \tent^k(\frac{1}{2})$. 
 The hypothesis allows only $k=\partner(n)+1$. 
 Now we got $\partner(k) = 1$ and $k = \partner(n)+1$, which implies (i).

 We prove (ii). 
 Note that $c_{\partner(n)+1} = 0$ since $|N(\typei_{n})| = 2$, by Lemma \ref{lemm:outdeg_from_bits}.  
 Thus, $\flip{c_{\partner(n)+1}} = 1$. 
 Then, 
  \begin{align*}
    \flip{\typei}_{\partner(n)+1}
    &= \typefn(\flip{\bitseqc_{\partner(n)+1}}) && (\mbox{by definition \eqref{def:typei}}) \nonumber \\
    &= \typefn(\flip{\bitseqc_{\partner(n)}}\,1) && (\mbox{since $\flip{c_{\partner(n)+1}} = 1$}) \nonumber \\
    &= (\tent^{\partner(n)+1}(\tfrac{1}{2}), \tent^{\partner(\partner(n)+1)}(\tfrac{1}{2})] && (\mbox{by \eqref{eq:elements_of_type_b_2}}) \nonumber \\
    &= (\tent^{\partner(n)+1}(\tfrac{1}{2}), \tent(\tfrac{1}{2})] && (\mbox{since $\partner(\partner(n)+1) = 1$, by (i)}) \nonumber \\
    &= \typefn(\bitseqc_{n} 1) && (\mbox{by \eqref{eq:back_transition_4}}) \\
    &= \delta(\typei_{n}, 1) && (\mbox{recall $\typei_{n}=\typefn(\bitseqc_{n})$})
  \end{align*}
  holds, and we obtain the claim. 
\end{proof}

 As a consequence of Lemma \ref{lemm:back_transition}, we obtain the following lemma. 

\begin{lemma}\label{lemm:back_half}
  Suppose for  $\mu \in (1, 2)$ that $\tent^{i}(\frac{1}{2}) \neq \frac{1}{2}$ holds for any $i = 1, \dots, n-1$. 
  If $|\outn(\typei_n)|=2$ holds for $n$ ($n \geq 2$) 
  then $\partner(n) \le \frac{n}{2}$. 
\end{lemma}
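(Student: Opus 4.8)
The plan is to derive Lemma~\ref{lemm:back_half} directly from Lemma~\ref{lemm:back_transition}(i) together with the recursion \eqref{eq:partner_2} and a ``maximality'' reading of the definition \eqref{eq:partner_1}. Write $m=\partner(n)$. Unwinding \eqref{eq:partner_1}, the index $n-\partner(n)=n-m$ is exactly the \emph{largest} $j\in\{0,1,\ldots,n-1\}$ with $|\outn(\typei_{j})|=2$. Thus, to prove $\partner(n)\le n/2$, i.e.\ $m\le n-m$, it is enough to produce \emph{some} index $j$ with $|\outn(\typei_{j})|=2$ and $m\le j\le n-1$; the natural candidate is $j=m$ itself.

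First I would record the preliminary bound $m=\partner(n)\le n-1$. Since $\partner(2)=1$ — equivalently $|\outn(\typei_{1})|=2$, as noted after \eqref{eq:partner_1} — taking $i=n-1$ in \eqref{eq:partner_1} (so that $n-i=1$) shows that $n-1$ belongs to the minimised set, whence $\partner(n)\le n-1$; in particular $m+1\le n$ and $m<n$. Next, apply Lemma~\ref{lemm:back_transition}(i) to $n$ — its hypotheses are precisely those of the present lemma, namely $\mu\in(1,2)$, $\tent^{i}(\tfrac12)\ne\tfrac12$ for $i=1,\ldots,n-1$, and $|\outn(\typei_{n})|=2$ — to get $\partner(m+1)=\partner(\partner(n)+1)=1$. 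By the recursion \eqref{eq:partner_2} (equivalently, directly from \eqref{eq:partner_1} applied to the argument $m+1$), the equality $\partner(m+1)=1$ means exactly $|\outn(\typei_{m})|=2$. Hence $j=m$ satisfies $|\outn(\typei_{j})|=2$ and $0\le j=m\le n-1$, and by the maximality of $n-m$ noted above we conclude $m\le n-m$, i.e.\ $\partner(n)=m\le n/2$.

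I do not expect a genuine obstacle here: the hard combinatorial content is already packaged in Lemma~\ref{lemm:back_transition}, and the role of the hypothesis $\tent^{i}(\tfrac12)\ne\tfrac12$ ($i\le n-1$) is only to license that application. The one point that needs care is the index bookkeeping — translating ``$\partner(n)=\min\{i:\cdots\}$'' into ``$n-\partner(n)$ is the maximum index $\le n-1$ of out-degree $2$'', and checking that $\partner$ (defined only for arguments $\ge 2$) is invoked at $m+1\ge 2$ while the candidate $j=m$ still lies in the admissible range $\{0,\ldots,n-1\}$, which is why the bound $\partner(n)\le n-1$ is isolated first.
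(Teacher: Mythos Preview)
Your proposal is correct and follows essentially the same route as the paper: apply Lemma~\ref{lemm:back_transition}(i) to get $\partner(\partner(n)+1)=1$, use the recursion \eqref{eq:partner_2} to deduce $|\outn(\typei_{\partner(n)})|=2$, and then compare with the definition \eqref{eq:partner_1} to conclude $\partner(n)\le n-\partner(n)$. You are in fact a bit more careful than the paper about the bookkeeping (the bound $\partner(n)\le n-1$ and the domain of $\partner$), which the paper leaves implicit.
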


\begin{proof}
 The hypothesis $|\outn(\typei_n)|=2$ implies $\partner(\partner(n)+1) = 1$ by Lemma \ref{lemm:back_transition}. 
 It implies $|\outn(\typei_{\partner(n)})|=2$ by the recurrence relation \eqref{eq:partner_2}.  
 Notice that 
  $|\outn(\typei_{n-k})|=1$ holds for $k=1,2,\ldots,\partner(n)$ by definition of $\partner(n)$. 
 Thus, $\partner(n) < n - \partner(n) + 1$ must hold. 
 Since $\partner(n)$ and $n$ are integer,
  we obtain $\partner(n) \le \frac{n}{2}$. 
\end{proof}

Now, Lemma \ref{lemm:transition_function2} is easy. 
\begin{proof}[Proof of Lemma~\ref{lemm:transition_function2}]
By the hypotheses, 
  \begin{align*}
    \delta(\typei_{n}, 1)
    &= \flip{\typei}_{\partner(n)+1} && (\mbox{by Lemma~\ref{lemm:back_transition}}) \\
    &\in \left\{\flip{\typei}_{k+1} \mid 1 \le k \le \tfrac{n}{2} \right\} && (\mbox{by Lemma~\ref{lemm:back_half}}) 
  \end{align*}
holds. 
\end{proof}

  Finally, we prove Proposition~\ref{prop:const}, here. 
\begin{proposition}[Proposition~\ref{prop:const}]
 Let $\mu \in (1,2)$. 
 If there exists $n$ ($n=1,2,\ldots$) such that $\tent_{\mu}^n(\frac{1}{2}) = \frac{1}{2}$ holds 
   then $|\typeset_{n'}| \leq 2n$ holds for any $n' \geq n$. 
\end{proposition}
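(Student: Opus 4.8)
The plan is to reduce the statement to a single assertion about the type-transition diagram and then verify it. First, by Lemma~\ref{lem:type_subset} we have $\typeset_k\subseteq\typeset_{k+1}$ for every $k$, so $k\mapsto|\typeset_k|$ is nondecreasing; hence it suffices to prove the claim for the \emph{smallest} $m$ with $\tent^m(\tfrac12)=\tfrac12$, since then for any $n$ with $\tent^n(\tfrac12)=\tfrac12$ one has $n\ge m$ and $|\typeset_{n'}|\le 2m\le 2n$ for all $n'\ge n$. So I will assume $\tent^i(\tfrac12)\ne\tfrac12$ for $1\le i\le n-1$ and $\tent^n(\tfrac12)=\tfrac12$; in particular the hypotheses of Lemmas~\ref{lemm:elements_of_type_2}--\ref{lemm:back_transition} and of Theorem~\ref{theo:number_of_type} hold at level $n$. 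By Lemma~\ref{lem:number_of_type3} combined with Theorem~\ref{theo:number_of_type}, it then remains only to show $\typei_{n+1}=\typefn(\bitseqc_{n+1})\in\typeset_n$: this yields $\typeset_{n'}=\typeset_n$ for all $n'\ge n$, and $|\typeset_n|\le 2n$ by Theorem~\ref{theo:number_of_type}.

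Next I would pin down $\typei_{n+1}$. Since $x_n=\tent^n(\tfrac12)=\tfrac12$, \eqref{def:encode2} forces $c_{n+1}=1$. By Lemma~\ref{lemm:elements_of_type_2}, $\typei_n=[\tfrac12,\tent^{\partner(n)}(\tfrac12))$ if $c_n=0$ and $\typei_n=(\tent^{\partner(n)}(\tfrac12),\tfrac12]$ if $c_n=1$, so $\tfrac12$ is always an \emph{endpoint} of $\typei_n$; since $\tent^{\partner(n)}(\tfrac12)\ne\tfrac12$ (by minimality, as $\partner(n)<n$) this puts $\typei_n$ into Case~1-3 resp.\ Case~2-2 of Lemma~\ref{lem:transition}, so $|\outn(\typei_n)|=1$ and $\typei_{n+1}=\typefn(\bitseqc_n 1)=(\tent^{\partner(n)+1}(\tfrac12),\tent(\tfrac12)]$; equivalently, the whole of $\se_n(\bitseqc_n)$ carries bit $1$ at position $n{+}1$, so $\se_{n+1}(\bitseqc_{n+1})=\se_n(\bitseqc_n)$ and $\typei_{n+1}=\tent(\typei_n)$. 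Also $\partner(n+1)=\partner(n)+1$ by \eqref{eq:partner_2}.

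Now I would locate $\typei_{n+1}$ inside $\typeset_n$ as the flipped type $\flip{\typei}_{\partner(n)+1}$. First, $c_{\partner(n)+1}=0$: if $c_n=0$ then $\typei_n=[\tfrac12,\tent^{\partner(n)}(\tfrac12))$ forces $\tent^{\partner(n)}(\tfrac12)>\tfrac12$, so \eqref{def:encode1} gives $c_{\partner(n)+1}=\flip{c_{\partner(n)}}$, while Lemma~\ref{lemm:bits_from_partner} gives $c_{\partner(n)}=\flip{c_n}=1$, whence $c_{\partner(n)+1}=0$; the case $c_n=1$ is symmetric. Granting the key fact that $\partner(\partner(n)+1)=1$ (equivalently that $|\outn(\typei_{\partner(n)})|=2$, i.e.\ that $\tfrac12$ lies strictly between the endpoints $\tent^{\partner(n)}(\tfrac12)$ and $\tent^{\partner(\partner(n))}(\tfrac12)$ of $\typei_{\partner(n)}$), equation~\eqref{eq:elements_of_type_b_2} with $c_{\partner(n)+1}=0$ gives $\flip{\typei}_{\partner(n)+1}=(\tent^{\partner(n)+1}(\tfrac12),\tent^{\partner(\partner(n)+1)}(\tfrac12)]=(\tent^{\partner(n)+1}(\tfrac12),\tent(\tfrac12)]=\typei_{n+1}$. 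Since $|\outn(\typei_1)|=2$ by \eqref{eq:typeset1}, we have $\partner(n)\le n-1$, so $\partner(n)+1\le n$ and $\flip{\typei}_{\partner(n)+1}\in\typeset_n$; thus $\typei_{n+1}\in\typeset_n$, which closes the reduction. (Conversely, using that $\tent^1(\tfrac12),\dots,\tent^n(\tfrac12)$ are pairwise distinct, one checks that $\typei_{n+1}\in\typeset_n$ \emph{only if} $\partner(\partner(n)+1)=1$, so this key fact is unavoidable.)

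The main obstacle is exactly this key fact: $|\outn(\typei_{\partner(n)})|=2$ in the degenerate situation $\tent^n(\tfrac12)=\tfrac12$. It is the analogue of Lemma~\ref{lemm:back_transition}(i), but that lemma is proved under $|\outn(\typei_n)|=2$, which fails here since $\tfrac12$ is an endpoint of $\typei_n$, so its proof does not transfer directly. I expect it to come from periodicity of the orbit: $\tent^{n+k}(\tfrac12)=\tent^k(\tfrac12)$, so $\tent^n$ is linear and monotone on $\se_n(\bitseqc_n)$ and fixes $\tfrac12$, whence $\tfrac12=\min\se_n(\bitseqc_n)$, $\bitseqc_n=\bitseqd_n$, and the iterates $\tent^1(\tfrac12),\dots,\tent^n(\tfrac12)$ are distinct and noncritical; tracking the sign pattern $\sgn(\tent^i(\tfrac12)-\tfrac12)$ along the cycle together with the recursion~\eqref{eq:partner_2} for $\partner$ should force $\typei_{\partner(n)}$ to straddle $\tfrac12$. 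A second, perhaps cleaner, route mirrors the proof of Lemma~\ref{lemm:back_transition}(i): $\typei_{n+1}=\typefn(\bitseqc_n 1)$ has $\tent(\tfrac12)$ as an endpoint and lies in $\typeset_{n+1}\subseteq\typeset_n\cup\{\typei_{n+1},\flip{\typei}_{n+1}\}$ by Lemma~\ref{lem:type_subset}, and one rules out every candidate except $\flip{\typei}_{\partner(n)+1}$ via Lemma~\ref{lem:hanten-type}; the catch is that here $\typefn(\bitseqc_n 1)$ \emph{is} $\typei_{n+1}$, so it cannot be declared old for free, and one must invoke Lemma~\ref{lemm:transition_function} at level $\partner(n)$ (or $n-1$) to see that the shared-endpoint coincidence can occur only when $\partner(\partner(n)+1)=1$. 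That last implication is where I expect the real work to be.
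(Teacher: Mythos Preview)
Your high-level reduction is right and matches the paper: it suffices to show $\typei_{n+1}=\typefn(\bitseqc_{n+1})\in\typeset_n$ and then invoke Lemma~\ref{lem:number_of_type3} together with Theorem~\ref{theo:number_of_type}. Your computation that $|\outn(\typei_n)|=1$, $c_{n+1}=1$, and $\typei_{n+1}=(\tent^{\partner(n)+1}(\tfrac12),\tent(\tfrac12)]$ is also correct.

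The gap is exactly the one you name: you need $\partner(\partner(n)+1)=1$ (equivalently $|\outn(\typei_{\partner(n)})|=2$), and you do not prove it. Both routes you sketch run into the circularity you already noticed: the proof of Lemma~\ref{lemm:back_transition}(i) uses that $\typefn(\bitseqc_n1)\in\typeset_n$ a priori, which here \emph{is} the unknown $\typei_{n+1}$, so you cannot bootstrap it that way; and appealing to Lemma~\ref{lemm:transition_function} at level $\partner(n)$ or $n-1$ does not by itself force the straddling of $\tfrac12$ at level $\partner(n)$. So as written the argument is incomplete.

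The paper sidesteps the $\partner$-machinery entirely and works with sections. It shows directly that the exceptional case of Lemma~\ref{lem:section-shift} does \emph{not} occur for $\bitseqc_{n+1}$, i.e.\ $\tilde{\tent}(\se_{n+1}(\bitseqc_{n+1}))=\se_n(c_2\cdots c_{n+1})$, whence $\typefn(\bitseqc_{n+1})=\typefn(c_2\cdots c_{n+1})\in\typeset_n$ as in the proof of Lemma~\ref{lem:type_subset}. The key observation is purely order-theoretic: since $\tent^n(\tfrac12)=\tfrac12$, the point $x'=\tilde{\tent}(\tfrac12)$ satisfies $\tent^n(x')=\tent^{n+1}(\tfrac12)=\tent(\tfrac12)=\mu/2=\max_{[0,1]}\tent$. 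If some $x\in\se_n(c_2\cdots c_{n+1})$ lay strictly to the left of $x'$, then picking any $x''\in\tilde{\tent}(\se_{n+1}(\bitseqc_{n+1}))$ with $x<x'<x''$ we would have $\tent^n(x)\le\tent^n(x')$ and $\tent^n(x'')\le\tent^n(x')$, contradicting the strict monotonicity of $\tent^n$ on the section $\se_n(c_2\cdots c_{n+1})$ given by Lemma~\ref{lem:encode1}. This one-line use of ``$\tent(\tfrac12)$ is the global maximum'' replaces the missing structural fact about $\partner$ in your approach.
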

\begin{proof}
Suppose $\tent^n(\frac{1}{2}) = \frac{1}{2}$. 
Let $\bitseqc_{n+1} = c_1\cdots c_{n+1} =\enc^{n+1}(\tfrac{1}{2})$. 
Firstly, we claim 
$\tilde{\tent}(\se_{n+1}(\bitseqc_{n+1})) = \se(c_2\cdots c_{n+1})$. 
Let $\se_{n+1}(\bitseqc_{n+1}) = [\frac{1}{2},u)$ according to Lemma~\ref{lem:c=d}. 
We know
$\tilde{\tent}(\se_{n+1}(\bitseqc_{n+1})) = [\tilde{\tent}(\frac{1}{2}),\tilde{\tent}(u)) \subseteq \se_n(c_2\cdots c_{n+1})$ 
by Lemma~\ref{lem:section-shift}. 
Assume for a contradiction, 
 let $x \in \se(c_2\cdots c_{n+1}) \setminus [\tilde{\tent}(\frac{1}{2}),\tilde{\tent}(u))$. 
Let $x' = \tilde{\tent}(\frac{1}{2})$ and let $x'' \in (\tilde{\tent}(\frac{1}{2}),\tilde{\tent}(u))$, 
 then $x < x' <x''$ hold. 
We claim both $x'_n \geq x_n$ and $x'_n \geq x''_n$ hold 
for $x_n = \tent^n(x)$, $x'_n = \tent^n(x')$ and $x''_n = \tent^n(x'')$.  
In fact, $x'_n = \tent^n(x') =\tent^n(\tilde{\tent}(\frac{1}{2})) = \tent^{n+1}(x) = \tent(\tent^n(\frac{1}{2})) = \tent(\frac{1}{2})$ holds. 
By definition \eqref{eq:tentmap}, $\tent(\frac{1}{2}) \geq \tent(y)$ for any $y \in [0,1]$. 
Thus, we obtain $x'_n \geq x_n$ and $x'_n \geq x''_n$. 
This contradicts to Lemma~\ref{lem:encode1}, claiming either $x_n < x'_n <x''_n$ or  $x_n > x'_n > x''_n$. 
We obtain $\tilde{\tent}(\se_{n+1}(\bitseqc_{n+1})) = \se_n(c_2\cdots c_{n+1})$. 

Now, it is easy to see 
 $\typefn(\bitseqc_{n+1}) 
  = \tent^{n+1}(\se_{n+1}(\bitseqc_{n+1})) 
  = \tent^n(\tilde{\tent}(\se_{n+1}(\bitseqc_{n+1})))
  = \tent^n(\se_n(c_2\cdots c_{n+1})) 
  = \typefn(c_2\cdots c_{n+1})$. 
We obtain the claim by Lemma~\ref{lem:number_of_type3}. 
\end{proof}


\section{Analysis for Real $\mu$}\label{sec:real}
 We assumed $\mu$ rational in Section~\ref{sec:rational}, 
  to avoid some bothering arguments on Turing computability of real numbers, 
  but it is not essential. 
 This section shows that $K$ is $\Order(\log n \log \log n)$ with high probability, in expectation as well, 
  even for real $\mu$. 
 To be precise, we prove the following proposition. 
\begin{proposition}\label{prop:real-prob}
Let $\mu \in (1,2)$ be an arbitrary real, and let $c \geq 1$ be a constant.  
For convenience, 
let  
 $l_* = \max\{8c  \log_{\mu} n \log_2 \log_{\mu} n, -2\log_{\mu} \frac{\mu-1}{2}\} $.\footnote{
  Notice that $-2\log_{\mu} \frac{\mu-1}{2}>0$ for $\mu \in (1,2)$, 
  where its value is asymptotic to $\infty$ as $\mu \to 1+0$. 
  The term is negligible if $\mu -1 \gg 0$, 
  e.g., $-2\log_{\mu} \frac{\mu-1}{2}\leq 10$ if $\mu \geq \sqrt{2}$. 
  } 
Then, 
\begin{align*}
 \Pr[\requiredstgsize \ge 2 l_*]  \leq n^{-(c-1)} 
\end{align*} 
 holds. 
\end{proposition}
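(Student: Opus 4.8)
The plan is to transplant the rational‑case argument of Section~\ref{sec:proof-lemm5.2} almost verbatim; the only place rationality of $\mu$ was used is the lower bound $|\typei_k|\ge\tfrac{1}{2d^{k}}$ of Lemma~\ref{lemm:min_sectype_length}, so that is the one step to replace. First I would dispose of the degenerate case: if $\tent^{m}(\tfrac{1}{2})=\tfrac{1}{2}$ for some minimal $m$, then $|\typeset_{n'}|\le 2m$ for all $n'$ by Proposition~\ref{prop:const}, so the chain never reaches level $m+1$ and $\requiredstgsize\le m$; this settles the claim when $m<2l_*$, and when $m\ge 2l_*$ the hypothesis ``$\tent^{i}(\tfrac{1}{2})\ne\tfrac{1}{2}$'' needed by Lemma~\ref{lemm:transition_function} holds throughout the band of levels the chain can visit, so the argument below still goes through. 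Thus assume $\tent^{i}(\tfrac{1}{2})\ne\tfrac{1}{2}$ for all $i$ in the relevant range. The probabilistic part is then identical to the rational case: by Observation~\ref{obs:go_straight}, a chain reaching level $2l$ for the first time at time $t$ must have visited $\typei_l,\typei_{l+1},\dots,\typei_{2l}$ in order (or the barred copy), whose probability is $\tfrac{|\typei_{2l}|}{\mu^{l}|\typei_l|}$ by Lemma~\ref{lem:go_back}; the union bound over $t\le n$ as in Lemma~\ref{lem:l*} gives $\Pr[\requiredstgsize\ge 2l]\le n\cdot\tfrac{|\typei_{2l}|}{\mu^{l}|\typei_l|}$, so it suffices to produce $l\le l_*$ with $\tfrac{|\typei_{2l}|}{\mu^{l}|\typei_l|}\le n^{-c}$. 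Since $\tent^{l}$ is affine of slope $\pm\mu^{l}$ on $\se_l(\bitseqc_l)$ (which contains no kink of $\tent^{l}$ in its interior), we have $|\typei_l|=\mu^{l}|\se_l(\bitseqc_l)|$ and $|\se_l(\bitseqc_l)|\le\mu^{-l}$, so the target ratio equals $|\se_{2l}(\bitseqc_{2l})|/|\se_l(\bitseqc_l)|$.

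The new ingredient is a lower bound on $|\se_l(\bitseqc_l)|$ for arbitrary real $\mu$. By Lemma~\ref{lem:c=d} and \eqref{eq:first-bit}, $\se_l(\bitseqc_l)=[\tfrac{1}{2},u_l)$ with $u_l$ the leftmost kink of $\tent^{l}$ strictly right of $\tfrac{1}{2}$; every kink of $\tent^{l}$ is a point sent to $\tfrac{1}{2}$ by some $\tent^{i}$ with $0\le i\le l-1$, and on $[\tfrac{1}{2},u_l]$ each such $\tent^{i}$ is affine of slope $\pm\mu^{i}$, so $|\se_l(\bitseqc_l)|$ is the minimum of $\mu^{-i}|\tfrac{1}{2}-\tent^{i}(\tfrac{1}{2})|$ over those $i\in\{1,\dots,l-1\}$ whose value of $\tfrac{1}{2}-\tent^{i}(\tfrac{1}{2})$ has the sign of the slope of $\tent^{i}$ at $\tfrac{1}{2}$; dropping the sign restriction yields $|\se_l(\bitseqc_l)|\ge\min_{1\le i\le l-1}\mu^{-i}|\tfrac{1}{2}-\tent^{i}(\tfrac{1}{2})|$. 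I would then bound $|\tfrac{1}{2}-\tent^{i}(\tfrac{1}{2})|$ from below: Proposition~\ref{prop:tent-expansion} rewrites it as $(\mu-1)\bigl|\sum_{j\ge1}(c_j-c_{i+j})\mu^{-j}\bigr|$, so a small value forces $\bitseqc_{\infty}$ to agree with its $i$-shift on a long prefix, and using the $\partner$-structure (Lemma~\ref{lemm:elements_of_type_2}, which pins both endpoints of every $\typei_l$ to the orbit $\{\tent^{i}(\tfrac{1}{2})\}_{i\le l}$) one argues that in the non‑periodic case such an agreement cannot last long enough to drop $|\se_l(\bitseqc_l)|$ below roughly $\mu^{-\Order(l\log l)}$, i.e.\ $|\typei_l|\ge\mu^{-\Order(l\log l)}$. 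This estimate plays the role of ``$\ge\tfrac{1}{2d^{l}}$'' from the rational case; the doubly‑logarithmic factor in $l_*$ (versus $\lceil\log_{\mu}d\rceil$ there) is precisely its price, and the additive term $-2\log_{\mu}\tfrac{\mu-1}{2}$ in $l_*$ soaks up the low‑order error terms and the regime $\mu\downarrow1$, where $\lceil\log_{\mu}n\rceil$ and $\log_{\mu}2$ diverge.

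Granting that bound, I would run the doubling scheme of Lemma~\ref{lemm:existence_of_short_q_s} with $l_i=2^{i}\lceil\log_{\mu}n\rceil$: if $|\se_{l_{i+1}}(\bitseqc_{l_{i+1}})|/|\se_{l_i}(\bitseqc_{l_i})|>n^{-c}$ held for $i=1,\dots,k$, then telescoping would give $|\se_{l_{k+1}}(\bitseqc_{l_{k+1}})|>n^{-ck}|\se_{l_1}(\bitseqc_{l_1})|\ge n^{-ck}\mu^{-\Order(l_1\log l_1)}$, while $|\se_{l_{k+1}}(\bitseqc_{l_{k+1}})|\le\mu^{-l_{k+1}}\le n^{-2^{k+1}}$; since $\log_{n}\mu^{-\Order(l_1\log l_1)}=-\Order(\log\log_{\mu}n)$, choosing $k=\Theta(\log\log\log_{\mu}n)$ (so that $2^{k+1}-ck$ exceeds that $\Order(\log\log_{\mu}n)$ term) forces a contradiction, hence some ratio is $\le n^{-c}$ at an $l=l_i\le l_k\le l_*$ once the constants are set as in the statement. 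Feeding this $l$ into the union bound yields $\Pr[\requiredstgsize\ge 2l_*]\le\Pr[\requiredstgsize\ge 2l]\le n\cdot n^{-c}=n^{-(c-1)}$, which is the claim (and, as in the rational case, the ``with high probability'' and expectation statements follow).

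The hard part will be the lower bound on $|\se_l(\bitseqc_l)|$ (equivalently on $|\typei_l|$): for rational $\mu$ one just reads it off a common denominator, but for real $\mu$ one must quantify how close the forward orbit of the critical point $\tfrac{1}{2}$ is allowed to come to $\tfrac{1}{2}$ within the first $l$ iterates, and turning that into an explicit decay estimate with the clean dependence on $\mu$ visible in $l_*$ is where essentially all the work lies; everything else is a line‑by‑line copy of the rational argument.
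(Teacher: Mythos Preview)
Your probabilistic skeleton (Observation~\ref{obs:go_straight}, Lemma~\ref{lem:go_back}, union bound) is exactly right, and you correctly isolate the only place where rationality was used. But the replacement you propose has a real gap, and more importantly the paper shows it is unnecessary.

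\textbf{The gap.} Your ``hard part'' is the lower bound $|\typei_l|\ge\mu^{-\Order(l\log l)}$ (equivalently a lower bound on $|\se_l(\bitseqc_l)|$). Your justification reduces it to $|\tfrac12-\tent^{i}(\tfrac12)|$ not being too small for $i<l$, and then waves at Proposition~\ref{prop:tent-expansion} and the $\partner$-structure. That does not go through for arbitrary real $\mu$: nothing prevents the critical orbit from making Liouville-type close returns, i.e.\ $|\tfrac12-\tent^{i}(\tfrac12)|\le\mu^{-i^{2}}$ along a subsequence, which would give $|\typei_{i+1}|\le\mu^{1-i^{2}}$ and kill any $\mu^{-\Order(l\log l)}$ bound. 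The argument you sketch only shows that a very small value forces a long coincidence of $\bitseqc$ with a shift of itself; it does not show that such coincidences are impossible for non-periodic $\bitseqc$ with the quantitative decay you need.

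\textbf{What the paper actually does.} The paper avoids any lower bound on $|\typei_l|$ for large $l$. Instead of keeping $l_i=2^{i}\lceil\log_\mu n\rceil$ (so that the telescoping stops at $l_1=2\lceil\log_\mu n\rceil$, where you then need your hard lemma), it sets $l_i=2^{i}$, so the telescoping runs all the way down to $l_1=2$, where $|\typei_2|=\tfrac{\mu}{2}-\mu(1-\tfrac{\mu}{2})=\mu\tfrac{\mu-1}{2}$ is explicit for every real $\mu$. The contradiction is then obtained from
\[
\log_\mu|\typei_{l_{k+1}}|>\bigl(2^{k}-ck\log_\mu n-1\bigr)+\bigl(2^{k}+\log_\mu\tfrac{\mu-1}{2}\bigr),
\]
and one simply chooses $l_k=2^{k}$ to exceed both $4c\log_\mu n\,\log_2\log_\mu n$ (making the first bracket nonnegative) and $-\log_\mu\tfrac{\mu-1}{2}$ (making the second nonnegative). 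That is precisely where the two terms in $l_*$ come from: the $\log_2\log_\mu n$ factor is the price of $l_i=2^{i}$ (the $\mu^{l_i}$ no longer beats $n$ outright, so $k$ must grow with $\log\log_\mu n$), and the $-2\log_\mu\tfrac{\mu-1}{2}$ term absorbs the constant $|\typei_2|$---not low-order error terms from a Liouville estimate as you surmised. So the step you called the hard part is not just hard; it is the wrong step.
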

 Remark that it is possible to establish $\poly\log n$ average space complexity even for some real $\mu$ 
   from Proposition~\ref{prop:real-prob}, 
  using some standard (but bothering) arguments on computations with real numbers, 
   e.g., symbolic treatment of $\sqrt{2}$, ${\rm e}$, $\pi$, etc. 
 Here, we just prove Proposition~\ref{prop:real-prob}, and 
  omit the arguments on $\poly\log n$ average space for real $\mu$.

 The proof of Proposition~\ref{prop:real-prob} is similar to Lemma~\ref{lem:l*}, 
  but we have to prove the following lemma  
   without the assumption of $\mu$ being rational, in contrast to Lemma~\ref{lemm:existence_of_short_q_s}. 
\begin{lemma}
  \label{lemm:existence_of_short_q_s-r}
 Let $\mu \in (1,2)$ be an arbitrary real, and let $c \geq 1$ be a constant.  
 For convenience, 
 let $l_* = \max\{8c  \log_{\mu} n \log_2 \log_{\mu} n, -2\log_{\mu} \frac{\mu-1}{2}\} $. 
 Then, there exists $l$ such that  $l \leq l_*$ and 
  \begin{equation}
    \label{eq:shortqs_0-r}
    \frac{|\typei_{2l}|}{\mu^l |\typei_l|} \le n^{-c}
  \end{equation}
  hold. 
\end{lemma}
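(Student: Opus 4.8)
The strategy mirrors the proof of Lemma~\ref{lemm:existence_of_short_q_s}, but the rational-specific lower bound $|\typei_k| \ge \frac{1}{2d^k}$ (Lemma~\ref{lemm:min_sectype_length}) is no longer available, so the telescoping argument needs a different ``anchor'' at the bottom. The first step is to set $l_i = 2^i \ceil{\log_\mu n}$ for $i=1,2,\dots$ exactly as before, assume for contradiction that \eqref{eq:shortqs_0-r} fails for every $l_1,\dots,l_k$ (with $k$ to be chosen of order $\log\log_\mu n$), and telescope to get
\begin{align*}
 |\typei_{l_{k+1}}| > n^{-ck}\, \mu^{l_k}\mu^{l_{k-1}}\cdots\mu^{l_1}\, |\typei_{l_1}|
 = n^{-ck}\, n^{2^{k+1}-2}\, |\typei_{l_1}|,
\end{align*}
using $\mu^{l_i} \ge n^{2^i}$ as in \eqref{eq:shortqs_5}. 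The point where the argument must diverge from the rational case is the bound on $|\typei_{l_1}|$ from below.

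The key substitute lemma is a universal lower bound on segment-type lengths valid for \emph{all} real $\mu$. I claim $|\typei_k| \ge \frac{\mu-1}{2}\,\mu^{-k}$ (or something of this shape). The natural route is Proposition~\ref{prop:tent-expansion}: a point $x$ with tent code $b_1b_2\cdots$ satisfies $x = (\mu-1)\sum_{i\ge1} b_i \mu^{-i}$, so two points sharing a length-$k$ prefix differ only in coordinates $i > k$, hence by at most $(\mu-1)\sum_{i>k}\mu^{-i} = \mu^{-k}$ in absolute value on the $x$-axis; combined with the fact (from Lemma~\ref{lemm:elements_of_type_2} and the chain rule for $\tent^k$, whose slope has absolute value $\mu^k$ on each section) that $|\typei_k| = \mu^k |\se_k(\bitseqc_k)|$, one gets a lower bound on $|\typei_k|$ provided one can lower-bound $|\se_k(\bitseqc_k)|$. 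Alternatively, and more cleanly, observe directly that $\typei_k$ is an interval with endpoints among $\{\tent^j(\tfrac12): j \le k\}$ by Lemma~\ref{lemm:elements_of_type_2}, and $\tent^k(\tfrac12)$ lies in $[0,1)$ while $\tent^{\partner(k)}(\tfrac12) = \tent(\tfrac12) \cdot(\text{stuff})$ — here I would show the two endpoints cannot be too close by tracking that $|\typei_k|$ shrinks by a factor of at most $\mu$ per level from $|\typei_1| = \mu/2$ downward as long as $|\outn|=1$, and is ``reset'' upward otherwise, giving $|\typei_k| \ge \frac{\mu-1}{2}\mu^{-(k-1)}$ or similar. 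Whichever form is used, feeding $|\typei_{l_1}| \ge \frac{\mu-1}{2}\mu^{-l_1} \ge \frac{\mu-1}{2} n^{-2}$ (using $l_1 = 2\ceil{\log_\mu n} \le 4\log_\mu n$ and $\mu^{l_1} \le n^4$... actually $\mu^{2\ceil{\log_\mu n}}$, so $\le \mu^2 n^2$; I would absorb the constant) into the telescoped inequality yields $|\typei_{l_{k+1}}| > n^{2^{k+1} - ck - \Order(1)} \cdot \frac{\mu-1}{2}$.

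Taking $\log_n$, this is positive once $2^{k+1}$ dominates $ck + \Order(1) - \log_n\frac{\mu-1}{2}$; the $-\log_n\frac{\mu-1}{2}$ term is where the ``$-2\log_\mu\frac{\mu-1}{2}$'' piece of $l_*$ enters — it forces $\log_\mu n$ large enough that $\log_n\frac{\mu-1}{2}$ is bounded, or equivalently contributes additively to $l_*$. Choosing $k = \ceil{\log_2(ck + c')}$-ish, concretely $k$ of order $\log_2\log_\mu n$ (this is why $l_*$ carries the extra $\log_2\log_\mu n$ factor compared to the rational case, which only needed $k = \Order(\log\log_\mu d)$), makes $|\typei_{l_{k+1}}| > 1$, contradicting $\typei_{l_{k+1}} \subseteq [0,1)$. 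Finally I would verify $l_k = 2^k\ceil{\log_\mu n} \le l_*$ by the choice of $k$, exactly as in the last display of the rational proof. \textbf{The main obstacle} is establishing the universal length lower bound $|\typei_k| \gtrsim (\mu-1)\mu^{-k}$ cleanly without the number-theoretic ``distinct rationals are $\ge 1/(2d^k)$ apart'' trick; I expect the cleanest path is via Proposition~\ref{prop:tent-expansion} together with the observation that $\tent^k$ has slope $\pm\mu^k$ on each section, so that $|\typei_k| = \mu^k|\se_k(\bitseqc_k)|$ and $\se_k(\bitseqc_k) = [\tfrac12, \tfrac12 + \text{(something} \ge \tfrac{\mu-1}{2}\mu^{-k})$, the lower bound on the section width coming from the explicit expansion of its right endpoint.
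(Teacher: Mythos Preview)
Your plan follows the rational-case template too closely and founders exactly where you flag the ``main obstacle.'' The universal lower bound $|\typei_k| \gtrsim (\mu-1)\mu^{-k}$ that you need for $|\typei_{l_1}|$ with $l_1 = 2\lceil\log_\mu n\rceil$ is not established by either of the routes you sketch, and there is no reason to expect it to hold: by Lemma~\ref{lemm:elements_of_type_2} the endpoints of $\typei_k$ are $\tent^k(\tfrac12)$ and $\tent^{\partner(k)}(\tfrac12)$, and for generic real $\mu$ nothing prevents two such iterates from being within $o(\mu^{-k})$ of one another. Your second route also has the dynamics backwards: when $|\outn(\typei_n)|=1$ the length \emph{grows} by the factor $\mu$ (Cases 1-2, 1-3, 2-2, 2-3 of Lemma~\ref{lem:transition} all give $|\typei_{n+1}| = \mu|\typei_n|$), while a split ($|\outn|=2$) can make $|\typei_{n+1}|$ an arbitrarily small fraction of $\mu|\typei_n|$ --- so there is no monotone-shrinking picture to exploit.

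The paper avoids this obstacle by a simple change to the doubling scheme: it takes $l_i = 2^i$ rather than $l_i = 2^i\lceil\log_\mu n\rceil$. Then $l_1 = 2$ and the anchor $|\typei_{l_1}| = |\typei_2| = \tfrac{\mu}{2} - \mu(1-\tfrac{\mu}{2}) = \mu\tfrac{\mu-1}{2}$ is computed explicitly, with no need for any general lower bound on $|\typei_k|$. The telescoped inequality becomes $|\typei_{l_{k+1}}| > n^{-ck}\,\mu^{2^{k+1}-2}\cdot \mu\tfrac{\mu-1}{2}$; taking $\log_\mu$ and splitting as $(2^k - ck\log_\mu n - 1) + (2^k + \log_\mu\tfrac{\mu-1}{2})$, each term is nonnegative once $l_k = 2^k \ge l_*/2$. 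This is also why the real-$\mu$ bound carries the extra $\log_2\log_\mu n$ factor: with $l_i = 2^i$ one has only $\mu^{l_i} = \mu^{2^i}$ (not $n^{2^i}$), so $2^k$ must dominate $ck\log_\mu n$ rather than merely $ck$, forcing $k$ of order $\log_2\log_\mu n$ and hence $l_k$ of order $\log_\mu n\,\log_2\log_\mu n$.
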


\begin{proof}
 For convenience, let 
   $l_i =  2^i $ for $i = 1, 2, \ldots$. 
  Assume for a contradiction that 
     \eqref{eq:shortqs_0-r} never hold for any $l_1,l_2,\ldots,l_k$, where let 
\begin{align}
 l_k \geq  \max \left\{4c  \log_{\mu} n \log_2 \log_{\mu} n, -\log_{\mu} \frac{\mu-1}{2} \right\} 
 \quad\left( = \tfrac{l_*}{2} \right)
\end{align} 
  hold. Notice that such $l_k$ exists at most $l_*$. 
 In other words, 
  \begin{equation}
    |\typei_{l_{i+1}}| > n^{-c} \mu^{l_{i}} |\typei_{l_{i}}|
  \end{equation}
  holds every $i=1,2,\ldots,k$. 
 Thus, we inductively obtain that 
  \begin{align}
    |\typei_{l_{k+1}}|
    &> n^{-c} \mu^{l_{k}} |\typei_{l_{k}}| \nonumber \\
    &> n^{-c} \mu^{l_{k}} \mu^{l_{k-1}} |\typei_{l_{k-1}}| \nonumber \\
    &> \dots \nonumber \\
    &> n^{-ck} \mu^{l_{k}} \mu^{l_{k-1}} \dots \mu^{l_1} |\typei_{l_1}|
    \label{eq:shortqs_2-r}
  \end{align}
   holds. 
Note that 
  \begin{align}
    \label{eq:shortqs_7-r}
    |\typei_{l_1}|
    \ =\ |\typei_2|
    \ =\ \frac{\mu}{2} - \mu\left(1-\frac{\mu}{2}\right)
    \ =\  \mu\frac{\mu-1}{2} 
    \ >\ 0
  \end{align}
 holds.  
 Then,  \eqref{eq:shortqs_2-r} and \eqref{eq:shortqs_7-r} imply 
  \begin{align}
    |\typei_{l_{k+1}}|
    &>  n^{-c k}  \cdotp  \mu^{2^{k} + 2^{k-1} + \dots + 2^1} \cdotp \mu\frac{\mu-1}{2} 
    \label{eq:shortqs_6-r}
  \end{align}
 holds. 
  By taking the $\log_{\mu}$ of the both sides of \eqref{eq:shortqs_6-r}, 
    we see that 
  \begin{align}
    \log_{\mu}{|\typei_{l_{k+1}}|}
    &> -c k \log_{\mu}n + 2^{k+1} - 1 + \log_{\mu}\frac{\mu-1}{2} \nonumber \\
    &= (2^{k} -ck \log_{\mu}n - 1) + \left(2^{k} + \log_{\mu}\frac{\mu-1}{2}\right) 
    \label{eq:shortqs_8-r}
  \end{align}
   holds. 
 Notice that $f(x) =  2^x - ax - 1 $  ($a > 0$) is monotone increasing for $x \geq 1+\log_2 a$. 
 Since $l_k = 2^k \geq 4c  \log_{\mu} n \log_2 \log_{\mu} n $, 
  \begin{align}
    2^{k} -c k \log_{\mu}n - 1 
    &\geq 4c  \log_{\mu} n \log_2 \log_{\mu} n - c \log_{2} (4c \log_{\mu} n \log_2 \log_{\mu} n ) \log_{\mu} n -1 \nonumber\\
    &= c \log_{\mu} n ( 4 (\log_2 \log_{\mu} n) - \log_{2} (4c \log_{\mu} n \log_2 \log_{\mu}  n) ) -1 \nonumber\\
    &= c \log_{\mu} n ( 4 (\log_2 \log_{\mu} n) - \log_2 4c -\log_2 \log_{\mu} n -\log_{2} \log_{2} \log_{\mu} n ) ) -1 \nonumber\\
    &= c \log_{\mu} n ( 3 (\log_2 \log_{\mu} n) - \log_2 4c -\log_{2} \log_{2} \log_{\mu} n ) ) -1 
    \label{eq:shortqs_1-r}
  \end{align}
  holds.
  It is not difficult to see that 
\begin{align*} 
&  \log_2 \log_{\mu} n - \log_2 4c > 0  \qquad\qquad \mbox{and}\\
& \log_2 \log_{\mu} n-\log_{2} \log_{2} \log_{\mu}  n >0
\end{align*}
 hold for sufficiently large $n$, and hence 
\begin{align}
 \eqref{eq:shortqs_1-r} 
  & \geq c \log_{\mu} n \log_2 \log_{\mu} n  -1 \nonumber \\
  & \geq 0
\label{eq:shortqs_11-r}
\end{align}
 holds for sufficiently large $n$.

 Since  $2^k \geq  -\log_{\mu} \frac{\mu-1}{2}$ by definition,  
 it is also not difficult to observe that 
  \begin{align}
  2^{k} + \log_{\mu}\frac{\mu-1}{2}
   \ \geq \  -\log_{\mu}\frac{\mu-1}{2} + \log_{\mu}\frac{\mu-1}{2}
   \ =\ 0
    \label{eq:shortqs_9-r}
  \end{align}
  holds.  
  Equations \eqref{eq:shortqs_8-r}, \eqref{eq:shortqs_11-r} and \eqref{eq:shortqs_9-r} imply that $\log_{n}{|\typei_{l_{k+1}}|} > 0$, 
     meaning that $|\typei_{l_{k+1}}| > 1$. 
  At the same time, 
    notice that any segment-type $\typej$ satisfies $\typej \subseteq [0,1]$,  
     meaning that 
    $|\typei_{l_{k+1}}| \le 1$. 
  Contradiction. 
\end{proof}

Now, the rest of the proof of Proposition~\ref{prop:real-prob} is essentially the same as  Lemma~\ref{lem:l*}. 
\begin{proof}[Proof of Proposition~\ref{prop:real-prob}]
 Lemma~\ref{lemm:existence_of_short_q_s-r} implies that there exists $l$ such that 
  $l \leq l_* $ and 
\begin{align}
 \frac{|\typei_{2l}|}{\mu^{l} |\typei_{l}|} \leq n^{-c}
\label{eq:l*1-r}
\end{align}
holds.  
 Let $A_t$ ($t=1,\ldots,n$) denote the event that $Z_t$ reaches the level $2 l$ {\em for the first time}. 
 Then, 
\begin{align*}
 \Pr[\requiredstgsize \ge 2l] 
 &= \Pr\left[\bigvee_{t=0}^{n} A_t \right] \\
 &\le \sum_{t=0}^n \Pr\left[A_t\right] \\
  &\leq n \Pr[ \lev(Z_t) = 2l  \mid \lev(Z_{t-l}) = l ] \\
  &\leq n \frac{|\typei_{2l}|}{\mu^{l} |\typei_{l}|} \\
  & \leq n^{-(c-1)} 
   && (\mbox{by \eqref{eq:l*1-r}}) 
\end{align*}
  holds. 
We remark that 
$\Pr[\requiredstgsize \ge 2l_*] 
 \leq 
 \Pr[\requiredstgsize \ge 2l] $
is trivial since $l < l_*$. 
\end{proof}

\begin{corollary}\label{cor:exp-real}
 Let $\mu \in (1,2)$ be an arbitrary real. 
 Let $l'_* = \max\{16  \log_{\mu} n \log_2 \log_{\mu} n, -2\log_{\mu} \frac{\mu-1}{2}\} $,
 for convenience. Then,  
 $\E[K] \leq 2l'_*$. 
\end{corollary}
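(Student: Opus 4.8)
The plan is to convert the tail estimate of Proposition~\ref{prop:real-prob} into a bound on the expectation through the layer‑cake identity $\E[K]=\sum_{k\ge 1}\Pr[K\ge k]$, combined with the trivial facts $\Pr[K\ge k]\le 1$ and $K\le n$. The first step is to line up the constants: taking $c=2$ in Proposition~\ref{prop:real-prob}, the quantity $l_*$ appearing there equals $\max\{16\log_\mu n\log_2\log_\mu n,\,-2\log_\mu\frac{\mu-1}{2}\}$, which is exactly $l'_*$; hence $\Pr[K\ge 2l'_*]\le n^{-1}$.

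From here there are two ways to finish. The crude one: since $K\le n$,
\[
 \E[K]=\sum_{k\ge 1}\Pr[K\ge k]\ \le\ 2l'_*\ +\ \sum_{k\ge 2l'_*}\Pr[K\ge k]\ \le\ 2l'_*\ +\ n\cdot\Pr[K\ge 2l'_*]\ \le\ 2l'_*+1 .
\]
To obtain the stated form with no stray additive term one iterates the proposition: for each integer $j\ge 1$, applying Proposition~\ref{prop:real-prob} with $c=2j$ gives $\Pr[K\ge 2l_*(2j)]\le n^{-(2j-1)}$, where $l_*(2j)=\max\{16j\log_\mu n\log_2\log_\mu n,\,-2\log_\mu\frac{\mu-1}{2}\}\le j\,l'_*$ (with equality when the first term realises the maximum defining $l'_*$, and $\le$ always, since $-2\log_\mu\frac{\mu-1}{2}>0$). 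Thus $\Pr[K\ge 2jl'_*]\le n^{-(2j-1)}$ for every $j\ge 1$. Grouping the layer‑cake sum into blocks of length $2l'_*$, the indices $1\le k<2l'_*$ contribute at most $2l'_*$, while each block $2jl'_*\le k<2(j+1)l'_*$ has at most $2l'_*+1$ integers and on it $\Pr[K\ge k]\le\Pr[K\ge 2jl'_*]\le n^{-(2j-1)}$, so that
\[
 \sum_{k\ge 2l'_*}\Pr[K\ge k]\ \le\ (2l'_*+1)\sum_{j\ge 1}n^{-(2j-1)}\ =\ (2l'_*+1)\,\frac{n^{-1}}{1-n^{-2}}\ =\ \Order\!\left(\tfrac{l'_*}{n}\right)\ =\ \order(1),
\]
using $l'_*=\Order(\log n\log_2\log_\mu n)$. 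Hence $\E[K]\le 2l'_*+\order(1)$, which is the asserted bound (for $n$ beyond a fixed threshold the lower‑order term is absorbed into the leading constant).

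I do not expect a genuine obstacle here: this is a routine ``tail bound $\Rightarrow$ expectation bound'' passage, and the only points needing care are (i) identifying $l'_*$ with $l_*|_{c=2}$ and then iterating at $c=2j$, (ii) checking that the step $l_*(2j)\le j\,l'_*$ also goes through in the degenerate regime $\mu\to 1{+}0$, where $-2\log_\mu\frac{\mu-1}{2}$ dominates the maximum, and (iii) bookkeeping the rounding so that the leading constant is $2$ rather than $2+\order(1)$ --- all of which are elementary.
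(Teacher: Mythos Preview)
Your proposal is correct and your ``crude'' argument is essentially the paper's own proof: the paper also applies Proposition~\ref{prop:real-prob} with $c=2$, splits $\E[K]=\sum_k k\Pr[K=k]$ at $k=2l'_*$, bounds the low part by $(2l'_*-1)\Pr[K\le 2l'_*-1]\le 2l'_*-1$ and the high part by $n\Pr[K\ge 2l'_*]\le 1$, arriving exactly at $2l'_*$. The only difference is cosmetic (the paper uses $\sum k\Pr[K=k]$ where you use the layer-cake $\sum\Pr[K\ge k]$) plus a small slip: the first block of your layer-cake sum has $2l'_*-1$ terms, each at most $1$, so it is bounded by $2l'_*-1$ rather than $2l'_*$; with that correction your crude estimate already gives $\E[K]\le 2l'_*$ on the nose, and your subsequent iteration over $c=2j$ is unnecessary.
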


\begin{proof}
 Proposition~\ref{prop:real-prob} with $c=2$ implies 
\begin{align}
 \Pr[\requiredstgsize \ge 2l'_*] \leq n^{-1}
\label{eq:exp-real1}
\end{align}
 holds. Then 
\begin{align*}
  \E[\requiredstgsize]
   &= \sum_{k=1}^n k\Pr[\requiredstgsize =k]  \nonumber \\
   &= \sum_{k=1}^{2l'_*-1} k \Pr[\requiredstgsize =k]  + \sum_{k=2l'_*}^n k \Pr[\requiredstgsize =k]  \nonumber \\
   &\leq (2l'_* - 1) \Pr[\requiredstgsize \leq 2l'_*-1] + n \Pr[\requiredstgsize \ge 2l'_*] \nonumber \\
   &\leq (2l'_* - 1) + n \Pr[\requiredstgsize \ge 2l'_*] \nonumber \\
   &\leq (2l'_* - 1) + 1
    && (\mbox{by \eqref{eq:exp-real1}}) \\
   &= 2l'_*
  \end{align*}
holds. 
\end{proof}

\section{Concluding Remark}
 This paper showed that $B \sim {\cal D}_n$ is realized in $\Order(\log^2 n)$ space, in average (Theorem~\ref{thm:main}). 
 An extension to the smoothed analysis, beyond average, is a near future work. 
 Another future work is an extension to the baker's map, 
   which is a chaotic map of piecewise but 2-dimensional. 
 For the purpose, we need an appropriately extended notion of the segment-type. 
 Another future work is an extension to the logistic map, 
  which is a chaotic map of 1-dimensional but quadratic. 
 Some techniques of random number transformation may be available for it. 
 The time complexity is another interesting topic 
   to decide $b_n\in \{0,1\}$ as given a rational $x=p/q$ for a fixed $\mu \in \mathbb{Q}$. 
 Is it possible to compute in time polynomial in the input size $\log p+\log q+\log n$? 
 It might be NP-hard, but we could not find a result. 

\section*{Acknowledgement}
 The authors are grateful to Masato Tsujii for the invaluable comments, particularly about the Markov extension. 
 The authors would like to thank Yutaka Jitsumatsu, Katsutoshi Shinohara and Yusaku Tomita 
  for the invaluable discussions, that inspire this work. 
 The authors are deeply grateful to Jun'ichi Takeuchi for his kind support.  
 This work is partly supported by JSPS KAKENHI Grant Numbers JP21H03396. 

\bibliographystyle{plain}

\begin{thebibliography}{99}
\bibitem{addabbo2006}
 T.~Addabbo, M.~Alioto, A.~Fort, S.~Rocchi and V.~Vignoli, 
 The digital tent map: Performance analysis and optimized design as a low-complexity source of pseudorandom bits.
  IEEE Transactions on Instrumentation and Measurement,
  55:5 (2006), 1451--1458. 

\bibitem{Access20}
M. Alawida, J. S. Teh, D. P. Oyinloye, W. H. Alshoura, M. Ahmad and R. S. Alkhawaldeh, 
A New Hash Function Based on Chaotic Maps and Deterministic Finite State Automata,
 in IEEE Access, vol. 8, pp. 113163-113174, 2020, 

\bibitem{Bruin95}
 H.\ Bruin, 
 Combinatorics of the kneading map, 
 International Journal of Bifurcation and Chaos, 
 05:05 (1995), 1339--1349. 

 
 
\bibitem{CH94}
 M.\  Crampin and B.\ Heal,  
 On the chaotic behaviour of the tent map
 Teaching Mathematics and its Applications: An International Journal of the IMA, 13:2 (1994), 83--89. 

      
\bibitem{deMelo-vanStrien}
 W.\ de Melo and S.\ van Strien, 
 One-Dimensional Dynamics, 
 Springer-Verlag, 1991.


\bibitem{GJ79}
M.\ R.\ Garey and D.\ S.\  Johnson, 
Computers and Intractability: A Guide to the Theory of NP-Completeness, 
W.\ H.\ Freeman and Company, 1979.


\bibitem{NoC09}
G.\ Gharooni-fard, A.\ Khademzade, F.\ Moein-darbari, 
Evaluating the performance of one-dimensional chaotic maps in the network-on-chip mapping problem, 
IEICE Electronics Express, 6:12 (2009), 811--817. 
Algorithms and certificates for Boolean CSP refutation: smoothed is no harder than random



\bibitem{Hofbauer79}
 F.\ Hofbauer, 
 On intrinsic ergodicity of piecewise monotonic transformations with positive entropy, 
 Israel Journal of Mathematics, 34:3 (1979), 213--237. 

\bibitem{Hopf37}
 E.\ Hopf, 
 Ergodentheorie, 
 Springer Verlag, Berlin, 1937. 


\bibitem{Hash12}
 A. Kanso, H. Yahyaoui and M. Almulla, 
 Keyed hash function based on chaotic map, 
 Information Sciences, 186 (2012), 249--264. 
 

\bibitem{Kohda08}
T.\ Kohda, 
Signal processing using chaotic dynamics, 
IEICE ESS Fundamentals Review, 2:4 (2008), 16--36, 
{\em in Japanese}. 

\bibitem{KV}
 B.\ Korte and J.\ Vygen, 
 Combinatorial Optimization: Theory and Algorithms,  
 Springer-Verlag, 2018. 

\bibitem{LLQL17}
C.\ Li, G.\ Luo, K,\ Qin, C.\ Li, 
An image encryption scheme based on chaotic tent map, 
Nonlinear. Dyn., 87 (2017), 127--133. 

\bibitem{LY75}
T.Y.\ Li and J.A.\ Yorke, 
Period three implies chaos,
Amer. Math. Monthly, 82 (1975), 985--995. 

\bibitem{Lorenz63}
 E.N. Lorenz, 
 Deterministic nonperiodic flow, 
Journal of Atmospheric Sciences, 20:2 (1963), 130--141. 

\bibitem{Lorenz93}
E.\ Lorenz, 
The Essence of Chaos, 
University of Washington Press, 1993. 


\bibitem{makino2015}
T.~Makino, Y.~Iwata, K.~Shinohara, Y.~Jitsumatsu, M.~Hotta, H.~San and K.~Aihara, 
Rigorous estimates of quantization error for a/d converters based on beta-map, 
Nonlinear Theory and Its Applications, IEICE, 6:1 (2015), 99--111. 


\bibitem{May76}
 R.\ May, 
 Simple mathematical models with very complicated dynamics. 
 Nature, 261 (1976), 459--467. 

\bibitem{Parry60}
 W.\ Parry, 
 On the $\beta$-expansions of real numbers,
 Acta Math. Acad. Sci. Hung., 11 (1960), 401--416. 

\bibitem{Parry64}
 W.\ Parry, 
 Representations for real numbers,
 Acta Math.Acad. Sci.Hung., 15 (1964), 95--105. 
 
 
\bibitem{baker00} 
G. Radons, G. C. Hartmann, H. H. Diebner, O. E. Rossler, 
Staircase baker's map generates flaring-type time series, 
Discrete Dynamics in Nature and Society, 5 (2000), 107--120. 

\bibitem{Renyi57}
 A. R\'{e}nyi, 
 Representations for real numbers and their ergodic properties,
 Acta Mathematica Hungarica, 8:3-4 (1957), 477--493. 
 
\bibitem{SMYO83}
H.\ Shigematsu, H.\ Mori, T. Yoshida and H. Okamoto, 
Analytic study of power spectra of the tent maps near band-splitting transitions, 
J. Stat. Phys., 30 (1983), 649--679. 


\bibitem{sinai68}
 Y.G.\ Sinai, 
 Construction of Markov partitions, 
 Funct. Anal. Its Appl., 2  (1968), 245--253. 

\bibitem{Sipser}
 M.\ Sipser, 
 Introduction to the Theory of Computation, 3rd ed., 
 Cengage Learning, 2012. 
 



\bibitem{teramoto10}
 H.\ Teramoto and T.\ Komatsuzaki, 
 How does a choice of Markov partition affect the resultant symbolic dynamics?,  
 Chaos, 20 (2010), 037113. 

\bibitem{tomitamasterthesis}
 Y.\ Tomita, 
 Randomized $\beta$-expansion, 
 Master's thesis, Kyushu University, 2018, {\em in Japanese}. 

\bibitem{tomita2018}
 Y.\ Tomita and S.\ Kijima, 
 Randomized $\beta$-expansion, 
 IEICE General Conference 2018, Tokyo, DS--1--3, {\em in Japanese}. 

\bibitem{DNA09}
J.\ Xiao, J.\ Xu, Z.\ Chen, K.\ Zhang and L.\ Pan,
A hybrid quantum chaotic swarm evolutionary algorithm for DNA encoding,
Computers \& Mathematics with Applications,
57:11–12 (2009), 1949--1958.

\bibitem{OneWay09}
H.\ Yang, K.-W.\ Wong, X. Liao, Y.\ Wang and D.\ Yang,
One-way hash function construction based on chaotic map network,
Chaos, Solitons \& Fractals,
41:5 (2009), 2566--2574.

\bibitem{YMS83}
 T.\ Yoshida, H.\ Mori and H. Shigematsu, 
 Analytic study of chaos of the tent map: Band structures, power spectra, and critical behaviors. 
 J. Stat. Phys., 31 (1983), 279--308.
 











\end{thebibliography}

\appendix
\section{Proofs Remaining from Section~\ref{sec:tent-map}}\label{apx:tent-map}
\begin{proposition}[Proposition~\ref{prop:tent-expansion}]
 Suppose $\enc^{\infty}(x) = b_1b_2\cdots$ for $x \in [0,1)$. 
 Then, $(\mu - 1) \sum_{i=1}^{\infty} b_{i} \mu^{-i} = x$. 
\end{proposition}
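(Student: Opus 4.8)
The plan is to prove the equivalent statement that the function $S\colon[0,1)\to[0,1]$ given by $S(x)\defeq(\mu-1)\sum_{i=1}^{\infty}b_i\mu^{-i}$, where $b_1b_2\cdots=\enc^{\infty}(x)$, is the identity. The series converges absolutely (each $b_i\in\{0,1\}$ and $\mu>1$), and summing the geometric majorant shows $S(x)\in[0,1]$ for every $x$, so in particular $S$ is a bounded function.

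First I would establish a single ``shift'' identity for $S$. Lemma~\ref{lem:obs-rephrase}, read in the limit $n\to\infty$, says that the tent code of $\tilde{\tent}(x)$ is obtained from that of $x$ by deleting the first bit, i.e.\ $\enc^{\infty}(\tilde{\tent}(x))=b_2b_3\cdots$. Peeling off the $i=1$ term of the defining sum and reindexing the rest then gives
\begin{align*}
 \mu\,S(x)=(\mu-1)b_1+S(\tilde{\tent}(x)).
\end{align*}
Separately, a two-line case check on $x<\tfrac12$ versus $x\ge\tfrac12$ (using the definitions \eqref{def:tilde} and \eqref{def:encode0}) yields the elementary identity $\tilde{\tent}(x)=\mu x-(\mu-1)b_1$, hence also $\mu x=(\mu-1)b_1+\tilde{\tent}(x)$. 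Subtracting the two displays, the term $(\mu-1)b_1$ cancels, and writing $h\defeq S-\mathrm{id}$ we obtain
\begin{align*}
 h(\tilde{\tent}(x))=\mu\,h(x)\qquad\text{for all }x\in[0,1).
\end{align*}

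Now I would iterate this relation. Since $\mu<2$ one has $\tilde{\tent}([0,1))\subseteq[0,1)$, so $\tilde{\tent}^{k}(x)\in[0,1)$ for all $k$ and $h(\tilde{\tent}^{k}(x))=\mu^{k}h(x)$. But $h$ is bounded on $[0,1)$ (both $S$ and $\mathrm{id}$ take values in $[0,1]$, so $|h|\le1$), whereas $\mu^{k}|h(x)|\to\infty$ as $k\to\infty$ unless $h(x)=0$. Therefore $h\equiv0$, that is, $S(x)=x$, which is exactly $(\mu-1)\sum_{i=1}^{\infty}b_i\mu^{-i}=x$.

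The only step that is not pure bookkeeping is the shift identity $\enc^{\infty}(\tilde{\tent}(x))=b_2b_3\cdots$; this is the infinite-length version of Lemma~\ref{lem:obs-rephrase}, and I would either invoke that lemma directly or, to keep this appendix self-contained, repeat its short case analysis (on the relative positions of $x$, $\tent(x)$ and $\tfrac12$, using $\tent(x)=\tent(1-x)$). Everything after that is the cancellation producing $h\circ\tilde{\tent}=\mu h$ together with the boundedness-versus-blow-up dichotomy, which also makes transparent why the argument needs $\mu>1$ (for the blow-up) and $\mu<2$ (to keep the orbit of $\tilde{\tent}$ inside $[0,1)$).
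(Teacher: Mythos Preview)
Your argument is correct, and it takes a genuinely different route from the paper's proof. The paper works directly with the orbit $x_i=\tent^i(x)$: setting $a_i=b_i\oplus b_{i-1}$, it writes the one-step recursion $x_i=(-1)^{a_i}\mu x_{i-1}+\mu a_i$, unrolls it explicitly, and after a short telescoping obtains the finite-$n$ identity
\[
(-1)^{b_n}\mu^{-n}x_n \;=\; x-(\mu-1)\sum_{i=1}^n b_i\mu^{-i},
\]
from which the quantitative bound $\bigl|x-(\mu-1)\sum_{i=1}^n b_i\mu^{-i}\bigr|\le\mu^{-n}$ and the limit follow immediately. By contrast, you package the whole thing as a functional equation: using the shift property of $\tilde\tent$ on tent codes you get $h(\tilde\tent(x))=\mu\,h(x)$ for $h=S-\mathrm{id}$, and then kill $h$ by boundedness-versus-blow-up. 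Your version is cleaner and makes the roles of $\mu>1$ and $\mu<2$ transparent; the paper's version, on the other hand, produces the explicit finite-$n$ error estimate \eqref{eq:error} as a byproduct, and that estimate is actually cited later in the proof of Proposition~\ref{prop:-heikai}. If you adopt your proof you would need to supply that Lipschitz-type fact separately (it is easy, but worth noting). Also be aware that you are invoking Lemma~\ref{lem:obs-rephrase} from Section~\ref{sec:recog} inside an appendix meant to discharge the deferred proofs of Section~\ref{sec:tent-map}; there is no logical circularity (that lemma does not use Proposition~\ref{prop:tent-expansion}), but your suggestion to inline its short case analysis is the right call for self-containment.
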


\begin{proof}
  Let 
  \begin{equation}
    \label{eq:proof_of_tent_expansion_1}
    a_{i} = b_{i} \oplus b_{i-1}
  \end{equation}
  for $i \ge 1$, where $b_{0} = 0$ for convenience. 
By \eqref{def:encode2}, 
 $a_i=0$ if and only if $x_i<1/2$. 
  Then, \eqref{eq:tentmap} implies  
  \begin{equation}
    x_{i} = (-1)^{a_{i}} \mu x_{i-1} + \mu a_{i}
\label{eq20230615a}
  \end{equation}
  holds for any $i$.
 Recursively applying \eqref{eq20230615a}, we see 
  \begin{equation*}
    x_{n}
    = (-1)^{\bigoplus_{i=1}^n a_i} \mu^{n} x_{0} + (-1)^{\bigoplus_{i=2}^n a_i} \mu^{n} a_{1} + (-1)^{\bigoplus_{i=3}^n a_i} \mu^{n-1} a_{2} + \dots + (-1)^{0} \mu^{1} a_{n}
  \end{equation*}
   holds for any $n$. 
  By Eq. \eqref{eq:proof_of_tent_expansion_1},
  \begin{equation*}
    x_{n}
    = (-1)^{b_{n}} \mu^{n} x_{0} + (-1)^{b_{n} \oplus b_{1}} \mu^{n} (b_{1} \oplus 0) + (-1)^{b_{n} \oplus b_{2}} \mu^{n-1} (b_{2} \oplus b_{1}) + \dots + (-1)^{b_{n} \oplus b_{n}} \mu^{1} (b_{n} \oplus b_{n-1}).
  \end{equation*}
  Multiplying $(-1)^{b_{n}} \mu^{-n}$ in both sides,
  \begin{align*}
    (-1)^{b_{n}} \mu^{-n} x_{n}
    &= x_{0} + (-1)^{b_{1}} (b_{1} \oplus 0) + (-1)^{b_{2}} \mu^{-1} (b_{2} \oplus b_{1}) + \dots + (-1)^{b_{n}} \mu^{-(n-1)} (b_{n} \oplus b_{n-1}) \\
    &= x_{0} - b_{1} - \mu^{-1} (b_{2} - b_{1}) + \dots - \mu^{-(n-1)} (b_{n} - b_{n-1}) \\
    &= x_{0} - (\mu - 1)(\mu^{-1} b_{1} + \mu^{-2} b_{2} + \cdots \mu^{-n} b_{n}) \\
    &= x - (\mu - 1) \sum_{i=1}^{n} b_{i} \mu^{-i}.
  \end{align*}
  Notice that $x_{n} \in [0, 1)$ holds for any $n$ and $\mu$.
  Thus,
  \begin{equation}
    \left|x - (\mu - 1) \sum_{i=1}^{n} b_{i} \mu^{-i} \right| \le \mu^{-n}
    \label{eq:error}
  \end{equation}
  holds for any $n$. 
  Clearly $\mu^{-n} \to 0$ asymptotic to $n \to \infty$. 
  We obtain the claim. 
\end{proof}

As a preliminary step of Propositions~\ref{prop:order} and \ref{prop:-heikai}, we prove Lemma~\ref{lem:encode1} first. 
\begin{lemma}[Lemma~\ref{lem:encode1}]
  Let $x,x' \in [0,1)$ satisfy $x < x'$. 
  Let $\enc(x) =b_1b_2\cdots $ and  $\enc(x^{\prime})=b'_1b'_2\cdots $. 
 If $b_i = b'_i$ hold for all $i=1,\ldots,n$ then 
  \begin{align}
  \begin{cases}
    x_n < x^{\prime}_n  & \mbox{if $b_n = b^{\prime}_n = 0$, }\\ 
    x_n > x^{\prime}_n & \mbox{if $b_n = b^{\prime}_n = 1$}
 \end{cases}
  \tag{\ref{eq:tentexpansion_lexicography_1}}
    \end{align}
  holds. 
\end{lemma}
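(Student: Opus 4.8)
The plan is to prove the statement by induction on $n$, using the recursive rule \eqref{def:encode2} for the tent code together with the piecewise-linear formula \eqref{eq:tentmap} for $\tent$, which is strictly increasing on $[0,\tfrac12]$ and strictly decreasing on $[\tfrac12,1]$. For the base case $n=1$: since $b_1=b'_1$, \eqref{def:encode0} forces either $x,x'<\tfrac12$ (when $b_1=b'_1=0$) or $x,x'\ge\tfrac12$ (when $b_1=b'_1=1$). In the first case $x_1=\mu x$ and $x'_1=\mu x'$, so $x<x'$ gives $x_1<x'_1$; in the second case $x_1=\mu(1-x)$ and $x'_1=\mu(1-x')$, so $x<x'$ gives $x_1>x'_1$. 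This is exactly \eqref{eq:tentexpansion_lexicography_1} for $n=1$.

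For the inductive step I would assume $n\ge 2$ and that the claim holds for $n-1$. Since $b_i=b'_i$ for $i=1,\dots,n$, in particular $b_i=b'_i$ for $i=1,\dots,n-1$, so the inductive hypothesis applies at index $n-1$ and yields the \emph{strict} order of $x_{n-1}$ and $x'_{n-1}$, determined by the common value $b_{n-1}=b'_{n-1}$. I would then split into the four cases according to $(b_{n-1},b_n)=(b'_{n-1},b'_n)\in\{0,1\}^2$. In each case \eqref{def:encode2} pins down on which side of $\tfrac12$ both $x_{n-1}$ and $x'_{n-1}$ lie: e.g.\ $b_{n-1}=0,b_n=0$ forces $x_{n-1},x'_{n-1}<\tfrac12$, whereas $b_{n-1}=0,b_n=1$ forces $x_{n-1},x'_{n-1}\ge\tfrac12$, and symmetrically when $b_{n-1}=1$. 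Applying $\tent$ — which on each of its two pieces is strictly monotone — to the known strict inequality between $x_{n-1}$ and $x'_{n-1}$ then gives precisely the claimed inequality between $x_n=\tent(x_{n-1})$ and $x'_n=\tent(x'_{n-1})$; keeping track of which branch of $\tent$ was used (increasing versus decreasing) is exactly what produces the dichotomy $b_n=0\Leftrightarrow x_n<x'_n$ versus $b_n=1\Leftrightarrow x_n>x'_n$.

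The computation is routine, and I expect the only point requiring a little care to be the boundary value $x_{n-1}=\tfrac12$, where \eqref{def:encode2} assigns $b_n=1$ through the ``$b_{n-1}=0\wedge x_{n-1}\ge\tfrac12$'' or the ``$b_{n-1}=1\wedge x_{n-1}\le\tfrac12$'' branch. This causes no trouble: the inductive hypothesis supplies a \emph{strict} inequality between $x_{n-1}$ and $x'_{n-1}$, and $\tent$ restricted to the appropriate \emph{closed} half-interval is still strictly monotone, so the strict inequality is preserved even when one of the two points sits exactly at $\tfrac12$. Hence the four cases close the induction and establish \eqref{eq:tentexpansion_lexicography_1}.
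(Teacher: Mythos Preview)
Your proposal is correct and follows essentially the same approach as the paper's own proof: induction on $n$, with the base case handled via \eqref{def:encode0} and the inductive step split into the four cases $(b_{n-1},b_n)\in\{0,1\}^2$, using \eqref{def:encode2} to locate $x_{n-1},x'_{n-1}$ relative to $\tfrac12$ and then applying the appropriate monotone branch of $\tent$. Your explicit remark on the boundary point $x_{n-1}=\tfrac12$ is a nice touch that the paper leaves implicit in its inequalities.
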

\begin{proof}
  We prove it by an induction on $i =1,2,\ldots,n$. 
  Consider the case of $i=1$. 
  By \eqref{def:encode0}, $b_1 = b^{\prime}_1 = 0$ only when $x ,x' < 1/2$, 
    accordingly $x_1 = \mu x < \mu x' =x'_1$ holds by \eqref{eq:tentmap}. 
  Similarly,  
   if $b_1 = b^{\prime}_1 = 1$ then $x ,x' \geq 1/2$ by \eqref{def:encode0}, 
    accordingly $x_1 = \mu (1-x) \geq \mu (1-x') =x'_1$ holds by \eqref{eq:tentmap}. 
    We obtain \eqref{eq:tentexpansion_lexicography_1} for $i=1$. 
  
  Inductively assuming \eqref{eq:tentexpansion_lexicography_1} for $i$ ($\leq n-1$), we prove it for $i+1$.  
 According to \eqref{def:encode2}, we consider four cases. 
  \vspace{-1ex}
  \begin{itemize}\setlength{\itemindent}{1em}\setlength{\parskip}{0.5ex}\setlength{\itemsep}{0cm} 
    \item[Case 1.]
    Suppose $b_{i} = b^{\prime}_{i} = 0$ and $b_{i+1} = b^{\prime}_{i+1} = 0$. 
    Then, $x_{i} < x^{\prime}_{i} < \frac{1}{2}$ by \eqref{eq:tentexpansion_lexicography_1} and \eqref{def:encode2}.
    Therefore, $x_{i+1} = \mu x_{i} < \mu x^{\prime}_{i} = x^{\prime}_{i+1}$ by \eqref{eq:tentmap}. 
    \item[Case 2.]
    Suppose $b_{i} = b^{\prime}_{i} = 0$ and $b_{i+1} = b^{\prime}_{i+1} = 1$. 
    Then,  $\frac{1}{2} \le x_{i} < x^{\prime}_{i}$.
    Therefore, $x_{i+1} = \mu (1 - x_{i}) > \mu (1 - x^{\prime}_{i}) = x^{\prime}_{i+1}$.
    \item[Case 3.]
    Suppose $b_{i} = b^{\prime}_{i} = 1$ and $b_{i+1} = b^{\prime}_{i+1} = 1$. 
    Then, $x^{\prime}_{i} < x_{i} \le \frac{1}{2}$.
    Therefore, $x_{i+1} = \mu x_{i} > \mu x^{\prime}_{i} = x^{\prime}_{i+1}$.
    \item[Case 4.]
    Suppose $b_{i} = b^{\prime}_{i} = 1$ and $b_{i+1} = b^{\prime}_{i+1} = 0$. 
    Then, $\frac{1}{2} < x^{\prime}_{i} < x_{i}$.
    Therefore, $x_{i+1} = \mu (1 - x_{i}) < \mu (1 - x^{\prime}_{i}) = x^{\prime}_{i+1}$.
  \end{itemize}
 In each case we obtain  \eqref{eq:tentexpansion_lexicography_1}.  
\end{proof}

\begin{proposition}[Proposition~\ref{prop:order}]
  For any $x, x^{\prime} \in [0, 1)$,
  \begin{align*}
    x \leq x^{\prime} &\Rightarrow \enc_{\mu}^n(x) \preceq \enc_{\mu}^n(x^{\prime})  
  \end{align*}
  hold where $\preceq$ denotes the {\em lexicographic order}, 
   that is $b_{i_*}=0$ and $b'_{i_*}=1$ at $i_* = \min\{j \in \{1,2,\ldots \} \mid b_j \neq b'_j\}$
   for $\enc^n(x) =b_1b_2\cdots b_n$ and  $\enc^n(x^{\prime})=b'_1b'_2\cdots b'_n$
   unless  $\enc^n(x) = \enc^n(x')$. 
\end{proposition}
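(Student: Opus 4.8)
The plan is to reduce everything to Lemma~\ref{lem:encode1}. Assume $x < x'$ (the case $x = x'$ is immediate since then $\enc^n(x) = \enc^n(x')$). Write $\enc^n(x) = b_1\cdots b_n$ and $\enc^n(x') = b'_1\cdots b'_n$; if these two bit sequences coincide we are in the excluded case of the statement, so we may set $i_* = \min\{ j \in \{1,\dots,n\} \mid b_j \neq b'_j \}$. It then suffices to prove that $b_{i_*} = 0$ and $b'_{i_*} = 1$, which is precisely $\enc^n(x) \preceq \enc^n(x')$.

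First I would dispatch the base case $i_* = 1$ directly from \eqref{def:encode0}: since $b_1 \neq b'_1$ and $x < x'$, we cannot have both $x, x' < \tfrac12$ nor both $x, x' \ge \tfrac12$, so necessarily $x < \tfrac12 \le x'$, giving $b_1 = 0$ and $b'_1 = 1$.

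For $i_* \ge 2$ the bits agree on the first $i_*-1$ positions, so applying Lemma~\ref{lem:encode1} with prefix length $i_*-1$ yields $x_{i_*-1} < x'_{i_*-1}$ when $b_{i_*-1} = b'_{i_*-1} = 0$, and $x_{i_*-1} > x'_{i_*-1}$ when $b_{i_*-1} = b'_{i_*-1} = 1$. I would then read off $b_{i_*}$ and $b'_{i_*}$ from the rephrased recursion \eqref{def:encode2}. If $b_{i_*-1} = 0$, then $b_{i_*}$ (resp.\ $b'_{i_*}$) equals $0$ exactly when $x_{i_*-1} < \tfrac12$ (resp.\ $x'_{i_*-1} < \tfrac12$); since $x_{i_*-1} < x'_{i_*-1}$ and the two bits differ, the threshold $\tfrac12$ must separate them, i.e.\ $x_{i_*-1} < \tfrac12 \le x'_{i_*-1}$, hence $b_{i_*} = 0$ and $b'_{i_*} = 1$. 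If instead $b_{i_*-1} = 1$, then $b_{i_*}$ equals $1$ exactly when $x_{i_*-1} \le \tfrac12$; since now $x_{i_*-1} > x'_{i_*-1}$ and the bits differ, we get $x'_{i_*-1} \le \tfrac12 < x_{i_*-1}$, hence $b'_{i_*} = 1$ and $b_{i_*} = 0$. In every case $b_{i_*} = 0$ and $b'_{i_*} = 1$, completing the argument.

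I do not expect a genuine obstacle here; the only point requiring care is the boundary value $x_i = \tfrac12$. This is handled by using the rephrasing \eqref{def:encode2} rather than \eqref{def:encode1}, so that the relevant dichotomy is clean ("$<\tfrac12$ vs.\ $\ge\tfrac12$" in the case $b_{i_*-1}=0$, and "$\le\tfrac12$ vs.\ $>\tfrac12$" in the case $b_{i_*-1}=1$), and by observing that in each case exactly one of these two threshold conditions can separate $x_{i_*-1}$ from $x'_{i_*-1}$ given their known order. Everything else is routine bookkeeping.
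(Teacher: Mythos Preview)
Your proof is correct and follows essentially the same route as the paper's: both invoke Lemma~\ref{lem:encode1} at index $i_*-1$ and then read off $b_{i_*},b'_{i_*}$ from \eqref{def:encode2} in the two cases $b_{i_*-1}=0$ or $1$. The only difference is cosmetic---you separate out the case $i_*=1$ explicitly via \eqref{def:encode0}, whereas the paper folds it into the general argument (implicitly relying on the convention $x_0=x$, $b_0=0$); your treatment is arguably cleaner on this point.
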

\begin{proof}
  The claim is trivial for $x=x'$. 
  Suppose $x < x'$, and let $i_* = \min\{j \in \{1,2,\ldots \} \mid b_j \neq b'_j\}$, 
    where let $i_* = n+1$ if  $\enc^n(x) =\enc^n(x^{\prime})$, for convenience. 
 By Lemma~\ref{lem:encode1}, 
  we know 
  \begin{align}
  \begin{cases}
    x_{i_*-1} < x^{\prime}_{i_*-1}  & \mbox{if $b_{i_*-1} = b^{\prime}_{i_*-1} = 0$, }\\ 
    x_{i_*-1} > x^{\prime}_{i_*-1} & \mbox{if $b_{i_*-1} = b^{\prime}_{i_*-1} = 1$}
 \end{cases}
  \label{eq:tentexpansion_lexicography_2}
  \end{align}
  holds. 
 Then, we confirm $\enc^n(x) \preceq \enc^n(x^{\prime})$. 
 Consider two cases. 
  Firstly, suppose $b_{i_*-1} = b^{\prime}_{i_*-1} = 0$. 
  The hypothesis $b_{i_*} \neq b'_{i_*}$ requires $x_{i_*-1} < 1/2 \leq x^{\prime}_{i_*-1}$. 
  Then, we obtain $b_{i_*} =0$ and $b^{\prime}_{i_*} = 1$ by \eqref{def:encode2}, in the case. 
  Next, suppose $b_{i_*-1} = b^{\prime}_{i_*-1} = 1$. 
  The hypothesis $b_{i_*} \neq b'_{i_*}$ requires $x_{i_*-1} > 1/2 \geq x^{\prime}_{i_*-1}$. 
  Then, we obtain $b_{i_*} =0$ and $b^{\prime}_{i_*} = 1$ by \eqref{def:encode2}, in the case. 
  In both cases, we obtain $\enc^n(x) \preceq \enc^n(x^{\prime})$. 
\end{proof}

\begin{proposition}[Proposition~\ref{prop:-heikai}]
The $n$-th iterated tent code is right continuous, i.e., $\enc_{\mu}^n(x) = \enc_{\mu}^n(x+0)$. 
\end{proposition}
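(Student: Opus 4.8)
The plan is an induction on $n$ showing that for every $x\in[0,1)$ there is a $\delta>0$ with $\enc^n(x')=\enc^n(x)$ for all $x'\in[x,x+\delta)$, which is exactly right continuity. The base case $n=1$ is immediate from \eqref{def:encode0}: the two sections are $\se_1(0)=[0,\tfrac12)$ and $\se_1(1)=[\tfrac12,1)$, both left closed and right open, so $b_1$ is constant on a right neighbourhood of any $x$.

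For the inductive step, assume $\enc^n$ is right continuous and fix $x$. By the inductive hypothesis there is $\delta_1>0$ with $\enc^n(x')=\enc^n(x)=b_1\cdots b_n$ for all $x'\in[x,x+\delta_1)$; write $x_n=\tent^n(x)$ and $x'_n=\tent^n(x')$. Since these $x'$ share their first $n$ bits with $x$, Lemma~\ref{lem:encode1} tells us that $\tent^n$ is strictly increasing on $[x,x+\delta_1)$ if $b_n=0$ and strictly decreasing if $b_n=1$, and $\tent^n$ is of course continuous. It then remains to check that $b_{n+1}$, which by \eqref{def:encode2} depends only on $b_n$ and on the position of $x_n$ relative to $\tfrac12$, agrees for $x$ and for $x'$ close enough on the right. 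This is a four-way case check: for instance, if $b_n=0$ and $x_n<\tfrac12$ then $b_{n+1}=0$, and by continuity $x'_n<\tfrac12$ for $x'$ near $x$, so $b'_{n+1}=0$; if $b_n=0$ and $x_n\ge\tfrac12$ then $b_{n+1}=1$, and since $\tent^n$ is increasing $x'_n>x_n\ge\tfrac12$, so again $b'_{n+1}=1$. The two cases $b_n=1$ are symmetric, using that $\tent^n$ is decreasing. Choosing $\delta\le\delta_1$ small enough to also lie inside the relevant continuity neighbourhood settles all cases simultaneously and yields $\enc^{n+1}(x')=\enc^{n+1}(x)$ on $[x,x+\delta)$.

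The one genuinely delicate point is the boundary case $x_n=\tfrac12$: the argument above goes through only because the encoding rule \eqref{def:encode1} breaks the tie by setting $b_{n+1}=1$ (not $0$) there, which is precisely the value forced on a right neighbourhood of $x$ by the strict monotonicity of $\tent^n$ coming from Lemma~\ref{lem:encode1}. (A symmetric tie-breaking convention would make $\enc^n$ right-discontinuous whenever $\tent^n$ is locally increasing at such an $x$.) Everything else — continuity of $\tent^n$ from that of $\tent$, and the nesting of the finitely many neighbourhoods involved — is routine.
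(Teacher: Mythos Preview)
Your proof is correct. Both your argument and the paper's hinge on the same two ingredients---the strict monotonicity of $\tent^n$ on a section coming from Lemma~\ref{lem:encode1}, and the case analysis of \eqref{def:encode2} with attention to the tie-breaking at $x_n=\tfrac12$---so the mathematical content is essentially identical. The only difference is organizational: you run a direct induction on $n$, explicitly constructing a right neighbourhood on which $\enc^{n+1}$ is constant, whereas the paper argues by contradiction, assuming $\enc^n(x)\prec\enc^n(x+0)$, locating the first index $i_*$ of disagreement, and showing that for $\epsilon$ small enough the required threshold crossing at step $i_*-1$ cannot occur. Your forward construction is arguably cleaner and makes the role of the tie-breaking convention at $x_n=\tfrac12$ more transparent; the paper's contradiction argument is slightly shorter since it handles all $n$ bits at once rather than peeling off the last one.
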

\begin{proof}
The proof is similar to Proposition~\ref{prop:order}. 
To begin with, we remark that 
  if $\enc^n(x) \prec \enc^n(x^{\prime})$ then $x < x^{\prime}$, 
 by a contraposition of Proposition~\ref{prop:order}. 
Assume for contradiction that  $\enc^n(x) \prec \enc^n(x+0)$ holds. 
Let $\enc^n(x) =b_1b_2\cdots b_n$ and $\enc^n(x')=b'_1b'_2\cdots b'_n$ where $x'=x+\epsilon_n$ and $0 < \epsilon_n \ll \mu^{-n}$, and 
 let $i_* = \min\{j \in \{1,2,\ldots n\} \mid b_j \neq b'_j\}$. 
Recall \eqref{eq:tentexpansion_lexicography_1}, then we consider two cases. 
\vspace{-1ex}
\begin{itemize}\setlength{\itemindent}{1em}\setlength{\parskip}{0.5ex}\setlength{\itemsep}{0cm} 
\item[Case 1.]
  Suppose $b_{i_*-1} = b^{\prime}_{i_*-1} = 0$. 
  The hypothesis $b_n \neq b'_n$ requires $x_{i_*-1} < 1/2 \leq x'_{i_*-1}$  due to \eqref{def:encode2}. 
 Let $\epsilon_n \ll \mu^{-n}(\frac{1}{2}-x_{i_*-1}) $, then $x'_{i_*-1} < 1/2$ holds (recall \eqref{eq:error} in the proof of Proposition~\ref{prop:tent-expansion}). 
 Contradiction. 
\item[Case 2.]
  Suppose $b_{i_*-1} = b^{\prime}_{i_*-1} = 1$. 
  The hypothesis $b_n \neq b'_n$ requires $x_{i_*-1} > 1/2 \geq x^{\prime}_{i_*-1}$ due to~\eqref{def:encode2}.
 Let $\epsilon_n \ll \mu^{-n}(x_{i_*-1}-\frac{1}{2}) $, then $x'_{i_*-1} > 1/2$ holds. 
 Contradiction. 
\end{itemize}
 We obtain the claim. 
\end{proof}

\section{Proofs Remaining from Section~\ref{sec:algo}}
\subsection{Proofs of Lemmas \ref{lem:c=d} and \ref{lem:hanten}}\label{apx:c=d}
\begin{lemma}[Lemma~\ref{lem:c=d}]
 Let $\bitseqc_n = \enc^n(\frac{1}{2})$. 
 Then, $\bitseqc_n = \bitseqd_n$ and  $\flip{\bitseqc_n} = \bitseqd'_n$ hold, where 
  $\bitseqd_n = \min\{ \bitseq_n \in {\cal L}_n \mid b_1 = 1  \mbox{ where } \bitseq_n  = b_1 \cdots b_n \}$ and 
  $\bitseqd'_n = \max\{ \bitseq_n \in {\cal L}_n \mid b_1 = 0  \mbox{ where } \bitseq_n  = b_1 \cdots b_n \}$, 
\end{lemma}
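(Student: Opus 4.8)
The plan is to derive both halves from the monotonicity of $\enc^n$ (Proposition~\ref{prop:order}), the partition identity \eqref{eq:first-bit}, and the reflection identity of Lemma~\ref{lem:hanten}; the two halves are handled asymmetrically because $\enc^n$ is only right-continuous (Proposition~\ref{prop:-heikai}).

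For the former claim, \eqref{eq:first-bit} gives $\{\bitseq_n \in {\cal L}_n \mid b_1 = 1\} = \{\enc^n(x) \mid x \in [\tfrac{1}{2},1)\}$. By Proposition~\ref{prop:order} the map $x \mapsto \enc^n(x)$ is nondecreasing in the lexicographic order, so $\bitseqc_n = \enc^n(\tfrac{1}{2}) \preceq \enc^n(x)$ for every $x \in [\tfrac{1}{2},1)$; since $\bitseqc_n$ itself lies in this set it is its lexicographic minimum, i.e.\ $\bitseqc_n = \bitseqd_n$.

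For the latter claim, again by \eqref{eq:first-bit} the sections with $b_1=0$ tile $[0,\tfrac{1}{2})$, and by Proposition~\ref{prop:-heikai} each section is a left-closed, right-open interval; by Proposition~\ref{prop:order} the lexicographic maximum $\bitseqd'_n$ is realized on the rightmost such section, so $\se_n(\bitseqd'_n) = [a,\tfrac{1}{2})$ for some $a \in [0,\tfrac{1}{2})$, and $a < \tfrac{1}{2}$ since a section is nonempty. For any $x^* \in (a,\tfrac{1}{2})$ we have $x^* \ne a = \min \se_n(x^*)$, so Lemma~\ref{lem:hanten} applies and yields $\flip{\bitseqd'_n} = \flip{\enc^n(x^*)} = \enc^n(1-x^*)$; in particular $\enc^n(1-x^*)$ does not depend on the choice of $x^* \in (a,\tfrac{1}{2})$. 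Now by right-continuity of $\enc^n$ at $\tfrac{1}{2}$ there is $\varepsilon > 0$ with $\enc^n(y) = \enc^n(\tfrac{1}{2}) = \bitseqc_n$ for all $y \in [\tfrac{1}{2}, \tfrac{1}{2}+\varepsilon)$; choosing $x^* \in (\max\{a, \tfrac{1}{2}-\varepsilon\}, \tfrac{1}{2})$ forces $1-x^* \in (\tfrac{1}{2}, \tfrac{1}{2}+\varepsilon)$, hence $\flip{\bitseqd'_n} = \enc^n(1-x^*) = \bitseqc_n$, i.e.\ $\bitseqd'_n = \flip{\bitseqc_n}$.

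The only delicate point is the endpoint bookkeeping at $\tfrac{1}{2}$: one must work with the rightmost (not leftmost) section of $[0,\tfrac{1}{2})$ precisely so that the excluded case $x^* = \min \se_n(x^*)$ of Lemma~\ref{lem:hanten} is avoided, and one must invoke right- rather than left-continuity so that the reflected point $1-x^*$ lands inside $\se_n(\bitseqc_n)$. Everything else is immediate from the cited results, and note that Lemma~\ref{lem:hanten} itself is deferred to the same appendix, so it may be used freely here.
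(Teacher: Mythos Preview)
Your proof is correct and follows essentially the same route as the paper: the first identity via Proposition~\ref{prop:order} and \eqref{eq:first-bit}, the second via Lemma~\ref{lem:hanten} applied at a point adjacent to $\tfrac{1}{2}$ together with right continuity (Proposition~\ref{prop:-heikai}). The only cosmetic difference is that you reflect a point $x^*$ just \emph{below} $\tfrac{1}{2}$ (in $\se_n(\bitseqd'_n)$) into $\se_n(\bitseqc_n)$, whereas the paper's terse sketch implicitly reflects a point just \emph{above} $\tfrac{1}{2}$; both avoid the excluded case $x=\min\se_n(x)$ of Lemma~\ref{lem:hanten} and land on the same conclusion.
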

The former claim is trivial by Proposition~\ref{prop:order} and \eqref{eq:first-bit}. 
The latter claim comes from the following fact. 
\begin{lemma}[Lemma~\ref{lem:hanten}]
$\flip{\enc^n(x)} = \enc^n(1-x)$ 
 unless $x = \min S_n(x)$. 
\end{lemma}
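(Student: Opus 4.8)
The plan is to compare $\enc^n(1-x)$ with $\enc^n(x)$ one bit at a time, exploiting the symmetry \eqref{eq:tent-sym}. Writing $x_i = \tent^i(x)$ (with $x_0 = x$), the point is that $\tent(1-x) = \tent(x)$, so $\tent^i(1-x) = \tent^i(x) = x_i$ for every $i \ge 1$; the two orbits differ only at time $0$. Put $\enc^n(x) = b_1\cdots b_n$ and $\enc^n(1-x) = c_1\cdots c_n$. I would prove $c_i = \flip{b_i}$ for all $i \le n$ by induction on $i$, under the working hypothesis that $x_j \neq \tfrac12$ for all $j = 0, 1, \ldots, n-1$.

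First I would check that this hypothesis follows from the assumption $x \neq \min \se_n(x)$, which is really the heart of the matter. Equivalently, I show: if $x_j = \tfrac12$ for some $j \in \{0, \ldots, n-1\}$, then $x = \min \se_n(x)$, i.e.\ there is no $x' < x$ with $\enc^n(x') = \enc^n(x)$. By \eqref{def:encode1}, $x_j = \tfrac12$ forces $b_{j+1} = 1$. Suppose such an $x'$ existed; then its code agrees with $b_1\cdots b_{j+1}$. If $j \ge 1$, Lemma~\ref{lem:encode1} applied to the matching length-$j$ prefix (with $x' < x$) gives $\tent^j(x') < \tfrac12$ when $b_j = 0$ and $\tent^j(x') > \tfrac12$ when $b_j = 1$, and in either case \eqref{def:encode2} yields $b'_{j+1} = 0 \neq 1$, a contradiction. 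If $j = 0$, then $x' < x = \tfrac12$ gives $b'_1 = 0 \neq 1$ directly. Hence $\inf \se_n(x) = x$; since $\se_n(x) = [\inf \se_n(x), \sup \se_n(x))$ by Proposition~\ref{prop:-heikai}, this is exactly $x = \min \se_n(x)$. Taking the contrapositive supplies the working hypothesis. (The degenerate value $x = 0$ is already excluded, since $0 = \min \se_n(0)$ and $1-x$ would leave $[0,1)$ there.)

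Then the induction is routine bookkeeping with \eqref{def:encode1}. For $i = 1$: since $x \neq \tfrac12$, the points $x$ and $1-x$ lie on opposite sides of $\tfrac12$, so $c_1 = \flip{b_1}$ by \eqref{def:encode0}. For the step $1 \le i \le n-1$: using $\tent^i(1-x) = x_i \neq \tfrac12$, if $x_i < \tfrac12$ then \eqref{def:encode1} gives $b_{i+1} = b_i$ and $c_{i+1} = c_i$, so $c_{i+1} = c_i = \flip{b_i} = \flip{b_{i+1}}$; if $x_i > \tfrac12$ then $b_{i+1} = \flip{b_i}$ and $c_{i+1} = \flip{c_i}$, so $c_{i+1} = \flip{c_i} = b_i = \flip{b_{i+1}}$. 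This finishes the induction, hence the lemma. The hard part is the middle step: recognizing that $x = \min \se_n(x)$ is precisely the obstruction (an orbit point hitting $\tfrac12$ at some time $< n$), which is where Lemma~\ref{lem:encode1} and the left-closedness of sections do the work; the bit-level induction itself carries no difficulty.
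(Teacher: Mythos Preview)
Your proof is correct and follows essentially the same route as the paper: both use the symmetry $\tent^i(1-x)=\tent^i(x)$ together with \eqref{def:encode1} to propagate the bit flip by induction, and both identify the obstruction $x_j=\tfrac12$ with $x=\min \se_n(x)$ via Lemma~\ref{lem:encode1} (the paper isolates this as a separate Lemma~\ref{lem:endmin}). The only organizational difference is that you establish the ``no orbit point hits $\tfrac12$'' hypothesis upfront and then run a clean bitwise induction, whereas the paper inducts on $n$ and handles the exceptional case inside the step.
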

\begin{proof}
 Without loss of generality, we may assume that $x \geq 1/2$. 
 The claim is trivial for $n=1$ 
   since $\enc(x) = 1$ and $\enc(1-x) = 0$ by \eqref{def:encode0} unless $x = 1/2$. 
 Notice that $\se_1(1/2) = [1/2,1)$, 
  meaning that $1/2 = \min \se_1(1/2)$. 

 Inductively assuming the claim holds for $n-1$, 
   we prove it for $n \geq 2$. 
 For convenience, 
  let $\enc^n(x) = b_1\cdots b_n$ and  
  let $\enc^n(1-x) = b'_1\cdots b'_n$. 
 Then, 
  $\flip{b_i} = b'_i$ holds for $i \leq n-1$ by the inductive assumption, and 
  it is enough to prove $\flip{b_n} = b'_n$. 
 Recall that $b_n$ (resp.\ $b'_n$) is determined by $\tent^{n-1}(x)$ and $b_{n-1}$ (resp.,  $\tent^{n-1}(1-x)$ and $b'_{n-1}$) by \eqref{def:encode1}. 
 We also remark that 
  $\tent(x) = \tent(1-x)$ by \eqref{eq:tentmap}, and then 
  $\tent^{n-1}(x) = \tent^{n-1}(1-x)$ inductively. 
 By the inductive assumption,  
  $\flip{b_{n-1}} = b'_{n-1}$ holds, 
  which implies with \eqref{def:encode1} that $\flip{b_n} = b'_n$ 
  unless $\tent^{n-1}(x)=\tent^{n-1}(1-x)=1/2$. 
 We obtain $\flip{\enc^n(x)} = \enc^n(1-x)$ in the case. 

 To finalize the proof, we need two more facts. 
 If $\tent^{n-1}(x)=1/2$ then $x = \min \se_n (x)$ holds, 
  by Lemma~\ref{lem:endmin} just below. 
 The second one is that 
  \eqref{eq:subdiv-sec} implies that 
   if $x = \min \se_{n-1} (x)$ then $x = \min \se_n (x)$ holds, 
   which guarantees
    the induction hypothesis that $x \neq \min \se_{n-1} (x)$ holds for any  
   $x \neq  \min \se_n (x)$. 
 Now, we obtain the lemma. 
\end{proof}

\begin{lemma}\label{lem:endmin}
  If $\tent^n(x)=1/2$ then 
  $x = \min \se_{n+1} (x)$ for $x \in [0,1)$ and $n=1,2,\ldots$. 
\end{lemma}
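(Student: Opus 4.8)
The plan is to prove Lemma~\ref{lem:endmin} by tracing the boundary point $x^*=\sup \se_{n+1}(\bitseq)$ of a section backwards through the tent map and showing that if $\tent^n(x)=1/2$ then $x$ sits exactly on such a left endpoint. First I would recall the basic structural facts already established: every section is a left-closed right-open interval $S=[\inf S,\sup S)$ by \eqref{eq:sec-heikai}, sections are subdivided as $\se_n(\bitseq_n)=\se_{n+1}(\bitseq_n0)\cup\se_{n+1}(\bitseq_n1)$ with the split occurring exactly where $\tent^n$ crosses $1/2$, and by Lemma~\ref{lem:heikai_n} the map $\tent^n$ is monotone increasing on $\se_n(\bitseq_n)$ when $b_n=0$ and monotone decreasing when $b_n=1$. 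The key observation from \eqref{def:encode1} (or the rephrasing \eqref{def:encode2}) is that the case $x_i = \tent^i(x) = 1/2$ is resolved by setting $b_{i+1}=1$, so the value $1/2$ behaves like the ``left edge'' of the upper branch.

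Here is the concrete argument I would carry out. Suppose $\tent^n(x)=1/2$, and let $\bitseq = b_1\cdots b_{n+1} = \enc^{n+1}(x)$; by the tie-breaking rule we have $b_{n+1}=1$. I claim $x=\inf\se_{n+1}(\bitseq)$. Consider the point $x$ inside $\se_n(b_1\cdots b_n)$: since $\tent^n(x)=1/2$, the point $x$ is exactly the place where $\se_n(b_1\cdots b_n)$ is split into $\se_{n+1}(b_1\cdots b_n 0)$ and $\se_{n+1}(b_1\cdots b_n 1)$ — this is precisely property ii) used in the sketch of Lemma~\ref{lem:transition}, where $x^*\in\se_n(x)$ with $\tent^n(x^*)=1/2$ satisfies $\se_n(\bitseq_n)=[\inf\se_{n+1}(\bitseq_n0),x^*)\cup[x^*,\sup\se_{n+1}(\bitseq_n1))$. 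Since $b_{n+1}=1$, the point $x$ lies in the right piece $\se_{n+1}(b_1\cdots b_n1)=[x^*,\sup\se_{n+1}(b_1\cdots b_n1))$, and in fact $x=x^*$ because $\tent^n(x)=1/2$ is exactly the division value. Hence $x=\inf\se_{n+1}(\bitseq)=\min\se_{n+1}(x)$, as the section is left-closed.

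I would then double check the degenerate situations: when $\tent^n$ is monotone decreasing on $\se_n(b_1\cdots b_n)$ (i.e. $b_n=1$), the subdivision point is still where $\tent^n$ hits $1/2$, and the rephrasing \eqref{def:encode2} ($b_n=1$, $x_n\le 1/2 \Rightarrow b_{n+1}=1$; $x_n>1/2\Rightarrow b_{n+1}=0$) again puts the value $1/2$ on the $b_{n+1}=1$ side, which is the appropriate endpoint; the argument is symmetric. The main obstacle, such as it is, is purely bookkeeping: making sure that in both the increasing and decreasing cases the point $\tent^n(x)=1/2$ genuinely coincides with the \emph{infimum} of the refined section $\se_{n+1}(\bitseq)$ rather than, say, its supremum or an interior point, and that one has correctly identified which of $\se_{n+1}(\bitseq_n0)$, $\se_{n+1}(\bitseq_n1)$ the point $x$ belongs to — this is entirely dictated by the tie-breaking convention $b_{n+1}=1$ in \eqref{def:encode1}. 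Everything else follows from the already-proven monotonicity (Lemma~\ref{lem:heikai_n}), the left-closed/right-open structure of sections (Proposition~\ref{prop:-heikai}), and the subdivision identity \eqref{eq:subdiv-sec}.
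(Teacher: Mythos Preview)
Your argument is correct. You identify $x$ with the split point $x^*$ of the subdivision $\se_n(\bitseq_n)=\se_{n+1}(\bitseq_n0)\cup\se_{n+1}(\bitseq_n1)$, observe that the tie-breaking rule $b_{n+1}=1$ in \eqref{def:encode1} places $x$ in the right piece $\se_{n+1}(\bitseq_n1)=[x^*,\cdot)$, and conclude $x=x^*=\min\se_{n+1}(x)$; the check in the decreasing case $b_n=1$ via \eqref{def:encode2} and Proposition~\ref{prop:order} is accurate.

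The paper's proof reaches the same conclusion by a left-limit argument rather than the split-point picture: it separates the trivial case $x=\min\se_n(x)$, and otherwise compares $\enc^{n+1}(x)$ with $\enc^{n+1}(x-0)$, using Lemma~\ref{lem:encode1} to force $\tent^n(x-0)\neq 1/2$ on the correct side and hence $b'_{n+1}=0\neq 1=b_{n+1}$. Your approach is a bit more geometric (you read off the conclusion directly from the subdivision structure already articulated in the sketch of Lemma~\ref{lem:transition}), while the paper's is slightly more self-contained (it invokes only Lemma~\ref{lem:encode1} and \eqref{def:encode1} and does not borrow from the later Lemma~\ref{lem:transition}). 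Both routes rest on the same two ingredients---strict monotonicity of $\tent^n$ on a section and the tie-breaking convention at $1/2$---so the difference is one of packaging rather than substance.
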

\begin{proof}
It feels almost trivial by the definition of $b_i$ (cf.~\eqref{def:encode1}) and the right continuity of $\se_n(x)$ (cf.~Proposition~\ref{prop:-heikai}), but 
we directly prove it using Lemma~\ref{lem:encode1}. 

 If $x = \min \se_n(x)$ then $x = \min \se_{n+1}(x)$ by \eqref{eq:subdiv-sec}. 
 Suppose $x \neq \min \se_n(x)$ and $\tent^n(x)=1/2$. 
 Then, we prove $\enc^{n+1}(x) \neq \enc^{n+1}(x-0)$. 
 Let $\enc^{n+1}(x) = b_1 \cdots b_{n+1}$ and  $\enc^{n+1}(x-0) = b'_1,\ldots,b'_{n+1}$. 
 Notice that $b_i = b'_i$ for $i=1,\ldots,n$ since $x  \neq \min \se_n(x)$ (i.e., $x  \in {\rm int} \se_n(x)$). 
 By Lemma~\ref{lem:encode1}, 
  \begin{align}
  \begin{cases}
    \tent^n(x-0) < \tent^n(x) & \mbox{if $b_n = b'_n = 0$, }\\ 
    \tent^n(x-0) > \tent^n(x) & \mbox{if $b_n = b'_n = 1$}
 \end{cases}
  \label{eq:tentexpansion_lexicography_3}
  \end{align}
 holds. 
 In each case, 
  the assumption $\tent^n(x)=1/2$ implies 
  $b_{n+1} =1$ and $b_{n+1}=0$ according to \eqref{def:encode1}. 
 Now the claim is clear. 
\end{proof}

\begin{proof}[Proof of Lemma~\ref{lem:c=d}]
 The former claim $\bitseqc_n=\bitseqd_n$ is trivial 
  since $1/2 \in \se_n(\bitseqd_n)$ by \eqref{eq:first-bit}.  
 The latter claim $\flip{\bitseqc_n}=\bitseqd'_n$ is also easy from Lemma~\ref{lem:hanten} 
  since $1/2 \in \se_n(\bitseqd_n)$ and and the right continuity of a section (cf.~Proposition~\ref{prop:-heikai}). 
\end{proof}

\subsection{Proof of Lemma~\ref{lem:transition}}\label{apx:transition}
\begin{lemma}[Lemma~\ref{lem:transition}]
Let $x \in [0,1)$. 
 (1) Suppose $\typefn^n(x) = [v,u)$ ($v<u$). 
 We consider three cases concerning the position of $\tfrac{1}{2}$ relative to $[v,u)$. 
\vspace{-1ex}
\begin{itemize}\setlength{\itemindent}{4em}\setlength{\parskip}{0.5ex}\setlength{\itemsep}{0cm} 
\item[Case 1-1:]  $v < \frac{1}{2} < u$. 
\vspace{-1ex}
\begin{itemize}\setlength{\itemindent}{4em}\setlength{\parskip}{0.5ex}\setlength{\itemsep}{0cm} 
\item[Case 1-1-1.] If $\tent^n(x) < 1/2$ then $\typefn^{n+1}(x) = [\tent(v),\tent(\tfrac{1}{2}))$, and $b_{n+1}=0$. 
\item[Case 1-1-2.] If $\tent^n(x) \geq 1/2$ then $\typefn^{n+1}(x) =  (\tent(u),\tent(\tfrac{1}{2})]$, and $b_{n+1}=1$. 
\end{itemize}
\item[Case 1-2:] $u \leq \frac{1}{2}$. Then $\typefn^{n+1}(x) = [\tent(v),\tent(u))$, and $b_{n+1}=0$.  
\item[Case 1-3:] $v \geq \frac{1}{2}$. Then $\typefn^{n+1}(x) = (\tent(u),\tent(v)]$, and $b_{n+1}=1$. 
\end{itemize}
\vspace{-1ex}

\noindent (2) Similarly, suppose $\typefn^n(x) = (v,u]$ ($v<u$). 
\vspace{-1ex}
\begin{itemize}\setlength{\itemindent}{4em}\setlength{\parskip}{0.5ex}\setlength{\itemsep}{0cm} 
\item[Case 2-1:]  $v < \frac{1}{2} < u$. 
\vspace{-1ex}
\begin{itemize}\setlength{\itemindent}{4em}\setlength{\parskip}{0.5ex}\setlength{\itemsep}{0cm} 
\item[Case 2-1-1.] If $\tent^n(x) \leq 1/2$ then $\typefn^{n+1}(x) = (\tent(v),\tent(\tfrac{1}{2})]$, and $b_{n+1}=1$. 
\item[Case 2-1-2.] If $\tent^n(x) > 1/2$ then $\typefn^{n+1}(x) =  [\tent(u),\tent(\tfrac{1}{2}))$, and $b_{n+1}=0$. 
\end{itemize}
\item[Case 2-2:] $u \leq \frac{1}{2}$. Then $\typefn^{n+1}(x) = (\tent(v),\tent(u)]$, and $b_{n+1}=1$. 
\item[Case 2-3:] $v \geq \frac{1}{2}$. Then $\typefn^{n+1}(x) = [\tent(u),\tent(v))$, and $b_{n+1}=0$. 
\end{itemize}
\end{lemma}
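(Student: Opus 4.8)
The plan is to run the same case analysis that was carried out for Case 1-1 in the body, using three standing ingredients: (i) the definition $\typefn^n(x)=\{\tent^n(y)\mid y\in\se_n(\bitseq_n)\}$ from \eqref{def:type}; (ii) Lemma~\ref{lem:heikai_n}, which says that if $\typefn^n(x)=[v,u)$ then $\tent^n$ is monotone increasing on $S:=\se_n(x)$ with $v=\tent^n(\inf S)$ and $u=\sup\tent^n(S)$ not attained, and symmetrically that $\typefn^n(x)=(v,u]$ forces $\tent^n$ decreasing on $S$; and (iii) the recursion \eqref{def:encode2}, which determines $b_{n+1}$ from the sign of $x_n-\tfrac12$ and from $b_n$, together with \eqref{eq:subdiv-sec} and Proposition~\ref{prop:order}, giving $S=\se_{n+1}(\bitseq_n0)\cup\se_{n+1}(\bitseq_n1)$ with $\inf S=\inf\se_{n+1}(\bitseq_n0)$ and $\sup S=\sup\se_{n+1}(\bitseq_n1)$. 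Throughout I use that $S=[\inf S,\sup S)$ by Proposition~\ref{prop:-heikai}, and the tie-breaking rule $x_i=\tfrac12\Rightarrow b_{i+1}=1$.

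For the ``no-split'' cases, namely 1-2, 1-3, 2-2, 2-3, the hypothesis on the position of $\tfrac12$ forces all of $x_n$ to lie on one fixed side of $\tfrac12$, so by \eqref{def:encode2} the bit $b_{n+1}$ is constant on $S$; hence $\se_{n+1}(\bitseq_n b_{n+1})=S$ and $\typefn^{n+1}(x)=\tent(\tent^n(S))$. On each of $[0,\tfrac12]$ and $[\tfrac12,1]$ the tent map is linear — increasing with slope $\mu$ on the left, decreasing with slope $-\mu$ on the right — so $\tent$ carries the interval $\tent^n(S)$ onto the interval with endpoints $\tent(v),\tent(u)$; its left/right closure is obtained from the closure of $\tent^n(S)$ by applying the orientation of the relevant branch (order-preserving on the left half, order-reversing on the right half). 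One then checks in each of the four cases that the resulting interval, and the value of $b_{n+1}$, match the statement, and are mutually consistent with Lemma~\ref{lem:heikai_n} (the type is left-closed iff $b_{n+1}=0$). This settles the four one-transition cases uniformly.

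For the ``split'' cases $v<\tfrac12<u$ — Case 1-1 is already done, so only Case 2-1 remains — I would let $x^{\ast}\in S$ be the unique point with $\tent^n(x^{\ast})=\tfrac12$ (unique since $\tent^n$ is monotone on $S$) and split $S$ at $x^{\ast}$. Because $\tent^n$ is \emph{decreasing} on $S$ when $\typefn^n(x)=(v,u]$, the points with $x_n>\tfrac12$ form the initial piece $[\inf S,x^{\ast})$ and those with $x_n\le\tfrac12$ form $[x^{\ast},\sup S)$; reading \eqref{def:encode2} with $b_n=1$ gives $\se_{n+1}(\bitseq_n0)=[\inf S,x^{\ast})$ and $\se_{n+1}(\bitseq_n1)=[x^{\ast},\sup S)$ (the point $x^{\ast}$ itself lands in the $1$-section by the convention $x_i=\tfrac12\Rightarrow b_{i+1}=1$). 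Applying $\tent^n$ and then the appropriate linear branch of $\tent$ to each piece, and noting that $\tent(\tfrac12)=\tfrac\mu2$ is \emph{attained} on whichever piece contains $x^{\ast}$ (so that endpoint is closed) but only approached on the other (so that endpoint is open), yields precisely the intervals and bits claimed in Cases 2-1-1 and 2-1-2.

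The whole argument is routine; the only place that demands attention is the endpoint bookkeeping — tracking which of $\inf S$, $\sup S$, and $x^{\ast}$ are attained — so the main (minor) obstacle is phrasing the open/closed conventions once and for all so that all eight sub-cases come out by the same recipe rather than by eight ad hoc computations. The facts $S=[\inf S,\sup S)$ and the tie-break $x_i=\tfrac12\Rightarrow b_{i+1}=1$ are exactly what make this uniform.
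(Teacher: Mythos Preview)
Your proposal is correct and follows essentially the same approach as the paper's proof: the paper likewise recalls the definition of $\typefn^n$, the subdivision \eqref{eq:subdiv-sec}, and the rule \eqref{def:encode2}, then runs through all eight sub-cases by applying the appropriate linear branch of $\tent$ to the interval $\tent^n(S)$ (splitting at $x^\ast$ with $\tent^n(x^\ast)=\tfrac12$ in Cases 1-1 and 2-1), tracking open/closed endpoints via Lemma~\ref{lem:heikai_n}. The only difference is organizational --- you group the no-split and split cases and treat each family uniformly, whereas the paper writes out the eight computations sequentially --- but the ingredients and the endpoint bookkeeping are identical.
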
 
\begin{proof}
 To begin with, we recall three facts. 
 i) The segment-type is defined by $\typefn^n(x) = \{\tent^n(x) \mid x \in \se_n(\bitseq_n) \}$ by \eqref{def:type}. 
 ii) $\se_n(\bitseq_n) = \se_{n+1}(\bitseq_n0) \cup \se_{n+1}(\bitseq_n1)$ by \eqref{eq:subdiv-sec}. 
 Particularly,  $\inf\se_n(\bitseq_n) = \inf \se_{n+1}(\bitseq_n0)$ and $\sup \se_n(\bitseq_n) = \sup\se_{n+1}(\bitseq_n1)$ 
  by Proposition~\ref{prop:order}. 
 iii) $b_{n+1}$ depends on $b_n$ and $x_n$ by \eqref{def:encode2}. 
 The following proof is based on the idea similar to Lemma~\ref{lem:heikai_n}. 

 (1) Suppose $\typefn^n(x) = [v,u)$ ($v<u$), 
 i.e.,  $\tent^n$ is monotone increasing in $\se_n(x)$ by Lemma~\ref{lem:heikai_n}. 
\vspace{-1ex}
\begin{itemize}\setlength{\itemindent}{2em}\setlength{\parskip}{0.5ex}\setlength{\itemsep}{0cm} 
\item[Case 1-1:]  $v < \frac{1}{2} < u$. 
 Let $x^* \in \se_n(x)$ satisfy $\tent^n(x^*) (= x^*_n) = \frac{1}{2}$. 
 Clearly, $\se_n(\bitseq_n)$ is divided into $\se_{n+1}(\bitseq_n0)$ and $\se_{n+1}(\bitseq_n1)$ at $x^*$,  
 i.e., $\se_n(\bitseq_n)=[\inf \se_{n+1}(\bitseq_n0),x^*) \cup [x^*,\se_{n+1}(\bitseq_n1))$. 
\vspace{-1ex}
\begin{itemize}\setlength{\itemindent}{2em}\setlength{\parskip}{0.5ex}\setlength{\itemsep}{0cm} 
\item[Case 1-1-1.] 
  If $\tent^n(x) < 1/2$ then $x_n \in [v,\frac{1}{2})$. 
  Thus, $\tent^{n+1}(x) =\tent(x_n) = \mu x_n$. 
  Accordingly, 
   $\tent^{n+1}(\inf \se_{n+1}(x)) = \tent(\tent^n(\inf \se_n(x))) =\tent(v) = \mu v 
     \leq \tent(x_n)  
     < \mu \frac{1}{2} = \tent(\frac{1}{2}) =\tent(x^*_n)=\tent^{n+1}(x^*)$ since $v \leq x_n < \frac{1}{2}$ 
     (cf. the proof of  Lemma~\ref{lem:heikai_n}). 
  We obtain $\typefn^{n+1}(x) = [\tent(v),\tent(\tfrac{1}{2}))$.  $b_{n+1}=0$ is clear by Lemma~\ref{lem:heikai_n}. 
\item[Case 1-1-2.] 
  If $\tent^n(x) \geq 1/2$ then $x_n \in [\frac{1}{2},u)$. 
  Thus, $\tent^{n+1}(x) =\tent(x_n) = \mu (1-x_n)$. 
  Accordingly, $\tent(\frac{1}{2}) = \mu (1-\frac{1}{2}) \geq \tent(x_n)  > \mu (1-u) = \tent(u)$ since $\frac{1}{2} \leq x_n < u$. 
  We obtain $\typefn^{n+1}(x) = (\tent(u),\tent(\tfrac{1}{2})]$. 
\end{itemize}
\item[Case 1-2:] $u \leq \frac{1}{2}$. 
 Then, $x_n <\frac{1}{2}$. 
  Thus, $\tent^{n+1}(x) =\tent(x_n) = \mu x_n$. 
  Accordingly, $\tent(v) = \mu v \leq \tent(x_n)  < \mu u = \tent(u)$ since $v \leq x_n < u$. 
  We obtain $\typefn^{n+1}(x) = [\tent(v),\tent(u))$. 
\item[Case 1-3:] $v \geq \frac{1}{2}$. 
 Then, $x_n \geq \frac{1}{2}$. 
  Thus, $\tent^{n+1}(x) =\tent(x_n) = \mu (1-x_n)$. 
  Accordingly, $\tent(v) = \mu (1-v) \geq \tent(x_n)  > \mu (1-u) = \tent(u)$ since $v \leq x_n < u$. 
 We obtain $\typefn^{n+1}(x) = (\tent(u),\tent(v)]$. 
\end{itemize}
\vspace{-1ex}

\noindent (2) Suppose $\typefn^n(x) = (v,u]$ ($v<u$), 
i.e.,  $\tent^n$ is monotone decreasing. 
\vspace{-1ex}
\begin{itemize}\setlength{\itemindent}{2em}\setlength{\parskip}{0.5ex}\setlength{\itemsep}{0cm} 
\item[Case 2-1:]  $v < \frac{1}{2} < u$. 
\vspace{-1ex}
\begin{itemize}\setlength{\itemindent}{2em}\setlength{\parskip}{0.5ex}\setlength{\itemsep}{0cm} 
\item[Case 2-1-1.] 
  If $\tent^n(x) \leq 1/2$ then $x_n \in (l,\frac{1}{2}]$. 
  Thus, $\tent^{n+1}(x) =\tent(x_n) = \mu x_n$. 
  Accordingly, $\tent(v) = \mu v < \tent(x_n)  \leq \mu \frac{1}{2} = \tent(\frac{1}{2})$ since $v \leq x_n < \frac{1}{2}$. 
  We obtain $\typefn^{n+1}(x) = (\tent(v),\tent(\tfrac{1}{2})]$. 
\item[Case 2-1-2.] 
  If $\tent^n(x) > 1/2$ then $x_n \in (\frac{1}{2},u]$. 
  Thus, $\tent^{n+1}(x) =\tent(x_n) = \mu (1-x_n)$. 
  Accordingly, $\tent(\frac{1}{2}) = \mu (1-\frac{1}{2}) \geq \tent(x_n)  > \mu (1-u) = \tent(u)$ since $\frac{1}{2} < x_n \leq u$. 
  We obtain $\typefn^{n+1}(x) =  [\tent(u),\tent(\tfrac{1}{2}))$. 
\end{itemize}
\item[Case 2-2:] $u \leq \frac{1}{2}$. 
 Then, $x_n \leq \frac{1}{2}$. 
  Thus, $\tent^{n+1}(x) =\tent(x_n) = \mu x_n$. 
  Accordingly, $\tent(v) = \mu v < \tent(x_n)  \leq \mu u = \tent(u)$ since $v < x_n \leq u$. 
  We obtain $\typefn^{n+1}(x) = (\tent(v),\tent(u)]$. 
\item[Case 2-3:] $v \geq \frac{1}{2}$. 
 Then, $x_n > \frac{1}{2}$. 
  Thus, $\tent^{n+1}(x) =\tent(x_n) = \mu (1-x_n)$. 
  Accordingly, $\tent(v) = \mu (1-v) \geq \tent(x_n)  > \mu (1-u) = \tent(u)$ since $v < x_n \leq u$. 
 We obtain $\typefn^{n+1}(x) = [\tent(u),\tent(v))$. 
\end{itemize}
\end{proof}

\subsection{Proof of Lemma~\ref{lem:cond-prob}}\label{apx:cond-prob}
\begin{lemma}[Lemma~\ref{lem:cond-prob}]
 Let $X$ be a real-valued random variable drawn from $[0,1)$ uniformly at random. 
 Let ${\bf B}_{n+1} = B_1 \cdots B_{n+1} = \enc^{n+1}(X)$. 
 Let $\bitseq_n \in {\cal L}_n$. Then, 
\begin{align*}
 \Pr[ B_{n+1}= b \mid \enc^n(X) = \bitseq_n]
    = \frac{|\typefn(\bitseq_n b)|}{|\typefn(\bitseq_n 0)|+|\typefn(\bitseq_n 1)|}
\end{align*}
 holds for $b \in \{0,1\}$, where let  $|\typefn(\bitseq_n b)| = 0$ if $\bitseq_n b \not\in {\cal L}_{n+1}$. 
\end{lemma}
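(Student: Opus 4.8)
The plan is to start from equation~\eqref{eq:cond_prob2}, which already rewrites the conditional probability as the ratio of section lengths $|\se_{n+1}(\bitseq_n b)|$ to $|\se_{n+1}(\bitseq_n 0)|+|\se_{n+1}(\bitseq_n 1)|$ (this is legitimate because $\bitseq_n\in{\cal L}_n$ forces $|\se_n(\bitseq_n)|>0$, so the conditioning event has positive probability). It then suffices to show that passing from a section to its segment-type scales every length by the \emph{same} factor $\mu^{n+1}$, so that the factor cancels in the quotient. I would isolate this as an auxiliary claim: for every $m\ge 1$ and every $\bitseq_m\in{\cal L}_m$, the restriction $\tent^m|_{\se_m(\bitseq_m)}$ is affine with slope $+\mu^m$ or $-\mu^m$, and hence $|\typefn(\bitseq_m)|=|\tent^m(\se_m(\bitseq_m))|=\mu^m\,|\se_m(\bitseq_m)|$ (using $\typefn(\bitseq_m)=\tent^m(\se_m(\bitseq_m))$ from \eqref{def:type}, and that $\se_m(\bitseq_m)$ is an interval by Proposition~\ref{prop:-heikai}).

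I would prove the auxiliary claim by fixing $\bitseq_m\in{\cal L}_m$, writing $S=\se_m(\bitseq_m)$, and showing by induction on $j=0,1,\dots,m$ that $\tent^j|_S$ is affine with slope $+\mu^j$ or $-\mu^j$ (with $\tent^0=\mathrm{id}$, slope $1$); the case $j=m$ is what is wanted. The base $j=0$ is trivial. For the step from $j$ to $j+1$ (so $0\le j\le m-1$), the crucial point is that $\tent^j$ maps all of $S$ into one closed half, $[0,\tfrac12]$ or $[\tfrac12,1]$: all $x\in S$ share the tent code $b_1\cdots b_m$, and by \eqref{def:encode0} and \eqref{def:encode2} the bit $b_{j+1}$---together with $b_j$ when $j\ge1$---pins down on which side of $\tfrac12$ the point $x_j=\tent^j(x)$ lies (e.g.\ for $j\ge1$: $b_j=0,b_{j+1}=0$ forces $x_j<\tfrac12$ and $b_j=0,b_{j+1}=1$ forces $x_j\ge\tfrac12$, symmetrically for $b_j=1$; for $j=0$ the bit $b_1$ forces $x<\tfrac12$ or $x\ge\tfrac12$); equivalently this is the splitting Cases~1-1 and 2-1 versus the non-splitting Cases 1-2, 1-3, 2-2, 2-3 of Lemma~\ref{lem:transition}. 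Hence $\tent^{j+1}=\tent\circ\tent^j$ on $S$ is the composition of the affine map $\tent^j|_S$ with one of the two affine branches $t\mapsto\mu t$, $t\mapsto\mu(1-t)$ of $\tent$, so it is affine with slope $\pm\mu^{j+1}$. Taking $j=m$, an affine map of slope $\pm\mu^m$ stretches interval lengths by $\mu^m$, which gives the length identity.

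Granting the auxiliary claim, the lemma drops out. If $\bitseq_n b\in{\cal L}_{n+1}$, apply the claim with $m=n+1$ to $\bitseq_n 0$ and to $\bitseq_n 1$ to get $|\typefn(\bitseq_n b)|=\mu^{n+1}|\se_{n+1}(\bitseq_n b)|$ for $b=0,1$; substituting into \eqref{eq:cond_prob2} the common factor $\mu^{n+1}$ cancels and gives
\[
\Pr[B_{n+1}=b\mid\enc^n(X)=\bitseq_n]
=\frac{|\se_{n+1}(\bitseq_n b)|}{|\se_{n+1}(\bitseq_n 0)|+|\se_{n+1}(\bitseq_n 1)|}
=\frac{|\typefn(\bitseq_n b)|}{|\typefn(\bitseq_n 0)|+|\typefn(\bitseq_n 1)|}.
\]
If instead $\bitseq_n b\notin{\cal L}_{n+1}$, then $\se_{n+1}(\bitseq_n b)=\emptyset$ and, by the stated convention, $|\typefn(\bitseq_n b)|=0$, so both sides vanish and the identity still holds (the other of $\bitseq_n 0,\bitseq_n 1$ lies in ${\cal L}_{n+1}$ since $\bitseq_n\in{\cal L}_n$, so the denominators stay positive). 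The main obstacle is the inductive step above---the observation that a section is never cut by the kink at $\tfrac12$ of any earlier iterate $\tent^i$, so that $\tent^m$ is genuinely affine (not merely monotone) on it; once that is established, the rest is routine manipulation of affine maps and interval lengths.
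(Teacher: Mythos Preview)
Your proof is correct and follows essentially the same route as the paper: the paper first proves (Lemma~\ref{lem:uniform}) that $\tent^n(X)$ is uniformly distributed over $\typefn^n(X)$, which is exactly the probabilistic restatement of your auxiliary claim that $\tent^m|_{\se_m(\bitseq_m)}$ is affine with slope $\pm\mu^m$, and then (Lemma~\ref{lem:cond-prob-type}) converts this into the ratio identity. Your packaging via the scaling identity $|\typefn(\bitseq_m)|=\mu^m|\se_m(\bitseq_m)|$ is arguably a bit more direct than the paper's uniform-distribution detour, but the underlying induction is the same.
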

 As a preliminary step, 
   we prove the following lemma which {\em proves} that $\tent^n$ is a piecewise {\em linear} function, 
   which is almost trivial but we have not proven yet. 
\begin{lemma}\label{lem:uniform}
  $\tent^n(X)$ is uniformly distributed over $\typefn^n(X)$. 
\end{lemma}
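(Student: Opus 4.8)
The plan is to prove the statement in the precise form: for every $\bitseq_n \in {\cal L}_n$, conditioned on the event $\enc^n(X) = \bitseq_n$ (equivalently $X \in \se_n(\bitseq_n)$), the random variable $\tent^n(X)$ is uniformly distributed over $\typefn(\bitseq_n)$. Since $\typefn^n(X) = \typefn(\enc^n(X))$ by definition, this is exactly the assertion of the lemma, the conditioning pinning down the (random) value of $\typefn^n(X)$. A minor point to dispatch at the outset is that $\se_n(\bitseq_n)$ has positive Lebesgue measure, so the conditioning is legitimate: by Proposition~\ref{prop:-heikai} it equals $[\inf\se_n(\bitseq_n),\sup\se_n(\bitseq_n))$, which is nonempty for $\bitseq_n\in{\cal L}_n$ and hence of positive length. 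I would then argue by induction on $n$. For $n=1$: conditioned on $b_1=0$, $X$ is uniform on $[0,\tfrac12)$ and $\tent(X)=\mu X$ is uniform on $[0,\tfrac{\mu}{2})=\typefn(0)$ by \eqref{def:encode0}, \eqref{eq:tentmap}, \eqref{eq:typeset1}; conditioned on $b_1=1$, $X$ is uniform on $[\tfrac12,1)$, so $1-X$ is uniform on $(0,\tfrac12]$ and $\tent(X)=\mu(1-X)$ is uniform on $(0,\tfrac{\mu}{2}]=\typefn(1)$.

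For the inductive step, fix $\bitseq_{n+1}=\bitseq_n b\in{\cal L}_{n+1}$ and write $\typej=\typefn(\bitseq_n)$, $Y=\tent^n(X)$. By \eqref{def:encode2} (together with Lemma~\ref{lem:heikai_n}, which tells whether $\typej$ has the form $[v,u)$ or $(v,u]$), conditioned on $\enc^n(X)=\bitseq_n$ the event $\{B_{n+1}=b\}$ coincides with $\{Y\in\typej_b\}$ for the sub-interval $\typej_b\subseteq\typej$ dictated by the case analysis of Lemma~\ref{lem:transition}: $\typej_b=\typej\cap[0,\tfrac12)$ in the cases whose continuation bit is $0$, $\typej_b=\typej\cap[\tfrac12,1)$ in the cases whose continuation bit is $1$, and $\typej_b=\typej$ whenever $|\outn(\typej)|=1$. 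By the induction hypothesis $Y$ is uniform on $\typej$ given $\enc^n(X)=\bitseq_n$; conditioning a uniform law on a subinterval yields the uniform law on that subinterval, so $Y$ is uniform on $\typej_b$ given $\enc^{n+1}(X)=\bitseq_n b$. On $\typej_b$ the map $\tent$ is affine --- either $y\mapsto\mu y$ (if $\typej_b\subseteq[0,\tfrac12)$) or $y\mapsto\mu(1-y)$ (if $\typej_b\subseteq[\tfrac12,1)$) --- and an affine map carries the uniform law on an interval to the uniform law on its image. That image is exactly $\typefn(\bitseq_n b)=\typefn^{n+1}(X)$ by the endpoint computations already carried out in the proof of Lemma~\ref{lem:transition} (for instance, in Case 1-1 the image of $[v,\tfrac12)$ under $y\mapsto\mu y$ is $[\tent(v),\tent(\tfrac12))$, and the image of $[\tfrac12,u)$ under $y\mapsto\mu(1-y)$ is $(\tent(u),\tent(\tfrac12)]$). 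This closes the induction.

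The main obstacle is bookkeeping rather than anything conceptual: one must match, case by case of Lemma~\ref{lem:transition}, the half-open endpoints of $\typej_b$ with those of $\typefn(\bitseq_n b)$, and observe that the single exceptional point $Y=\tfrac12$ (where \eqref{def:encode1} sets $b_{n+1}=1$) has conditional probability zero and hence does not affect the resulting distribution. Once these endpoint checks are done, the "affine image of a uniform law is uniform" step and the "uniform conditioned on a subinterval is uniform" step are both immediate, and no further computation is needed. It is also worth noting that when $\bitseq_n b\notin{\cal L}_{n+1}$ the conditioning event has probability zero and there is nothing to prove, which is consistent with the convention $|\typefn(\bitseq_n b)|=0$ in the statement of Lemma~\ref{lem:cond-prob}.
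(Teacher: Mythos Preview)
Your proof is correct and follows essentially the same inductive strategy as the paper's: both establish the base case by direct inspection of $\tent$ on the two halves of $[0,1)$, and both run the induction by noting that $\tent$ is affine on the relevant sub-interval of $\typej=\typefn(\bitseq_n)$, the paper via explicit CDF computations and you via the general facts that conditioning a uniform law on a subinterval and pushing it forward by an affine bijection preserve uniformity. One small slip worth fixing: your blanket description ``$\typej_b=\typej\cap[0,\tfrac12)$ when the continuation bit is $0$'' is not accurate in Case~2 of Lemma~\ref{lem:transition} (for instance in Case~2-1-2 one has $b_{n+1}=0$ yet $\typej_0=\typej\cap(\tfrac12,1]$); the correct statement is that $\typej_b$ is always one of $\typej\cap[0,\tfrac12]$ or $\typej\cap[\tfrac12,1]$, with the choice depending on both $b$ and $b_n$ as in \eqref{def:encode2}. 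Since your subsequent argument only uses that $\typej_b$ lies entirely on one side of $\tfrac12$ (so that $\tent$ is affine there), this does not affect the validity of the proof.
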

\begin{proof}
 The proof is an induction on $n$. 
 We start with $n=1$. 
 Notice that $\Pr[X \in [0,\frac{1}{2})] = \Pr[X \in [\frac{1}{2},1)] = \frac{1}{2}$. 
 We consider two cases whether $X < \frac{1}{2}$ or not. 
 If $X<\frac{1}{2}$ then $\tent(X) = \mu X$. 
 Thus, 
\begin{align*}  
  \Pr \left[ \tent(X) < k \mid X \in [0,\tfrac{1}{2}) \right] 
    =\Pr \left[ \mu X < k \mid X \in [0,\tfrac{1}{2}) \right] 
    = \frac{\frac{k}{\mu}}{\frac{1}{2}}
    = \frac{k}{\frac{\mu}{2}}
\end{align*}
   holds for any $k \in [0,\frac{\mu}{2}) = \typefn(X)$, 
   which implies the claim in the case.
 Similarly, if $X \geq \frac{1}{2}$ then $\tent(X) = \mu (1-X)$. 
 Then, 
  $\Pr \left[ \tent(X) < k \mid X \in [\frac{1}{2},1)\right] 
    =\Pr \left[ \mu (1-X) < k \mid 1-X \in (0,\frac{1}{2}] \right] 
    = \frac{\frac{k}{\mu}}{\frac{1}{2}}
    = \frac{k}{\frac{\mu}{2}}$
   holds for any $k \in [0,\frac{\mu}{2}) = \typefn(X)$, 
   which implies the claim for $n=1$.

 Inductively assuming the claim for $n$, we prove it for $n+1$. 
 We here prove Case 1-1 in Lemma~\ref{lem:transition}, but other cases are similar. 
 Suppose $\typefn^n(X) = [v,u)$ and $\frac{1}{2} \in (v,u)$. 
 Let $X_n =\tent^n(X)$ for convenience.  
 If $X_n<\frac{1}{2}$ then $\tent(X_n) = \mu X_n$. 
 Then, 
\begin{align*}  
&  \Pr \left[ \tent^{n+1}(X) < k \mid X _n\in [v,\tfrac{1}{2}) \right] 
  =\Pr \left[ \tent(X_n) < k \mid X _n\in [v,\tfrac{1}{2}) \right] 
  = \Pr \left[ \mu X_n < k \mid X_n \in [v,\tfrac{1}{2}) \right] \\
& \hspace{2em}
     = \Pr \left[ X_n < \tfrac{k}{\mu} \mid X_n \in [v,\tfrac{1}{2}) \right] 
   = \frac{|[v,\frac{k}{\mu})|}{\left|[v,\tfrac{1}{2})\right|} 
   = \frac{|[\mu v,k)|}{\left|[\mu v,\tfrac{\mu}{2})\right|} 
   = \frac{|[\tent(v) ,k)|}{\left|[\tent(v), \tent(\tfrac{1}{2}))\right|} 
\end{align*}
 holds for $k \in [\tent(v),\tent(\frac{1}{2}))$, 
 where  $[\tent(v),\tent(\frac{1}{2})) = \typefn^{n+1}(X)$ holds 
  by the argument similar to the proof of Case 1-1-1 in Lemma~\ref{lem:transition}. 
 Thus,  $\tent^{n+1}(X)$ is uniformly distributed over $\typefn^{n+1}(X)$. 

 If $X_n \geq \frac{1}{2}$ then $\tent(X_n) = \mu (1-X_n)$. 
 Then, 
\begin{align*}  
&  \Pr \left[ \tent^{n+1}(X) < k \mid X _n\in [\tfrac{1}{2},u) \right] 
   =\Pr \left[ \tent(X_n) < k \mid X _n\in [\tfrac{1}{2},u) \right] \\
&\hspace{2em}
   = \Pr \left[ \mu (1-X_n) < k \mid 1-X_n \in (1-u,\tfrac{1}{2}] \right] 
   = \Pr \left[ 1-X_n < \tfrac{k}{\mu} \mid 1-X_n \in (1-u,\tfrac{1}{2}] \right] \\
&\hspace{2em}
 = \frac{|(1-u,\frac{k}{\mu}]|}{\left|(1-u,\tfrac{1}{2}]\right|} 
 = \frac{|(\mu(1-u),k]|}{\left|(\mu(1-u),\tfrac{\mu}{2}]\right|} 
 = \frac{|(\tent(u),k]|}{\left|(\tent(u),\tent(\tfrac{1}{2})]\right|} 
\end{align*}
 holds for $k \in (\tent(u),\tent(\frac{1}{2})]= \typefn^{n+1}(X)$. 
 We obtain Case 1-1. 
 It is not difficult to see that other cases are similar. 
\end{proof}

Next, we prove the following lemma. 
\begin{lemma}\label{lem:cond-prob-type}
 Let $\bitseq_n \in {\cal L}_n$. Then, 
\begin{align*}
   \frac{|\se_{n+1}(\bitseq_n b)|}{|\se_n(\bitseq_n)|}
   = \frac{|\typefn(\bitseq_n b)|}{|\typefn(\bitseq_n 0)|+|\typefn(\bitseq_n 1)|}
\end{align*}
 holds for $b=0,1$.  
\end{lemma}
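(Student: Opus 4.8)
The plan is to use Lemma~\ref{lem:uniform} to turn the statement into an elementary length computation that relies only on the piecewise-linearity of $\tent$. Fix $\bitseq_n = b_1\cdots b_n \in {\cal L}_n$. By \eqref{eq:subdiv-sec}, $\se_n(\bitseq_n)$ is the disjoint union of $\se_{n+1}(\bitseq_n 0)$ and $\se_{n+1}(\bitseq_n 1)$, and by Lemma~\ref{lem:encode1} (equivalently Lemma~\ref{lem:heikai_n}) $\tent^n$ is strictly monotone on $\se_n(\bitseq_n)$, hence a continuous bijection onto $\typefn(\bitseq_n)$. Taking $X$ uniform on $[0,1)$ and conditioning on $X\in\se_n(\bitseq_n)$ makes $X$ uniform on $\se_n(\bitseq_n)$ and, by Lemma~\ref{lem:uniform}, makes $\tent^n(X)$ uniform on $\typefn(\bitseq_n)$. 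A monotone bijection between two intervals that carries the first's normalized Lebesgue measure onto the second's must be affine; hence $\tent^n$ is affine on $\se_n(\bitseq_n)$, say with slope of absolute value $s$, so that $|\tent^n(J)| = s\,|J|$ for every subinterval $J\subseteq\se_n(\bitseq_n)$.

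Next I would set $I_b := \tent^n(\se_{n+1}(\bitseq_n b))$ for $b=0,1$. Since $\tent^n$ is a monotone bijection from $\se_n(\bitseq_n)$ onto $\typefn(\bitseq_n)$, the images $I_0,I_1$ partition $\typefn(\bitseq_n)$; and by \eqref{def:encode2} the section $\se_{n+1}(\bitseq_n b)$ consists exactly of those $x\in\se_n(\bitseq_n)$ for which $\tent^n(x)$ lies on one prescribed side of $\tfrac12$ (depending on $b_n$), so each $I_b$ is contained in $[0,\tfrac12]$ or in $[\tfrac12,1]$ --- this is precisely the case split of Lemma~\ref{lem:transition}. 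Consequently $|I_b| = s\,|\se_{n+1}(\bitseq_n b)|$ by the previous paragraph, while $\typefn(\bitseq_n b) = \tent^{n+1}(\se_{n+1}(\bitseq_n b)) = \tent(I_b)$ and, $\tent$ being affine with slope $\pm\mu$ on each half of $[0,1]$, $|\typefn(\bitseq_n b)| = \mu\,|I_b| = \mu s\,|\se_{n+1}(\bitseq_n b)|$. Since the factor $\mu s$ does not depend on $b$,
\begin{align*}
  \frac{|\typefn(\bitseq_n b)|}{|\typefn(\bitseq_n 0)| + |\typefn(\bitseq_n 1)|}
   &= \frac{\mu s\,|\se_{n+1}(\bitseq_n b)|}{\mu s\,\bigl(|\se_{n+1}(\bitseq_n 0)| + |\se_{n+1}(\bitseq_n 1)|\bigr)} \\
   &= \frac{|\se_{n+1}(\bitseq_n b)|}{|\se_n(\bitseq_n)|},
\end{align*}
where the last step again uses $\se_{n+1}(\bitseq_n 0)\cup\se_{n+1}(\bitseq_n 1) = \se_n(\bitseq_n)$. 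The corner case $\bitseq_n b\notin{\cal L}_{n+1}$ --- where $I_b=\emptyset$, $|\typefn(\bitseq_n b)|=0$ by convention, and $\se_{n+1}(\bitseq_n\flip{b})=\se_n(\bitseq_n)$ --- satisfies the same identity trivially.

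The step that will need the most care is the claim that $\tent^n$ restricted to a section is affine, so that the length ratio between a section and each of its two subsections is preserved; this is the only place where the lemma has content beyond bookkeeping, and it is exactly what Lemma~\ref{lem:uniform} supplies. The remaining delicacy --- the measure-zero treatment of the endpoint $\tfrac12$ and of the possibly empty intervals $I_0,I_1$ --- I would dispatch explicitly as indicated above.
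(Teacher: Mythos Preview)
Your proposal is correct and follows essentially the same approach as the paper: both arguments hinge on Lemma~\ref{lem:uniform} (uniformity of $\tent^n(X)$ on the segment-type, equivalently affineness of $\tent^n$ on each section), handle the degenerate one-child cases separately, and then reduce the nontrivial case to an elementary length computation using that $\tent$ has slope $\pm\mu$ on each half of the segment-type. The only cosmetic difference is that the paper phrases the computation through the conditional probabilities $\Pr[X<x^*\mid X\in\se_n(\bitseq_n)]$ and $\Pr[X_n<\tfrac12\mid X_n\in\typefn(\bitseq_n)]$ and invokes Lemma~\ref{lem:transition} to evaluate $|\typefn(\bitseq_n b)|$ explicitly, whereas you extract the common scaling factor $\mu s$ directly and cancel it; your route is slightly more streamlined but not substantively different.
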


\begin{proof}
 It is trivial if $\se_{n+1}(\bitseq_n 0)=\emptyset$ or $\se_{n+1}(\bitseq_n 1) = \emptyset$, 
  corresponding to Cases 1-2, 1-3, 2-2, 2-3 in Lemma~\ref{lem:transition}. 
 Consider Case 1-1 (Case 2-1 is similar).  
 Let $\typefn(\bitseq_n) = [v,u)$, and 
 let $x^* \in \se_n(\bitseq_n)$ satisfy $\tent^n(x^*) (= x^*_n) = \frac{1}{2}$.

Firstly we remark 
\begin{align}
   \frac{|\se_{n+1}(\bitseq_n 0)|}{|\se_{n+1}(\bitseq_n 0)|+|\se_{n+1}(\bitseq_n 1)|}
   = \Pr[ X < x^* \mid X \in \se_n(\bitseq_n)] 
\label{eq:cond-prob-type1}
\end{align}
 holds since $X$ is uniformly distributed over $[0,1)$. 
 
 Next, we claim that 
\begin{align}
  \Pr[ X_n < \tent^n(x^*) \mid X_n \in \typefn(\bitseq_n)] 
   = \frac{|\typefn(\bitseq_n b)|}{|\typefn(\bitseq_n 0)|+|\typefn(\bitseq_n 1)|}
\label{eq:cond-prob-type2}
\end{align}
 holds similarly from Lemma~\ref{lem:uniform}. 
In fact, 
 \begin{align*}
 \mbox{l.h.s.\ of \eqref{eq:cond-prob-type2}}
  =
  \Pr[ X_n < \tent^n(x^*) \mid X_n \in \typefn(\bitseq_n)] 
   = \frac{|[v,x^*)|}{|[v,u)|}
   = \frac{x^*-v}{u-v}
 \end{align*}
 holds by Lemma~\ref{lem:uniform}. 
 For the right hand side, we see that 
\begin{align}
  |\typefn(\bitseq_n 0)| &= \tent(\tfrac{1}{2})-\tent(v) = \mu (\tfrac{1}{2} - v) \label{eq:20230624a}\\
  |\typefn(\bitseq_n 1)| &= \tent(u) - \tent(\tfrac{1}{2}) = \mu (1-\tfrac{1}{2}) - \mu (1-u) = \mu (u-\tfrac{1}{2}) \label{eq:20230624b}
\end{align}
 holds. Thus 
 \begin{align*}
 \mbox{r.h.s.\ of \eqref{eq:cond-prob-type2}}
   = \frac{|\typefn(\bitseq_n b)|}{|\typefn(\bitseq_n 0)|+|\typefn(\bitseq_n 1)|}
   = \frac{\mu (\tfrac{1}{2} - v) }{\mu (\tfrac{1}{2} - v) +\mu (u-\tfrac{1}{2})} 
   = \frac{x^*-v}{u-v}
\end{align*}
 holds. We obtain \eqref{eq:cond-prob-type2}. 
 
 Finally, we observe 
\begin{align*}
\eqref{eq:cond-prob-type1}
 &= \Pr[ X < x^* \mid X \in \se_n(\bitseq_n)] \\
  &=  \Pr[ X < x^* \mid X_n \in \typefn(\bitseq_n)] 
  && (\mbox{since $x \in \se_n(\bitseq_n)$ iff $x_n \in \typefn(\bitseq_n)$})
  \\
 &=  \Pr[ X_n < \tent^n(x^*) \mid X_n \in \typefn(\bitseq_n)] 
  && (\mbox{since $x <x^*$ iff $x_n < \tent^n(x^*)$}) \\
 &=\eqref{eq:cond-prob-type2}
\end{align*}
 and we obtain the claim in the case. 
 Case 2-1 is similar. 
\end{proof}

\begin{proof}[Proof of Lemma~\ref{lem:cond-prob}]
 Recall  \eqref{eq:cond_prob2}, then 
\begin{align*}
\Pr[ B_{n+1}= b \mid \enc^n(X) = \bitseq_n]
       = \frac{|\se_{n+1}(\bitseq_n b)|}{|\se_{n+1}(\bitseq_n)|}
\end{align*}
 is trivial for $b=0,1$. 
 By Lemma~\ref{lem:cond-prob-type}, 
\begin{align*}
 \frac{|\se_{n+1}(\bitseq_n b)|}{|\se_{n+1}(\bitseq_n)|}
  = \frac{|\typefn(\bitseq_n b)|}{|\typefn(\bitseq_n 0)|+|\typefn(\bitseq_n 1)|}
\end{align*}
 holds  for $b=0,1$. 
 Now the claim is clear. 
\end{proof}

\begin{lemma}\label{lem:tasutomu}
\begin{align}
  |\typefn(\bitseq_n 0)|+|\typefn(\bitseq_n 1)| =\mu|\typefn(\bitseq_n)|
\end{align}
 holds for any $\bitseq_n \in {\cal L}_n$
\end{lemma}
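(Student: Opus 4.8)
The plan is to read off the two images $\typefn(\bitseq_n 0)$ and $\typefn(\bitseq_n 1)$ from Lemma~\ref{lem:transition} and combine them with one elementary observation about the tent map: $\tent$ scales the length of any interval by exactly $\mu$, provided the interval lies entirely in $[0,\tfrac12]$ or entirely in $[\tfrac12,1]$. Indeed $\tent$ restricted to $[0,\tfrac12]$ is $y\mapsto\mu y$ and restricted to $[\tfrac12,1]$ is $y\mapsto\mu(1-y)$, so $|\tent(I)|=\mu|I|$ in either case.

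Conceptually, write $J=\typefn(\bitseq_n)=\typefn^n(x)$ for $x\in\se_n(\bitseq_n)$. Since $\tent^n$ is monotone on $\se_n(\bitseq_n)$ by Lemma~\ref{lem:heikai_n} and maps it onto $J$, and since by \eqref{def:encode2} the bit $b_{n+1}$ is determined by which side of $\tfrac12$ the value $\tent^n(x)$ lies on, the subdivision $\se_n(\bitseq_n)=\se_{n+1}(\bitseq_n 0)\cup\se_{n+1}(\bitseq_n 1)$ from \eqref{eq:subdiv-sec} corresponds under $\tent^n$ to splitting $J$ into $J_0=J\cap[0,\tfrac12]$ and $J_1=J\cap[\tfrac12,1]$ (or the reverse when $b_n=1$), with $\typefn(\bitseq_n b)=\tent(J_b)$. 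The sets $J_0,J_1$ cover $J$ and overlap in at most the single point $\tfrac12$, so $|J_0|+|J_1|=|J|$; and each $J_b$ sits inside one monotone branch of $\tent$, hence $|\typefn(\bitseq_n 0)|+|\typefn(\bitseq_n 1)|=\mu|J_0|+\mu|J_1|=\mu|J|=\mu|\typefn(\bitseq_n)|$. When one of the two bits is infeasible (Cases~1-2, 1-3, 2-2, 2-3 of Lemma~\ref{lem:transition}), the corresponding $J_b$ is empty and the identity still holds.

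Alternatively — and this is what I would actually write out, to match the level of detail elsewhere in the paper — one verifies the identity directly against the six cases of Lemma~\ref{lem:transition}. For instance, in Case~1-1 (with $\typefn^n(x)=[v,u)$ and $v<\tfrac12<u$) we get $|\typefn(\bitseq_n 0)|=\tent(\tfrac12)-\tent(v)=\mu(\tfrac12-v)$ and $|\typefn(\bitseq_n 1)|=\tent(\tfrac12)-\tent(u)=\mu(u-\tfrac12)$, whose sum is $\mu(u-v)=\mu|\typefn^n(x)|$ (compare \eqref{eq:20230624a}--\eqref{eq:20230624b}); in Case~1-2 ($u\le\tfrac12$) we have $\typefn(\bitseq_n 1)=\emptyset$ and $|\typefn(\bitseq_n 0)|=\tent(u)-\tent(v)=\mu(u-v)$; in Case~1-3 ($v\ge\tfrac12$) we have $\typefn(\bitseq_n 0)=\emptyset$ and $|\typefn(\bitseq_n 1)|=\tent(v)-\tent(u)=\mu(1-v)-\mu(1-u)=\mu(u-v)$; and Cases~2-1, 2-2, 2-3 are identical after exchanging the roles of the open and closed endpoints.

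The only point needing a little care — the "main obstacle", though it is minor — is the bookkeeping of open versus closed endpoints, so that $J_0$ and $J_1$ genuinely partition $J$ up to a single point (this uses Lemma~\ref{lem:heikai_n} together with Lemma~\ref{lem:hanten-type}), and the treatment of the degenerate cases where a section has empty image under one choice of the next bit. Both are already covered by the case list in Lemma~\ref{lem:transition}, so no new work is needed.
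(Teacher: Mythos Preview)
Your proof is correct and follows essentially the same approach as the paper: the paper's own proof is the single line ``Immediate from \eqref{eq:20230624a} and \eqref{eq:20230624b}'', i.e.\ exactly the Case~1-1 computation you spell out, with the remaining cases left implicit. Your write-up is in fact more complete than the paper's, since you explicitly handle the degenerate Cases~1-2, 1-3, 2-2, 2-3 and give the clean conceptual reason (each piece of $J$ lies in a single monotone branch of $\tent$, which scales lengths by $\mu$).
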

\begin{proof}
 Immediate from \eqref{eq:20230624a} and  \eqref{eq:20230624b}. 
 \end{proof}
\end{document}